\title{Towards Concurrent Quantitative Separation Logic} 
\author{Ira Fesefeldt}{Software Modeling and Verification Group, RWTH Aachen University, Germany}{fesefeldt@cs.rwth-aachen.de}{https://orcid.org/0000-0001-7837-2611}{}
\author{Joost-Pieter Katoen}{Software Modeling and Verification Group, RWTH Aachen University, Germany}{katoen@cs.rwth-aachen.de}{https://orcid.org/0000-0002-6143-1926}{}
\author{Thomas Noll}{Software Modeling and Verification Group, RWTH Aachen University, Germany}{noll@cs.rwth-aachen.de}{https://orcid.org/0000-0002-1865-1798}{}
\authorrunning{I.~Fesefeldt, J.-P.~Katoen, and T.~Noll} 
\keywords{Randomization, Pointers, Heap-Manipulating, Separation Logic, Concurrency} 
\crefname{equation}{statement}{statements}
\definecolor{ifchangecolor}{rgb}{0.7,0.7,1}
\newcommand{\sfsymbol}[1]{\textsf{\upshape {#1}}}
\newcommand{\SLemp}{\sfsymbol{\textbf{emp}}}
\newcommand{\SLsingleton}[2]{#1 \mapsto #2}
\newcommand{\SLvalidpointer}[1]{#1 \mapsto \,{-}\,}
\newcommand{\emp}{\iverson{\sfsymbol{\textbf{emp}}}}
\newcommand{\singleton}[2]{\iverson{#1 \mapsto #2}}
\newcommand{\validpointer}[1]{\iverson{#1 \mapsto \,{-}\,}}
\newcommand{\sepcon}{\mathbin{{\star}}}
\newcommand{\sepimp}{\mathbin{\text{\raisebox{-0.1ex}{$\boldsymbol{{-}\hspace{-.55ex}{-}}$}}\hspace{-1ex}\text{\raisebox{0.13ex}{\rotatebox{-17}{$\star$}}}}}
\newcommand{\bigsepcon}[1]{\ensuremath{\underset{#1}{\bigstar}}}
\newcommand{\wlpsymbol}{\sfsymbol{wlp}}
\newcommand{\wlp}[2]{\wlpsymbol\llbracket#1\rrbracket\left(#2\right)}
\newcommand{\wlpn}[3]{\wlpsymbol_{#1}\llbracket#2\rrbracket\left(#3\right)}
\newcommand{\wlps}[3]{\wlpsymbol^{#1}\llbracket#2\rrbracket\left(#3\right)}
\newcommand{\safesymbol}{\sfsymbol{safe}}
\newcommand{\wslpsymbol}{\sfsymbol{wrlp}}
\newcommand{\wslp}[3]{\wslpsymbol\llbracket#1\rrbracket\left(#2 \mid #3\right)}
\newcommand{\wslpn}[4]{\wslpsymbol_{#1}\llbracket#2\rrbracket\left(#3 \mid #4\right)}
\newcommand{\stepsymbol}{\sfsymbol{step}}
\newcommand{\step}[2]{\stepsymbol\llbracket#1\rrbracket\left(#2\right)}
\newcommand{\safeTuple}[4]{#1 ~\leq~ \wslp{#2}{#3}{#4}}
\newcommand{\QSL}{QSL\xspace}
\newcommand{\SL}{SL\xspace}
\newcommand{\MDP}{MDP\xspace}
\newcommand{\annotate}[1]{\fatslash~~\vphantom{G'} {#1}}
\newcommand{\annotateLarge}[1]{\fatslash~~\left\{#1 \right.}
\newcommand{\annotateRI}[2]{\annotate{#1 ~~\vert~~ #2}}
\newcommand{\annotateRILarge}[2]{\annotateLarge{#1 ~~\stretchrel*{\vert}{#1}~~ #2}}
\newcommand{\anonfunc}[1]{\lambda #1.~}
\newcommand{\written}[1]{\text{Write}(#1)}
\newcommand{\freevariable}[1]{\text{Vars}\,(#1)}
\newcommand{\integerrange}[2]{\{#1, \dots, #2\}}
\newcommand{\stackeqaul}[3]{\left(#1 \sim #2\right)^{#3}}
\newcommand{\llbag}{[}
\newcommand{\rrbag}{]}
\newcommand{\cc}{\ensuremath{C}} 
\newcommand{\guard}{\ensuremath{b}} 
\newcommand{\ee}{\ensuremath{e}} 
\newcommand{\hh}{\ensuremath{h}} 
\newcommand{\sk}{\ensuremath{s}} 
\newcommand{\pp}{\ensuremath{p}} 
\newcommand{\qq}{\ensuremath{q}}
\newcommand{\ff}{\ensuremath{X}} 
\newcommand{\fg}{\ensuremath{Y}}
\newcommand{\fh}{\ensuremath{Z}}
\newcommand{\sla}{\ensuremath{\varphi}} 
\newcommand{\slb}{\ensuremath{\psi}}
\newcommand{\inv}{\ensuremath{I}} 
\newcommand{\ri}{\ensuremath{\xi}} 
\newcommand{\rj}{\ensuremath{\pi}}
\newcommand{\xx}{\ensuremath{x}} 
\newcommand{\xy}{\ensuremath{y}} 
\newcommand{\xz}{\ensuremath{z}} 
\newcommand{\xv}{\ensuremath{v}} 
\newcommand{\xl}{\ensuremath{l}}
\newcommand{\xk}{\ensuremath{k}}
\newcommand{\setI}{\ensuremath{J}}
\newcommand{\rr}{\ensuremath{r}} 
\newcommand{\loca}{\ell}
\newcommand{\ct}{\ensuremath{t}} 
\newcommand{\chpgcl}{\textnormal{\sfsymbol{chpGCL}}\xspace}   
\newcommand{\Vars}{\ensuremath{\mathsf{Vars}}\xspace}   
\newcommand{\Vals}{\ensuremath{\Ints}\xspace}    
\newcommand{\Locs}{\ensuremath{\PosNats}\xspace}    
\newcommand{\Nats}{\ensuremath{\mathbb{N}}\xspace}
\newcommand{\PosNats}{\ensuremath{\mathbb{N}_{>0}}\xspace}
\newcommand{\Ints}{\ensuremath{\mathbb{Z}}\xspace}
\newcommand{\PosReals}{\mathbb{R}_{\geq 0}}
\newcommand{\Probs}{[0,1]}
\newcommand{\Eone}{\mathbb{E}_{\leq 1}}
\newcommand{\Stacks}{\sfsymbol{Stacks}\xspace}
\newcommand{\Heaps}{\sfsymbol{Heaps}\xspace}
\newcommand{\MDPStates}{\ensuremath{U}\xspace}
\newcommand{\MDPProbs}{\ensuremath{\mathbb{P}}}
\newcommand{\Act}{\text{A}\xspace}
\newcommand{\EnabledSymbol}{\text{Act}\xspace}
\newcommand{\Enabled}[1]{\EnabledSymbol(#1)}
\newcommand{\acta}{\ensuremath{a}} 
\newcommand{\scheduler}{\mathfrak{s}}
\newcommand{\Schedulers}{\ensuremath{\mathbb{S}}\xspace}
\newcommand{\SchedulerSet}{\ensuremath{S}\xspace}
\newcommand\newss{\setbox0=\hbox{\(\)}%
\fontdimen14\textfont2=1ex
}
\newcommand{\reach}[4]{\text{reach}(#1,#2,#3,#4)}
\newcommand{\optrans}[2]{~\xrightarrow[#1]{#2}~}
\newcommand{\optransN}[3]{~\xrightarrow[#2]{#3}^{#1}~}
\newcommand{\optransStar}[2]{~\newss\xrightarrow[#1]{#2}\hspace{-3px}{}^*\hspace{3px}~}
\newcommand{\disjoint}{\mathrel{\bot}}
\newcommand{\joinheap}{\mathbin{{\star}}}
\newcommand{\States}{\sfsymbol{States}\xspace}
\newcommand{\undefsymbol}{\text{undef}}
\newcommand{\dom}[1]{\sfsymbol{dom}\left({#1}\right)}
\newcommand{\iverson}[1]{\left[ {#1} \right]}
\newcommand{\Min}[2]{\min\left\{\,{#1},\: {#2}\,\right\}}
\newcommand{\Max}[2]{\max\left\{\,{#1},\: {#2}\,\right\}}
\newcommand{\Maxs}[1]{\max\left\{\,{#1}\,\right\}}
\newcommand{\subst}[2]{\left[ {#1} \coloneqq {#2}\right]}
\newcommand{\TERM}{\;\ensuremath{\downarrow}\;}
\newcommand{\ABORT}{\ensuremath{\textnormal{\texttt{abort}}}}
\newcommand{\DIVERGE}{\ensuremath{\textnormal{\texttt{diverge}}}}
\newcommand{\AssignSymbol}{\mathrel{\textnormal{\texttt{:=}}}}
\newcommand{\ASSIGN}[2]{\ensuremath{#1 \AssignSymbol #2}}
\newcommand{\ALLOC}[2]{\ensuremath{{#1} \AssignSymbol \mathtt{new}\left( #2 \right)}}
\newcommand{\dereference}[1]{\texttt{<}\,#1\,\texttt{>}}
\newcommand{\HASSIGN}[2]{\ensuremath{\dereference{#1} \AssignSymbol #2}}
\newcommand{\ASSIGNH}[2]{\ensuremath{#1 \AssignSymbol \dereference{#2}}}
\newcommand{\FREE}[1]{\ensuremath{\mathtt{free}(#1)}}
\newcommand{\SEMI}{\ensuremath{\,;\,}}
\newcommand{\COMPOSE}[2]{\ensuremath{{#1}{\,;}~ {#2}}}
\newcommand{\PCHOICESYMBOL}[1]{\mathrel{\left[\,#1\,\right]}}
\newcommand{\PCHOICE}[3]{\ensuremath{\left\{\, {#1} \,\right\}\PCHOICESYMBOL{#2}\left\{\, {#3} \,\right\}}}
\newcommand{\IFSYMBOL}{\ensuremath{\textnormal{\texttt{if}}}}
\newcommand{\IF}[1]{\ensuremath{\IFSYMBOL\,\left(\, {#1} \,\right)\,\{}}
\newcommand{\ELSESYMBOL}{\ensuremath{\textnormal{\texttt{else}}}}
\newcommand{\ITE}[3]{\ensuremath{\IFSYMBOL\,\left(\, {#1} \,\right)\,\left\{\, {#2} \,\right\}\,\ELSESYMBOL\,\left\{\, {#3} \,\right\}}}
\newcommand{\WHILESYMBOL}{\ensuremath{\textnormal{\texttt{while}}}}
\newcommand{\WHILE}[1]{\ensuremath{\WHILESYMBOL \left(\, {#1} \,\right)\left\{\right.}}
\newcommand{\WHILEDO}[2]{\ensuremath{\WHILESYMBOL \left(\, {#1} \,\right)\left\{\, {#2} \,\right\}}}
\newcommand{\CONCURRENT}[3][\;]{\left. #2 #1\middle\|#1 #3 \right.}
\newcommand{\ATOMICSYMBOL}{\ensuremath{\textnormal{\texttt{atomic}}}}
\newcommand{\ATOMIC}[1]{\ATOMICSYMBOL \left\{\, {#1} \,\right\}}
\newsavebox{\scaledproofbox}
\newcommand*{\scaledprooftreefactor}{1}
\newcommand*{\vspaceprooftree}{0px}
\newenvironment{scprooftree}[2]%
  {\renewcommand*{\scaledprooftreefactor}{#1}%
   \renewcommand*{\vspaceprooftree}{#2}
    \hspace{0.5em}%
    \begin{lrbox}{\scaledproofbox}}%
  {\DisplayProof%
   \end{lrbox}%
   \hspace{0.5em}%
   \vspace{\vspaceprooftree}%
   \scalebox{\scaledprooftreefactor}{\usebox{\scaledproofbox}}}
\newcommand{\optransvspace}{1px}
\newcommand{\optransscale}{0.75}
\newcommand{\prvspace}{6px}
\newcommand{\prscale}{0.8}
\begin{document}

\maketitle

\begin{abstract}
In this paper, we develop a novel verification technique to reason about programs featuring concurrency, pointers and randomization. 
While the integration of concurrency and pointers is well studied, little is known about the combination of all three paradigms. 
To close this gap, we combine two kinds of separation logic -- Quantitative Separation Logic and Concurrent Separation Logic -- into a new separation logic that enables reasoning about lower bounds of the probability to realise a postcondition by executing such a program.
\end{abstract}

\section{Introduction and Related Work}
\label{sec:intro}
%
In this paper, we aim to provide support for formal reasoning about concurrent imperative programs that are extended by two important features: dynamic data structures and randomisation.
In other words, it deals with the analysis and verification of concurrent probabilistic pointer programs.
This problem is of practical interest as many concurrent algorithms operating on data structures use randomisation to reduce the level of interaction between threads.
For example, probabilistic skip lists \cite{Pugh90Skip} work well in the concurrent setting \cite{Fraser04Lock} because threads can independently manipulate nodes in the list without much synchronisation. 
In contrast, scalability of traditional balanced tree structures is difficult to achieve, since re-balancing operations may require locking access to large parts of the data structure.
Bloom filters are another example of a probabilistic data structure supporting parallel access \cite{Bloom1970Filter}.
A further aspect is that stochastic modelling naturally arises when analysing faulty behaviour of (concurrent) software systems, as we later demonstrate in Section~\ref{sec:examples}.

However, the combination of these features poses severe challenges when it comes to implementing and reasoning about concurrent randomised algorithms that operate on dynamic data structures.
To give a systematic overview of related approaches, we mention that a number of program logics for reasoning about concurrent software have been developed \cite{DinsdaleYoung13Views, DinsdaleYoung10Predicates, Fu10History, Jones83Tentative, Jung15Iris, Nanevski14Communicating}.
Next, we will address the programming-language extensions in isolation and then consider their integration.
An overview is shown in Figure~\ref{fig:overview}.

\begin{figure}
  \newcommand{\tab}[1]{\begin{tabular}{@{}c@{}}#1\end{tabular}}
  \centering
	\begin{tikzpicture}
	  \begin{scope}[blend group = soft light]
	    \fill[red!30!white]   ( 90:1.6) circle (2.6);
	    \fill[green!30!white] (210:1.6) circle (2.6);
	    \fill[blue!30!white]  (330:1.6) circle (2.6);
	  \end{scope}
	  \node at ( 90:2.5) {\textbf{Concurrency}};
	  \node at (205:2.8) {\tab{\textbf{Pointers}\\SL \cite{Ishtiaq2001BI, Reynolds2002Separation}}};
	  \node at (335:2.8) {\tab{\textbf{Randomisation}\\Expectations \cite{McIver2005Abstraction}}};
	  \node at ( 35:1.75) {\tab{Prob.~rel.-guar.\\calc.~\cite{McIver2016RelyGuarantee}}};
	  \node at (145:1.75) {CSL \cite{OHearn2007resources}};
	  \node at (270:1.5) {QSL \cite{Batz2019Quantitative}};
	  \node              {\tab{Polaris \cite{Tassarotti2019Separation}\\CQSL}};
	\end{tikzpicture}
	\caption{Overview of programming language features and formal approaches
	(CQSL denotes our concurrent extension of QSL)}
	\label{fig:overview}
\end{figure}

\emph{Pointers.}
Pointers constitute an essential concept in modern programming languages, and are used for implementing dynamic data structures like lists, trees etc. 
However, many software bugs can be traced back to the erroneous use of pointers by e.g.\ dereferencing null pointers or accidentally pointing to wrong parts of the heap, creating the need for computer-aided verification methods.
The most popular formalism for reasoning about such programs is Separation Logic (SL) \cite{Ishtiaq2001BI, Reynolds2002Separation}, which supports Hoare-style verification of imperative, heap-manipulating and, possibly, concurrent programs. 
Its assertion language extends first-order logic with connectives that enable concise specifications of how program memory, or other resources, can be split-up and combined. 
In this way, SL supports local reasoning about the resources employed by programs. 
Consequently, program parts can be verified by considering only those resources they actually access -- a crucial property for building scalable tools including automated verifiers \cite{BerdineCO05, JacobsSPVPP11, 0001SS17, Piskac2013Automating}, static analysers \cite{BerdineCCDOWY07, CalcagnoDOY11, GotsmanBCS07}, and interactive theorem provers \cite{JungKJBBD18}.

The notion of resources, and in particular their controlled access, becomes even more important in a concurrent setting.
Therefore, SL has been extended to Concurrent Separation Logic (CSL) \cite{OHearn2007resources} to enable reasoning about resource ownership, where the resource typically is dynamically allocated memory (i.e., the heap).
The popularity of CSL is evident by the number of its extensions \cite{Brookes16CSL}.
Of particular importance to our work is \cite{Vafeiadis11CLS}, which presents a soundness result for CSL that is formulated in an inductive manner, matching the \enquote{small-step} operational style of semantics.
Here, we will employ a similar technique that also takes quantitative aspects (probabilities) into account.

\emph{Randomisation.}
Probabilistic programs (i.e., programs with the ability to sample from probability distributions) are increasingly popular for implementing efficient randomised algorithms \cite{0012859} and describing uncertainty in systems \cite{CarbinMR16, Gordon2014Probprogs}, among other similar tasks.
In such applications, the purely qualitative (true vs.\ false) approach of classical logic is obviously not sufficient.
The method advocated by us is based on weakest precondition reasoning as established in a classical setting by Dijkstra \cite{dijkstra1976discipline}. 
It has been extended to provide semantic foundations for probabilistic programs by Kozen \cite{Kozen1997Semantics, Kozen1983Probabilistic} and McIver \& Morgan \cite{McIver2005Abstraction}.
The latter also coined the term \enquote{weakest preexpectation} for random variables that take over the role of logical formulae when doing quantitative reasoning about probabilistic programs -- the quantitative analogue of weakest preconditions.
Their relation to operational models is studied in \cite{gretz2014semantics}.
Moreover, weakest preexpectation reasoning has been shown to be useful for obtaining bounds on the expected resource consumption \cite{ngo2018resource} and, especially, the expected run-time \cite{kaminski2018weakest} of probabilistic programs.

However, verification techniques that support reasoning about both randomisation and dynamic data structures are rare -- a surprising situation given that randomised algorithms typically rely on such data structures.
One notable exception is the extension of SL to Quantitative Separation Logic (QSL) \cite{Batz22Entailment, Batz2019Quantitative}, which marries SL and weakest preexpectations. 
QSL has successfully been applied to the verification of randomised algorithms, and QSL expectations have been formalised in Isabelle/HOL \cite{haslbeck_diss}.
The present work builds on these results by additionally taking concurrency into account.

A prior program logic designed for reasoning about programs that are both concurrent and randomised but do not maintain dynamic data structures is the probabilistic rely-guarantee calculus developed by McIver et al.\ \cite{McIver2016RelyGuarantee}, which extends Jones’s original rely-guarantee logic \cite{Jones83Tentative} by probabilistic constructs.

Later, Tassarotti \& Harper \cite{Tassarotti2019Separation} address the full setting of concurrent probabilistic pointer programs by combining CSL with probabilistic relational Hoare logic \cite{Barthe12Relational} to obtain Polaris, a Concurrent Separation Logic with support for probabilistic reasoning.
Verification is thus understood as establishing a relation between a program to be analysed and a program which is known to be well-behaved.
Programs which do not almost surely terminate, however, are outside the scope of their approach.
In contrast, the goal of our method is to directly measure quantitative program properties on source-code level using weakest liberal preexpectations defined by a set of proof rules, including possibly non-almost surely terminating programs. Since the weakest liberal preexpectation includes non-termination probability, we can use invariants to bound the weakest liberal preexpectation of loops from below.

The main contributions of this paper are:
\begin{itemize}
\item the definition of a concurrent heap-manipulating probabilistic guarded command language (\chpgcl) and its operational semantics in terms of Markov Decision Processes (MDP);
\item a formal framework for reasoning about quantitative properties of \chpgcl programs, which is obtained by extending classical weakest liberal preexpectations by resource invariants;
\item a sound proof system that supports backward reasoning about such preexpectations; and
\item the demonstration of our verification method on a (probabilistic) producer-consumer example.
\end{itemize}

The remainder of this paper is organised as follows. 
Section~\ref{sec:qsl} introduces QSL as an assertion language for quantitative reasoning about (both sequential and concurrent) probabilistic pointer programs.
In Section~\ref{sec:operational}, we present the associated programming language (\chpgcl) together with an operational semantics.
Next, in Section~\ref{sec:safe} we develop a calculus for reasoning about lower bounds of weakest liberal preexpectations.
Its usage is demonstrated in Section~\ref{sec:examples}, and in Section~\ref{sec:conclusion} we conclude and explain further research directions.
The main part of the paper is accompanied by an extensive appendix providing elaborated proofs and additional details about the examples. 

\section{Quantitative Separation Logic}
\label{sec:qsl}
To reason about probability distributions over states of a program, we use Quantitative Separation Logic (\QSL)~\cite{Batz2019Quantitative,Matheja2020Automated}. 
\QSL is an extension of classical (or \emph{qualitative}) Separation Logic in the sense that instead of mapping stack/heap pairs to booleans in order to gain a set characterization of states, we assign probabilities to stack/heap pairs.
\begin{definition}[Stack]
    Let \Vars be a fixed set of variables. A stack $\sk\colon \Vars \to \Vals$ is a mapping from variable symbols to values. We denote the set of all stacks by $\Stacks$.    
\end{definition}%
When evaluating an (arithmetic or boolean) expression $\ee$ with respect to a stack $\sk$, we write $\ee(\sk)$. In this sense, expressions are mappings from stacks to values. The stack that agrees with a stack $\sk$ except for the value of $\xx$, which is mapped to $\xv$, is denoted as $\sk\subst{\xx}{\xv}$.
\begin{definition}[Heaps]
    A heap $\hh\colon L \to \Vals$ is a mapping from a finite subset of locations $L\subset \Locs$ to values. 
    We denote the set of all heaps by \Heaps.\\
    We furthermore write $\dom{\hh}$ for the domain of $\hh$, $\hh_1 \disjoint \hh_2$ if and only if $\dom{\hh_1} \cap \dom{\hh_2} = \emptyset$, and for disjoint heaps $\hh_1 \disjoint \hh_2$ we define the disjoint union of \mbox{heaps $\hh_1$ and $\hh_2$ as}
    \[ (\hh_1 \joinheap \hh_2)(\loca) \quad = \quad \begin{cases} \hh_1(\loca)  & \text{if}~\loca \in \dom{\hh_1} \\
        \hh_2(\loca)  & \text{if}~\loca \in \dom{\hh_2} \\
        \undefsymbol        & \text{else}~. \end{cases}\]
\end{definition}%
A pair of a \emph{stack} and a \emph{heap} is a \emph{state} of the program. The stack is used to describe the variables of the program. The heap describes the addressable memory of the program.
\begin{definition}[Program States]
    A program state $\sigma \in \Stacks \times \Heaps$ is a pair consisting of a stack and a heap. 
    The set of all states is denoted by $\States$.
\end{definition}%
\emph{Expectations} are \emph{random variables} that map states to non-negative reals. In this paper, we only consider one-bounded expectations. These do not map states to arbitrary non-negative reals, but only to reals between $0$ and $1$. The nomenclature of calling these expectations rather than random variables is due to the weakest preexpectation calculus being used to derive expectations.
\begin{definition}[Expectations]
    A (one-bounded) expectation $\ff\colon \States \to \Probs$ is a mapping from program states to probabilities. We write $\Eone$ for the set of all (one-bounded) expectations. 
    We call an expectation $\sla$ qualitative if for all $(\sk, \hh) \in \States$ we have that $\sla(\sk, \hh) \in \{0, 1\}$.
    We define the partial order $(\Eone,\leq)$ as the pointwise application of less than or equal, i.e., $\ff \leq \fg$ if and only if $\forall (\sk, \hh) \in \States~ \ff(\sk, \hh) \leq \fg(\sk, \hh)$.
\end{definition}%
We use capital letters for regular (one-bounded) expectations and Greek letters for qualitative expectations. 
As in \cite{Batz2019Quantitative}, we choose to not give a specific syntax for \QSL since the weakest liberal preexpectation of a given postexpectation -- for which we provide more detail in \Cref{sec:operational} -- may not be expressible in a given syntax. 
Instead, we prefer to interpret expectations as \emph{extensional objects} that can be combined via various connectives. These connectives include (but are not limited to) the pointwise-applied connectives of addition, multiplication, exponentiation, maximum and minimum. 
As it is common in quantitative logics, the maximum/minimum is the quantitative extension of disjunction/conjunction, respectively. 
However, multiplication can be chosen as the quantitative extension of conjunction as well. 
We denote the substitution of a variable $\xx$ by the expression $\ee$ in the expectation $\ff$ as $\ff\subst{\xx}{\ee}$ and define it as $\ff\subst{\xx}{\ee}(\sk,\hh) = \ff(\sk\subst{\xx}{\ee(\sk)}, \hh)$. 
When dealing with state predicates, we use Iverson brackets \cite{Iverson1962} to cast boolean values into integers:
\[ \iverson{\guard}(\sk, \hh) = \begin{cases} 1 & \text{if}~ (\sk, \hh) \in \guard \\
                                              0 & \text{else} \end{cases}\] 
Note that we could also define predicates as mappings from states to $0$ or $1$. We refrain from this, since (1) usage of Iverson brackets is standard in weakest preexpectation reasoning and (2) we may use \QSL inside of Iverson brackets.

For a state $(\sk, \hh)$, the empty heap predicate $\SLemp$ holds if and only if $\dom{\hh}=\emptyset$, the points-to predicate $\SLsingleton{\ee}{\ee_0, \dots, \ee_n}$ holds if and only if $\dom{\hh}=\{ \ee(\sk)+0, \dots, \ee(\sk)+n \}$ and $\forall i \in \{0, \dots, n\}~ \hh(\ee(\sk)+i)=\ee_i(\sk)$, the allocated predicate $\SLvalidpointer{\ee}$ holds if and only if $\dom{\hh}=\{\ee(\sk)\}$, and the equality predicate $\ee = \ee'$ holds if and only if $\ee(\sk) = \ee'(\sk)$.

We also use quantitative extensions of two separation connectives -- the separating conjunction and the magic wand. The quantitative extension of the separating conjunction, which we call \emph{separating multiplication}, maximises the value of the product of its arguments applied to separated heaps:
\[ (\ff \sepcon \fg)(\sk, \hh) = \sup \left\{\, \ff(\sk,\hh_1) \cdot \fg(\sk, \hh_2) \,\mid\, \hh_1 \joinheap \hh_2 = \hh \,\right\} \]
The definition of separating multiplication is similar to the classical separating conjunction: the existential quantifier is replaced by a supremum and the conjunction by a multiplication. 
Note that the set over which the supremum ranges is never empty.

The (guarded) quantitative magic wand is defined for a qualitative first argument and a quantitative second argument. We minimise the value of the second argument applied to the original heap joined with a heap that evaluates the first argument to $1$, i.e., for qualitative expectation $\sla$ and expectation $\fg$ we have:
\[ (\sla \sepimp \fg)(\sk, \hh) = \inf \left\{\, \fg(\sk, \hh'') \,\mid\, \sla(\sk, \hh') = 1,~ \hh''=\hh \joinheap \hh' \,\right\} \]
If the set is empty, the infimum evaluates to the greatest element of all probabilities, which is $1$. 
Although it is also possible to allow expectations in both arguments (cf.~\cite{Batz2019Quantitative}), we restrict ourselves to the guarded version of the magic wand. 
This restriction allows us to exploit the superdistributivity of multiplication, i.e., $\sla \sepimp (\ff \cdot \fg) \geq (\sla \sepimp \ff) \cdot (\sla \sepimp \fg)$. 

\begin{example}
    To illustrate separating operations and lower bounding in \QSL, we consider $\ff = \validpointer{\xx} \sepcon (\singleton{\xx}{\xy} \sepimp (0.5 \cdot \validpointer{\xx}))$, $\fg = 0.5 \cdot \validpointer{\xx}$ and $\fh = 0.5 \cdot \singleton{\xx}{\xy}$. 
    Let us consider the semantics of $\ff$ in more detail. 
    $\ff$ is non-zero only for states that allocate exactly $\xx$. 
    In this case, after changing the value pointed to by $\xx$ to $\xy$, $0.5$ is returned if $\xx$ is still allocated (which obviously holds).
     Thus, the combination of separating multiplication and magic wand realises a change of value: 
     First a pointer is removed from the heap by using separating multiplication, and afterwards we add it back with a different value using the magic wand.
    Note that $\singleton{\xx}{\xy}$ is qualitative, which is required for our version of the magic wand. 
    Then we have $\ff = \fg$ and $\fh \leq \ff$.
\end{example}

\section{Programming Language and Operational Semantics}
\label{sec:operational}
Our programming language is a concurrent extension of the heap-manipulating and probabilistic guarded command language~\cite{Batz2019Quantitative}. 
Our language features both deterministic and probabilistic control flow, atomic regions, concurrent threads operating on shared memory, variable-based assignments, and heap manipulations. 
Although our language allows arbitrary shared memory, we will later only be able to reason about shared memory in the heap. 
Conditional choice without an else branch is considered syntactic sugar. 
Atomic regions consist of programs without memory allocation or concurrency. 
However, probabilistic choice is admitted. 
Programs that satisfy this restriction are called \emph{tame}. 

The reason to restrict the program fragment within atomic regions is that non-tame statements introduce non-determinism (as addresses to be allocated and schedulings of concurrent programs are chosen non-deterministically), which would increase the semantics' complexity while providing only little benefit (we refer to~\cite{Baier04Probmela} regarding the handling of non-tame probabilistic programs in atomic regions).
If an atomic region loops with a certain probability $\pp$, we instead transition to a non-terminating program with probability $\pp$.

\begin{definition}[Concurrent Heap-Manipulating Probabilistic Guarded Command Language]
    The concurrent heap-manipulating probabilistic guarded command language \chpgcl is generated by the grammar
    \begin{align*}
        \cc  ~~\longrightarrow~~ &\TERM & \text{(terminated program)} \\
        & |~~ \DIVERGE & \text{(non-terminating program)}\\
        & |~~ \ASSIGN{\xx}{\ee} & \text{(assignment)} \\
        & |~~ \PCHOICE{\cc}{\ee_{\pp}}{\cc} & \text{(prob. choice)}\\
        & |~~ \COMPOSE{\cc}{\cc} & \text{(seq. composition)} \\
        & |~~ \ATOMIC{\cc} & \text{(atomic region)} \\
        & |~~ \ITE{\guard}{\cc}{\cc} & \text{(conditional choice)} \\
        & |~~ \WHILEDO{\guard}{\cc} & \text{(loop)} \\
        & |~~ \CONCURRENT{\cc}{\cc} & \text{(concurrency)} \\
        & |~~ \ALLOC{x}{\ee_0, \dots, \ee_n} & \text{(allocation)} \\
        & |~~ \FREE{\ee}, & \text{(disposal)} \\
        & |~~ \ASSIGNH{\xx}{\ee} & \text{(lookup)} \\
        & |~~ \HASSIGN{\ee}{\ee'} & \text{(mutation)}
    \end{align*}    
    where $\xx$ is a variable, $\ee, \ee', \ee_i\colon \Stacks \to \Ints$ are arithmetic expressions, $\ee_{\pp}\colon \Stacks \to \Probs$ is a probabilistic arithmetic expression and $\guard\subseteq \Stacks$ is a guard.
\end{definition}

\begin{example}\label{example:running_intro}
    We consider as running example a little program with two threads synchronizing over a randomised value:
    \begin{align*}
        &\HASSIGN{\rr}{-1}\SEMI\\
        &\CONCURRENT{
            \begin{aligned}
                &\quad\PCHOICE{\HASSIGN{\rr}{0}}{0.5}{\HASSIGN{\rr}{1}}
            \end{aligned}
        }{
            \begin{aligned}
                &\ASSIGNH{\xy}{\rr}\SEMI\\
                &\WHILEDO{\xy = -1}{\ASSIGNH{\xy}{\rr}}\SEMI
            \end{aligned}
        }
    \end{align*}
    We first initialise our resource $\rr$ with some integer that stands for an undefined value (here $-1$). The first thread now either assigns $0$ or $1$ with probability $0.5$ to $\rr$. As soon as $\rr$ has a new value, the second thread receives this value and terminates as $\rr$ is not $-1$ any more.
\end{example}

We define the operational semantics of our programming language \chpgcl in the form of a \emph{Markov Decision Process} (\MDP for short). An \MDP allows the use of both \emph{non-determinism}, which we need for interleaving multiple threads, and \emph{probabilities}, which are used for encoding probabilistic program commands. A transition between states is thus always annotated with two parameters: (1) an action that is taken non-deterministically and (2) a probability to transition to a state given the aforementioned action.

\begin{definition}[Markov Decision Process]
    A Markov Decision Process $M=(\MDPStates, \EnabledSymbol, \MDPProbs)$ consists of a countable set of states $\MDPStates$, a mapping from states to enabled actions $\EnabledSymbol\colon \MDPStates \to 2^{\Act}$ for a countable set of actions $\Act$, and a transition probability function $\MDPProbs\colon (\MDPStates \times \Act) \to \MDPStates \to \Probs$ where for all $\sigma \in \MDPStates$ and $\acta \in \Enabled{\sigma}$ we require $\sum_{\sigma' \in \MDPStates} \MDPProbs(\sigma,\acta)(\sigma') = 1$. We also use the shorthand notation $\sigma \optrans{\acta}{\pp} \sigma'$ for $\MDPProbs(\sigma,\acta)(\sigma') = \pp$ in case $\pp>0$.
\end{definition}%

We define the operational semantics of \chpgcl as an \MDP. A state in this \MDP consists of a \chpgcl program to be executed and a program state $(\sk, \hh)$. The meaning of basic commands, i.e.,
assignments, heap mutations, heap lookups, memory allocation and disposal, is defined by the  inference rules shown in \Cref{fig:op-commands}. An action is enabled if and only if an inference rule for this action exists. We use an $\ABORT$ keyword to indicate that a memory access error happened and terminate at this state. We consider aborted runs as undesired runs. The condition $\sum_{\sigma' \in \MDPStates} \MDPProbs(\sigma,\acta)(\sigma') = 1$ holds for all states in a \chpgcl program. States with program $\TERM$ or $\ABORT$ have no enabled actions, thus the condition holds trivially for all enabled actions $\acta$; non-probabilistic programs only have actions with trivial distributions; states with probabilistic choice only have a single action with a biased coin-flip distribution; and other programs are composed of these.

\begin{figure}
    \begin{center}%
        \input{fig-tables/op-command.tex}%
    \end{center}%
    \vspace{-1em}
    \caption{Operational semantics of basic commands in \chpgcl}\label{fig:op-commands}
\end{figure}

Control-flow statements include while loops, conditional choice, sequential composition and probabilistic choice, and we define their operational semantics in \Cref{fig:op-flow}. For the sake of brevity, we do not include a command to sample from a distribution.

\begin{figure}
    \begin{center}%
        \input{fig-tables/op-flow.tex}%
    \end{center}%
    \vspace{-1em}
    \caption{Operational semantics of non-concurrent control-flow operations in \chpgcl}\label{fig:op-flow}
\end{figure}

The remaining control-flow statements handle concurrency, i.e., the concurrent execution of two threads and the atomic execution of regions. 
An atomic region may only terminate with a certain probability. 
The notation $\cc, (\sk,\hh) \optransStar{}{p} \dots$ denotes that program $\cc$ does not terminate on state $(\sk,\hh)$ with probability $\pp$.
As mentioned before, we will only allow tame programs inside atomic regions. 
A tame program does not require any (scheduling) actions since its Markov model is fully probabilistic. 
To formally define the syntax used in the inference rules for atomic regions, we first need to introduce \emph{schedulers}, which are used to resolve non-determinism in an \MDP. 
There are various classes of schedulers, and indeed we will later allow the use of different classes. 
However, we do require that all schedulers are deterministic and may have a history. 
This especially rules out any randomised scheduler, which would be an interesting topic, but is out of scope for the results presented here. 
Our schedulers use finite sequences of \MDP states as histories.

\begin{definition}[Scheduler]
    A scheduler is a mapping $\scheduler\colon \MDPStates^+ \to \Act$ from histories of states to enabled actions, i.e., $\scheduler(\sigma_1 \dots \sigma_n) \in \Enabled{\sigma_n}$. 
    We denote the set of all schedulers by $\Schedulers$.
\end{definition}%

For final states $\sigma'$ (i.e., with program $\TERM$ or $\ABORT$) and an \MDP $(\MDPStates, \EnabledSymbol, \MDPProbs)$, we define
\begin{align}
    \reach{n}{\sigma_1}{\scheduler}{\sigma'} ~=&~ \sum \bigg\llbag\prod_{i=1}^{m-1} ~~ \MDPProbs(\sigma_i, \scheduler(\sigma_1 \dots \sigma_i))(\sigma_{i+1}) \nonumber \\
                                             &~ \qquad  \bigg\vert\; \sigma_1 \dots \sigma_m \in U^{m}, \sigma_{m}=\sigma', m\leq n \bigg\rrbag~,\label{eq:reach}\\
    \sigma \optransStar{\scheduler}{\pp} \sigma' \quad \text{iff}& \quad \pp ~=~ \lim_{n \rightarrow \infty} \reach{n}{\sigma}{\scheduler}{\sigma'}~, \label{eq:reachability}\\
    \sigma \optransStar{\scheduler}{1-\pp} \dots \quad \text{iff}& \quad \pp ~=~ \sum_{\sigma' ~\text{final}} \lim_{n \rightarrow \infty} \reach{n}{\sigma}{\scheduler}{\sigma'}~. \label{eq:non-termination}
\end{align}%
For a function $f$ and a predicate $\guard$, we write $\llbag f(x) \mid x \in \guard \rrbag$ for the bag consisting of the values $f(x)$ with $x \in \guard$.
We use notation (\ref{eq:reach}) to calculate the probability to reach the final state $\sigma'$ from $\sigma_1$ in at most $n$ steps w.r.t $\scheduler$. We unroll the \MDP here into the Markov Chain induced by $\scheduler$ after at most $n$ steps (cf. \cite[Definition 10.92]{Baier08Principles}). 
With notation (\ref{eq:reachability}), we define the reachability probability of a final state and with notation (\ref{eq:non-termination}), we define the probability of non-termination. We avoid reasoning about uncountable sets of paths in case of non-termination by taking the probability to not reach a final state, i.e., a state with program $\TERM$ or $\ABORT$.
A scheduler $\scheduler$ is unique if for every state $\sigma \in \MDPStates$ there is at most one enabled action $\scheduler$ can map to, i.e., $|\Enabled{\sigma}| \leq 1$. In that case, we usually omit the corresponding transition label.

\begin{figure}
    \begin{center}%
        \input{fig-tables/op-concurrent.tex}%
    \end{center}%
    \vspace{-1em}
    \caption{Operational semantics of concurrent control-flow operations in \chpgcl}\label{fig:op-con}
\end{figure}

To reason about the operational semantics using \QSL, we use weakest \emph{liberal} preexpectations \cite{Batz2019Quantitative,McIver01Partial}, which take the greatest lower bound of the expected value with respect to a postexpectation together with the probability of non-termination for all schedulers that we want to consider. We allow subsets of schedulers $\SchedulerSet \subseteq \Schedulers$ in order to apply fairness conditions. Later, we only consider the complete set of schedulers. In that case, we omit the superscript from the function $\wlpsymbol$, which is defined in the following.

\begin{definition}[Weakest Liberal Preexpectation]\label{def:wlp}
    For a program $\cc$ and an expectation $\ff$, we define the weakest liberal preexpectation with respect to a set of schedulers $\emptyset \neq \SchedulerSet \subseteq \Schedulers$ as
    \begin{align*}
        \wlps{\SchedulerSet}{\cc}{\ff}(\sk,\hh) \quad=&\quad \inf \bigg\{ \sum \left\llbag \pp \cdot \ff(\sk', \hh') \mid \cc, (\sk, \hh) \optransStar{\scheduler}{\pp} \TERM, (\sk', \hh') \right\rrbag + \pp_{div} \\
                                             &\quad \qquad \bigg\vert ~ \scheduler \in \SchedulerSet ~\text{and}~ \cc, (\sk, \hh) \optransStar{\scheduler}{\pp_{div}} \dots \bigg\}~.
    \end{align*}
\end{definition}

\begin{example}\label{example:running_wlp}
    For program $\cc$ in \Cref{example:running_intro}, we evaluate (without proof) $\wlp{\cc}{\iverson{\xy = 0}}=\wlps{\Schedulers}{\cc}{\iverson{\xy=0}}=0.5 \sepcon \validpointer{\rr}$.
    That is, if $\rr$ is allocated, then the likelihood of $\cc$ terminating without aborting in a state in which $\xy$ equals $0$ is 0.5, and zero otherwise.
    We will prove that this is a lower bound in \Cref{example:annotated_program}.
\end{example}

\section{Weakest Safe Liberal Preexpectations}
\label{sec:safe}
For sequential probabilistic programs, a backwards expectation transformer can be defined to compute $\wlpsymbol$ \cite{Batz2019Quantitative}. This is not feasible for concurrent programs due to the non-locality of shared memory. Instead, we drop exact computation in our approach and reason about lower bounds of $\wlpsymbol$ by using inference rules similar to Hoare triples. To support shared memory, we furthermore introduce a modified version of $\wlpsymbol$ -- the weakest \emph{resource-safe} liberal preexpectation. The general idea as inspired by \cite{Vafeiadis11CLS} is to prove that the shared memory is invariant with respect to a qualitative expectation, which we call a \emph{resource invariant}. In other words, the shared memory is proven to be \emph{safe} with respect to the resource invariant. We archive this by enforcing that at every point in the program's execution (except for executions in atom regions), some part of the heap is satisfied by the resource invariant. In \Cref{example:annotated_program} we use the resource invariant $\Max{\singleton{\rr}{0}}{\singleton{\rr}{-1}}$ to prove the lower bound from \Cref{example:running_wlp}. We enforce that the program states do not include the shared memory any more, the transitions however are taken with any possible shared memory.

\begin{definition}[Weakest Resource-Safe Liberal Preexpectation]
    We first consider the expectation after one step with respect to a mapping from programs to expectations, that is, for a program $\cc$ and a mapping $\ct\colon \chpgcl \to \Eone$, we define
    \begin{align*}
        \step{\cc}{t}(\sk, \hh) \quad=&\quad \inf \bigg\{ \sum \left\llbag \pp \cdot \ct(\cc')(\sk', \hh') \mid \cc, (\sk, \hh) \optrans{\acta}{\pp} \cc', (\sk', \hh') \right\rrbag \\
                                      &\quad \qquad \bigg\vert~ \acta \in \Enabled{\cc, (\sk, \hh)} \bigg\}~.
    \end{align*}
    We define the weakest resource-safe liberal preexpectation after $n$ steps for a program $\cc$, a postexpectation $\ff$ and a (qualitative) resource invariant $\ri$ as
    \begin{align*}
        \wslpn{n}{\cc}{\ff}{\ri} = \begin{cases}
                                        1 & \text{if}~ n=0 \\
                                        \ff & \text{if}~ n\neq 0 ~\text{and}~ \cc=\TERM \\
                                        \ri \sepimp \step{\cc}{\anonfunc{\cc'} \wslpn{n-1}{\cc'}{\ff}{\ri} \sepcon \ri} & \text{otherwise\footnotemark}.
        \end{cases}
    \end{align*}
    \footnotetext{We use $\anonfunc{\cc'} \ff$ for the function which, when applied to the argument $\cc$, reduces to $\ff$ in which every occurrence of $\cc'$ in $\ff$ is replaced by $\cc$.}
    Finally, we define the weakest resource-safe liberal preexpectation for arbitrarily many steps as 
    \begin{align*}
        \wslp{\cc}{\ff}{\ri}=\lim_{n \rightarrow \infty} \wslpn{n}{\cc}{\ff}{\ri}~.
    \end{align*}
\end{definition} 
An important observation is that for the special resource invariant $\emp$, $\wlpsymbol$ and $\wslpsymbol$ coincide %
(cf.\ \Cref{thm:wlp-wslp-equality}).
This enables us to reason about lower bounds for probabilities of qualitative preconditions (and in general lower bounds for the expected value of one-bounded random variables). When reasoning about such probabilities, we first express a property for which we aim to prove a lower bound on $\wlpsymbol$, afterwards we can transform it into $\wslpsymbol$ with the resource invariant $\emp$ and use special rules to enrich the resource invariant with more information. The resource invariant should always cover all possible states that the shared memory may be in at any time during the program's execution. It is fine if the resource invariant is violated during executions of atomic regions, since we only care about safeness during executions with inferences between threads.

We mention that $\wslpsymbol$ is heavily inspired by \cite{Vafeiadis11CLS}. We formalise the connection between Vafeiadis' Concurrent Separation Logic and our weakest resource-safe liberal preexpectation below. In \cite{Vafeiadis11CLS} a judgement is defined by a safe predicate that is similar to how we defined $\wslpsymbol$.
\begin{restatable}[Safe Judgements \cite{Vafeiadis11CLS}]{redefinition}{safejudgement} \label{def:safe-judgement}
    The predicate $\text{safe}_n(\cc, \sk, \hh, \ri, \sla)$ holds for qualitative $\sla$ and $\ri$ and non-probabilistic program $\cc$ if and only if
    \begin{enumerate}
        \item if $n=0$, then it holds always; and
        \item if $n>0$ and $\cc=\TERM$, then $\sla(\sk, \hh) = 1$; and
        \item if $n>0$ and for all $\hh_{\ri}$ and $\hh_F$ with $\ri(\sk, \hh_{\ri})=1$ and $\hh \disjoint \hh_{\ri} \disjoint \hh_F$, then for all enabled actions $\acta\in \Enabled{\cc, (\sk, \hh \joinheap \hh_{\ri} \joinheap \hh_F)}$ we do not have $\cc, (\sk, \hh \joinheap \hh_{\ri} \joinheap \hh_F) \optrans{\acta}{1} \ABORT$; and 
        \item if $n>0$ and for all $\hh_{\ri}, \hh_F, \cc', \sk$ and $\hh$, with $\ri(\sk, \hh_{\ri}) = 1$, and $\hh \disjoint \hh_{\ri} \disjoint \hh_F$, and $\cc, (\sk, \hh \joinheap \hh_{\ri} \joinheap \hh_F) \optrans{\acta}{1} \cc', (\sk', \hh')$, then there exists $\hh''$ and $\hh'_{\ri}$ such that $\hh'=\hh'' \joinheap \hh'_{\ri} \joinheap \hh_F$ and $\ri(\sk', \hh_{\ri}') = 1$ and $\text{safe}_{n-1}(\cc', \sk', \hh'', \ri, \sla)$.
    \end{enumerate}
    For qualitative $\sla$, $\slb$ and $\ri$, we say that $\ri \models \{\slb\}\; \cc\; \{\sla\}$ holds if and only if for all stack/heap pairs $\sk, \hh$ the statement $\slb(\sk, \hh)=1 \Rightarrow \forall n\in \Nats.~ \text{safe}_n(\cc, \sk, \hh, \ri, \sla)$ holds.
\end{restatable}
A program $\cc$ is framing enabled\footnote{Indeed every non-probabilistic \chpgcl program is framing enabled %
(cf.\ \Cref{thm:chpgcl-framing-enabled}).
} if we can always extend the heap without changing the behaviour of $\cc$.
\begin{restatable}[Framing Enabledness]{redefinition}{framingenabled} \label{def:framing-enabled}
        A non-probabilistic program $\cc$ is framing enabled if for all heaps $\hh_F$ with $\hh \disjoint \hh_F$ and all enabled actions $\acta \in \Enabled{\cc, (\sk, \hh \joinheap \hh_F)}$, it holds:
        if $\cc, (\sk, \hh) \optrans{\acta}{1} \cc', (\sk', \hh')$, then also $\cc, (\sk, \hh \joinheap \hh_F) \optrans{\acta}{1} \cc', (\sk', \hh' \joinheap \hh_F)$.
\end{restatable}
The next theorem states that $\wslpsymbol$ is a conservative extension of $\safesymbol$. 
\begin{restatable}[Conservative Extension of Concurrent Separation Logic]{retheorem}{conservativewslp} \label{thm:consercative-wslp}
    For a framing-enabled non-probabilistic program $\cc$ and qualitative expectations $\sla$, $\slb$ and $\ri$, we have
    \begin{equation*}
        \safeTuple{\sla}{\cc}{\slb}{\ri} \quad \text{iff} \quad \ri \models \{\sla\}~\cc~\{\slb\}~.
    \end{equation*}
\end{restatable}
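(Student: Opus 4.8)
The plan is to factor the statement through a single step-indexed, pointwise correspondence and prove that by induction. Concretely, I would first prove the following \textbf{Lemma}: for non-probabilistic, framing-enabled $\cc$, qualitative $\slb$ and $\ri$, and every $n \in \Nats$, $\sk$, $\hh$,
\[
    \wslpn{n}{\cc}{\slb}{\ri}(\sk,\hh) = 1 \quad\text{iff}\quad \safesymbol_n(\cc,\sk,\hh,\ri,\slb).
\]
Granting the Lemma, the theorem is pure bookkeeping. A routine induction on $n$ (using monotonicity of $\step{\cc}{\cdot}$, $\sepcon$ and $\sepimp$) shows that $\wslpn{n}{\cc}{\slb}{\ri}$ is non-increasing in $n$, so $\wslp{\cc}{\slb}{\ri}(\sk,\hh) = \inf_n \wslpn{n}{\cc}{\slb}{\ri}(\sk,\hh)$ equals $1$ iff every term does, i.e.\ iff $\safesymbol_n(\cc,\sk,\hh,\ri,\slb)$ holds for all $n$. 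Since $\sla$ is qualitative, $\sla \leq \wslp{\cc}{\slb}{\ri}$ holds iff $\wslp{\cc}{\slb}{\ri}(\sk,\hh) = 1$ whenever $\sla(\sk,\hh) = 1$; unfolding $\ri \models \{\sla\}\,\cc\,\{\slb\}$ yields exactly the same condition, which closes the argument.

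I would prove the Lemma by induction on $n$. For $n = 0$ both sides hold by definition, and for $n > 0$ with $\cc = \TERM$ both $\wslpn{n}{\TERM}{\slb}{\ri} = \slb$ and clause~2 of \Cref{def:safe-judgement} reduce to $\slb(\sk,\hh) = 1$. The substantial case is $n > 0$, $\cc \neq \TERM$, where $\wslpn{n}{\cc}{\slb}{\ri} = \ri \sepimp \step{\cc}{\anonfunc{\cc'} \wslpn{n-1}{\cc'}{\slb}{\ri} \sepcon \ri}$, and I would unfold every connective, using that only the value $1$ matters. As heaps are finite and $\ri$ is qualitative, $(\fg \sepcon \ri)(\sk,\hh') = 1$ iff $\hh'$ splits as $\hh_1 \joinheap \hh_2$ with $\ri(\sk,\hh_2) = 1$ and $\fg(\sk,\hh_1) = 1$; the guarded magic wand $\ri \sepimp (\cdot)$ is $1$ at $(\sk,\hh)$ iff its argument is $1$ on $\hh \joinheap \hh_\ri$ for \emph{every} resource heap $\hh_\ri$ with $\ri(\sk,\hh_\ri) = 1$ and $\hh \disjoint \hh_\ri$; and, as $\cc$ is non-probabilistic, each enabled action is a Dirac step, so $\step{\cc}{\anonfunc{\cc'} \wslpn{n-1}{\cc'}{\slb}{\ri} \sepcon \ri}(\sk,\hh \joinheap \hh_\ri) = 1$ iff no enabled action aborts and every successor $\cc',(\sk',\hh')$ satisfies $(\wslpn{n-1}{\cc'}{\slb}{\ri} \sepcon \ri)(\sk',\hh') = 1$. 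Applying the first observation and then the induction hypothesis to each resulting $\wslpn{n-1}{\cc'}{\slb}{\ri}(\sk',\hh_1) = 1$ recasts the whole condition purely in terms of $\safesymbol_{n-1}$, ready for comparison with clauses~3 and 4 of \Cref{def:safe-judgement}.

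The two directions of this case differ only in how the universally quantified frame $\hh_F$ of \Cref{def:safe-judgement} is treated. For \enquote{$\safesymbol$ $\Rightarrow$ value $1$}, I instantiate clauses~3 and 4 with $\hh_F = \emptyheap$ and resource heap $\hh_\ri$, so no framing is needed. For \enquote{value $1$ $\Rightarrow$ $\safesymbol$}, I must lift the local computation, which only sees $\hh \joinheap \hh_\ri$, to the framed configuration $\hh \joinheap \hh_\ri \joinheap \hh_F$. The accessed address of a basic command depends on $\cc$ and $\sk$ alone, so \enquote{no abort on $\hh \joinheap \hh_\ri$} forces that address into $\dom{\hh \joinheap \hh_\ri} \subseteq \dom{\hh \joinheap \hh_\ri \joinheap \hh_F}$, giving clause~3 (safety monotonicity); the same fact makes the relevant action enabled already on the local heap, so framing-enabledness (\Cref{def:framing-enabled}) together with determinism of non-probabilistic steps identifies the framed successor with the local one extended by $\hh_F$, producing the decomposition $\hh' = \hh'' \joinheap \hh'_\ri \joinheap \hh_F$ demanded by clause~4.

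I expect the main obstacle to be exactly this reconciliation of frames: the $\safesymbol$ predicate quantifies over an explicit frame $\hh_F$, whereas $\wslpsymbol$ hides locality inside the pattern $\ri \sepimp (\step{\cc}{\cdot} \sepcon \ri)$. Making the chain \enquote{no local abort, hence the action is locally enabled, hence by framing-enabledness and determinism the framed step is the local step extended by $\hh_F$} fully rigorous for compound programs, where a step or an abort is inherited from a sub-program through the SEQ and CON rules, is the technical heart of the proof and the one place where framing-enabledness of $\cc$ is indispensable.
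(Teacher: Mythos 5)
Your proposal is correct and matches the paper's proof in all essentials: the paper likewise establishes, by induction on $n$, that $\wslpn{n}{\cc}{\slb}{\ri}(\sk,\hh)=1$ holds exactly when $\text{safe}_n(\cc,\sk,\hh,\ri,\slb)$ does, exploiting the Dirac form of $\stepsymbol$ for non-probabilistic programs, taking the frame $\hh_F$ empty in one direction and invoking framing-enabledness in the other, and then concludes via the infimum characterisation of $\wslpsymbol$ and the qualitativeness of the pre- and postexpectations. The frame-reconciliation step you single out as the technical heart is precisely where the paper's proof appeals to the framing-enabledness hypothesis.
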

\begin{proof}
    See %
    \Cref{app:safe}.
\end{proof}
\begin{figure}
    \begin{center}%
        \input{fig-tables/pr-command.tex}%
    \end{center}%
    \vspace{-1.5em}
    \caption{Proof rules for $\wslpsymbol$ for basic commands}
    \label{fig:pr-commands}
\end{figure}
We define $\wslpsymbol$ inductively by means of a number of inference rules. We do not use classic Hoare triples due to difficulties arising when interpreting a $\wslpsymbol$ statement forward. These difficulties are due to Jones's counterexample \cite[p. 135]{Jones92ProbNonDet}: Given the constant preexpectation $0.5$ and the program $C\colon \PCHOICE{\ASSIGN{\xx}{0}}{0.5}{\ASSIGN{\xx}{1}}$, what is the postexpectation? Two possible answers are $0.5 = \wlp{\cc}{\iverson{\xx=0}}$ and $0.5 = \wlp{\cc}{\iverson{\xx=1}}$, but a combination of both is not possible. For this reason, we highlight the backwards interpretation of our judgements by writing them as $\safeTuple{\ff}{\cc}{\fg}{\ri}$, where $\ff$ is a (lower bound for the weakest liberal) preexpectation, $\cc$ is the program, $\fg$ is the postexpectation and $\ri$ is the resource invariant.

For basic commands, as shown in \Cref{fig:pr-commands}, we can just re-use the \QSL proof rules for weakest liberal preexpectations ($\wlpsymbol$) of non-concurrent programs, as given in \cite{Batz2019Quantitative}. However, for $\wslpsymbol$ these proof rules only allow lower bounding the preexpectation since we do not want to reason about the resource invariant if not necessary.

\begin{figure}
    \begin{center}
        \input{fig-tables/pr-flow.tex}
    \end{center}
    \vspace{-1.5em}
    \caption{Proof rules for $\wslpsymbol$ for control-flow commands}
    \label{fig:pr-flow}
\end{figure}
For commands handling control flow, as shown in \Cref{fig:pr-flow}, we use mostly standard rules. Atomic regions regain access to the resource invariant. The share rule allows us to enrich the resource invariant. The rule for concurrency enforces that only local variables or read-only variables are used in each thread. One could as well allow shared variables that are owned by the resource invariant. However, for the sake of brevity we do not include this here.

\begin{figure}
    \begin{center}%
        \input{fig-tables/pr-help.tex}%
    \end{center}%
    \vspace{-1.5em}
    \caption{Auxiliary proof rules for $\wslpsymbol$}\label{fig:pr-help}
\end{figure}
We also introduce several proof rules that make reasoning easier, see \Cref{fig:pr-help}. 
A program is \emph{almost surely terminating} with respect to a set of schedulers if the program terminates with probability one for every initial state and every scheduler in this set. 
Even though there is a plethora of work on almost-sure termination for (sequential) probabilistic programs (cf.\ \cite{Huang2019Modular} for an overview), techniques for checking almost-sure termination in a concurrent setting are sparse \cite{Hart1985Concurrent,Hart83Termination,Lengal2017Fair,Tiomkin1989Probabilistic}.
Here, interpreting probabilistic choice as non-determinism and proving sure termination instead using techniques such as \cite{DOsualdo21Tada,Reinhard21Ghost} is an alternative.

The first rule in \Cref{fig:pr-help} uses superlinearity for $\wlpsymbol$ to split a given postexpectation into a sum of  smaller postexpectations, for which proving a lower bound on the preexpectation might be easier. 
Here we use addition instead of maximum since addition is usually more precise. 
We only allow the use of superlinearity for $\wlpsymbol$ (and not for $\wslpsymbol$) because we need a restricted set of schedulers to enforce fairness conditions. 
Fairness conditions are required to reason about termination for concurrent programs with some sort of blocking behaviour.  
We are then able to transform $\wlpsymbol$ into $\wslpsymbol$ by using the $\wlpsymbol$-$\wslpsymbol$ rule. 
Whether $\wslpsymbol$ can also be defined with fairness conditions in mind and thus applying superlinearity directly on $\wslpsymbol$, is an open question.
The frame rule is of central importance to the Separation Logic approach, as it supports local reasoning about only the relevant part of the heap \cite{OHearn2019Separation}. 
The atom rule can be used similarly to the rule for atomic regions. 
Monotonicity is the quantitative version of the rule of consequence and is used to reduce and increase the post- and preexpectation respectively. 
The max, min and convex rules eliminate max, min and convex sum operations, respectively. 
The min and convex rule require preciseness of the resource invariant -- similarly to how \cite{Vafeiadis11CLS} required preciseness for the conjunction rule. 
An expectation is \emph{precise} if for any stack there is at most one heap for which the expectation is not zero.  
For the min rule, this is not surprising as the minimum behaves like conjunction in case of qualitative expectations. 
Requiring preciseness also for the convex rule is due to the missing superlinearity of the separating multiplication for non-precise expectations. 
\begin{restatable}[Soundness of proof rules]{retheorem}{soundnessproofrules}
    For every proof rule in \Cref{fig:pr-commands,fig:pr-flow,fig:pr-help} it holds that if their premises hold, the conclusion holds as well.
\end{restatable}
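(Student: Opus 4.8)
The plan is to unfold the judgement $\safeTuple{\ff}{\cc}{\fg}{\ri}$, which by definition means $\ff \leq \wslp{\cc}{\fg}{\ri} = \lim_{n\to\infty} \wslpn{n}{\cc}{\fg}{\ri}$, and reduce soundness to a statement about the finite-step approximants. The first auxiliary fact I would establish is that $\wslpn{n}{\cc}{\fg}{\ri}$ is antitone in $n$ (each $\wslpn{0}{\cdot}{\cdot}{\cdot}=1$ is the top element, and one-boundedness propagates the decrease through $\stepsymbol$), so the limit is the pointwise infimum $\inf_n \wslpn{n}{\cc}{\fg}{\ri}$. Hence it suffices to show $\ff \leq \wslpn{n}{\cc}{\fg}{\ri}$ for every $n$, which I prove by induction on $n$. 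The base case $n=0$ is immediate since $\wslpn{0}{\cc}{\fg}{\ri}=1 \geq \ff$ by one-boundedness. Alongside this I would record monotonicity of $\stepsymbol$ in its mapping argument and of $\wslpn{n}{\cc}{\cdot}{\cdot}$ in the postexpectation, which are used repeatedly.

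For the local and control-flow rules I would proceed rule-by-rule, unfolding one application of $\stepsymbol$ at the inductive step. For the basic commands (assign, look, mut, disp, alloc) the program is deterministic with a single transition to $\TERM$, so $\step{\cc}{\cdot}$ collapses to the classical \QSL weakest-precondition expression of Batz et al.\ applied to $\fg \sepcon \ri$; the given preexpectations (which are stated with respect to $\fg$) lower-bound $\ri \sepimp(\cdots \sepcon \ri)$ precisely because of the superdistributivity of the guarded magic wand over multiplication noted in Section~\ref{sec:qsl}, which lets the resource frame $\ri$ be threaded in and out. For \textbf{seq} I would first prove a composition lemma, $\wslp{\cc_1}{\wslp{\cc_2}{\fh}{\ri}}{\ri} \leq \wslp{\COMPOSE{\cc_1}{\cc_2}}{\fh}{\ri}$, by induction on $n$ using rules SEQ/SEQ-END, then combine it with the two premises via monotonicity. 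The \textbf{if} and \textbf{p-choice} rules unfold the single deterministic (resp.\ biased-coin) transition and use linearity of $\stepsymbol$ over the two branches. The \textbf{while} rule is the liberal-invariant argument: by induction on $n$ I show $\inv \leq \wslpn{n}{\WHILEDO{\guard}{\cc}}{\fg}{\ri}$ by unfolding WHILE-T/WHILE-F, applying the body premise and the composition lemma, and using the premise $\inv \leq \iverson{\guard}\cdot\ff+\iverson{\neg\guard}\cdot\fg$; since non-termination contributes its full mass to the liberal preexpectation, the invariant bounds the infimum from below. The \textbf{div} rule follows because $\DIVERGE$ only loops, so its liberal preexpectation is $1 \geq \ff$.

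The separation-specific structural rules rest on a locality lemma stating that $\wslpn{n}{\cc}{\cdot}{\cdot}$ commutes with $\sepcon$-framing whenever the frame's free variables are disjoint from $\written{\cc}$; this is exactly where framing enabledness (\Cref{def:framing-enabled}, and \Cref{thm:chpgcl-framing-enabled}) is invoked, and it yields both \textbf{frame} and \textbf{share}. The \textbf{atomic} and \textbf{atom} rules connect the collapsed multi-step transition used by ATOM-END/ATOM-ABT/ATOM-LOOP to $\wslpsymbol$ with invariant $\emp$: I would show that the reachability probability $\optransStar{}{p}$ inside an atomic region coincides with $\wslp{\cc}{\fg \sepcon \ri}{\emp}$ (using the $\wlpsymbol$/$\wslpsymbol$ coincidence for $\emp$, \Cref{thm:wlp-wslp-equality}) and that the looping branch contributes fully to the liberal value. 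The auxiliary rules \textbf{monotonic} and \textbf{max} are immediate from monotonicity, while \textbf{min} and \textbf{convex} require preciseness of $\ri$: preciseness guarantees a unique $\ri$-heap, so the guarded magic wand and separating multiplication distribute through $\Min{\cdot}{\cdot}$ and through convex sums, recovering the missing superlinearity of $\sepcon$ for non-precise arguments. Finally, \textbf{superlin} is a property of $\wlpsymbol$ itself (superlinearity under almost-sure termination), and the \textbf{\wlpsymbol-\wslpsymbol} rule is \Cref{thm:wlp-wslp-equality}.

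The main obstacle is the \textbf{concur} rule, exactly as in the non-probabilistic CSL soundness proof. Following Vafeiadis' inductive style, I would prove $\ff_1 \sepcon \ff_2 \leq \wslpn{n}{\CONCURRENT{\cc_1}{\cc_2}}{\fg_1 \sepcon \fg_2}{\ri}$ by induction on $n$. A step of the composed program is, by CON-L/CON-R, a step of exactly one thread, so the difficulty is to re-split the heap after an interleaved step so that the stepping thread's new precondition recombines with the untouched precondition of the other thread and with the resource invariant. This requires three ingredients working together: framing enabledness, to push the other thread's heap and the resource invariant through the single transition; the write-set/free-variable disjointness side condition $\written{\cc_i}\cap\freevariable{\cc_{3-i},\fg_{3-i},\ri}=\emptyset$, to ensure the non-stepping thread's pre- and postexpectation and the invariant are preserved; and associativity-commutativity of $\sepcon$ together with the interaction of the guarded magic wand with $\ri$, to correctly thread the shared resource, which a thread may temporarily consume and must restore. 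The CON-END rule closes the base of the interleaving analysis at $\CONCURRENT{\TERM}{\TERM}$. I expect the delicate bookkeeping of heap splittings before and after each interleaved step -- now carrying transition probabilities linearly through $\stepsymbol$ -- to be the genuinely hard part of the proof.
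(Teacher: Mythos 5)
Your overall strategy --- antitonicity of $\wslpn{n}{\cc}{\fg}{\ri}$ in $n$ so that the limit is the pointwise infimum, induction on $n$ with a one-step unfolding of $\stepsymbol$ for each rule, a step-level framing property combined with the write-set side condition for the concurrency rule, preciseness-based distributivity for the min and convex rules, and the $\wlpsymbol$/$\wslpsymbol$ coincidence at $\emp$ for the atomic and $\wlpsymbol$-$\wslpsymbol$ rules --- is essentially the paper's proof. There is, however, one concrete step that would fail: you derive the \textbf{share} rule from the same locality/framing lemma as \textbf{frame}. These rules rest on different mechanisms. The frame rule keeps the resource invariant fixed and adds a disjoint conjunct $\fh$ under the side condition $\written{\cc}\cap\freevariable{\fh}=\emptyset$; that is indeed what the step-framing property buys. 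The share rule instead \emph{changes the invariant} from $\ri\sepcon\rj$ to $\ri$ while moving $\rj$ into the pre- and postexpectation, and it carries \emph{no} variable side condition --- $\cc$ may freely write to variables of $\rj$. A framing argument cannot produce this (the invariant differs between premise and conclusion, and imposing disjointness would weaken the rule). The paper instead proves $\wslpn{n}{\cc}{\fg}{\rj\sepcon\ri}\sepcon\ri\leq\wslpn{n}{\cc}{\fg\sepcon\ri}{\rj}$ by induction on $n$, using the identity $\ri\sepimp(\rj\sepimp\ff)=(\ri\sepcon\rj)\sepimp\ff$ together with adjointness of $\sepcon$ and $\sepimp$; the extra invariant piece survives every step because the definition of $\wslpsymbol$ re-wraps each step with $\ri\sepimp(\hole\sepcon\ri)$, not because of any write-set discipline.

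Two smaller misattributions, neither fatal. First, the basic commands are discharged not by superdistributivity of the guarded wand over multiplication but by the frame property of $\wlpsymbol$, namely $\wlp{\cc}{\fg}\sepcon\ri\leq\wlp{\cc}{\fg\sepcon\ri}$, followed by $\ff\leq\ri\sepimp(\ff\sepcon\ri)$. Second, the \enquote{framing enabledness} of \Cref{def:framing-enabled} is a qualitative property of non-probabilistic programs used only for the conservativity result; the soundness proofs instead need the quantitative step-framing theorem $\step{\cc}{\ct}\sepcon\ff\leq\step{\cc}{\anonfunc{\cc'}\ct(\cc')\sepcon\ff}$, which must be established separately for the probabilistic semantics (and is the genuinely laborious case analysis you correctly anticipate in the concur rule).
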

\begin{proof}
    See %
    \Cref{app:proofrules}.
\end{proof}

\begin{example}\label{example:annotated_program}
    We are now able to establish the lower bound computed in \Cref{example:running_wlp} using the proof rules. Instead of constructing a proof tree by composing inference rules, we annotate program locations with their respective pre- and postexpectations. The interpretation is standard; for preexpectation $\ff$, postexpectation $\fg$, resource invariant $\ri$ and program $\cc$: 
    \begin{equation*}
        \begin{aligned}
            &\annotateRI{\ff}{\ri}\\
            &\cc\\
            &\annotateRI{\fg}{\ri}
        \end{aligned}
        \qquad \text{iff} \qquad
        \safeTuple{\ff}{\cc}{\fg}{\ri}
    \end{equation*}
    Proofs in this style should only be read backwards from bottom to top. They will not include applications of the proof rules for atomic programs and of the share rule as this may lead to incorrect interpretations. 
    For our example, we use the resource invariant $\ri = \Max{\singleton{\rr}{0}}{\singleton{\rr}{-1}}$, which we guessed by collecting all possible values stored in location $\rr$ during executions yielding our postexpectation. We assume that the memory of the initial heap $\hh$ only contains a single location $\rr$ with value $-1$. This assumption is reflected by the resource invariant and will only allow us to reason about executions with such an initial heap. The other possible value for the location $\rr$ is $0$, since the left program may mutate the heap. Indeed, there are also executions where the value of location $\rr$ is $1$. For these, the program only terminates in states violating the postexpectation $\iverson{y=0}$. We can further show:
        \begin{align*}
        &\annotateRI{0.5 \sepcon 1}{\ri}\\
        &\CONCURRENT[\quad]{
            \begin{aligned}
                &\annotateRI{0.5}{\ri}\\
                &\PCHOICE{\HASSIGN{\rr}{0}}{0.5}{\HASSIGN{\rr}{1}}\\
                &\annotateRI{1}{\ri}
            \end{aligned}
        }{
            \begin{aligned}
                &\annotateRI{1}{\ri}\\
                &\ASSIGNH{\xy}{\rr}\SEMI\\
                &\annotateRI{\Max{\iverson{\xy=0}}{\iverson{\xy=-1}}}{\ri}\\
                &\WHILEDO{\xy = -1}{\ASSIGNH{\xy}{\rr}}\SEMI\\
                &\annotateRI{\iverson{\xy=0}}{\ri}
            \end{aligned}
        }\\
        &\annotateRI{1 \sepcon \iverson{\xy=0}}{\ri}\\
        \end{align*}
    To handle concurrency, we separate our postexpectation into the expectation $1$ for the left program and the expectation $\iverson{\xy=0}$ for the right program. The left program includes a probabilistic choice, for which we use the atom rule to infer that the preexpectation is $1$ in the left branch of the probabilistic choice, as the resource invariant allows mutating the value of location $\rr$ to $1$ and the resource invariant can be re-established since we can lower bound all possible values for the location $\rr$ that are not $-1$ or $0$ to zero. Moreover, we lower bound the right branch of the probabilistic choice by zero, because zero is a lower bound of any expectation. The right program iterates until the value of location $\rr$ has been mutated. Our resource invariant contains all possible values that the program can expect here. For the loop invariant, we connect all possible values of $\xy$ using a disjunction over $0$ and $-1$, as we disregard executions where $\xy=1$. Lastly, we can apply the loop invariant to the lookup of $\rr$ and since this matches our resource invariant, the resulting preexpectation is one.

    Thus, we have established that $\safeTuple{0.5}{\cc}{\iverson{\xy=0}}{\ri}$. Using the share rule we can further infer that $\safeTuple{0.5 \sepcon \ri}{\cc}{\iverson{\xy=0} \sepcon \ri}{\emp}$. Lastly, we can clean up the statement using monotonicity and the \wlpsymbol-\wslpsymbol{} rule to obtain $0.5 \sepcon \ri \leq \wlp{\cc}{\iverson{\xy=0}}$.
    For details on the probabilistic choice and the loop invariant, we refer to \Cref{app:running_example}.
\end{example}

\section{Example: A Producer, a Consumer and a Lossy Channel}
\label{sec:examples}
A producer-consumer system is often used when presenting verification techniques for concurrent programs. We continue this tradition, extending this example by probabilistic elements, see \Cref{fig:example_prod_con_lossy}. Video and audio streaming is an example for such a system, where data losses are acceptable if they do not exceed a certain limit. Moreover, by enriching the resource invariant with a predicate defining an appropriate data structure, this example can be used as a template to reason about systems communicating using a shared data structure. We consider a producer that randomly generates data ($1$ or $2$) and stores it in an array of size $\xk$ indexed by $\xz_1$. The data has to be transferred to a consumer. However, the consumer does not have direct access to the array maintained by the producer. Instead a third party, the lossy channel, transfers data from the array maintained by the producer to a different $\xk$-sized array that is indexed by $\xz_2$, and that can be accessed by the consumer. However, the channel is not reliable. With a probability of $1-\pp$, it loses a value and instead stores invalid data (encoded as $-1$) at the respective array position. The consumer discards invalid data and counts in $\xl$ how many valid elements it received until all array elements have been attempted to be transmitted once. For the sake of brevity, we leave out the allocation of the array index $z_1$ and $z_2$. Instead, we assume already allocated arrays as input.

\begin{figure}
    \scalebox{0.8}{\parbox{\linewidth}{%
    \begin{align*}
        &\ASSIGN{\xl}{0}\SEMI \\
        &\ASSIGN{\xy_1, \xy_2, \xy_3}{\xk}\SEMI \\
        &\CONCURRENT[\quad]{
        \begin{aligned}
            &\WHILE{\xy_1 \geq 0}\\
            &\quad \PCHOICE{\ASSIGN{\xx_1}{1}}{0.5}{\ASSIGN{\xx_1}{2}}\SEMI\\
            &\quad \HASSIGN{\xz_1+\xy_1}{\xx_1}\SEMI\\
            &\quad \ASSIGN{\xy_1}{\xy_1-1}\\
            &\}   
        \end{aligned}
        }{\CONCURRENT[\quad]{
        \begin{aligned}
            &\\[-1em]
            &\WHILE{\xy_2 \geq 0}\\ 
            &\quad \ASSIGNH{\xx_2}{\xz_1+\xy_2}\SEMI\\ 
            &\quad \IF{\xx_2 \neq 0} \\
            &\quad \quad \quad \{\HASSIGN{\xz_2+\xy_2}{\xx_2}\}\\ 
            &\quad \quad \PCHOICESYMBOL{p}\\
            &\quad \quad \quad \{\HASSIGN{\xz_2+\xy_2}{-1}\}\SEMI\\
            &\quad \quad \ASSIGN{\xy_2}{\xy_2-1}\\
            &\quad \} \\
            &\} \\
            &\\[-1em]
        \end{aligned}
        }{
        \begin{aligned}
            &\WHILE{\xy_3 \geq 0}\\
            &\quad \ASSIGNH{\xx_3}{\xz_2+\xy_3}\SEMI \\ 
            &\quad \IF{\xx_3 \neq 0}\\
            &\quad \quad \IF{\xx_3 \neq -1}\ASSIGN{\xl}{\xl+1}\}\SEMI\\
            &\quad \quad \ASSIGN{\xy_3}{\xy_3-1} \\
            &\quad \}\\
            &\}
        \end{aligned}
        }}
    \end{align*}
    }}
    \caption{A program consisting of the tree threads: a producer (left), a consumer (right) and lossy channel (middle) for communication between the prior threads.}\label{fig:example_prod_con_lossy}
\end{figure}
We are interested in the probability that the data of a certain set of locations has been successfully transmitted. If we additionally prove that the program is almost surely terminating for some reasonable set of fair schedulers, we can use superlinearity to prove lower bounds of probabilities for even more complex postconditions, e.g. the probability that at least half of the data have been transmitted successfully. Indeed, the program is almost surely terminating under a fairness condition. We denote the set of locations that we want to be successfully transmitted as $\setI$. For the resource invariant, we use a big separating multiplication. Its semantics is as expected: for a stack $\sk$, we connect all choices for the index variable with regular separating multiplications. The resource invariant describes the values we want to tolerate for every entry in both arrays. We join the tolerated values by a disjunction (which is the maximum in our case). We now use the resource invariant, parametrised on the set $\setI$ as
\begin{align*}
    \ri_{\setI} \quad =& \quad \left(\bigsepcon{i \in \integerrange{0}{\xk}} ~~ \Maxs{\singleton{\xz_1+i}{0},\: \singleton{\xz_1+i}{1},\: \singleton{\xz_1+i}{2}}\right) \\
                      &~ \sepcon \left(\bigsepcon{i \in \integerrange{0}{\xk}\cap\setI} \Maxs{\singleton{\xz_2+i}{0},\: \singleton{\xz_2+i}{1},\: \singleton{\xz_2+i}{2}}\right) \\
                      &~ \sepcon \left(\bigsepcon{i \in \integerrange{0}{\xk}\setminus\setI} \Maxs{\singleton{\xz_2+i}{0},\: \singleton{\xz_2+i}{-1}}\right)~.
\end{align*}

Next, we can use the resource invariant to prove an invariant for each of the three concurrent programs. The corresponding calculations can be found in \Cref{app:ex_lossy_channel}.
Let $\cc_1$ be the producer, $\cc_2$ the channel and $\cc_3$ the consumer. For the producer program $\cc_1$ we can prove the invariant $\inv_1=1$ with respect to the postexpectation $1$, for the channel $\cc_2$ we can prove the invariant $\inv_2=\iverson{0\leq \xy_2 \leq \xk} \cdot p^{|\integerrange{0}{\xy_2} \cap \setI|} \cdot (1-p)^{|\integerrange{0}{\xy_2} \setminus \setI|} + \iverson{\xy_2 < 0}$ with respect to the postexpectation $1$, and for the consumer program $\cc_3$ we can prove the invariant $\inv_3=\iverson{0 \leq \xy_3 \leq \xk}\cdot\iverson{\xl=|\setI \cap \integerrange{0}{\xy_3}|} + \iverson{\xy_3 < 0} \cdot \iverson{\xl = |\setI|}$ with respect to the postexpectation $\iverson{\xl = |\setI|}$. Using all three invariants, we can now lower bound the probability that $\xl = |\setI|$ holds after the execution of the whole program $\cc$ in \Cref{fig:example_prod_con_lossy}:
\begin{align*}
    &\annotateRI{\iverson{0\leq \xk} \cdot p^{|\integerrange{0}{\xk} \cap \setI|} \cdot (1-p)^{|\integerrange{0}{\xk} \setminus \setI|}}{\ri_{\setI}}\\
    &\ASSIGN{\xl}{0}\SEMI \\
    &\ASSIGN{\xy_1, \xy_2, \xy_3}{\xk}\SEMI \\
    &\annotateRI{\inv_1 \sepcon \inv_2 \sepcon \inv_3}{\ri_{\setI}}\\
    &\CONCURRENT{\cc_1}{\CONCURRENT{\cc_2}{\cc_3}}\\
    &\annotateRI{1 \sepcon 1 \sepcon \iverson{\xl = |\setI|}}{\ri_{\setI}}
\end{align*}
Here, we first use the concurrency rule to place the postexpectation $\iverson{\xl = |\setI|}$ into a separating context, thus covering all three programs. 
The resulting preexpectation is indeed the separating multiplication of the respective invariants. By applying the assignment rules to the first two rows, we finally get the result for a lower bound of the weakest resource-safe preexpectation with respect to resource invariant $\ri_{\setI}$. Thus, the lower bound $(\iverson{0 \leq \xk} \cdot p^{|\integerrange{0}{\xk} \cap \setI|} \cdot (1-p)^{|\integerrange{0}{\xk} \setminus \setI|}) \sepcon \ri_{\setI} \leq \wlps{\SchedulerSet}{\cc}{\iverson{\xl = |\setI|}\sepcon \ri_{\setI}}$ also holds. We also show in \Cref{app:ex_lossy_channel} how to prove the lower bound of more difficult postexpectations using superlinearity.

\section{Conclusion and Future Work}
\label{sec:conclusion}
Using resource invariants from Concurrent Separation Logic \cite{Vafeiadis11CLS} together with quantitative reasoning from Quantitative Separation Logic \cite{Batz2019Quantitative} allows us to reason about lower-bound probabilities of realizing a postcondition. In our technique, probability mass is local to the thread. This insight gave rise to only allow qualitative expectations in the model of the environment. By this, the resource invariant only describes shared memory and lacks semantics for global probability mass. 

However, we may favour a probabilistic model of the environment -- for example, if the environment is a black box and only statistic information about its possible behaviours is available. More research is required for logics allowing probabilistic specifications in the environment description, especially logics allowing quantitative resource invariants. Moreover, we are only able to verify lower bounds due to the concurrent rule. We conjecture that a logic for upper bounds requires different, unknown separation connectives.



\bibliography{literature}

\newpage

\appendix

\section{Quantitative Separation Logic}
\begin{lemma}[Various Analysis Statements]
    For sets $A\subseteq A'$, $B\subseteq B'$, countable set $C \subseteq C'$, non-empty set $D \subseteq D'$ and real-valued functions $f\colon A' \times B' \to \Probs$, $g\colon C' \times D' \to \Probs$ we have
    \begin{align}
        &\inf \left\{ \inf A_b \mid b \in B \right\} = \inf \bigcup \left\{ A_b \mid b \in B \right\} \label{eq:inf_partitioning}\\
        &\inf \left\{ \inf \left\{ f(a,b) \mid a \in A \right\} \mid b \in B \right\} = \inf \left\{ \inf \left\{ f(a,b) \mid b \in B \right\} a \in A \right\} \label{eq:inf_commutatitivity}\\
        &\sup \left\{ \inf \left\{ f(a,b) \mid a \in A \right\} \mid b \in B \right\} \leq \inf \left\{ \sup \left\{ f(a,b) \mid b \in B \right\} \mid a \in A \right\} \label{eq:sup_inf_swap}\\
        & \sum \left\llbag \inf \left\{ g(c,d) \mid d \in D \right\} \mid c \in C \right\rrbag \leq \inf \left\{ \sum \left\llbag g(c,d) \mid c \in C \right\rrbag \mid d \in D \right\} \label{eq:inf_superlin}
    \end{align}
\end{lemma}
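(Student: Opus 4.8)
The plan is to prove the four statements independently, each by elementary order-theoretic reasoning, obtaining the two equalities by antisymmetry (proving $\leq$ and $\geq$ separately). For \cref{eq:inf_partitioning} I would observe that both sides are the greatest lower bound of the same collection of reals: writing $m = \inf\bigcup\{A_b \mid b \in B\}$, the value $m$ lower-bounds every $A_b$, hence $m \leq \inf A_b$ for each $b$ and therefore $m \leq \inf\{\inf A_b \mid b \in B\}$; conversely this latter value lower-bounds each $A_b$ and thus the whole union, giving the reverse inequality. Respecting the paper's convention $\inf\emptyset = 1$ keeps the argument valid when some $A_b$ or $B$ is empty. Statement \cref{eq:inf_commutatitivity} then follows by reducing both sides to the single infimum $\inf\{f(a,b) \mid a \in A,\, b \in B\}$ over the product index set: applying \cref{eq:inf_partitioning} with $A_b = \{f(a,b) \mid a \in A\}$ flattens the left-hand side, and the symmetric instantiation flattens the right-hand side, after which commutativity is immediate because the product set is symmetric in its two factors.

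For \cref{eq:sup_inf_swap}, which is the weak max--min (duality) inequality, I would fix arbitrary $a_0 \in A$ and $b_0 \in B$ and use the chain $\inf\{f(a,b_0) \mid a \in A\} \leq f(a_0,b_0) \leq \sup\{f(a_0,b) \mid b \in B\}$. Since the left end does not depend on $a_0$ and the right end does not depend on $b_0$, I can first take the infimum over $a_0 \in A$ on the right, obtaining $\inf\{f(a,b_0)\mid a \in A\} \leq \inf\{\sup\{f(a,b)\mid b \in B\} \mid a \in A\}$ for every $b_0$, and then take the supremum over $b_0 \in B$ on the left to conclude.

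The main obstacle is \cref{eq:inf_superlin}, since its outer operation is a countable (possibly infinite) sum rather than an infimum, so the flattening arguments above do not transfer directly. Here I would fix an arbitrary $d_0 \in D$, which exists because $D \neq \emptyset$, and start from the termwise bound $\inf\{g(c,d) \mid d \in D\} \leq g(c,d_0)$, valid for every $c \in C$. The crucial step is monotonicity of countable nonnegative sums: as the summands lie in $\Probs$ and are indexed by the countable set $C$, the bag-sum is well defined in $[0,\infty]$ and is dominated termwise, so $\sum\llbag \inf\{g(c,d)\mid d \in D\} \mid c \in C\rrbag \leq \sum\llbag g(c,d_0) \mid c \in C\rrbag$. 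Because this holds for every $d_0 \in D$, taking the infimum over $d_0$ on the right yields exactly \cref{eq:inf_superlin}. The only point genuinely requiring care is justifying the termwise domination of the series; I would do this by writing each side as the supremum of its finite partial sums over finite subsets of $C$ and applying the trivial finite case to each partial sum, which is why countability of $C$ is assumed.
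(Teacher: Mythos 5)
Your proof is correct, and the paper itself gives no argument here beyond the remark ``Straightforward using real-valued analysis,'' so your elementary order-theoretic derivations (greatest-lower-bound characterisation for the two equalities, the standard weak max--min chain for the third, and termwise domination of countable nonnegative sums followed by an infimum over $d_0$ for the fourth) are exactly the details the paper elides. No discrepancy to report.
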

\begin{proof}
    Straightforward using real-valued analysis.
\end{proof}

\begin{lemma}[Properties of Separating Multiplication]
    For expectations $\ff, \fg, \fh$ and qualitative expectations $\sla$:
    \begin{align}
        &\ff \sepcon (\fg \sepcon \fh) \quad = \quad (\ff \sepcon \fg) \sepcon \fh                &&\text{(Associativity)}\\
        &\ff \sepcon \emp \quad = \quad \emp \sepcon \ff = \ff                                    &&\text{(Neutrality)}\\
        &\ff \sepcon \fg \quad = \quad \fg \sepcon \ff                                            &&\text{(Commutativity)}\\
        &\ff \leq \fg \quad \text{implies} \quad \ff \sepcon \fh \leq \fg \sepcon \fh             &&\text{(Monotonicity)}\label{eq:sepcon_monoton}\\
        &\ff \sepcon \Max{\fg}{\fh} \quad = \quad \Max{\ff \sepcon \fg}{\ff \sepcon \fh}          &&\text{(Dist. with max)}\\
        &\sla \sepcon (\fg \cdot \fh) \quad \leq \quad (\sla \sepcon \fg) \cdot (\sla \sepcon \fh)&&\text{(Subdist. with mult.)}\label{eq:sepcon_subdist_mult}\\
        &\ff \sepcon (\fg + \fh) \quad \leq (\ff \sepcon \fg) + (\ff \sepcon \fh)                 &&\text{(Subdist. with plus)}
    \end{align}
\end{lemma}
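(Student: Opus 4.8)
The plan is to observe that every claimed (in)equality is a pointwise statement, so I fix an arbitrary stack/heap pair $(\sk, \hh)$ and argue directly about the suprema over disjoint decompositions $\hh_1 \joinheap \hh_2 = \hh$ that appear in the definition of $\sepcon$. The set of such decompositions is always nonempty (take $\hh_1 = \emptyheap$, $\hh_2 = \hh$), and since all expectations are bounded the relevant suprema are well-defined. The one analytic fact I rely on repeatedly is that, for a nonnegative scalar $c$, multiplication commutes with suprema, i.e.\ $c \cdot \sup S = \sup\{c \cdot x \mid x \in S\}$, together with the standard facts that $\sup$ distributes over binary $\max$ (an equality) and is subadditive (an inequality).

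Commutativity, neutrality and monotonicity are immediate. Commutativity holds because $\joinheap$ is symmetric and real multiplication commutes. For neutrality, $\emp(\sk, \hh_2) = 1$ exactly when $\hh_2 = \emptyheap$ and is $0$ otherwise, so every summand of the supremum vanishes except the one for $\hh_1 = \hh$, $\hh_2 = \emptyheap$, which contributes $\ff(\sk, \hh)$; nonnegativity then forces the supremum to equal $\ff(\sk, \hh)$. Monotonicity follows by comparing the two suprema summand-by-summand, using $\fh(\sk, \hh_2) \geq 0$.

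For distributivity with $\max$, subadditivity with $+$, and subdistributivity with $\cdot$, I push the scalar factor on the first argument through the operation on the second argument. Using $\ff(\sk, \hh_1) \geq 0$ I rewrite $\ff(\sk,\hh_1)\cdot\max\{a,b\} = \max\{\ff(\sk,\hh_1)\cdot a,\ \ff(\sk,\hh_1)\cdot b\}$ and $\ff(\sk,\hh_1)\cdot(a+b) = \ff(\sk,\hh_1)\cdot a + \ff(\sk,\hh_1)\cdot b$, and then invoke distributivity of $\sup$ over $\max$ and subadditivity of $\sup$, respectively. For subdistributivity with multiplication the essential use of qualitativeness of $\sla$ is the idempotence $\sla(\sk,\hh_1) = \sla(\sk,\hh_1)^2$ (since $\sla(\sk,\hh_1) \in \{0,1\}$); this lets me split a single summand $\sla(\sk,\hh_1)\cdot\fg(\sk,\hh_2)\cdot\fh(\sk,\hh_2)$ into the product $(\sla(\sk,\hh_1)\cdot\fg(\sk,\hh_2))\cdot(\sla(\sk,\hh_1)\cdot\fh(\sk,\hh_2))$, bound each factor by $(\sla\sepcon\fg)(\sk,\hh)$ and $(\sla\sepcon\fh)(\sk,\hh)$ using the same decomposition, and take the supremum on the left. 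This is exactly where a nonqualitative first argument would break the argument, since $\ff \neq \ff \cdot \ff$ in general.

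Associativity is the main obstacle and the only step requiring real care. My plan is to reduce both $\ff\sepcon(\fg\sepcon\fh)$ and $(\ff\sepcon\fg)\sepcon\fh$ to the same symmetric triple supremum $\sup\{\ff(\sk,\hh_1)\cdot\fg(\sk,\hh_2)\cdot\fh(\sk,\hh_3) \mid \hh_1 \joinheap \hh_2 \joinheap \hh_3 = \hh\}$. Expanding the inner $\sepcon$ yields a nested supremum, and collapsing it into a single supremum over the combined decomposition index is justified by the scalar-commutes-with-sup fact applied to the outer factor; associativity of the disjoint union $\joinheap$ then guarantees that the two-then-one and one-then-two groupings of $\hh$ range over the same set of ordered triples $(\hh_1,\hh_2,\hh_3)$ of pairwise disjoint heaps with union $\hh$. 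The delicate points are bookkeeping the disjointness side conditions when merging subheap decompositions, and confirming that the nested-supremum collapse is an equality rather than merely an inequality; both go through because each decomposition of a subheap extends uniquely to a decomposition of $\hh$ and conversely.
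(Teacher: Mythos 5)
Your proposal is correct, but it is worth noting that the paper itself does not prove this lemma at all: it simply defers to Theorems 6.14--6.16 of the cited dissertation of Matheja. Your self-contained pointwise argument is therefore a genuine addition rather than a rederivation. The structure is sound: the supremum over decompositions $\hh_1 \joinheap \hh_2 = \hh$ is always over a nonempty set, so the scalar law $c \cdot \sup S = \sup\{c\cdot x \mid x \in S\}$ for $c \geq 0$ applies throughout; the equality $\sup_i \max(a_i,b_i) = \max(\sup_i a_i, \sup_i b_i)$ gives distributivity with $\max$, and subadditivity of $\sup$ gives the plus case. Your treatment of subdistributivity with multiplication correctly isolates where qualitativeness of $\sla$ is indispensable: for a general one-bounded $\ff$ one only has $\ff^2 \leq \ff$, which bounds the product $(\ff\sepcon\fg)\cdot(\ff\sepcon\fh)$ from \emph{below} by the wrong quantity, whereas $\sla = \sla^2$ lets you split each summand exactly and dominate the two factors by the two outer suprema at the same decomposition. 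The associativity reduction to the symmetric triple supremum is the standard argument and the collapse of the nested suprema is indeed an equality, since every decomposition of a subheap extends uniquely to a decomposition of $\hh$ and conversely, with disjointness preserved by associativity of $\joinheap$. The trade-off is the obvious one: the citation keeps the paper short, while your proof makes the dependence on nonnegativity, nonemptiness of the decomposition set, and qualitativeness explicit, which is useful since those are exactly the hypotheses that fail for the stronger (distributive or superdistributive) variants.
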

\begin{proof}
    See \cite[Theorem 6.14, Theorem 6.15, Theorem 6.16]{Matheja2020Automated}.
\end{proof}

\begin{lemma}[Properties of (Guarded) Magic Wand]
    For expectations $\ff, \fg, \fh$ and qualitative expectations $\sla$:
    \begingroup
    \allowdisplaybreaks
    \begin{align}
        &\ff \sepcon \sla \leq \fg \quad \text{iff} \quad \ff \leq \sla \sepimp \fg  &&\text{(Adjointness)}\label{eq:adjointness}\\
        &\sla \sepcon (\sla \sepimp \ff) \quad \leq \quad \ff &&\text{(Modus Ponens)}\label{eq:modus_ponens}\\
        &\ff \quad \leq \quad \sla \sepimp (\sla \sepcon \ff) &&\label{eq:reverse_modus_ponens}\\
        &\ff \leq \fg \quad \text{implies} \quad \sla \sepimp \ff \leq \sla \sepimp \fg &&\text{(Monotonicity)}\\
        &\sla \sepimp \Max{\ff}{\fg} \quad \geq \quad \Max{\sla \sepimp \ff}{\sla \sepimp \fg} &&\text{(Superdist. with max)}\\
        &\sla \sepimp \Min{\ff}{\fg} \quad = \quad \Min{\sla \sepimp \ff}{\sla \sepimp \fg}   &&\text{(Dist. with min)} \label{eq:sepimp_dist_min}\\
        &\sla \sepimp (\ff + \fg) \quad \geq \quad (\sla \sepimp \ff) + (\sla \sepimp \fg) &&\text{(Superdist. with plus)} \label{eq:sepimp_superdist_plus}\\
        &\sla \sepimp (\ff \cdot \fg) \quad \geq \quad (\sla \sepimp \ff) \cdot (\sla \sepimp \fg) &&\text{(Superdist. with mult.)} \label{eq:magicwand_superdist_mult}\\
        &\sla \sepimp (\slb \sepimp \ff) \quad = \quad (\sla \sepcon \slb) \sepimp \ff             &&\text{(Combining magic wands)} \label{eq:combine_magic_wand}\\
        &\sla \sepimp (\ff \sepcon \fg) \geq (\sla \sepimp \ff) \sepcon \fg                        && \label{eq:sepimp_frame}
    \end{align}
    \endgroup
\end{lemma}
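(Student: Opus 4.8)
The plan is to treat \emph{adjointness} \eqref{eq:adjointness} as the cornerstone and derive as many of the remaining laws as possible from it together with the already-established properties of separating multiplication. First I would prove adjointness directly, by unfolding both inequalities into a single universally quantified statement over heap decompositions. Unfolding the definition of $\sepcon$, the inequality $\ff \sepcon \sla \leq \fg$ says: for every $\sk$ and all $\hh_1 \disjoint \hh_2$ with $\sla(\sk,\hh_2)=1$ we have $\ff(\sk,\hh_1) \leq \fg(\sk,\hh_1 \joinheap \hh_2)$. The qualitativeness of $\sla$ is essential here, since it collapses each product $\ff(\sk,\hh_1)\cdot\sla(\sk,\hh_2)$ in the supremum to either $\ff(\sk,\hh_1)$ or $0$, so that bounding the supremum by $\fg$ is equivalent to bounding each contribution with $\sla(\sk,\hh_2)=1$. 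Unfolding $\ff \leq \sla \sepimp \fg$ with the magic-wand definition yields the \emph{same} statement after renaming $\hh_1,\hh_2$ to $\hh,\hh'$, establishing the equivalence.

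Once adjointness is available, the Galois-connection reasoning delivers several laws essentially for free. Modus ponens \eqref{eq:modus_ponens} is the ``only if'' direction of adjointness applied to the reflexive instance $\sla \sepimp \ff \leq \sla \sepimp \ff$, followed by commutativity of $\sepcon$; reverse modus ponens \eqref{eq:reverse_modus_ponens} is the ``if'' direction applied to the commutativity instance $\ff \sepcon \sla \leq \sla \sepcon \ff$; and monotonicity of the wand follows from modus ponens combined with adjointness. The frame law \eqref{eq:sepimp_frame} reduces to the same circle: by adjointness it suffices to show $((\sla \sepimp \ff) \sepcon \fg) \sepcon \sla \leq \ff \sepcon \fg$, which follows by associativity and commutativity of $\sepcon$, modus ponens, and monotonicity of $\sepcon$ \eqref{eq:sepcon_monoton}. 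The law for combining wands \eqref{eq:combine_magic_wand} is obtained most cleanly by a two-sided argument: for arbitrary $\ff$, the statement $\ff \leq \sla \sepimp(\slb \sepimp \fg)$ is equivalent, by two applications of adjointness and associativity of $\sepcon$, to $\ff \leq (\sla \sepcon \slb) \sepimp \fg$; since this holds for every $\ff$, the two wands coincide.

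The remaining distributivity laws I would prove pointwise from elementary facts about infima over a fixed index set. Superdistribution with max follows from monotonicity of the wand. For distribution with min \eqref{eq:sepimp_dist_min}, the key observation is that the infima defining $\sla \sepimp \Min{\ff}{\fg}$, $\sla \sepimp \ff$, and $\sla \sepimp \fg$ all range over the \emph{same} set of extensions $\hh'' = \hh \joinheap \hh'$ with $\sla(\sk,\hh')=1$; since the infimum of a pointwise minimum equals the minimum of the infima on a common index set, equality follows. Superdistribution with plus \eqref{eq:sepimp_superdist_plus} and with multiplication \eqref{eq:magicwand_superdist_mult} reduce analogously to superadditivity and supermultiplicativity of the infimum over that same index set.

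The hard part will be the careful sup/inf bookkeeping in the adjointness proof itself; every other property is then a short algebraic consequence. The one genuinely delicate point is the empty-extension convention: when no $\hh'$ with $\sla(\sk,\hh')=1$ is disjoint from $\hh$, every wand evaluates to $1$, and this boundary case must be checked separately in each pointwise law. It is harmless for max, min, and multiplication (where $\max(1,1)$, $\min(1,1)$, and $1\cdot 1$ all equal $1$), but needs care for the additive law \eqref{eq:sepimp_superdist_plus}, where the stated inequality relies on staying within $[0,1]$ (i.e.\ on reading $+$ as truncated addition) or on the extension set being non-empty.
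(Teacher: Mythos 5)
Your proposal is correct, but it reaches several of the properties by a different route than the paper. The paper imports adjointness, modus ponens, monotonicity, and the max/min/plus distribution laws wholesale from Matheja's thesis and only writes out explicit proofs for four items: superdistribution with multiplication \eqref{eq:magicwand_superdist_mult} (via modus ponens, subdistributivity of $\sepcon$ with multiplication \eqref{eq:sepcon_subdist_mult}, and adjointness), reverse modus ponens \eqref{eq:reverse_modus_ponens} (via adjointness, same as yours), combining wands \eqref{eq:combine_magic_wand} (by unfolding both wands into nested infima and flattening them with the infimum-partitioning identity \eqref{eq:inf_partitioning}), and the frame law \eqref{eq:sepimp_frame} (by a direct pointwise computation involving a sup/inf swap and two ``decreasing the supremum/infimum'' steps). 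Where you diverge, your arguments are cleaner: your Galois-connection uniqueness argument for \eqref{eq:combine_magic_wand} and your reduction of \eqref{eq:sepimp_frame} to $\bigl(\sla \sepcon (\sla \sepimp \ff)\bigr) \sepcon \fg \leq \ff \sepcon \fg$ via adjointness, associativity, and modus ponens avoid the delicate index-set manipulations in the paper's direct proofs, at the cost of having to establish adjointness carefully first --- which you do, and your observation that qualitativeness of $\sla$ is what collapses the supremum in $\ff \sepcon \sla$ is exactly the right crux. Your pointwise proof of \eqref{eq:magicwand_superdist_mult} via supermultiplicativity of the infimum over a common index set is also valid and arguably more elementary than the paper's derivation from \eqref{eq:sepcon_subdist_mult}. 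Finally, your remark about the empty-extension convention is a genuine catch the paper does not address: with the stated convention that an empty infimum evaluates to $1$, the additive law \eqref{eq:sepimp_superdist_plus} reads $1 \geq 1 + 1$ at states where no $\sla$-heap is disjoint from $\hh$, so it only holds if $+$ is read as truncated ($1$-bounded) addition or if one restricts to states where the extension set is nonempty; in the paper's applications the sums are convex combinations bounded by $1$, so the issue is latent rather than harmful, but it deserves the explicit caveat you give it.
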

\begin{proof}
    See \cite[Theorem 6.18, Theorem 6.19, Theorem 6.20, Theorem 6.21]{Matheja2020Automated}.
    Theorem 6.21 in \cite{Matheja2020Automated} has a small typo for \Cref{eq:magicwand_superdist_mult}. We give a small proof for this here instead:
    \begin{align*}
        &\sla \sepcon (\sla \sepimp \ff) \leq \ff \quad \text{and} \quad \sla \sepcon (\sla \sepimp \fg) \leq \fg \tag{\Cref{eq:modus_ponens}}\\
        \text{implies}\quad& (\sla \sepcon (\sla \sepimp \ff)) \cdot (\sla \sepcon (\sla \sepimp \fg)) \leq \ff \cdot \fg \tag{Monotoncitiy of multiplication}\\
        \text{implies}\quad& \sla \sepcon ((\sla \sepimp \ff) \cdot (\sla \sepimp \fg)) \leq \ff \cdot \fg \tag{\Cref{eq:sepcon_subdist_mult}}\\
        \text{implies}\quad& (\sla \sepimp \ff) \cdot (\sla \sepimp \fg) \leq \sla \sepimp (\ff \cdot \fg) \tag{\Cref{eq:adjointness}}
    \end{align*}
    For \Cref{eq:reverse_modus_ponens} we give the proof:
    \begin{align*}
                           & \ff \leq \ff \\
        \text{implies}\quad& \ff \sepcon \ri \leq \ff \sepcon \ri \tag{\Cref{eq:sepcon_monoton}}\\
        \text{implies}\quad& \ff \leq \ri \sepimp (\ff \sepcon \ri) \tag{\Cref{eq:adjointness}}
    \end{align*}
    For \Cref{eq:combine_magic_wand} we give the proof:
    \begin{align*}
        &~ \left(\sla \sepimp \left(\slb \sepimp \ff\right)\right)(\sk, \hh)\\
    =   &~ \inf \left\{ \inf \left\{ \ff(\sk, \hh'') \mid \slb(\sk, \hh_{\slb}) = 1, \hh'' = \hh' \joinheap \hh_{\slb} \right\} \mid \sla(\sk, \hh_{\sla})=1, \hh' = \hh \joinheap \hh_{\sla} \right\} \tag{Definition of $\sepimp$} \\
    =   &~ \inf \left\{ \ff(\sk, \hh'') \mid \slb(\sk, \hh_{\slb}) = 1, \sla(\sk, \hh_{\sla})=1, \hh''=\hh \joinheap \hh_{\sla} \joinheap \hh_{\slb} \right\} \tag{\Cref{eq:inf_partitioning}} \\
    =   &~ \inf \left\{ \ff(\sk, \hh'') \mid (\slb \sepcon \sla)(\sk, \hh_{\slb} \joinheap \hh_{\sla}) = 1, \hh''=\hh \joinheap \hh_{\sla} \joinheap \hh_{\slb} \right\} \tag{Definition of separation conjunction}\\
    =   &~ \left(\left( \sla \sepcon \slb \right) \sepimp \ff\right)(\sk, \hh) \tag{Definition of $\sepimp$}
    \end{align*}
    For \Cref{eq:sepimp_frame} we give the proof:
    \begingroup
    \allowdisplaybreaks
    \begin{align*}
        &~ \sla \sepimp (\ff \sepcon \fg) \\
    =   &~ \inf \{ \sup \{ \ff(\sk, \hh_1) \cdot \fg(\sk, \hh_2) \mid \hh_1 \joinheap \hh_2 = \hh \joinheap \hh_{\sla} \} \mid \sla(\sk, \hh_{\sla})=1, \hh_{\sla} \disjoint \hh \} \tag{Definition of $\sepimp$ and $\sepcon$}\\
    \geq&~ \inf \{ \sup \{ \ff(\sk, \hh_1 \joinheap \hh_{\sla}) \cdot \fg(\sk, \hh_2) \mid \hh_1 \joinheap \hh_2 = \hh, \hh_{\sla} \disjoint \hh_1 \} \mid \sla(\sk, \hh_{\sla})=1, \hh_{\sla} \disjoint \hh \} \tag{Decreasing the supremum}\\
    \geq&~ \sup \{ \inf \{ \ff(\sk, \hh_1 \joinheap \hh_{\sla}) \cdot \fg(\sk, \hh_2) \mid \sla(\sk, \hh_{\sla}), \hh_{\sla} \disjoint \hh_1 \} \mid \hh_1 \joinheap \hh_2 = \hh \} \tag{\Cref{eq:sup_inf_swap} and decreasing the infimum}\\
    =   &~ \sup \{ \inf \{ \ff(\sk, \hh_1 \joinheap \hh_{\sla}) \mid \sla(\sk, \hh_{\sla}), \hh_{\sla} \disjoint \hh_1 \} \cdot \fg(\sk, \hh_2) \mid \hh_1 \joinheap \hh_2 = \hh \} \tag{Factorisation with constants}\\
    =   &~ \left( \sla \sepimp \ff \right) \sepcon \fg \tag{Definition of $\sepimp$ and $\sepcon$}
    \end{align*}
    \endgroup
\end{proof}

\begin{definition}
    An expectation $\ff$ is precise iff
    \begin{equation*}
        \forall (\sk, \hh).~ \left|\left\{ \hh' \in \Heaps \mid \exists \hh'': \hh' \joinheap \hh'' = \hh ~\text{and}~ \ff(\sk, \hh') > 0 \right\}\right| \leq 1~.
    \end{equation*}
\end{definition}

\begin{lemma}[Properties for Precise Expectations]
    For a precise expectation $\ff$ and a precise qualitative expectation $\sla$:
    \begin{align}
        &\ff \sepcon \Min{\fg}{\fh} \quad = \quad \Min{\ff \sepcon \fg}{\ff \sepcon \fh} &&\text{(Dist. with min)} \label{eq:precise_dist_min}\\
        &\ff \sepcon (\fg + \fh) \quad = \quad (\ff \sepcon \fg) + (\ff \sepcon \fh) &&\text{(Dist. with plus)} \label{eq:precise_dist_plus}\\
        &\sla \sepcon (\fg \cdot \fh) \quad = \quad (\sla \sepcon \fg) \cdot (\sla \sepcon \fh) &&\text{(Dist. with mult.)} \label{eq:precise_dist_mult}
    \end{align}
\end{lemma}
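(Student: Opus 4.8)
The plan is to reduce all three identities to a single structural fact about precise expectations: for precise $\ff$ the supremum defining $\ff \sepcon \fg$ degenerates to a single product. First I would note that the $\leq$-direction of each identity is already available, so only the reverse inequality is new. For \Cref{eq:precise_dist_min} the inequality $\ff \sepcon \Min{\fg}{\fh} \leq \Min{\ff \sepcon \fg}{\ff \sepcon \fh}$ follows from monotonicity of $\sepcon$ (\Cref{eq:sepcon_monoton}) applied to $\Min{\fg}{\fh} \leq \fg$ and $\Min{\fg}{\fh} \leq \fh$; for \Cref{eq:precise_dist_plus} it is exactly subdistributivity with plus; and for \Cref{eq:precise_dist_mult} it is \Cref{eq:sepcon_subdist_mult}. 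Thus preciseness only has to supply the opposite direction, and I would obtain all three at once.

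The key lemma I would establish is the following collapse property: for a precise $\ff$, every $\fg$, and every $(\sk,\hh)$, preciseness guarantees at most one subheap $\hh_1$ with $\hh_1 \joinheap \hh_2 = \hh$ for some $\hh_2$ and $\ff(\sk,\hh_1) > 0$. Consequently, in $(\ff \sepcon \fg)(\sk,\hh) = \sup\{\ff(\sk,\hh_1)\cdot\fg(\sk,\hh_2) \mid \hh_1 \joinheap \hh_2 = \hh\}$ every decomposition other than the unique one using this distinguished $\hh_1$ contributes a zero factor. Hence $(\ff \sepcon \fg)(\sk,\hh) = \ff(\sk,\hh_1)\cdot\fg(\sk,\hh_2)$ for that distinguished $\hh_1$ when it exists, and equals $0$ otherwise — the decomposition $\hh = \hh \joinheap \emptyheap$ keeps the index set non-empty, so the supremum of zeros is genuinely $0$. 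At a fixed state this turns $\sepcon$ into ordinary multiplication by the scalar $\ff(\sk,\hh_1)$ applied to $\fg$ evaluated on the complementary heap $\hh_2$.

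With the lemma in hand the three identities become pointwise real arithmetic, using the same distinguished heap $\hh_1$ — which depends only on $\ff$ (or $\sla$), not on $\fg,\fh$ — throughout. For \Cref{eq:precise_dist_plus} I distribute the scalar $\ff(\sk,\hh_1)$ over the sum $(\fg+\fh)(\sk,\hh_2)$; for \Cref{eq:precise_dist_min} I pull the non-negative scalar $\ff(\sk,\hh_1)$ out of the minimum, which is valid precisely because $\ff(\sk,\hh_1) \geq 0$; and for \Cref{eq:precise_dist_mult} I use that a precise qualitative $\sla$ yields $\sla(\sk,\hh_1) \in \{0,1\}$, so the distinguished scalar is idempotent under multiplication and the product of the two separate evaluations coincides with the joint evaluation. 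In each identity the case where no distinguished subheap exists makes both sides $0$ and is immediate.

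The main obstacle is the careful bookkeeping inside the collapse lemma rather than any deep estimate. I must argue that the distinguished $\hh_1$ is shared across $\ff \sepcon \fg$, $\ff \sepcon \fh$, and $\ff \sepcon (\fg + \fh)$ (resp.\ $\ff \sepcon \Min{\fg}{\fh}$), which holds because preciseness pins $\hh_1$ down from $\ff$ alone, and I must handle the empty-subheap and all-zero cases so that the supremum is genuinely attained and equals the single product. For \Cref{eq:precise_dist_mult} the extra point to verify is that qualitativeness of $\sla$ — giving a $\{0,1\}$-valued scalar — is exactly what prevents the distinguished scalar from being double-counted when it appears in both factors on the right-hand side.
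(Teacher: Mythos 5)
Your proof is correct. The paper itself does not prove this lemma at all -- it simply cites Theorem~6.25 of Matheja's thesis -- so your argument is a genuine, self-contained replacement, and it is essentially the standard one that the cited source uses. The collapse lemma is the right key step: preciseness of $\ff$ at $(\sk,\hh)$ pins down at most one subheap $\hh_1$ of $\hh$ with $\ff(\sk,\hh_1)>0$, the complement $\hh_2$ is then uniquely determined by $\dom{\hh_2}=\dom{\hh}\setminus\dom{\hh_1}$, and since all terms in the supremum are non-negative and the index set is non-empty (the paper notes this explicitly), $(\ff\sepcon\fg)(\sk,\hh)$ collapses to the single product $\ff(\sk,\hh_1)\cdot\fg(\sk,\hh_2)$, or to $0$ when no such $\hh_1$ exists. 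You correctly observe that the distinguished decomposition depends only on $\ff$ and the state, so it is shared across all occurrences of $\sepcon$ in each identity, and the three equalities reduce to distributing a fixed non-negative scalar over $\min$, $+$, and (using $\sla(\sk,\hh_1)\in\{0,1\}$, hence idempotence under squaring) over $\cdot$. Your remark on why qualitativeness is indispensable for \Cref{eq:precise_dist_mult} -- a scalar $c$ with $0<c<1$ would give $c$ on the left but $c^2$ on the right -- is exactly the right sanity check. The opening paragraph reducing matters to the $\geq$ direction is harmless but redundant, since the collapse lemma delivers equality pointwise in one pass.
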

\begin{proof}
    See \cite[Theorem 6.25]{Matheja2020Automated}
\end{proof}

\section{Proofs for the Weakest Safe Liberal Preexpectation}
\label{app:safe}
\begin{lemma}[Monotonicity of $\stepsymbol$] \label{lem:monotone-of-step}
    For all programs $\cc$ and $\ct, \ct'\colon \cc \to \Eone$ we have 
    \begin{equation*}
        \ct(\cc) \leq \ct'(\cc) \quad \text{implies} \quad \step{\cc}{\ct} \leq \step{\cc}{\ct'}~.
    \end{equation*}
\end{lemma}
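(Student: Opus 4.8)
The plan is to unfold the definition of $\stepsymbol$ pointwise and reduce the claim to two completely elementary facts: monotonicity of (possibly countably infinite) sums of nonnegative terms, and monotonicity of an infimum taken over a fixed index set. First I would read the premise as the pointwise domination $\ct \leq \ct'$ of transformers -- which is exactly what the universal quantification over programs provides -- so that in particular $\ct(\cc')(\sk',\hh') \leq \ct'(\cc')(\sk',\hh')$ at every successor $\cc'$ and every state. Then I fix an arbitrary state $(\sk,\hh)$ and recall
\[
    \step{\cc}{\ct}(\sk,\hh) \;=\; \inf \Big\{ \sum \llbag \pp \cdot \ct(\cc')(\sk',\hh') \mid \cc,(\sk,\hh) \optrans{\acta}{\pp} \cc',(\sk',\hh') \rrbag \,\Big|\, \acta \in \Enabled{\cc,(\sk,\hh)} \Big\},
\]
and likewise for $\ct'$. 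The crucial observation is that neither the index set $\Enabled{\cc,(\sk,\hh)}$ nor the underlying transition data -- the probabilities $\pp$ and the successors $(\cc',\sk',\hh')$ -- depend on whether we use $\ct$ or $\ct'$; only the summands $\pp \cdot \ct(\cc')(\sk',\hh')$ change.

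Next I fix an enabled action $\acta \in \Enabled{\cc,(\sk,\hh)}$. Since each transition probability satisfies $\pp \geq 0$ and the premise gives $\ct(\cc')(\sk',\hh') \leq \ct'(\cc')(\sk',\hh')$ for every successor, the two summations range over the \emph{same} bag of transitions and dominate termwise; as all summands lie in $\Probs$ and the total probability mass out of $(\sk,\hh)$ under $\acta$ equals $1$, both sums converge in $[0,1]$ and hence
\[
    \sum \llbag \pp \cdot \ct(\cc')(\sk',\hh') \mid \cc,(\sk,\hh) \optrans{\acta}{\pp} \cc',(\sk',\hh') \rrbag \;\leq\; \sum \llbag \pp \cdot \ct'(\cc')(\sk',\hh') \mid \cc,(\sk,\hh) \optrans{\acta}{\pp} \cc',(\sk',\hh') \rrbag.
\]
This step uses only monotonicity of sums of nonnegative reals, which remains valid for the countably infinite bags of successors arising from allocation. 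Taking the infimum over the common index set $\Enabled{\cc,(\sk,\hh)}$ on both sides preserves the inequality, yielding $\step{\cc}{\ct}(\sk,\hh) \leq \step{\cc}{\ct'}(\sk,\hh)$. Since $(\sk,\hh)$ was arbitrary, this is precisely $\step{\cc}{\ct} \leq \step{\cc}{\ct'}$.

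There is no genuine obstacle here; the work is bookkeeping. The two points worth stating explicitly are that the infimum is over an index set (the enabled actions together with their fixed transition distributions) that is \emph{independent} of the transformer, which is what makes the infimum monotone -- and that when $\Enabled{\cc,(\sk,\hh)} = \emptyset$ (for instance when $\cc$ is terminated or aborted) both infima collapse to the greatest element $1$, so the inequality still holds trivially. The argument is uniform over all programs, so no structural case analysis on $\cc$ is needed, and the only analytic input beyond finite arithmetic is monotonicity of nonnegative (possibly infinite) series, recorded in the preceding lemma on analysis statements.
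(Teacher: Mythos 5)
Your proof is correct and follows exactly the route the paper takes, which it records in a single sentence as ``monotonicity of multiplication with probabilities, countable sums and infima''; you have simply spelled out the pointwise unfolding, the termwise comparison of the common bag of transitions, and the monotone infimum over the shared index set $\Enabled{\cc,(\sk,\hh)}$. The additional remark about the empty index set collapsing both infima to $1$ is a harmless and welcome bit of extra care.
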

\begin{proof}
    This follows directly by monotonicity of multiplication with probabilities, countable sums and infima.
\end{proof}

\begin{lemma}[$0-1$ Bounds on $\wslpsymbol$] \label{lem:zero-one-bounded}
    For all programs $\cc$, expectation $\ff$ and qualitative expectation $\ri$ we have
    \begin{equation*}
        0 \leq \wslp{\cc}{\ff}{\ri} \leq 1~.
    \end{equation*}
\end{lemma}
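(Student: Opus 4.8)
The plan is to prove by induction on $n$ that $0 \leq \wslpn{n}{\cc}{\ff}{\ri} \leq 1$ holds pointwise for every $n \in \Nats$ and every program $\cc$, and then pass to the limit. The base case $n=0$ is immediate, since $\wslpn{0}{\cc}{\ff}{\ri} = 1 \in \Probs$. For $n > 0$ with $\cc = \TERM$ we have $\wslpn{n}{\TERM}{\ff}{\ri} = \ff$, which lies in $\Probs$ by the standing assumption $\ff \in \Eone$.

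For the remaining case $\wslpn{n}{\cc}{\ff}{\ri} = \ri \sepimp \step{\cc}{\anonfunc{\cc'} \wslpn{n-1}{\cc'}{\ff}{\ri} \sepcon \ri}$, I would isolate three one-boundedness facts and chain them. First, by the induction hypothesis each $\wslpn{n-1}{\cc'}{\ff}{\ri}$ lies in $\Probs$, and since the separating multiplication of two one-bounded expectations is again one-bounded (a supremum of products of values in $\Probs$ over a set that is never empty), the map $\anonfunc{\cc'} \wslpn{n-1}{\cc'}{\ff}{\ri} \sepcon \ri$ is a map into $\Eone$. Second, for any $t\colon \chpgcl \to \Eone$ the value $\step{\cc}{t}$ is one-bounded: it is an infimum over enabled actions of sums $\sum \llbag \pp \cdot t(\cc')(\sk',\hh') \rrbag$, and because the transition probabilities of an enabled action satisfy $\sum_{\sigma'} \MDPProbs(\sigma,\acta)(\sigma') = 1$ while each $t(\cc')(\sk',\hh') \in \Probs$, every such weighted sum lies in $\Probs$. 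Third, for qualitative $\ri$ and a one-bounded second argument the guarded magic wand $\ri \sepimp (\cdot)$ is one-bounded, being an infimum of values in $\Probs$ and, in the degenerate case of an empty index set, evaluating to $1$ by definition. Composing these three observations yields $\wslpn{n}{\cc}{\ff}{\ri} \in \Probs$, which closes the induction.

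Finally, for $\wslp{\cc}{\ff}{\ri} = \lim_{n\to\infty} \wslpn{n}{\cc}{\ff}{\ri}$ I would use that $\Probs$ is closed, so a convergent sequence of members of $[0,1]$ has its limit in $[0,1]$. To guarantee that the limit exists, I would additionally note that the sequence $(\wslpn{n}{\cc}{\ff}{\ri})_n$ is pointwise non-increasing: $\wslpn{0}{\cc}{\ff}{\ri} = 1$ dominates everything, and the Monotonicity of $\stepsymbol$ lemma together with monotonicity of $\sepcon$ and $\sepimp$ propagates $\wslpn{n}{\cc'}{\ff}{\ri} \leq \wslpn{n-1}{\cc'}{\ff}{\ri}$ through $\step{\cc}{\cdot}$ and the wand, so the limit is the pointwise infimum and is bounded below by $0$. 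The routine content lies in the three one-boundedness facts, all direct from the definitions; the only steps needing care are the $\step$ bound, where one must invoke the MDP side condition $\sum_{\sigma'} \MDPProbs(\sigma,\acta)(\sigma') = 1$ to keep the weighted sum at most $1$, and the empty-set convention of the magic wand. The single point beyond a one-line induction is justifying existence of the defining limit, for which the monotone-decreasing argument above suffices.
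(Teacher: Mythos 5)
Your proof is correct and follows essentially the same route as the paper's: induction on $n$, with the bound propagated through the three ingredients in turn --- $\sepcon$ of one-bounded arguments, $\stepsymbol$ via the MDP normalisation condition $\sum_{\sigma'} \MDPProbs(\sigma,\acta)(\sigma') = 1$, and the guarded wand with its empty-set convention. Your additional antitonicity argument for the existence of the limit is material the paper defers to a separate lemma; it is a reasonable inclusion and introduces no circularity, since it relies only on the per-$n$ bound established within the induction.
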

\begin{proof}
    If $0 \leq \wslpn{n}{\cc}{\ff}{\ri} \leq 1$, then the statement holds as well. We prove this new statement by induction on $n$.

    For the base case $n=0$ we have $0 \leq \wslpn{0}{\cc}{\ff}{\ri} = 1$.

    Now we assume an arbitrary but fixed $n$ such that $0 \leq \wslpn{n}{\cc}{\ff}{\ri} \leq 1$ for all $\cc$, $\ff$ and $\ri$. 
    
    For the induction step, if $\cc=\TERM$ the statement holds again trivially since expectations are also $0-1$ bounded. Thus, we assume that $\cc\neq \TERM$. 
    By the induction hypothesis $0 \leq \wslpn{n}{\cc'}{\ff}{\ri} \leq 1$, thus also $0 \leq \wslpn{n}{\cc'}{\ff}{\ri} \sepcon \ri \leq 1$ for all $\cc'$. If $\ct$ is $0-1$ bounded, then the $\stepsymbol$ function is also $0-1$ bounded:
    \begin{align*}
        \step{\cc}{t}(\sk, \hh) =& \inf \bigg\{ \sum \left\llbag \pp \cdot \ct(\cc',(\sk', \hh')) \mid \cc, (\sk, \hh) \optrans{\acta}{\pp} \cc', (\sk', \hh') \right\rrbag \\
                                 & \qquad \bigg\vert~ \acta \in \Enabled{\cc, (\sk, \hh)} \bigg\}\\
                            \leq & \inf \bigg\{ \sum \left\llbag \pp \cdot 1 \mid \cc, (\sk, \hh) \optrans{\acta}{\pp} \cc', (\sk', \hh') \right\rrbag \\
                                 & \qquad \bigg\vert~ \acta \in \Enabled{\cc, (\sk, \hh)} \bigg\} \tag{\Cref{lem:monotone-of-step}} \\
                            \leq & \inf \left\{ 1 \mid \acta \in \Enabled{\cc, (\sk, \hh)} \right\} \tag{\MDP property} \\
                                =& 1
    \end{align*}

    \begin{align*}
        \step{\cc}{t}(\sk, \hh) =& \inf \bigg\{ \sum \left\llbag \pp \cdot \ct(\cc',(\sk', \hh')) \mid \cc, (\sk, \hh) \optrans{\acta}{\pp} \cc', (\sk', \hh') \right\rrbag \\
                                 & \qquad \bigg\vert~ \acta \in \Enabled{\cc, (\sk, \hh)} \bigg\}\\
                            \geq & \inf \bigg\{ \sum \left\llbag \pp \cdot 0 \mid \cc, (\sk, \hh) \optrans{\acta}{\pp} \cc', (\sk', \hh') \right\rrbag \\
                                 & \qquad \bigg\vert~ \acta \in \Enabled{\cc, (\sk, \hh)} \bigg\} \tag{\Cref{lem:monotone-of-step}} \\
                            \geq & \inf \left\{ 0 \mid \acta \in \Enabled{\cc, (\sk, \hh)} \right\} \tag{0 is zero element} \\
                                =& 0
    \end{align*}
    Lastly, the quantitative magic wand is the infimum of all applicable values for the second argument, thus also $0-1$ bounded if their second argument is $0-1$ bounded.
\end{proof}

\begin{lemma}[Antitonicity of $\wslpsymbol_n$ w.r.t. $n$] \label{lem:antitone-of-wslp}
    For natural numbers $n\leq m$ we have
    \begin{equation*}
        \wslpn{n}{\cc}{\ff}{\ri} \geq \wslpn{m}{\cc}{\ff}{\ri}~.
    \end{equation*}
\end{lemma}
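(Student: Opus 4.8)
The plan is to reduce to a single-step comparison and then chain. Concretely, it suffices to establish that $\wslpn{n}{\cc}{\ff}{\ri} \geq \wslpn{n+1}{\cc}{\ff}{\ri}$ holds for every $n$ and \emph{every} program $\cc$; the claimed inequality for arbitrary $n \leq m$ then follows immediately by transitivity, chaining $\wslpn{n}{\cc}{\ff}{\ri} \geq \wslpn{n+1}{\cc}{\ff}{\ri} \geq \dots \geq \wslpn{m}{\cc}{\ff}{\ri}$. I would prove this single-step claim by induction on $n$, keeping it universally quantified over $\cc$ so that the induction hypothesis is available for the successor programs appearing inside $\stepsymbol$.

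For the base case $n=0$, we have $\wslpn{0}{\cc}{\ff}{\ri} = 1$ by definition, while $\wslpn{1}{\cc}{\ff}{\ri} \leq 1$ by the $0$-$1$ bound of \Cref{lem:zero-one-bounded} (whose proof in fact establishes the bound at each fixed index). Hence $\wslpn{0}{\cc}{\ff}{\ri} \geq \wslpn{1}{\cc}{\ff}{\ri}$. For the inductive step, assume $\wslpn{n}{\cc'}{\ff}{\ri} \geq \wslpn{n+1}{\cc'}{\ff}{\ri}$ for all $\cc'$, and fix $\cc$. If $\cc = \TERM$, then both $\wslpn{n+1}{\TERM}{\ff}{\ri}$ and $\wslpn{n+2}{\TERM}{\ff}{\ri}$ equal $\ff$, so the inequality holds with equality. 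Otherwise, both sides unfold to the third case of the definition, and the argument is to push the induction hypothesis through the three operators in turn: monotonicity of separating multiplication (\Cref{eq:sepcon_monoton}) gives $\wslpn{n}{\cc'}{\ff}{\ri} \sepcon \ri \geq \wslpn{n+1}{\cc'}{\ff}{\ri} \sepcon \ri$ for every $\cc'$, i.e.\ the inner mapping at index $n$ pointwise dominates the one at index $n+1$; monotonicity of $\stepsymbol$ (\Cref{lem:monotone-of-step}) then yields the same domination after one step; and finally monotonicity of the guarded magic wand (from the Properties of (Guarded) Magic Wand lemma) preserves the inequality under $\ri \sepimp (\cdot)$. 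Combining these gives $\wslpn{n+1}{\cc}{\ff}{\ri} \geq \wslpn{n+2}{\cc}{\ff}{\ri}$, closing the induction.

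I do not expect a genuine obstacle once the available monotonicity lemmas are in place; the only subtlety is structural rather than computational. Because the defining equation at index $n+1$ refers to values of $\wslpsymbol$ at index $n$ for the (possibly many) one-step successors $\cc'$ of $\cc$, the induction must be carried out with the statement universally quantified over all programs, so that the hypothesis covers exactly those successors. I would also verify that the case split ($\cc = \TERM$ versus otherwise) is aligned between the indices $n+1$ and $n+2$: it is, since both indices are nonzero and hence select the same branch of the definition, so no mismatch between the two unfoldings can arise.
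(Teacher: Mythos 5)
Your proposal is correct and follows essentially the same route as the paper: prove the single-step inequality $\wslpn{n}{\cc}{\ff}{\ri} \geq \wslpn{n+1}{\cc}{\ff}{\ri}$ by induction on $n$ (universally quantified over $\cc$), using the $0$--$1$ bound for the base case, the $\TERM$ case split, and the combined monotonicity of $\sepcon$, $\stepsymbol$ and $\sepimp$ for the inductive step. The paper's proof is identical in structure, so there is nothing to add.
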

\begin{proof}
    We instead prove by induction on $n$ the equivalent statement that
    \begin{equation*}
        \wslpn{n}{\cc}{\ff}{\ri} \geq \wslpn{n+1}{\cc}{\ff}{\ri}~.
    \end{equation*}

    For the base case $n=0$ we have $\wslpn{0}{\cc}{\ff}{\ri}=1$, thus the statement holds trivially by \Cref{lem:zero-one-bounded}.

    Now for the induction hypothesis, we assume that for some fixed but arbitrary $n$ we have $\wslpn{n}{\cc}{\ff}{\ri} \geq \wslpn{n+1}{\cc}{\ff}{\ri}$ for all $\cc$.

    For the induction step, if $\cc=\TERM$ the statement holds trivially, since $\wslpn{n+1}{\TERM}{\ff}{\ri}=\ff=\wslpn{n+2}{\TERM}{\ff}{\ri}$. For $\cc\neq\TERM$ we have
    \begin{align*}
        \wslpn{n+2}{\cc}{\ff}{\ri} =& \ri \sepimp \step{\cc}{\anonfunc{\cc'} \wslpn{n+1}{\cc'}{\ff}{\ri} \sepcon \ri} \\ 
                                \leq& \ri \sepimp \step{\cc}{\anonfunc{\cc'} \wslpn{n}{\cc'}{\ff}{\ri} \sepcon \ri} \tag{Monotonicity of $\stepsymbol$, $\sepcon$, $\sepimp$ and induction hypothesis} \\
                                   =& \wslpn{n+1}{\cc}{\ff}{\ri}
    \end{align*}
\end{proof}

\begin{lemma}[Alternative description of $\wslpsymbol$]\label{lem:alternate-wslp}
    \begin{equation*}
        \lim_{n \rightarrow \infty} \wslpn{n}{\cc}{\ff}{\ri} = \inf \left\{ \wslpn{n}{\cc}{\ff}{\ri} \mid n \in \Nats \right\}
    \end{equation*}
\end{lemma}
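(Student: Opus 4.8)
The plan is to reduce the claim to the elementary monotone-convergence fact for real sequences, applied pointwise over states. Since both sides of the claimed equality are expectations in $\Eone$ and the order on $\Eone$ is the pointwise order, it suffices to fix an arbitrary state $(\sk, \hh)$ and to show that the real sequence $a_n \coloneqq \wslpn{n}{\cc}{\ff}{\ri}(\sk, \hh)$ satisfies $\lim_{n \to \infty} a_n = \inf\{a_n \mid n \in \Nats\}$. Both the limit in the definition of $\wslpsymbol$ and the infimum over $\Eone$ are understood pointwise, so establishing the identity for every $(\sk, \hh)$ immediately lifts to the stated equality of expectations.

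First I would invoke \Cref{lem:antitone-of-wslp} to conclude that the sequence $(a_n)_{n \in \Nats}$ is non-increasing: for $n \le m$ we have $a_n = \wslpn{n}{\cc}{\ff}{\ri}(\sk,\hh) \geq \wslpn{m}{\cc}{\ff}{\ri}(\sk,\hh) = a_m$. Next, by \Cref{lem:zero-one-bounded}, every term satisfies $0 \le a_n \le 1$, so the sequence is bounded below by $0$. A non-increasing sequence of reals that is bounded below converges, and its limit coincides with its infimum; hence $\lim_{n \to \infty} a_n = \inf\{a_n \mid n \in \Nats\}$. This is exactly the desired pointwise identity.

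I do not expect a genuine obstacle here: the entire content is the standard theorem that a bounded monotone real sequence converges to the corresponding extremum, and the only care needed is bookkeeping the pointwise nature of the constructions. Concretely, the single point worth making explicit is that the infimum of a family of expectations is defined componentwise via the partial order $(\Eone, \leq)$ introduced in \Cref{sec:qsl}, so that $\inf\{\wslpn{n}{\cc}{\ff}{\ri} \mid n \in \Nats\}(\sk,\hh) = \inf\{a_n \mid n \in \Nats\}$, and likewise the defining limit of $\wslpsymbol$ is taken pointwise. With these identifications the pointwise equality established above is precisely the equality of expectations asserted by the lemma, completing the argument.
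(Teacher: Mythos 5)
Your proof is correct and follows exactly the route the paper takes: the paper's own justification is the one-liner that the claim "follows from \Cref{lem:antitone-of-wslp} and real-valued analysis," which is precisely your argument of pointwise antitonicity plus boundedness below giving convergence of a monotone sequence to its infimum. You have merely spelled out the bookkeeping the paper leaves implicit.
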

\begin{proof}
    This follows from \Cref{lem:antitone-of-wslp} and real-valued analysis.
\end{proof}

\begin{definition}
    Recall that
    \begin{align*}
        \reach{n}{\sigma_1}{\scheduler}{\sigma'} ~=&~ \sum \bigg\llbag\prod_{i=1}^{m-1} ~~ \MDPProbs(\sigma_i, \scheduler(\sigma_1 \dots \sigma_i))(\sigma_{i+1}) \\
                                                 &~ \qquad  \bigg\vert\; \sigma_1 \dots \sigma_m \in U^{m}, \sigma_{m}=\sigma', m\leq n \bigg\rrbag~.
    \end{align*}
    For final states $\sigma'$
    \begin{align}
        \sigma \optransN{n}{\scheduler}{\pp} \sigma' \quad \text{iff}& \quad \pp ~=~ \reach{n}{\sigma}{\scheduler}{\sigma'}~, \label{eq:alternate-reachability}\\
        \sigma \optransN{n}{\scheduler}{1-\pp} \dots \quad \text{iff}& \quad \pp ~=~ \sum_{\sigma' ~\text{final}} \reach{n}{\sigma}{\scheduler}{\sigma'}~. \label{eq:alternate-non-termination}
    \end{align}%
\end{definition}

\begin{lemma}\label{lem:ntostar}
    For a scheduler $\scheduler$, a state $\sigma$ and a final state $\sigma'$
    \begin{align*}
        \sigma \optransStar{\scheduler}{\pp} \sigma' \quad \text{iff}& \quad p ~=~ \lim_{n \rightarrow \infty} \qq_n ~\text{with}~ \sigma \optransN{n}{\scheduler}{\qq_n} \sigma' \\ 
        \sigma \optransStar{\scheduler}{\pp} \dots \quad \text{iff}& \quad p ~=~ \lim_{n \rightarrow \infty} \qq_n ~\text{with}~ \sigma \optransN{n}{\scheduler}{\qq_n} \dots  \\ 
    \end{align*}
\end{lemma}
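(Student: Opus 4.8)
The plan is to unfold both the star-relations and the $n$-step relations into their defining expressions in terms of $\reach{\cdot}{\sigma}{\scheduler}{\cdot}$ and to reduce each of the two equivalences to a statement about a limit. Throughout I would fix $\scheduler$, the state $\sigma$ and the final state $\sigma'$. The crucial observation is purely definitional: by \Cref{eq:alternate-reachability} the relation $\sigma \optransN{n}{\scheduler}{\qq_n} \sigma'$ holds exactly when $\qq_n = \reach{n}{\sigma}{\scheduler}{\sigma'}$, so the $\qq_n$ are never free parameters but are pinned down by $\reach{\cdot}{}{}{}$.

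For the first (reachability) equivalence there is essentially nothing to prove beyond unfolding. By \Cref{eq:reachability}, $\sigma \optransStar{\scheduler}{\pp} \sigma'$ holds iff $\pp = \lim_{n \to \infty} \reach{n}{\sigma}{\scheduler}{\sigma'}$, and taking $\qq_n = \reach{n}{\sigma}{\scheduler}{\sigma'}$ yields $\sigma \optransN{n}{\scheduler}{\qq_n} \sigma'$ for every $n$ together with $\lim_{n \to \infty} \qq_n = \lim_{n \to \infty} \reach{n}{\sigma}{\scheduler}{\sigma'} = \pp$. The converse direction is the same computation read backwards, since $\qq_n$ is forced to equal $\reach{n}{\sigma}{\scheduler}{\sigma'}$; both sides reduce to the identical limit.

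The non-termination equivalence is where the actual work lies. Unfolding \Cref{eq:alternate-non-termination} gives $\qq_n = 1 - \sum_{\sigma'~\text{final}} \reach{n}{\sigma}{\scheduler}{\sigma'}$, while \Cref{eq:non-termination} gives $\pp = 1 - \sum_{\sigma'~\text{final}} \lim_{n \to \infty} \reach{n}{\sigma}{\scheduler}{\sigma'}$. Hence $\pp = \lim_{n \to \infty} \qq_n$ holds if and only if
\[
    \lim_{n \to \infty} \sum_{\sigma'~\text{final}} \reach{n}{\sigma}{\scheduler}{\sigma'} \;=\; \sum_{\sigma'~\text{final}} \lim_{n \to \infty} \reach{n}{\sigma}{\scheduler}{\sigma'}~,
\]
that is, if and only if the limit over $n$ may be exchanged with the countable sum over final states. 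This interchange is the main obstacle.

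To justify it I would argue by monotonicity. Each $\reach{n}{\sigma}{\scheduler}{\sigma'}$ is a sum of a bag of non-negative products of transition probabilities, indexed by the paths $\sigma_1 \dots \sigma_m$ with $\sigma_m = \sigma'$ and $m \leq n$; enlarging $n$ only enlarges this index set, so $\reach{n}{\sigma}{\scheduler}{\sigma'}$ is non-negative and non-decreasing in $n$, and each limit over $n$ is in fact a supremum over $n$. Since a countable sum of non-negative terms is the supremum of its finite partial sums, the desired equality becomes a commutation of two suprema taken over the independent indices $n$ and the finite subsets of final states, which always holds; equivalently it is the monotone convergence theorem applied to counting measure on the final states. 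Because the final states are pairwise distinct absorbing states, the partial sums $\sum_{\sigma'~\text{final}} \reach{n}{\sigma}{\scheduler}{\sigma'}$ are bounded by $1$, so all the limits exist and are finite, which completes the plan.
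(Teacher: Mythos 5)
Your proposal is correct: the first equivalence is indeed purely definitional, and the only substantive content of the lemma is the interchange of the limit in $n$ with the countable sum over final states, which you rightly reduce to monotone convergence using the fact that $\reach{n}{\sigma}{\scheduler}{\sigma'}$ is non-negative and non-decreasing in $n$ (the bag of paths only grows) and that the partial sums are bounded by $1$. The paper itself dismisses both claims with the single line ``follows by real-valued analysis,'' so your argument is exactly the intended one, just actually carried out; there is nothing to add or correct.
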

\begin{proof}
    Both follow by using real-valued analysis.
\end{proof}

\begin{lemma}\label{lem:smallstep}
    For a scheduler $\scheduler$, a state $\sigma$ and a final state $\sigma''$ with $\sigma \neq \sigma''$ and where $\anonfunc{\pi} \scheduler(\sigma \pi)$ behaves as $\scheduler$ with the difference that we always append $\sigma$ to its argument, we have
    \begin{equation*}
        \reach{n+1}{\sigma}{\scheduler}{\sigma''} = \sum \left\llbag\; \MDPProbs(\sigma, \scheduler(\sigma))(\sigma') \cdot \reach{n}{\sigma'}{\anonfunc{\pi} \scheduler(\sigma \pi)}{\sigma''} \mid \sigma' \in \MDPStates \;\right\rrbag
    \end{equation*}
\end{lemma}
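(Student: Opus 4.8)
The plan is to prove this by a first-step decomposition of the reachability bag, exactly as one unfolds the induced Markov chain one transition at a time. Unfolding the definition, $\reach{n+1}{\sigma}{\scheduler}{\sigma''}$ is the sum of the bag of products $\prod_{i=1}^{m-1} \MDPProbs(\sigma_i, \scheduler(\sigma_1 \dots \sigma_i))(\sigma_{i+1})$ ranging over all paths $\sigma_1 \dots \sigma_m$ with $\sigma_1 = \sigma$, $\sigma_m = \sigma''$ and $m \leq n+1$. This is where the hypothesis $\sigma \neq \sigma''$ enters: since $\sigma \neq \sigma''$, no path of length $m=1$ (which would be the single state $\sigma$) satisfies $\sigma_m = \sigma''$, so every contributing path has a genuine second state $\sigma_2$, and the bag may be partitioned according to the value $\sigma' \coloneqq \sigma_2 \in \MDPStates$.

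First I would group the bag by the first successor $\sigma'$, turning the single sum into $\sum_{\sigma' \in \MDPStates}$ of the sub-bag of paths whose second state is $\sigma'$. Next, in each product I would peel off the $i=1$ factor, which equals $\MDPProbs(\sigma, \scheduler(\sigma))(\sigma')$ and is constant across the sub-bag, and pull it out of the inner sum. The remaining factor is $\prod_{i=2}^{m-1} \MDPProbs(\sigma_i, \scheduler(\sigma_1 \dots \sigma_i))(\sigma_{i+1})$, taken over the tail $\sigma' \sigma_3 \dots \sigma_m$, which is a path from $\sigma'$ to $\sigma''$ of length $m-1 \leq n$. Reindexing this tail as $\tau_1 \dots \tau_{m-1}$ with $\tau_j = \sigma_{j+1}$, I would verify that $\scheduler(\sigma_1 \dots \sigma_i) = \scheduler(\sigma \, \tau_1 \dots \tau_{i-1}) = (\anonfunc{\pi} \scheduler(\sigma \pi))(\tau_1 \dots \tau_{i-1})$, so that the inner product is precisely the summand defining $\reach{n}{\sigma'}{\anonfunc{\pi} \scheduler(\sigma \pi)}{\sigma''}$. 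Since the tails range over exactly all paths from $\sigma'$ to $\sigma''$ of length at most $n$, the inner sum collapses to $\reach{n}{\sigma'}{\anonfunc{\pi} \scheduler(\sigma \pi)}{\sigma''}$, and reassembling over $\sigma'$ yields the claim.

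The arithmetic here is routine, so the only delicate points are bookkeeping rather than mathematics. The main obstacle is getting the scheduler reindexing exactly right: I must confirm that prepending $\sigma$ to every history through $\anonfunc{\pi} \scheduler(\sigma \pi)$ faithfully reproduces all the decisions $\scheduler$ makes along the tail of the original path, with no off-by-one error in the history lengths. A secondary point is that the outer sum over $\sigma'$ ranges over the countable set $\MDPStates$, so the bag/series rearrangement used when factoring and regrouping must be justified; this is immediate because every term is non-negative and bounded, with $\sum_{\sigma' \in \MDPStates} \MDPProbs(\sigma, \scheduler(\sigma))(\sigma') = 1$ by the \MDP property, guaranteeing absolute convergence and hence unconditional validity of the rearrangement.
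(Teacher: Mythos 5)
Your proposal is correct and follows essentially the same route as the paper's proof: restrict to paths of length at least two using $\sigma \neq \sigma''$, peel off the first transition probability, regroup the bag by the first successor $\sigma'$, and reindex the tail together with the prepended-history scheduler $\anonfunc{\pi}\scheduler(\sigma\pi)$ to recover $\reach{n}{\sigma'}{\anonfunc{\pi}\scheduler(\sigma\pi)}{\sigma''}$. Your added remark justifying the rearrangement of the countable sum via non-negativity is a point the paper leaves implicit, but it does not change the argument.
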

\begin{proof}
    \begingroup
    \allowdisplaybreaks
    \begin{align*}
        & \reach{n+1}{\sigma}{\scheduler}{\sigma''}\\ 
    =~  & \sum \left\llbag \prod_{i=1}^{m-1} \MDPProbs(\pi_i, \scheduler(\pi_1 \dots \pi_i))(\pi_{i+1}) \middle| \pi \in \MDPStates^m, \pi_1=\sigma, \pi_m=\sigma'', m\leq n+1 \right\rrbag \tag{by definition}\\
    =~  & \sum \left\llbag \prod_{i=1}^{m-1} \MDPProbs(\pi_i, \scheduler(\pi_1 \dots \pi_i))(\pi_{i+1}) \middle| \pi \in \MDPStates^m, \pi_1=\sigma, \pi_m=\sigma'', 2 \leq m\leq n+1 \right\rrbag \tag{since $\sigma\neq \sigma''$}\\
    =~  & \sum\bigg\llbag \MDPProbs(\pi_1, \scheduler(\pi_1))(\pi_2) \cdot \prod_{i=2}^{m-1} \MDPProbs(\pi_i, \scheduler(\pi_1 \pi_2 \dots \pi_i))(\pi_{i+1}) \\
        & \qquad \bigg| \pi \in \MDPStates^m, \pi_1=\sigma, \pi_m=\sigma'', 2 \leq m\leq n+1 \bigg\rrbag \tag{by commutativity}\\
    =~  & \sum\bigg\llbag \MDPProbs(\sigma, \scheduler(\sigma))(\pi_1) \cdot \prod_{i=1}^{m-1} \MDPProbs(\pi_i, \scheduler(\sigma \pi_1 \dots \pi_i))(\pi_{i+1}) \\
        & \qquad \bigg| \pi \in \MDPStates^m, \pi_m=\sigma'', 1 \leq m\leq n \bigg\rrbag \tag{renaming}\\
    =~  & \sum\bigg\llbag \MDPProbs(\sigma, \scheduler(\sigma))(\sigma') \cdot \sum \big\llbag \prod_{i=1}^{m-1} \MDPProbs(\pi_i, \scheduler(\sigma \pi_1 \dots \pi_i))(\pi_{i+1})\\
        & \qquad \qquad \qquad \qquad \qquad ~\big| \pi \in \MDPStates^m, \pi_1=\sigma', \pi_m=\sigma'', 1 \leq m\leq n \big\rrbag \\
        & \qquad \bigg| \sigma' \in \MDPStates \bigg\rrbag \tag{by distributivity}\\
    =~  & \sum\llbag \MDPProbs(\sigma, \scheduler(\sigma))(\pi_1) \cdot \reach{n}{\sigma}{\anonfunc{\pi} \scheduler(\sigma \pi)}{\sigma'} \mid \sigma' \in \MDPStates \rrbag \tag{by definition}\\
    \end{align*}
    \endgroup
\end{proof}

\begin{definition}[Weakest Liberal Preexpectation after $n$ steps]
    \begin{align*}
    \wlpn{n}{\cc}{\ff} \quad=&\quad \inf \bigg\{ \sum \left\llbag \pp \cdot \ff(\sk', \hh') \mid \cc, (\sk, \hh) \optransN{n}{\scheduler}{\pp} \TERM, (\sk', \hh') \right\rrbag + \pp_{div} \\
                          &\quad \qquad \bigg\vert ~ \scheduler \in \Schedulers ~\text{and}~ \cc, (\sk, \hh) \optransN{n}{\scheduler}{\pp_{div}} \dots \bigg\}    
    \end{align*}
\end{definition}

\begin{lemma}\label{lem:wlpnandstar}
    \begin{equation*}
        \wlp{\cc}{\ff} \quad = \quad \lim_{n \rightarrow \infty} \wlpn{n}{\cc}{\ff}
    \end{equation*}
\end{lemma}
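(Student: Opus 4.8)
Since expectations are ordered pointwise, I fix a state $(\sk,\hh)$ and, for each scheduler $\scheduler \in \Schedulers$, abbreviate the two quantities over which the defining infima of $\wlpn{n}{\cc}{\ff}$ and $\wlp{\cc}{\ff}$ range:
\begin{align*}
    V_n(\scheduler) &= \sum \llbag \pp \cdot \ff(\sk',\hh') \mid \cc,(\sk,\hh) \optransN{n}{\scheduler}{\pp} \TERM,(\sk',\hh') \rrbag + \pp_{div,n},\\
    V_\infty(\scheduler) &= \sum \llbag \pp \cdot \ff(\sk',\hh') \mid \cc,(\sk,\hh) \optransStar{\scheduler}{\pp} \TERM,(\sk',\hh') \rrbag + \pp_{div,\infty},
\end{align*}
where $\pp_{div,n}$ and $\pp_{div,\infty}$ are the non-termination probabilities of \Cref{eq:alternate-non-termination} and \Cref{eq:non-termination}. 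By definition $\wlpn{n}{\cc}{\ff}(\sk,\hh) = \inf_{\scheduler} V_n(\scheduler)$ and $\wlp{\cc}{\ff}(\sk,\hh) = \inf_{\scheduler} V_\infty(\scheduler)$, so it suffices to show $\inf_{\scheduler} V_\infty(\scheduler) = \lim_{n} \inf_{\scheduler} V_n(\scheduler)$.

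The crux is the behaviour of $V_n(\scheduler)$ in $n$. Writing $\sigma = (\cc,(\sk,\hh))$ and using that the termination, abortion and divergence probabilities sum to one, I would rewrite
\begin{equation*}
    V_n(\scheduler) = 1 - \sum \llbag \reach{n}{\sigma}{\scheduler}{\sigma'} \cdot (1 - \ff(\sk',\hh')) \mid \sigma'=(\TERM,(\sk',\hh')) \rrbag - \sum \llbag \reach{n}{\sigma}{\scheduler}{\sigma'} \mid \sigma'=\ABORT \rrbag .
\end{equation*}
Because $\ff$ is one-bounded, both subtracted sums are non-negative, and because $\reach{n}{\sigma}{\scheduler}{\sigma'}$ is non-decreasing in $n$ (a sum over a growing set of paths to the final state $\sigma'$), both sums are non-decreasing in $n$; hence $V_n(\scheduler)$ is \emph{antitone} in $n$. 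Monotone convergence then permits interchanging the limit with the countable sums, and \Cref{lem:ntostar}, which identifies $\lim_n \reach{n}{\sigma}{\scheduler}{\sigma'}$ with the starred reachability and non-termination probabilities, gives $\lim_{n} V_n(\scheduler) = V_\infty(\scheduler)$. Together with antitonicity this yields $V_\infty(\scheduler) = \inf_{n} V_n(\scheduler)$.

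Finally I would assemble the claim. Substituting $V_\infty(\scheduler) = \inf_n V_n(\scheduler)$ gives $\wlp{\cc}{\ff}(\sk,\hh) = \inf_{\scheduler} \inf_{n} V_n(\scheduler)$, and commuting the two infima by \Cref{eq:inf_commutatitivity} turns this into $\inf_{n} \inf_{\scheduler} V_n(\scheduler) = \inf_{n} \wlpn{n}{\cc}{\ff}(\sk,\hh)$. Since an infimum of antitone functions is again antitone, $\inf_{\scheduler} V_n(\scheduler) = \wlpn{n}{\cc}{\ff}(\sk,\hh)$ is antitone in $n$, so $\inf_{n} \wlpn{n}{\cc}{\ff} = \lim_{n} \wlpn{n}{\cc}{\ff}$, which is exactly the statement. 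The main obstacle I expect is the middle paragraph: justifying antitonicity of $V_n(\scheduler)$ from one-boundedness of $\ff$ and path-set monotonicity of $\reach{}{}{}{}$, and the monotone-convergence interchange needed to identify $\lim_n V_n(\scheduler)$ with $V_\infty(\scheduler)$. Once $V_n$ is known to be antitone in $n$, the remaining commutation of infima is routine.
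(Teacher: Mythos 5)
Your proof is correct and follows essentially the same route as the paper, which simply states that the claim ``follows directly from'' \Cref{lem:ntostar}. You have in fact supplied the two ingredients that make ``directly'' work and that the paper leaves implicit: the antitonicity of $V_n(\scheduler)$ in $n$ (via one-boundedness of $\ff$ and monotone growth of $\reach{n}{\sigma}{\scheduler}{\sigma'}$, which licenses both the monotone-convergence interchange and the replacement of $\lim_n$ by $\inf_n$) and the subsequent commutation of the two infima.
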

\begin{proof}
    Follows directly from \Cref{lem:ntostar}.
\end{proof}

\begin{theorem}[Equality between $\wlpsymbol$ and $\wlpsymbol$]\label{thm:wlp-wslp-equality}
    \begin{equation*}
        \wslp{\cc}{\ff}{\emp} \quad = \quad \wlp{\cc}{\ff}
    \end{equation*}
\end{theorem}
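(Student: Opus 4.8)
The plan is to reduce the statement to a step-indexed identity and then pass to the limit. First I would specialise the resource invariant to $\emp$. Since $\emp$ is the neutral element of $\sepcon$ (neutrality), we have $\fg \sepcon \emp = \fg$, and directly from the definition of the guarded magic wand the only heap $\hh'$ with $\emp(\sk,\hh')=1$ is the empty heap, so $(\emp \sepimp \fg)(\sk,\hh) = \fg(\sk,\hh)$, i.e.\ $\emp \sepimp \fg = \fg$. Hence the recursive clause collapses to
$$\wslpn{n}{\cc}{\ff}{\emp} = \step{\cc}{\anonfunc{\cc'} \wslpn{n-1}{\cc'}{\ff}{\emp}}$$
for $\cc \neq \TERM$ and $n \geq 1$, while $\wslpn{0}{\cc}{\ff}{\emp}=1$ and $\wslpn{n}{\TERM}{\ff}{\emp}=\ff$. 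Using \Cref{lem:alternate-wslp} on the left and \Cref{lem:wlpnandstar} on the right, it suffices to prove the step-indexed identity $\wslpn{n}{\cc}{\ff}{\emp} = \wlpn{n}{\cc}{\ff}$ for every $n$, $\cc$, and $\ff$, and then let $n \to \infty$.

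I would prove this identity by induction on $n$. For $n=0$ both sides equal $1$: the left by definition, the right because $\reach{0}{\sigma}{\scheduler}{\tau}=0$ for every final $\tau$, so the divergence mass is $1$. For $n \geq 1$ and $\cc = \TERM$ both sides equal $\ff$, since from a $\TERM$ state no action is enabled and $\reach{n}{\sigma}{\scheduler}{\sigma}=1$. The only real work is the case $\cc \neq \TERM$, $n\geq 1$, where by the induction hypothesis $\wslpn{n-1}{\cc'}{\ff}{\emp} = \wlpn{n-1}{\cc'}{\ff}$ for all $\cc'$, so it remains to establish the one-step unfolding
$$\wlpn{n}{\cc}{\ff} = \step{\cc}{\anonfunc{\cc'} \wlpn{n-1}{\cc'}{\ff}}.$$

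For the unfolding, fix $\sigma = (\cc, (\sk,\hh))$, which is non-final. Given a scheduler $\scheduler$ write $\acta = \scheduler(\sigma)$ and let $\scheduler' = \anonfunc{\pi} \scheduler(\sigma \pi)$ be its residual. \Cref{lem:smallstep} rewrites $\reach{n}{\sigma}{\scheduler}{\tau}$ as $\sum_{\sigma'} \MDPProbs(\sigma,\acta)(\sigma') \cdot \reach{n-1}{\sigma'}{\scheduler'}{\tau}$ for every final $\tau \neq \sigma$. Feeding this into the termination sum and, using $\sum_{\sigma'}\MDPProbs(\sigma,\acta)(\sigma')=1$ (the MDP normalisation) into the divergence mass $1 - \sum_{\tau\,\text{final}} \reach{n}{\sigma}{\scheduler}{\tau}$, the value realised by $\scheduler$ becomes $\sum_{\sigma'} \MDPProbs(\sigma,\acta)(\sigma') \cdot v_{\scheduler'}(\sigma')$, where $v_{\scheduler'}(\sigma')$ is the liberal value accrued from $\sigma'$ in at most $n-1$ steps under $\scheduler'$; all non-negative rearrangements are justified by Tonelli. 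Successors $\sigma' = \ABORT$ contribute $0$ (they match no $\TERM$ target and carry divergence mass $0$), exactly as an aborting branch contributes $0$ in $\step{\cc}{\cdot}$, so both sides agree on aborts.

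It then remains to take the infimum over $\scheduler$. I would split it as an infimum over the first action $\acta \in \Enabled{\sigma}$ followed by an infimum over residual schedulers. Because a scheduler is a function on histories and the histories $\sigma \sigma' \pi$ branching through distinct successors $\sigma'$ are pairwise disjoint, the residual can be chosen independently on each branch; consequently the infimum commutes with the non-negative, convergent convex combination $\sum_{\sigma'} \MDPProbs(\sigma,\acta)(\sigma')\cdot(\cdot)$, turning $\inf_{\scheduler'} v_{\scheduler'}(\sigma')$ into $\wlpn{n-1}{\cc'}{\ff}(\sk',\hh')$ for each $\sigma' = (\cc',(\sk',\hh'))$. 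This yields precisely $\step{\cc}{\anonfunc{\cc'}\wlpn{n-1}{\cc'}{\ff}}(\sk,\hh)$ and closes the induction; the theorem follows by letting $n\to\infty$. I expect this last interchange -- decomposing the single history-dependent scheduler into independent residuals and pulling the infimum through the possibly infinite sum -- to be the main obstacle, requiring an $\varepsilon$-approximation that distributes the error mass $\varepsilon\,2^{-k}$ across the enumerated successors and uses $0 \le \wlpn{n-1}{\cdot}{\cdot} \le 1$ together with $\sum_{\sigma'}\MDPProbs(\sigma,\acta)(\sigma')=1$.
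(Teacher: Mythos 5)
Your proposal is correct and follows essentially the same route as the paper: collapse $\emp \sepimp (\cdot)$ and $(\cdot) \sepcon \emp$, reduce to the step-indexed identity $\wslpn{n}{\cc}{\ff}{\emp} = \wlpn{n}{\cc}{\ff}$ via \Cref{lem:wlpnandstar}, induct on $n$, unfold one step with \Cref{lem:smallstep}, and then exchange the infimum over schedulers with the convex combination over successors by choosing residual schedulers independently per branch. The interchange you flag as the main obstacle is exactly the paper's ``expressiveness of schedulers'' step ($\dagger$), which it likewise only sketches, using converging sequences of residual schedulers in place of your $\varepsilon\,2^{-k}$ error distribution -- the two arguments are interchangeable.
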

\begin{proof}
    We first observe, that $\wslpn{n}{\cc}{\ff}{\emp}$ simplifies to
    \begin{equation*}
        \wslpn{n}{\cc}{\ff}{\emp} = 
        \begin{cases} 1 & \text{if}~n=0 \\
                      \ff & \text{if}~n\neq 0, \cc=\TERM \\ 
                      \step{\cc}{\anonfunc{\cc'} \wslpn{n-1}{\cc'}{\ff}{\emp}}& \text{else.}
        \end{cases}
    \end{equation*}
    We now prove by induction on $n$ that $\wslpn{n}{\cc}{\ff}{\emp}=\wlpn{n}{\cc}{\ff}$, which by \Cref{lem:wlpnandstar} also proofs the claim.

    For the induction basis, we have that $n=0$, $\wslpn{n}{\cc}{\ff}{\emp}(\sk, \hh) = 1$ and that $\wlpn{n}{\cc}{\ff}=1$ because of $\reach{0}{\sigma}{\scheduler}{\sigma'}=0$ and thus the probability of non-termination is $1$.

    Now we assume that for some fixed but arbitrary $n$ the claim holds.

    For the induction step, we have two cases. If $\cc=\TERM$, then $\wslpn{n+1}{\cc}{\ff}{\emp}(\sk,\hh) = \ff(\sk, \hh)$; and $\reach{n+1}{(\cc, (\sk, \hh))}{\scheduler}{(\TERM, (\sk, \hh))}=1$, thus $ \wlpn{n+1}{\cc}{\ff}(\sk, \hh)=\ff(\sk, \hh)$ as well.
    
    We will use lambda expressions to create anonymous functions. That is we define for a sequence of states $\pi$ and a state $\sigma$ the notation $(\anonfunc{\pi'} \scheduler(\sigma . \pi'))(\pi) = \scheduler(\sigma . \pi)$, where $\sigma . \pi$ is the string concatenation of $\sigma$ with $\pi$. We can split the first step apart from $\sigma \optrans{\scheduler}{\pp} \sigma'$ due to \Cref{lem:smallstep}. Thus, for $\cc\neq \downarrow$ we have:
    \begingroup
    \allowdisplaybreaks
    \begin{align*}
        & \wlpn{n+1}{\cc}{\ff}(\sk, \hh)\\ 
      = & \inf \bigg\{ \sum \left\llbag \pp \cdot \ff(\sk', \hh') \mid \cc, (\sk, \hh) \optransN{n+1}{\scheduler}{\pp} \TERM, (\sk', \hh') \right\rrbag + \pp_{div} \\
        & \qquad \bigg\vert ~ \scheduler \in \Schedulers ~\text{and}~ \cc, (\sk, \hh) \optransN{n+1}{\scheduler}{\pp_{div}} \dots \bigg\} \\ 
      = & \inf \bigg\{ \sum \left\llbag \pp \cdot \ff(\sk', \hh') \mid \cc, (\sk, \hh) \optransN{n+1}{\scheduler}{\pp} \TERM, (\sk', \hh') \right\rrbag\\ 
        & \quad ~ + \sum \left\llbag \pp_{div} \mid  \cc, (\sk, \hh) \optransN{n+1}{\scheduler}{\pp_{div}} \dots \right\rrbag ~\bigg\vert ~ \scheduler \in \Schedulers \bigg\} \tag{$p_{div}$ is unique}\\ 
      = & \inf \bigg\{ \sum \left\llbag \pp'' \cdot \pp' \cdot \ff(\sk', \hh') \mid \cc, (\sk, \hh) \optrans{\acta}{\pp''} \cc'', (\sk'', \hh'') \optransN{n}{\anonfunc{\pi} \scheduler(\cc, (\sk, \hh) . \pi)}{\pp'} \TERM, (\sk', \hh') \right\rrbag \\
        & \quad ~ + \sum \left\llbag \pp_{div}'' \cdot \pp_{div}' \mid  \cc, (\sk, \hh) \optrans{\acta}{\pp_{div}''} \cc'', (\sk'', \hh'') \optransN{n}{\anonfunc{\pi} \scheduler(\cc, (\sk, \hh) . \pi)}{\pp_{div}'} \dots \right\rrbag \\ 
        & \qquad \bigg\vert ~ \scheduler \in \Schedulers ~\text{and}~\acta \in \Enabled{\cc, (\sk, \hh)} \bigg\} \tag{By \Cref{lem:smallstep} and since $\cc\neq \downarrow$} \\ 
      = & \inf \bigg\{ \sum \bigg\llbag \pp'' \cdot \sum \left\llbag \pp' \cdot \ff(\sk', \hh') \mid  \cc'', (\sk'', \hh'') \optransN{n}{\anonfunc{\pi} \scheduler(\cc, (\sk, \hh) . \pi)}{\pp'} \TERM, (\sk', \hh') \right\rrbag \\
        & \qquad \qquad + \pp'' \cdot \sum \left\llbag \pp_{div}' \mid   \cc'', (\sk'', \hh'') \optransN{n}{\anonfunc{\pi} \scheduler(\cc, (\sk, \hh) . \pi)}{\pp_{div}'} \dots \right\rrbag \\ 
        & \qquad \qquad \bigg\vert~ \cc, (\sk, \hh) \optrans{\acta}{\pp''} \cc'', (\sk'', \hh'') \bigg\rrbag \bigg\vert ~ \scheduler \in \Schedulers ~\text{and}~\acta \in \Enabled{\cc, (\sk, \hh)}\bigg\} \tag{Distributivity}  \\ 
        = & \inf \bigg\{ \inf \bigg\{ \sum \bigg\llbag \pp'' \cdot \sum \left\llbag \pp' \cdot \ff(\sk', \hh') \mid  \cc'', (\sk'', \hh'') \optransN{n}{\anonfunc{\pi} \scheduler(\cc, (\sk, \hh) . \pi)}{\pp'} \TERM, (\sk', \hh') \right\rrbag \\
        & \quad \qquad \qquad \quad ~ + \pp'' \cdot \sum \left\llbag \pp_{div}' \mid   \cc'', (\sk'', \hh'') \optransN{n}{\anonfunc{\pi} \scheduler(\cc, (\sk, \hh) . \pi)}{\pp_{div}'} \dots \right\rrbag  \\ 
        & \qquad \qquad \qquad \bigg\vert~ \cc, (\sk, \hh) \optrans{\acta}{\pp''} \cc'', (\sk'', \hh'') \bigg\rrbag \bigg\vert ~ \scheduler \in \Schedulers \bigg\} \bigg\vert ~ \acta \in \Enabled{\cc, (\sk, \hh)} \bigg\} \tag{\Cref{eq:inf_partitioning}} \\ 
      = & \inf \bigg\{ \sum \bigg\llbag \inf \bigg\{ \pp'' \cdot \sum \left\llbag \pp' \cdot \ff(\sk', \hh') \mid  \cc'', (\sk'', \hh'') \optransN{n}{\anonfunc{\pi} \scheduler(\cc, (\sk, \hh) . \pi)}{\pp'} \TERM, (\sk', \hh') \right\rrbag \\
        & \quad \qquad \qquad \quad + \pp'' \cdot \sum \left\llbag \pp_{div}' \mid   \cc'', (\sk'', \hh'') \optransN{n}{\anonfunc{\pi} \scheduler(\cc, (\sk, \hh) . \pi)}{\pp_{div}'} \dots \right\rrbag  \\ 
        & \qquad \qquad \qquad \bigg\vert ~ \scheduler \in \Schedulers \bigg\} \bigg\vert~ \cc, (\sk, \hh) \optrans{\acta}{\pp''} \cc'', (\sk'', \hh'') \bigg\rrbag \bigg\vert ~ \acta \in \Enabled{\cc, (\sk, \hh)} \bigg\} \tag{Expressiveness of Schedulers, see $\dagger$ below} \\ 
      = & \inf \bigg\{ \sum \bigg\llbag \inf \bigg\{ \pp'' \cdot \sum \left\llbag \pp' \cdot \ff(\sk', \hh') \mid  \cc'', (\sk'', \hh'') \optransN{n}{\scheduler}{\pp'} \TERM, (\sk', \hh') \right\rrbag \\
        & \quad \qquad \qquad \quad + \pp'' \cdot \sum \left\llbag \pp_{div}' \mid   \cc'', (\sk'', \hh'') \optransN{n}{\scheduler}{\pp_{div}'} \dots \right\rrbag  \\ 
        & \qquad \qquad \qquad \bigg\vert ~ \scheduler \in \Schedulers \bigg\} \bigg\vert~ \cc, (\sk, \hh) \optrans{\acta}{\pp''} \cc'', (\sk'', \hh'') \bigg\rrbag \bigg\vert ~ \acta \in \Enabled{\cc, (\sk, \hh)} \bigg\} \tag{Set of schedulers where $\cc, (\sk, \hh)$ has been taken and where it has not been taken is the same} \\ 
      = & \inf \bigg\{ \sum \bigg\llbag \pp'' \cdot \inf \bigg\{\sum \left\llbag \pp' \cdot \ff(\sk', \hh') \mid  \cc'', (\sk'', \hh'') \optransN{n}{\scheduler}{\pp'} \TERM, (\sk', \hh') \right\rrbag \\
        & \quad \qquad \qquad \qquad ~~ + \sum \left\llbag \pp_{div}' \mid   \cc'', (\sk'', \hh'') \optransN{n}{\scheduler}{\pp_{div}'} \dots \right\rrbag \bigg\vert ~ \scheduler \in \Schedulers \bigg\} \\ 
        & \qquad \qquad \bigg\vert~ \cc, (\sk, \hh) \optrans{\acta}{\pp''} \cc'', (\sk'', \hh'') \bigg\rrbag \bigg\vert ~ \acta \in \Enabled{\cc, (\sk, \hh)} \bigg\} \tag{Infimum of multiplication with constants} \\ 
      = & \inf \bigg\{ \sum \bigg\llbag \pp'' \cdot \wlpn{n}{\cc''}{\ff}(\sk'', \hh'') ~\bigg\vert~ \cc, (\sk, \hh) \optrans{\acta}{\pp''} \cc'', (\sk'', \hh'') \bigg\rrbag \\ 
        & \qquad  \bigg\vert ~ \acta \in \Enabled{\cc, (\sk, \hh)} \bigg\} \tag{Rephrasing as $\wlpsymbol$} \\ 
      = & \inf \bigg\{ \sum \bigg\llbag \pp'' \cdot \wslpn{n}{\cc''}{\ff}{\emp}(\sk'', \hh'')~\bigg\vert~ \cc, (\sk, \hh) \optrans{\acta}{\pp''} \cc'', (\sk'', \hh'') \bigg\rrbag \\ 
        & \qquad  \bigg\vert ~ \acta \in \Enabled{\cc, (\sk, \hh)} \bigg\} \tag{Induction hypothesis} \\ 
      = & \; \step{\cc}{\anonfunc{\cc'} \wslpn{n}{\cc'}{\ff}{\emp}}(\sk,\hh) \tag{Rephrasing as $\stepsymbol$} \\ 
      = & \; \wslpn{n+1}{\cc}{\ff}{\emp}(\sk,\hh) \tag{Rephrasing as $\wslpsymbol$}
    \end{align*}
    \endgroup
    The step in $\dagger$ is the most difficult part here. We will not go into all details and instead give a proof sketch. The first direction (which is the same as in \Cref{eq:inf_superlin}), i.e.,
    \begin{equation*}
        \inf \left\{ \sum_{\sigma' \in A} f(\scheduler, \sigma') \,\middle\vert\, \scheduler \in \Schedulers \right\} \geq \sum_{\sigma' \in A} \inf \left\{ f(\scheduler, \sigma') \mid \scheduler \in \Schedulers \right\}~,
    \end{equation*}
    it is easy (but technical) to proof: the value for choosing only one scheduler for all summands will always be at least as high as if we choose one scheduler each for every summand. 
    The other direction, i.e.,
    \begin{equation*}
        \inf \left\{ \sum_{\sigma' \in A} f(\scheduler, \sigma') \middle\vert \scheduler \in \Schedulers \right\} \leq \sum_{\sigma' \in A} \inf \left\{ f(\scheduler, \sigma') \mid \scheduler \in \Schedulers \right\}~,
    \end{equation*}
    is far more difficult to prove. Let $\sigma$ be the current state and $\sigma'$ be the next state chosen in the sum. To prove this, one need to consider sequences of schedulers for each summand $\scheduler_{\sigma', i}$ such that they convert to the infimum, i.e.,
    \begin{equation*}
        \inf \left\{ f(\scheduler,\sigma') \mid \scheduler \in \SchedulerSet \right\} = \lim_{i \rightarrow \infty} f(\scheduler_{\sigma', i},\sigma')~.
    \end{equation*}
    From these, one can show that we can always construct a sequence of schedulers $\scheduler_{i}$ for all summands such that $\anonfunc{\pi} \scheduler_{i}(\sigma . \sigma' . \pi) = \anonfunc{\pi} \scheduler_{\sigma',i}(\sigma . \sigma' . \pi)$. The expression then converges for the sequence $\scheduler_{i}$ to the same value, i.e.,
    \begin{equation*}
        \lim_{i \rightarrow \infty} \sum_{\sigma' \in A} f(\scheduler_{i}, \sigma') = \sum_{\sigma' \in A} \lim_{i \rightarrow \infty} f(\scheduler_{\sigma', i}, \sigma')~.
    \end{equation*}  
    From both directions, the equality then also follows.

    Thus, the claim is proven.
\end{proof}

\section{Proofs about Framing}
\begin{definition}
    We define stacks $\sk$ and $\sk'$ as equal for variables in $V$ as
    \begin{equation*}
        \stackeqaul{\sk}{\sk'}{V} \quad \text{iff} \quad \forall \xx \in V ~ \sk(\xx)=\sk'(\xx)~. 
    \end{equation*} 
\end{definition}

\begin{definition}
    We define the in program $\cc$ written variables $\written{\cc}$ as all variables occurring on left sides of assignments, lookups and allocation.
\end{definition}

\begin{lemma}\label{lem:onestep_stackequal_wo_atomreg}
    For a $\chpgcl$ program without atomic region $\cc$ and $\cc, (\sk, \hh) \optrans{\acta}{\pp} \cc', (\sk', \hh')$ we have $\stackeqaul{\sk}{\sk'}{\Vars \setminus \written{\cc}}$.
\end{lemma}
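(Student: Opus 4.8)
The plan is to proceed by rule induction on the derivation of the one-step transition $\cc, (\sk, \hh) \optrans{\acta}{\pp} \cc', (\sk', \hh')$, using the inference rules of \Cref{fig:op-commands}, \Cref{fig:op-flow} and \Cref{fig:op-con}. Two preliminary observations fix the scope of the case analysis. First, because the target of the transition is an honest configuration $\cc', (\sk', \hh')$ rather than $\ABORT$, every rule whose conclusion is an abort transition (LOOKUP-ABT, MUT-ABT, FREE-ABT, SEQ-ABT, SEQ-END-ABT, CON-L-ABT, CON-R-ABT) is vacuously irrelevant; there simply is no $(\sk', \hh')$ to reason about. Second, since $\cc$ contains no atomic region, the ATOM-END, ATOM-ABT and ATOM-LOOP rules never fire, so those cases do not arise.

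For the base cases I would read the resulting stack directly off each axiom. The rules MUT, FREE, IF-T, IF-F, WHILE-T, WHILE-F, DIV, PROB-L, PROB-R and CON-END all leave the stack untouched, so $\sk' = \sk$ and $\stackeqaul{\sk}{\sk'}{\Vars \setminus \written{\cc}}$ holds trivially. The three remaining axioms ASSIGN, LOOKUP and ALLOC modify the stack only at the variable $\xx$ appearing on the left-hand side, i.e.\ $\sk' = \sk\subst{\xx}{\xv}$ for the appropriate value $\xv$; since $\xx \in \written{\cc}$ by the definition of $\written{\cdot}$, the two stacks agree on every variable in $\Vars \setminus \written{\cc}$.

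For the inductive cases (SEQ, SEQ-END, CON-L, CON-R) the step of $\cc$ is driven by a step of a subprogram $\cc_i$, and the induction hypothesis supplies $\stackeqaul{\sk}{\sk'}{\Vars \setminus \written{\cc_i}}$. The only real content of the argument is the monotonicity of $\written{\cdot}$ under the syntactic constructors: from the definition of written variables one reads off $\written{\COMPOSE{\cc_1}{\cc_2}} = \written{\cc_1} \cup \written{\cc_2}$ and $\written{\CONCURRENT{\cc_1}{\cc_2}} = \written{\cc_1} \cup \written{\cc_2}$, together with $\written{\TERM} = \emptyset$. Hence $\written{\cc_i} \subseteq \written{\cc}$, so $\Vars \setminus \written{\cc} \subseteq \Vars \setminus \written{\cc_i}$, and agreement on the larger index set restricts to agreement on $\Vars \setminus \written{\cc}$, closing the induction.

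I expect no genuine obstacle: the argument is essentially a bookkeeping exercise. The one point demanding care is the compositional reasoning about $\written{\cdot}$—confirming the set identities above from the definition—combined with the explicit exclusion of the abort and atomic transitions, which is precisely what the hypotheses "the transition lands in a configuration $\cc', (\sk', \hh')$" and "$\cc$ is without atomic region" provide.
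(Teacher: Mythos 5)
Your proposal is correct and follows essentially the same route as the paper's proof: a structural induction (there phrased as induction on $\cc$, here as rule induction on the transition derivation, which amounts to the same case split) with the axioms handled by inspecting whether the stack changes and the SEQ/CON cases closed via the induction hypothesis. Your explicit appeal to $\written{\cc_i} \subseteq \written{\cc}$ and the exclusion of the abort cases just spells out bookkeeping the paper leaves implicit.
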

\begin{proof}
    We prove this by induction on $\cc$.

    As the induction base we have that
    \begin{itemize}
        \item for the terminated program, such a transition does not exist;
        \item for the non-terminating program, probabilistic choice, conditional choice, loops, disposal and mutations, we always have $\cc, (\sk, \hh) \optrans{\acta}{\pp} \cc', (\sk, \hh')$, trivializing the statement; and
        \item for assignment, allocation and lookup that the program on the left hand side is in $\written{\cc}$ and thus does not change the value of variables in $\Vars\setminus\written{\cc}$.
    \end{itemize}

    Now we establish the induction hypothesis that for $\cc_1$, $\cc_2$ and $\cc_i, (\sk, \hh) \optrans{\acta}{\pp} \cc_i, (\sk_i', \hh_i')$ we have $\stackeqaul{\sk}{\sk_i'}{\Vars\setminus\written{\cc_i}}$ and prove the claim for the sequential composition and concurrency. 
    
    For the induction step we have that
    \begin{itemize}
        \item for the sequential composition, we directly have that for $\COMPOSE{\cc_1}{\cc_2}, (\sk, \hh) \optrans{\acta}{\pp} \COMPOSE{\cc_1'}{\cc_2}, (\sk_1', \hh_i')$ and $\COMPOSE{\TERM}{\cc_2}, (\sk, \hh) \optrans{\acta}{\pp} \cc_2', (\sk_1', \hh_i')$, the claim follows directly by the induction hypothesis;
        \item for concurrency, we have for both cases of $i$ in $\CONCURRENT{\cc_1}{\cc_2}, (\sk, \hh) \optrans{Ci.\acta}{\pp} \CONCURRENT{\cc_1'}{\cc_2'} (\sk_i', \hh_i')$ that the claim follows in both cases directly by the induction hypothesis.
    \end{itemize} 
\end{proof}

\begin{lemma}\label{lem:multiplestep_stackequal}
    For a $\chpgcl$ program $\cc$ with atomic regions and $\cc, (\sk, \hh) \optransN{n}{\scheduler}{\pp} \TERM, (\sk', \hh')$ we have that $\stackeqaul{\sk}{\sk'}{\Vars \setminus \written{\cc}}$.
\end{lemma}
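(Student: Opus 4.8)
The plan is to lift the single-step \Cref{lem:onestep_stackequal_wo_atomreg} to finite runs by an induction on path length, after preparing two ingredients: a monotonicity property of written variables and a version of the statement that already copes with atomic regions. Since the reachability probability $\reach{n}{\sigma}{\scheduler}{\sigma'}$ is a sum over finite paths, it suffices to show that \emph{every} positive-probability path from $(\cc,(\sk,\hh))$ to $(\TERM,(\sk',\hh'))$ yields $\stackeqaul{\sk}{\sk'}{\Vars \setminus \written{\cc}}$. First I would establish \emph{monotonicity of written variables}: if $\cc, (\sk,\hh) \optrans{\acta}{\pp} \cc', (\sk',\hh')$ then $\written{\cc'} \subseteq \written{\cc}$. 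This follows by inspecting the rules: a step either rewrites the active command to a subcommand, to $\TERM$, $\DIVERGE$, or $\ABORT$ (all with empty written set), or unrolls $\WHILEDO{\guard}{\cc_1}$ into $\COMPOSE{\cc_1}{\WHILEDO{\guard}{\cc_1}}$, whose written set equals $\written{\cc_1}$; for the compound rules (\textbf{SEQ}, \textbf{CON-L}/\textbf{CON-R}) the inclusion is inherited. The role of this fact is that every later step along a run touches an even smaller set of variables, so all changes stay within $\written{\cc}$.

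Second, I would prove the claim for \emph{atomic-region-free} programs by induction on path length. The length-one case forces $\cc=\TERM$ and $(\sk,\hh)=(\sk',\hh')$. For a longer path I split off the first transition $(\cc,(\sk,\hh)) \optrans{\acta}{\pp} (\cc',(\sk_2,\hh_2))$ and apply \Cref{lem:onestep_stackequal_wo_atomreg} to it, yielding $\stackeqaul{\sk}{\sk_2}{\Vars \setminus \written{\cc}}$; the induction hypothesis applied to the strictly shorter tail gives $\stackeqaul{\sk_2}{\sk'}{\Vars \setminus \written{\cc'}}$. Because $\written{\cc'} \subseteq \written{\cc}$, the tail also agrees on $\Vars \setminus \written{\cc}$, and transitivity of $\sim$ delivers $\stackeqaul{\sk}{\sk'}{\Vars \setminus \written{\cc}}$. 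Atomic-freeness is preserved by transitions, so \Cref{lem:onestep_stackequal_wo_atomreg} remains applicable throughout; this case therefore also covers the inner computations of atomic regions, whose bodies are tame and hence atomic-region-free.

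Third, I would extend the single-step statement to programs that \emph{contain} atomic regions by structural induction on $\cc$, the only new case being $\cc = \ATOMIC{\cc_0}$. Rule \textbf{ATOM-LOOP} leaves the stack unchanged and \textbf{ATOM-ABT} moves to $\ABORT$ (no stack to compare), so only \textbf{ATOM-END} matters; there the tame body performs $\cc_0,(\sk,\hh)\optransStar{}{p}\TERM,(\sk',\hh')$, and since any positive total probability is witnessed by a finite path, the atomic-free multi-step result of the previous paragraph applies and gives $\stackeqaul{\sk}{\sk'}{\Vars \setminus \written{\cc_0}}$; as $\written{\ATOMIC{\cc_0}} = \written{\cc_0}$ this is exactly the claim. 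All remaining cases are those of \Cref{lem:onestep_stackequal_wo_atomreg}, now with an induction hypothesis that also covers atomic subterms.

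Finally, the lemma itself follows by the same first-step/tail induction on path length as in the second paragraph, but using the general single-step statement from the third paragraph in place of \Cref{lem:onestep_stackequal_wo_atomreg}; written monotonicity and transitivity glue the pieces together exactly as before. The hard part is the treatment of atomic regions: a single outer step corresponds to an \emph{unbounded} inner run, and because loop unrolling makes the active program grow syntactically, a plain structural induction over the inner execution would not be well-founded. The resolution is to prove the atomic-free multi-step case by induction on path length first and to keep it \emph{independent} of the general single-step statement, so that the structural induction handling $\ATOMIC{\cc_0}$ may invoke it on the strictly smaller body $\cc_0$ without introducing a circular dependency.
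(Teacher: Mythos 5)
Your overall strategy matches the paper's: lift the single-step lemma to finite runs by a first-step/tail decomposition of $\reach{n}{\cdot}{\cdot}{\cdot}$ (the paper's \Cref{lem:smallstep}), and handle \textbf{ATOM-END} by observing that a positive-probability $\optransStar{}{p}$-transition of the body is witnessed by finite paths, to which the multi-step result for the body applies. Your explicit monotonicity fact $\written{\cc'} \subseteq \written{\cc}$ is a welcome addition; the paper uses it tacitly when it concludes the stack equality for $\Vars \setminus \written{\cc}$ from an induction hypothesis phrased over $\Vars \setminus \written{\cc'}$.

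There is, however, one genuine gap: you dispose of atomic regions by asserting that their bodies are \emph{tame and hence atomic-region-free}. Tameness as defined in the paper only excludes memory allocation and concurrency; it does not exclude nested atomic regions, so a body $\cc_0$ in $\ATOMIC{\cc_0}$ may itself contain atomic regions. At that point your structural induction breaks down: the case $\cc = \ATOMIC{\cc_0}$ needs the \emph{multi-step} statement for $\cc_0$, but your structural induction hypothesis only supplies the single-step statement for subterms, and your separately proven multi-step result covers only atomic-region-free programs. This is exactly the circularity the paper's proof is organised to avoid: it performs an outer induction on the nesting depth of atomic regions, so that at depth $k+1$ the \textbf{ATOM-END} case may invoke the full multi-step claim for the depth-$\leq k$ body (via \Cref{lem:ntostar}), with the induction on path length nested inside each level. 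Your argument can be repaired by adopting the same stratification -- proving the single-step and multi-step statements mutually, indexed by nesting depth -- or by first establishing (if one wishes to read the tameness restriction that way) that atomic regions cannot be nested; as written, the claim ``tame implies atomic-region-free'' is not supported by the paper's definitions.
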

\begin{proof}
    We prove the claim with nested inductions. The first induction is on the level of atomic regions, i.e., we start with a program without any atomic region and increase the possible nested atomic regions in the induction step.
    The second induction is on the number of steps on the operational semantics $n$.
    
    Thus, we first start with a program that does not have any atomic region and prove that for $\cc, (\sk, \hh) \optransN{n}{\scheduler}{\pp} \TERM, (\sk', \hh')$ we have $\stackeqaul{\sk}{\sk'}{\Vars \setminus \written{\cc}}$. For $n=0$ we only have $\cc = \TERM$, thus trivially $\sk=\sk'$. We establish the induction hypothesis $\dagger$ for $n$ and prove the claim for $n+1$. By \Cref{lem:onestep_stackequal_wo_atomreg} we have for $\cc, (\sk, \hh) \optrans{\acta}{\pp} \cc', (\sk', \hh)$ that $\stackeqaul{\sk}{\sk'}{\Vars \setminus \written{\cc}}$, furthermore by induction hypothesis $\dagger$ we have for $\cc', (\sk', \hh') \optransN{n}{\scheduler}{\pp'} \TERM, (\sk'', \hh'')$ that $\stackeqaul{\sk'}{\sk''}{\Vars \setminus \written{\cc}}$. Together we also have by \Cref{lem:smallstep} that for $\cc, (\sk, \hh) \optransN{n+1}{\scheduler'}{\pp''} \TERM, (\sk'', \hh'')$ we also have $\stackeqaul{\sk}{\sk''}{\Vars\setminus\written{\cc}}$.

    Now we assume that for a fixed but arbitrary level of nested atomic regions the claim holds as our induction hypothesis $\spadesuit$.

    For the induction step, we again prove by induction over n that $\cc, (\sk, \hh) \optransN{n}{\scheduler}{\pp} \TERM, (\sk', \hh')$. We prove this again by induction over $n$. For $n=0$ we have $\cc=\TERM$, thus trivially $\sk=\sk'$. We establish the induction hypothesis $\dagger\dagger$ for $n$ and prove the claim for $n+1$. The first part of the proof is for most cases totally analogous to \Cref{lem:onestep_stackequal_wo_atomreg} except for the atomic region. Thus, we only consider the case that $\cc=\ATOMIC{\cc'}$ Then we have the transition $\ATOMIC{\cc'}, (\sk, \hh) \optrans{atomic}{\pp}\TERM, (\sk', \hh')$ and by $\spadesuit$ that for all $m$ with $\cc', (\sk, \hh) \optransN{m}{\scheduler}{\pp} \TERM, (\sk', \hh')$ also $\stackeqaul{\sk}{\sk'}{\Vars \setminus \written{\cc'}}$, therefore also for $\cc', (\sk, \hh) \optransStar{\scheduler}{\pp} \TERM, (\sk', \hh')$ we have $\stackeqaul{\sk}{\sk'}{\Vars \setminus \written{\cc'}}$ by \Cref{lem:ntostar} and thus also for $\ATOMIC{\cc}, (\sk, \hh) \optrans{atomic}{\pp}\TERM, (\sk', \hh')$ we have $\stackeqaul{\sk}{\sk'}{\Vars \setminus \written{\cc}}$. For $\ATOMIC{\cc}, (\sk, \hh) \optrans{atomic}{\pp}\DIVERGE, (\sk, \hh)$ we trivially have $\sk=\sk$. With a similar proof as \Cref{lem:onestep_stackequal_wo_atomreg}, we get for $\cc, (\sk, \hh) \optrans{\acta}{\pp} \cc', (\sk', \hh')$ that $\stackeqaul{\sk}{\sk'}{\Vars \setminus \written{\cc}}$ and from $\dagger\dagger$ we get $\cc', (\sk', \hh') \optransN{n}{\scheduler}{\pp} \TERM, (\sk'', \hh'')$. Together and by \Cref{lem:smallstep} we get that for $\cc, (\sk, \hh) \optransN{n+1}{\scheduler'}{\pp''} \TERM, (\sk'', \hh'')$ we also have $\stackeqaul{\sk}{\sk''}{\Vars\setminus\written{\cc}}$.
\end{proof}

\begin{lemma}\label{lem:onestep_stackequal}
    For $\cc, (\sk, \hh) \optrans{\acta}{\pp} \cc', (\sk', \hh')$ we have $\stackeqaul{\sk}{\sk'}{\Vars \setminus \written{\cc}}$.
\end{lemma}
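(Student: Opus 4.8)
The plan is to prove the claim by structural induction on $\cc$, reusing the case analysis of \Cref{lem:onestep_stackequal_wo_atomreg} for every program constructor except the atomic region, which is the one construct not covered there. As a preliminary observation I would record that wrapping a program in an atomic region does not change its written variables: directly from the definition of $\written{\cdot}$ we have $\written{\ATOMIC{\cc''}} = \written{\cc''}$, since the $\ATOMICSYMBOL$ keyword introduces no new left-hand sides of assignments, lookups, or allocations. This equality is what lets me transport a stack-equality statement about the body $\cc''$ to the wrapped program $\ATOMIC{\cc''}$.

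For all of the basic commands, the control-flow operations, sequential composition, and concurrency, the reasoning is identical to that in \Cref{lem:onestep_stackequal_wo_atomreg} (and its reuse inside \Cref{lem:multiplestep_stackequal}): each basic command either leaves the stack fixed or only updates a variable contained in $\written{\cc}$, and each compound case follows immediately from the induction hypothesis applied to the premise transition(s). So I would only spell out the new case $\cc = \ATOMIC{\cc''}$.

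For the atomic region there are exactly three applicable inference rules. The rule ATOM-ABT produces the state $\ABORT$, which does not match the shape $\cc', (\sk', \hh')$ in the statement, so it is vacuous. The rule ATOM-LOOP yields $\DIVERGE, (\sk, \hh)$ with the stack unchanged, so $\stackeqaul{\sk}{\sk}{\Vars \setminus \written{\cc}}$ holds trivially. The remaining rule ATOM-END fires from a terminating tame execution $\cc'', (\sk, \hh) \optransStar{}{\pp} \TERM, (\sk', \hh')$ and yields $\ATOMIC{\cc''}, (\sk, \hh) \optrans{\text{atomic}}{\pp} \TERM, (\sk', \hh')$. Here I would invoke \Cref{lem:multiplestep_stackequal} on the body: since the reachability probability of the designated final state $(\TERM, (\sk', \hh'))$ is positive, \Cref{lem:ntostar} guarantees a finite-length witness $\cc'', (\sk, \hh) \optransN{m}{}{} \TERM, (\sk', \hh')$ for some $m$, and \Cref{lem:multiplestep_stackequal} then gives $\stackeqaul{\sk}{\sk'}{\Vars \setminus \written{\cc''}}$. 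Using $\written{\cc''} = \written{\ATOMIC{\cc''}} = \written{\cc}$ yields the desired $\stackeqaul{\sk}{\sk'}{\Vars \setminus \written{\cc}}$.

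The step I expect to be the main obstacle is the extraction of a concrete finite execution path in the atomic case: the premise of ATOM-END speaks of the limiting reachability probability $\pp$, whereas \Cref{lem:multiplestep_stackequal} is phrased for a bounded number $m$ of steps. The bridge is that $\reach{m}{\cdot}{\cdot}{\cdot}$ is nondecreasing in $m$ and sums the probabilities of finite paths to the designated final state, so a positive limit forces some finite $m$ admitting a genuine path reaching $(\sk', \hh')$; this is precisely the content I would take from \Cref{lem:ntostar}. Everything else is a routine replay of \Cref{lem:onestep_stackequal_wo_atomreg}.
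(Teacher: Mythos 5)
Your proposal is correct. It does not, however, follow the paper's own (very terse) route: the paper proves this lemma by simply instantiating \Cref{lem:multiplestep_stackequal} with $n=1$, whereas you re-run the structural induction explicitly and invoke \Cref{lem:multiplestep_stackequal} only for the body of an atomic region. The mathematical content is the same in both cases --- it lives in the induction step of the paper's proof of \Cref{lem:multiplestep_stackequal}, which already performs exactly your case analysis (a replay of \Cref{lem:onestep_stackequal_wo_atomreg} for every constructor, plus the ATOM-END case handled by extracting a finite witness via \Cref{lem:ntostar} and appealing to the multi-step statement for the body). If anything, your version is the more careful one: \Cref{lem:multiplestep_stackequal} is phrased only for executions that end in $\TERM$, so instantiating it with a small step bound does not literally cover a single transition into an arbitrary successor program $\cc'$ (e.g.\ the loop-unfolding transition to $\COMPOSE{\cc_1}{\WHILEDO{\guard}{\cc_1}}$); the general one-step claim is really only established inside that lemma's proof, and your direct induction makes this explicit. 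Your auxiliary observation that $\written{\ATOMIC{\cc''}}=\written{\cc''}$, the dismissal of ATOM-ABT as vacuous and of ATOM-LOOP as stack-preserving, and the bridge from a positive limiting reachability probability to a concrete finite path via monotonicity of $\reach{m}{\cdot}{\cdot}{\cdot}$ are all sound and match what the paper does inside \Cref{lem:multiplestep_stackequal}. There is no circularity, since \Cref{lem:multiplestep_stackequal} is established independently of the present lemma.
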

\begin{proof}
    Follows from \Cref{lem:multiplestep_stackequal} with $n=1$.
\end{proof}

\begin{definition}[Free Variables]
    We define the free variables of an expectation $\ff$ as
    \begin{equation*}
        \freevariable{\ff} \quad = \quad \{ \xx \in \Vars \mid \exists \xv, \xv' \in \Vals: \ff\subst{\xx}{\xv} \neq \ff\subst{\xx}{\xv'} \}~.
    \end{equation*}
    We define the free variables of a program $\freevariable{\cc}$ as all variables occurring in the program $\cc$. We furthermore define $\freevariable{A_1, A_2, \dots} = \freevariable{A_1} \cup \freevariable{A_2} \cup \dots$.  
\end{definition}

\begin{lemma}\label{lem:expectation_stackequal}
    If $\stackeqaul{\sk}{\sk'}{V}$ and $\freevariable{\ff} \subseteq V$ then $\ff(\sk,\hh) = \ff(\sk',\hh)$.
\end{lemma}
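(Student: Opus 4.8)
The plan is to reduce the equality $\ff(\sk, \hh) = \ff(\sk', \hh)$ to a finite chain of single-variable re-assignments, each of which leaves the value of $\ff$ unchanged because the re-assigned variable is not free in $\ff$. First I would isolate the set of variables on which the two stacks actually disagree, $D = \{\xx \in \Vars \mid \sk(\xx) \neq \sk'(\xx)\}$. Since $\stackeqaul{\sk}{\sk'}{V}$ holds, no variable of $V$ lies in $D$, i.e.\ $D \cap V = \emptyset$; combined with the hypothesis $\freevariable{\ff} \subseteq V$ this yields $D \cap \freevariable{\ff} = \emptyset$. Hence every variable on which $\sk$ and $\sk'$ differ is outside $\freevariable{\ff}$.

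Next I would unfold what it means for a variable to be outside $\freevariable{\ff}$. By the definition of $\freevariable{\ff}$, if $\xx \notin \freevariable{\ff}$ then $\ff\subst{\xx}{\xv} = \ff\subst{\xx}{\xv'}$ for all $\xv, \xv' \in \Vals$; unfolding the substitution this says exactly that $\ff(\sk''\subst{\xx}{\xv}, \hh) = \ff(\sk''\subst{\xx}{\xv'}, \hh)$ for every stack $\sk''$ and heap $\hh$ -- that is, $\ff$ is insensitive to the value stored at $\xx$.

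Then, enumerating $D = \{x_1, \dots, x_k\}$, I would build the interpolating sequence of stacks $\sk_0 = \sk$ and $\sk_j = \sk_{j-1}\subst{x_j}{\sk'(x_j)}$ for $1 \le j \le k$. Because $\sk$ and $\sk'$ already agree outside $D$, the final stack satisfies $\sk_k = \sk'$. Consecutive stacks $\sk_{j-1}$ and $\sk_j$ differ only in the value of $x_j \notin \freevariable{\ff}$, so the insensitivity observation (instantiated at $\sk'' = \sk_{j-1}$, $\xv = \sk_{j-1}(x_j)$ and $\xv' = \sk'(x_j)$) gives $\ff(\sk_{j-1}, \hh) = \ff(\sk_j, \hh)$. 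Chaining these $k$ equalities delivers $\ff(\sk, \hh) = \ff(\sk', \hh)$, as required.

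The one delicate point -- and the only place finiteness enters -- is that this enumeration requires $D$ to be finite; the extensional, single-variable definition of $\freevariable{\ff}$ only controls the effect of changing one variable at a time, so it cannot by itself preclude an expectation that depends jointly on infinitely many individually-irrelevant variables. This causes no difficulty in the present setting, since the variables occurring in a \chpgcl program are finite in number and, in every application of the lemma, $\sk'$ arises from $\sk$ by finitely many assignments (cf.\ \Cref{lem:multiplestep_stackequal}), so $D$ is finite. I would therefore either make this finiteness explicit or phrase the argument as an induction on $|D|$, with the base case $D = \emptyset$ (where $\sk = \sk'$ and the claim is immediate) and the inductive step supplied by the single-variable insensitivity established above.
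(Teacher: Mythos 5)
Your argument is essentially the paper's: the paper first handles the case of a single differing variable (by contraposition, using the definition of $\freevariable{\ff}$ to show $\ff$ is insensitive to a non-free variable) and then remarks that the general case follows by interpolating a sequence of stacks each differing from its neighbour in one variable -- exactly your chain $\sk_0,\dots,\sk_k$. The one substantive difference is your finiteness caveat, and it is well taken: the chain argument genuinely requires the disagreement set $D$ to be finite, and since $\Vars$ need not be finite and expectations are extensional, one can cook up an $\ff$ with $\freevariable{\ff}=\emptyset$ that nonetheless distinguishes two stacks differing on infinitely many variables (e.g.\ $\ff$ testing whether cofinitely many variables are zero). The paper's closing sentence (``extend \dots to arbitrary many variables'') silently elides this, whereas you correctly identify that either $D$ must be assumed finite or the lemma must be read in the context of its applications (where stacks differ only by finitely many program assignments, cf.\ \Cref{lem:multiplestep_stackequal}). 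So your proof is not only correct but slightly more careful than the paper's.
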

\begin{proof}
    Assume there are stacks $\sk$ and $\sk'$ with $\stackeqaul{\sk}{\sk'}{V}$ such that $\ff(\sk, \hh) \neq \ff(\sk',\hh)$. Then $\sk$ and $\sk'$ need to be different in at least one variable. For simplicity, we assume they differ in exactly one variable. Let $\xx$ be this variable, such that $\sk(\xx)\neq\sk'(\xx)$. Then $\xx \not\in V$ since $\stackeqaul{\sk}{\sk'}{V}$ and $\xx \not\in \freevariable{\ff}$ since $\freevariable{\ff} \subseteq V$. Therefore, there exists no $\xv, \xv' \in \Vars$ such that $\ff\subst{\xx}{\xv} \neq \ff\subst{\xx}{\xv'}$. Then there can also not exist $\xv, \xv' \in \Vars$ with $\ff(\sk\subst{\xx}{\xv},\hh) \neq \ff(\sk\subst{\xx}{\xv'},\hh)$ and therefore $\sk(\xx)=\sk'(\xx)$, which violates our assumption. It is now easy to extend this proof to $\sk$ and $\sk'$ which differ in arbitrary many variables using a sequence of stacks that only differ in one variable with their predecessor and successor each.
\end{proof}

\begin{lemma}[Overapproximating Free Variables]\label{lem:overapproximate_wslp}
    \begin{equation*}
        \freevariable{\wslpn{n}{\cc}{\ff}{\ri}} \subseteq \freevariable{\cc, \ff, \ri}
    \end{equation*}    
\end{lemma}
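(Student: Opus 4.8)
The plan is to prove the statement by induction on $n$, reducing the inductive step to a handful of auxiliary facts about how free variables propagate through the separating connectives and the one-step operator $\stepsymbol$. For the base case $n=0$ we have $\wslpn{0}{\cc}{\ff}{\ri}=1$, and the constant expectation $1$ has no free variables, so the inclusion is trivial. For $n>0$ with $\cc=\TERM$ we have $\wslpn{n}{\cc}{\ff}{\ri}=\ff$ and $\freevariable{\ff}\subseteq\freevariable{\cc,\ff,\ri}$ directly. The only interesting case is $n>0$ with $\cc\neq\TERM$, where $\wslpn{n}{\cc}{\ff}{\ri}=\ri \sepimp \step{\cc}{\anonfunc{\cc'}\wslpn{n-1}{\cc'}{\ff}{\ri}\sepcon\ri}$.

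Before handling that case I would establish three auxiliary facts. First, straight from the definitions of the connectives, $\freevariable{\sla \sepimp \fg}\subseteq\freevariable{\sla}\cup\freevariable{\fg}$ and $\freevariable{\ff \sepcon \fg}\subseteq\freevariable{\ff}\cup\freevariable{\fg}$, since the value of each connective at $(\sk,\hh)$ depends on the stack only through the $\sk$-slices of its arguments, so a variable irrelevant to both arguments is irrelevant to the result. Second, every successor program satisfies $\freevariable{\cc'}\subseteq\freevariable{\cc}$ whenever $\cc,(\sk,\hh)\optrans{\acta}{\pp}\cc',(\sk',\hh')$; this is a routine case analysis over the operational rules (e.g.\ $\COMPOSE{\cc_1}{\cc_2}\to\COMPOSE{\cc_1'}{\cc_2}$ and $\WHILEDO{\guard}{\cc_1}\to\COMPOSE{\cc_1}{\WHILEDO{\guard}{\cc_1}}$ introduce no new variables). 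Third, and most importantly, I would prove the step-operator inclusion
\[
\freevariable{\step{\cc}{\ct}}\ \subseteq\ \freevariable{\cc}\ \cup\ \bigcup\left\{\,\freevariable{\ct(\cc')}\,\mid\,\cc,(\sk,\hh)\optrans{\acta}{\pp}\cc',(\sk',\hh')\,\right\}.
\]

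Given these, the inductive step is a short chain: by the induction hypothesis together with the second and first facts, for any successor $\cc'$ we get $\freevariable{\ct(\cc')}=\freevariable{\wslpn{n-1}{\cc'}{\ff}{\ri}\sepcon\ri}\subseteq\freevariable{\cc'}\cup\freevariable{\ff}\cup\freevariable{\ri}\subseteq\freevariable{\cc}\cup\freevariable{\ff}\cup\freevariable{\ri}$; feeding this into the step inclusion gives $\freevariable{\step{\cc}{\ct}}\subseteq\freevariable{\cc,\ff,\ri}$, and one last application of the magic-wand inclusion with $\ri$ closes the case.

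The hard part will be the step-operator inclusion, which rests on a \emph{coincidence} (relevance) property of the operational semantics: if $\xx\notin\freevariable{\cc}$, then the set of enabled actions, the transition probabilities, the reachable successor programs and the resulting heaps are all unchanged when $\xx$ is overwritten in the initial stack, while the resulting stacks agree off $\xx$. For basic commands, control flow and probabilistic choice this is immediate, because the guards and expressions occurring in $\cc$ evaluate identically on stacks agreeing on $\freevariable{\cc}$ and the assigned or looked-up variables lie in $\freevariable{\cc}$; this is the same style of argument as \Cref{lem:onestep_stackequal}, and for atomic regions it needs the nested induction over atomic-region depth and step count used in \Cref{lem:multiplestep_stackequal}. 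With this property in hand, choosing $\xx$ outside the right-hand side of the step inclusion, every summand $\pp\cdot\ct(\cc')(\sk',\hh')$ is preserved under overwriting $\xx$ — the transition data is preserved by coincidence, and the value $\ct(\cc')(\sk',\hh')$ is preserved because $\xx$ is not free in $\ct(\cc')$ and the two resulting stacks agree off $\xx$ (invoking \Cref{lem:expectation_stackequal}) — so the defining infimum of sums is unchanged, yielding $\xx\notin\freevariable{\step{\cc}{\ct}}$.
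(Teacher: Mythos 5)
Your proposal is correct and follows essentially the same route as the paper: an induction on $n$ whose only substantive content is that the $\stepsymbol$ operator commutes with overwriting a variable not free in $\cc$ (your ``coincidence''/step-operator inclusion is exactly the paper's $\dagger$ step, and your connective facts correspond to pushing the substitution through $\sepimp$ and $\sepcon$ using $\xx \notin \freevariable{\ri}$). The paper phrases the whole argument as the contrapositive — equality of $\wslpn{n}{\cc}{\ff}{\ri}\subst{\xx}{\xv}$ and $\wslpn{n}{\cc}{\ff}{\ri}\subst{\xx}{\xv'}$ — which by the definition of $\freevariable{\cdot}$ is the same statement you prove as a set inclusion.
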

\begin{proof}
    We prove this by contraposition, i.e., that for every $\xx \in \Vars \setminus\freevariable{\cc, \ff, \ri}$ we have for all values $\xv, \xv'$ that $\wslpn{n}{\cc}{\ff}{\ri}\subst{\xx}{\xv} = \wslpn{n}{\cc}{\ff}{\ri}\subst{\xx}{\xv'}$. We will prove this by induction on $n$. 

    The induction base $n=0$ holds trivially, since we have $\wslpn{0}{\cc}{\ff}{\ri}\subst{\xx}{\xv} = 1 = \wslpn{0}{\cc}{\ff}{\ri}\subst{\xx}{\xv'}$.

    We now assume that the claim holds for some arbitrary but fixed $n$ as our induction hypothesis. 

    For the induction step, we have two cases. For the case that $\cc=\TERM$ we have the equality $\wslpn{n+1}{\TERM}{\ff}{\ri}\subst{\xx}{\xv} = \ff\subst{\xx}{\xv} = \ff\subst{\xx}{\xv'} = \wslpn{n}{\TERM}{\ff}{\ri}\subst{\xx}{\xv'}$ by assumption that $\xx \not \in \freevariable{\ff}$. For the case that $\cc\neq\TERM$ we have
    \begingroup 
    \allowdisplaybreaks
    \begin{align*}
        &~ \wslpn{n+1}{\cc}{\ff}{\ri}\subst{\xx}{\xv} \\
    =   &~ \left(\ri \sepimp \step{\cc}{\anonfunc{\cc'} \wslpn{n}{\cc'}{\ff}{\ri} \sepcon \ri}\right)\subst{\xx}{\xv} \tag{Definition of \wslpsymbol} \\
    =   &~ \ri \sepimp \step{\cc}{\anonfunc{\cc'} \wslpn{n}{\cc'}{\ff}{\ri} \sepcon \ri}\subst{\xx}{\xv} \tag{$\xx \not\in \freevariable{\ri}$} \\
    =   &~ \ri \sepimp \step{\cc}{\anonfunc{\cc'} \left(\wslpn{n}{\cc'}{\ff}{\ri} \sepcon \ri\right)\subst{\xx}{\xv}} \tag{$\dagger$ see below} \\
    =   &~ \ri \sepimp \step{\cc}{\anonfunc{\cc'} \wslpn{n}{\cc'}{\ff}{\ri}\subst{\xx}{\xv} \sepcon \ri} \tag{$\xx \not\in \freevariable{\ri}$} \\
    =   &~ \ri \sepimp \step{\cc}{\anonfunc{\cc'} \wslpn{n}{\cc'}{\ff}{\ri}\subst{\xx}{\xv'} \sepcon \ri} \tag{Induction Hypothesis} \\
    =   &~ \ri \sepimp \step{\cc}{\anonfunc{\cc'} \left(\wslpn{n}{\cc'}{\ff}{\ri} \sepcon \ri\right)\subst{\xx}{\xv'}} \tag{$\xx \not\in \freevariable{\ri}$} \\
    =   &~ \ri \sepimp \step{\cc}{\anonfunc{\cc'} \wslpn{n}{\cc'}{\ff}{\ri} \sepcon \ri}\subst{\xx}{\xv'} \tag{$\dagger$ see below} \\
    =   &~ \left(\ri \sepimp \step{\cc}{\anonfunc{\cc'} \wslpn{n}{\cc'}{\ff}{\ri} \sepcon \ri}\right)\subst{\xx}{\xv'} \tag{$\xx \not\in \freevariable{\ri}$} \\
    =   &~ \wslpn{n+1}{\cc}{\ff}{\ri}\subst{\xx}{\xv'} \tag{Definition of \wslpsymbol} 
    \end{align*}
    \endgroup
    Now regarding $\dagger$:
    \begin{align*}
        &~ \step{\cc}{\ct}\subst{\xx}{\xv}(\sk, \hh)\\
    =   &~ \step{\cc}{\ct}(\sk\subst{\xx}{\xv}, \hh) \tag{Definition of substitution} \\
    =   &~ \inf \bigg\{ \sum \left\llbag \pp \cdot \ct(\cc')(\sk', \hh') \mid \cc, (\sk\subst{\xx}{\xv}, \hh) \optrans{\acta}{\pp} \cc', (\sk', \hh') \right\rrbag\\
        &~ \qquad \bigg\vert~ \acta \in \Enabled{\cc, (\sk\subst{\xx}{\xv}, \hh)} \bigg\} \tag{Definition of \stepsymbol} \\
    =   &~ \inf \bigg\{ \sum \left\llbag \pp \cdot \ct(\cc')(\sk'\subst{\xx}{\xv}, \hh') \mid \cc, (\sk\subst{\xx}{\xv}, \hh) \optrans{\acta}{\pp} \cc', (\sk'\subst{\xx}{\xv}, \hh') \right\rrbag\\
        &~ \qquad \bigg\vert~ \acta \in \Enabled{\cc, (\sk\subst{\xx}{\xv}, \hh)} \bigg\} \tag{Since $\xx \not \in \freevariable{\cc}$ we have $\sk'(\xx)=\xv$} \\
    =   &~ \inf \bigg\{ \sum \left\llbag \pp \cdot \ct(\cc')(\sk'\subst{\xx}{\xv}, \hh') \mid \cc, (\sk, \hh) \optrans{\acta}{\pp} \cc', (\sk', \hh') \right\rrbag\\
        &~ \qquad \bigg\vert~ \acta \in \Enabled{\cc, (\sk, \hh)} \bigg\} \tag{Operational semantics and since $\xx \not \in \freevariable{\cc}$} \\
    =   &~ \inf \bigg\{ \sum \left\llbag \pp \cdot \ct(\cc')\subst{\xx}{\xv}(\sk', \hh') \mid \cc, (\sk, \hh) \optrans{\acta}{\pp} \cc', (\sk', \hh') \right\rrbag\\
        &~ \qquad \bigg\vert~ \acta \in \Enabled{\cc, (\sk, \hh)} \bigg\} \tag{Definition of substitution} \\
    =   &~ \step{\cc}{\anonfunc{\cc'} \ct(\cc')\subst{\xx}{\xv}} \tag{Definition of \wslpsymbol}
    \end{align*}
    Thus, the claim is proven.
\end{proof}

\begin{lemma}\label{lem:sounddiv}
    $\wslp{\DIVERGE}{\ff}{\ri}=1$.
\end{lemma}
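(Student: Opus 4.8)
The plan is to prove by induction on $n$ that $\wslpn{n}{\DIVERGE}{\ff}{\ri} = 1$ (the constant-one expectation) for every $n \in \Nats$, and then conclude $\wslp{\DIVERGE}{\ff}{\ri} = \lim_{n \to \infty} 1 = 1$. The base case $n = 0$ is immediate, since $\wslpsymbol_0$ is defined to be $1$ regardless of the program. For the step, note that $\DIVERGE \neq \TERM$, so we are always in the third (``otherwise'') branch of the definition and must evaluate $\ri \sepimp \step{\DIVERGE}{\anonfunc{\cc'} \wslpn{n}{\cc'}{\ff}{\ri} \sepcon \ri}$.

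The crucial simplification is that $\DIVERGE$ admits only the single rule DIV, i.e.\ the unique transition $\DIVERGE, (\sk,\hh) \optrans{\text{div}}{1} \DIVERGE, (\sk,\hh)$. Hence $\Enabled{\DIVERGE, (\sk,\hh)}$ is a singleton and the inner bag in $\stepsymbol$ consists of exactly one summand, carried with probability $1$, leading back to the same program on the same state. I would therefore first record that for any mapping $\ct\colon\chpgcl\to\Eone$ we have $\step{\DIVERGE}{\ct} = \ct(\DIVERGE)$. Substituting $\ct = \anonfunc{\cc'} \wslpn{n}{\cc'}{\ff}{\ri} \sepcon \ri$ and applying the induction hypothesis $\wslpn{n}{\DIVERGE}{\ff}{\ri} = 1$ then gives
\[
    \wslpn{n+1}{\DIVERGE}{\ff}{\ri} \;=\; \ri \sepimp \bigl(1 \sepcon \ri\bigr)\,.
\]

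It remains to argue $\ri \sepimp (1 \sepcon \ri) = 1$. The lower bound $1 \leq \ri \sepimp (1 \sepcon \ri)$ is exactly \Cref{eq:reverse_modus_ponens} instantiated with the constant expectation $1$ in place of $\ff$ and $\ri$ in place of $\sla$ (using commutativity of $\sepcon$ to rewrite $\ri \sepcon 1$ as $1 \sepcon \ri$). The matching upper bound $\ri \sepimp (1 \sepcon \ri) \leq 1$ follows from the $0$--$1$ boundedness of $\wslpsymbol_{n+1}$ established in \Cref{lem:zero-one-bounded}. This closes the induction and hence proves the claim. I do not anticipate a genuine obstacle; the only points requiring care are confirming that DIV is the sole applicable rule so that $\step{\DIVERGE}{\ct}$ collapses to $\ct(\DIVERGE)$, and checking that the constant expectation $1 \in \Eone$ is a legitimate instance for \Cref{eq:reverse_modus_ponens}.
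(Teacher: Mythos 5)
Your proposal is correct and follows essentially the same route as the paper's own proof: induction on $n$, collapsing $\step{\DIVERGE}{\ct}$ to $\ct(\DIVERGE)$ via the singleton DIV transition, then obtaining $1 \leq \ri \sepimp (1 \sepcon \ri)$ from \Cref{eq:reverse_modus_ponens} and the matching upper bound from \Cref{lem:zero-one-bounded}. The only (welcome) extra care you take is explicitly noting the commutativity step needed to match the form of \Cref{eq:reverse_modus_ponens}.
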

\begin{proof}
    We prove by induction on $n$ that $\wslpn{n}{\DIVERGE}{\ff}{\ri}=1$. Then the same holds for the limit.

    For the induction base $n=0$, we immediately have $\wslpn{0}{\DIVERGE}{\ff}{\ri}=1$.

    Now we assume that the statement holds for some arbitrary but fixed $n$ as our induction hypothesis.

    For the induction step, we have:
    \begingroup
    \allowdisplaybreaks
    \begin{align*}
        &~ \wslpn{n+1}{\DIVERGE}{\ff}{\ri} \\
    =   &~ \ri \sepimp \step{\DIVERGE}{\anonfunc{\cc'} \wslpn{n}{\cc'}{\ff}{\ri} \sepcon \ri} \tag{Definition of \wslpsymbol} \\
    =   &~ \ri \sepimp (\wslpn{n}{\DIVERGE}{\ff}{\ri} \sepcon \ri) \tag{$\dagger$, see below} \\
    =   &~ \ri \sepimp (1 \sepcon \ri) \tag{Induction Hypothesis}\\
    \geq&~ 1 \tag{\Cref{eq:reverse_modus_ponens}}
    \end{align*}
    \endgroup
    Now we still miss $\dagger$, which we prove as:
    \begingroup
    \allowdisplaybreaks
    \begin{align*}
        &~ \step{\DIVERGE}{\ct}(\sk, \hh) \\
    =   &~\inf \bigg\{ \sum \left\llbag \pp \cdot \ct(\cc')(\sk', \hh') \mid \DIVERGE, (\sk, \hh) \optrans{\acta}{\pp} \cc', (\sk', \hh') \right\rrbag \\
        &~ \qquad \bigg\vert~ \acta \in \Enabled{\DIVERGE, (\sk, \hh)} \bigg\} \tag{Definition of \stepsymbol} \\
    =   &~\sum \left\llbag 1 \cdot \ct(\DIVERGE)(\sk, \hh) \mid \DIVERGE, (\sk, \hh) \optrans{\text{div}}{1} \DIVERGE, (\sk, \hh) \right\rrbag \tag{Operational semantics of divergence} \\
    =   & \ct(\DIVERGE)(\sk, \hh) \tag{Set is singleton}
    \end{align*}
    \endgroup
    Furthermore we have $\wslpn{n+1}{\DIVERGE}{\ff}{\ri}\leq 1$ by \Cref{lem:zero-one-bounded}. Thus, the claim is proven.
\end{proof}

\begin{lemma}\label{lem:onestep_framing_wo_atomicregion}
    For an atomic region free $\chpgcl$ program $\cc$, a mapping $\ct\colon \cc \to \Eone$ and an expectation $\ff$ with $\freevariable{\ff} \cap \written{\cc}=\emptyset$ we have 
    \begin{equation*}
        \step{\cc}{t} \sepcon \ff \quad \leq \quad \step{\cc}{\anonfunc{\cc'} \ct(\cc')\sepcon \ff}~.
    \end{equation*}
\end{lemma}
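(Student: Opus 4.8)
The plan is to unfold both sides against the definitions of $\stepsymbol$ and of separating multiplication, and then reduce the inequality to a comparison of single transitions, isolating the genuinely separation-logical content into a framing correspondence on the operational semantics. Since $(\step{\cc}{\ct} \sepcon \ff)(\sk,\hh)$ is a supremum over splittings $\hh = \hh_1 \joinheap \hh_2$, while the right-hand side is a single infimum independent of any splitting, it suffices to fix $\sk$, a splitting $\hh = \hh_1 \joinheap \hh_2$, and an action $\acta \in \Enabled{\cc,(\sk,\hh)}$, and to show
\[
  \step{\cc}{\ct}(\sk,\hh_1)\cdot\ff(\sk,\hh_2)
  \;\leq\;
  \sum \left\llbag \pp\cdot(\ct(\cc')\sepcon\ff)(\sk',\hh') \mid \cc,(\sk,\hh)\optrans{\acta}{\pp}\cc',(\sk',\hh')\right\rrbag .
\]
Because $\step{\cc}{\ct}(\sk,\hh_1)$ is an infimum over the actions enabled at the \emph{smaller} state $(\sk,\hh_1)$, I only need to exhibit one such action witnessing the bound.

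Next I would establish, by induction on the structure of the atomic-region-free program $\cc$ (i.e.\ a case analysis on the rules of \Cref{fig:op-commands,fig:op-flow,fig:op-con}), a single-step framing correspondence for the fixed $\acta$ enabled at $(\sk,\hh_1\joinheap\hh_2)$: one of two situations occurs. Either (a) $\acta$ is also enabled at $(\sk,\hh_1)$ and its $\acta$-successors at $\hh_1$ are in bijection with those at $\hh$ via $\hh' = \hh_1'\joinheap\hh_2$, keeping the probability $\pp$, the residual program $\cc'$, and the stack $\sk'$ unchanged; or (b) $\acta$ reads, mutates, or frees an address lying in the frame $\hh_2$ and hence outside $\dom{\hh_1}$, in which case the corresponding \emph{aborting} action is enabled at $(\sk,\hh_1)$, its step-sum ranges over no non-aborting successor, and so $\step{\cc}{\ct}(\sk,\hh_1)=0$. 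For the non-heap actions (assignment, probabilistic choice, conditional, loop, divergence) and for allocation, case (a) always applies (the frame $\hh_2$ is left untouched, and freshly allocated cells are disjoint from $\hh_2$ since they avoid $\dom{\hh}$); the compound constructs (sequential composition, concurrency) merely funnel the transition through the active sub-thread, so the correspondence follows from the induction hypothesis.

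In case (b) the left-hand term is $0$ and the inequality is trivial. In case (a) I would bound $\step{\cc}{\ct}(\sk,\hh_1)$ from above by the $\acta$-sum (infimum over actions $\leq$ one specific action), multiply through by $\ff(\sk,\hh_2)$, and push it inside the sum. Here the stack-equality machinery enters: by \Cref{lem:onestep_stackequal} each successor stack $\sk'$ agrees with $\sk$ off $\written{\cc}$, and since $\freevariable{\ff}\cap\written{\cc}=\emptyset$, \Cref{lem:expectation_stackequal} gives $\ff(\sk,\hh_2)=\ff(\sk',\hh_2)$. Each summand then reads $\pp\cdot\ct(\cc')(\sk',\hh_1')\cdot\ff(\sk',\hh_2)$, which is at most $\pp\cdot(\ct(\cc')\sepcon\ff)(\sk',\hh_1'\joinheap\hh_2)$ by the defining supremum of $\sepcon$ instantiated at the splitting $\hh_1'\joinheap\hh_2$. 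Finally the correspondence $\hh'=\hh_1'\joinheap\hh_2$ lets me re-index this sum as the sum over the $\acta$-transitions at $\hh$, which is exactly the required right-hand side.

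The main obstacle is the framing correspondence itself, and in particular the enabledness mismatch in case (b): enlarging the heap can turn an aborting heap access into a successful one, so the sets of enabled actions at $(\sk,\hh_1)$ and $(\sk,\hh)$ genuinely differ, and it is only the convention that aborting transitions contribute nothing to the step-sum that rescues the bound. This is precisely where atomic-region-freeness is used, since it guarantees that each step either leaves the frame $\hh_2$ untouched or fails \emph{locally}; I would not expect the statement to survive unchanged for atomic regions, whose big-step execution can read deep into the frame. The remaining work — the structural induction for the compound constructs and the bookkeeping for allocation's nondeterministic choice of location — is routine once the correspondence is in place.
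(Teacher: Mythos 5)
Your proposal is correct and follows essentially the same route as the paper's proof: a structural induction on the atomic-region-free program that unfolds $\stepsymbol$ and $\sepcon$, exploits the fact that an aborting action at the smaller heap drives $\step{\cc}{\ct}(\sk,\hh_1)$ to $0$ (resolving the enabledness mismatch), and uses the stack-equality lemmas to move $\ff(\sk,\hh_2)$ across the transition. Your reorganisation into a per-splitting, per-action bound with an explicit framing correspondence is a cleaner packaging of the same inequality chains (sup--inf swap, bag reduction, constant factoring) that the paper carries out command by command.
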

\begin{proof}
    First, we notice that due to $\freevariable{\ff} \cap \written{\cc}=\emptyset$ we have $\freevariable{\ff} \subseteq \Vars \setminus \written{\cc}$.
    Now we prove the claim by structural induction on the program $\cc$. 
    
    For $\cc=\TERM$ we have $\step{\cc}{\anonfunc{\cc'} \ct(\cc') \sepcon \ff} = 1$ since $\Enabled{\TERM, (\sk, \hh)}=\emptyset$, thus the inequality holds trivially due to $\stepsymbol$ being bounded by $1$ above (cf. \Cref{lem:zero-one-bounded}).

    For $\cc=\DIVERGE$ we have $\step{\cc}{\anonfunc{\cc'} \ct(\cc') \sepcon \ff} = \ct(\cc) \sepcon \ff = \step{\cc}{\ct} \sepcon \ff$ (cf. \Cref{lem:sounddiv})

    For $\cc=\ASSIGN{\xx}{\ee}$ we have
    \begingroup
    \allowdisplaybreaks
    \begin{align*}
        &~ \step{\cc}{\anonfunc{\cc'} \ct(\cc') \sepcon \ff}(\sk, \hh) \\
    =   &~ \inf \bigg\{ \sum \left\llbag \pp \cdot (\ct(\cc') \sepcon \ff)(\sk', \hh) \mid \cc, (\sk, \hh) \optrans{\acta}{\pp} \cc', (\sk', \hh') \right\rrbag \\
        &~ \qquad \bigg\vert~ \acta \in \Enabled{\cc, (\sk, \hh)} \bigg\} \tag{Definition of \stepsymbol}\\
    =   &~ \sum \left\llbag  (\ct(\cc') \sepcon \ff)(\sk', \hh) \mid \cc, (\sk, \hh) \optrans{\text{assign}}{1} \cc', (\sk', \hh) \right\rrbag \tag{$\Enabled{\cc, (\sk, \hh)}$ is singleton} \\
    \geq&~ \sum \left\llbag  (\ct(\cc')(\sk', \hh_1) \cdot \ff(\sk', \hh_2) \mid \cc, (\sk, \hh_1) \optrans{\text{assign}}{1} \cc', (\sk', \hh_1) \right\rrbag \tag{for any $\hh_1 \joinheap \hh_2 = \hh$ and every heap allows the assign action} \\
    =   &~ \sum \left\llbag  (\ct(\cc')(\sk', \hh_1) \cdot \ff(\sk, \hh_2) \mid \cc, (\sk, \hh_1) \optrans{\text{assign}}{1} \cc', (\sk', \hh_1) \right\rrbag \tag{by \Cref{lem:onestep_stackequal,lem:expectation_stackequal} and $\freevariable{\ff} \subseteq \Vars \setminus \written{\cc}$} \\
    =   &~ \sum \left\llbag  (\ct(\cc')(\sk', \hh_1) \mid \cc, (\sk, \hh_1) \optrans{\text{assign}}{1} \cc', (\sk', \hh_1) \right\rrbag \cdot \ff(\sk, \hh_2) \tag{Constant factors can be shifted outside} \\
    =   &~ \step{\cc}{\ct}(\sk,\hh_1) \cdot \ff(\sk, \hh_2)
    \end{align*}
    \endgroup
    Since the inequality holds for all $\hh_1 \joinheap \hh_2 = \hh$ we also have
    \begin{align*}
        \step{\cc}{\anonfunc{\cc'} \ct(\cc') \sepcon \ff}(\sk,\hh) ~\geq&~ \sup \{ \step{\cc}{\ct}(\sk, \hh_1) \cdot \ff(\sk, \hh_2) \mid \hh_1 \joinheap \hh_2 = \hh \}\\
                                                        =&~ (\step{\cc}{\ct} \sepcon \ff)(\sk, \hh)~.
    \end{align*}

    For $\cc=\PCHOICE{\cc_1}{\ee_{\pp}}{\cc_2}$ the proof is analogous to the case $\cc=\ASSIGN{\xx}{\ee}$.

    For $\cc=\ITE{\guard}{\cc_1}{\cc_2}$ the proof is analogous to the case $\cc=\ASSIGN{\xx}{\ee}$.

    For $\cc=\WHILEDO{\guard}{\cc'}$ the proof is analogous to the case $\cc=\ASSIGN{\xx}{\ee}$.

    For $\cc=\ALLOC{\xx}{\ee_0, \dots, \ee_n}$ 
    We now have that:
    \begingroup
    \allowdisplaybreaks
    \begin{align*}
        &~ \step{\cc}{\anonfunc{\cc'} \ct(\cc') \sepcon \ff}(\sk, \hh) \\
    =   &~ \inf \bigg\{ \sum \left\llbag \pp \cdot (\ct(\cc') \sepcon \ff)(\sk', \hh) \mid \cc, (\sk, \hh) \optrans{\acta}{\pp} \cc', (\sk', \hh') \right\} \\
        &~ \qquad \bigg\vert~ \acta \in \Enabled{\cc, (\sk, \hh)} \bigg\} \tag{Definition of \stepsymbol}\\
    =   &~ \inf \bigg\{ \sum \left\llbag \pp \cdot \sup \{ \ct(\cc')(\sk',\hh'_1) \cdot \ff(\sk', \hh'_2) \mid \hh'_1 \joinheap \hh'_2 = \hh' \} \mid \cc, (\sk, \hh) \optrans{\acta}{\pp} \cc', (\sk', \hh') \right\rrbag \\
        &~ \qquad \bigg\vert~ \acta \in \Enabled{\cc, (\sk, \hh)} \bigg\} \tag{Definition of $\sepcon$}\\
    =   &~ \inf \bigg\{ \sum \left\llbag \sup \{ \ct(\cc')(\sk',\hh'_1) \cdot \ff(\sk', \hh'_2) \mid \hh'_1 \joinheap \hh'_2 = \hh' \} \mid \cc, (\sk, \hh) \optrans{\text{alloc-}\ell}{1} \cc', (\sk', \hh') \right\rrbag \\
        &~ \qquad \bigg\vert~ \text{alloc-}\ell \in \Enabled{\cc, (\sk, \hh)} \bigg\} \tag{Operational semantics of allocation}\\
    =   &~ \inf \bigg\{ \sup \Big\{ \sum \llbag \ct(\cc')(\sk',\hh'_1) \cdot \ff(\sk', \hh'_2) \mid \cc, (\sk, \hh) \optrans{\text{alloc-}\ell}{1} \cc', (\sk', \hh') \rrbag \Big\vert~ \hh'_1 \joinheap \hh'_2 = \hh' \Big\} \\
        &~ \qquad \bigg\vert~ \text{alloc-}\ell \in \Enabled{\cc, (\sk, \hh)} \bigg\} \tag{Sum over singleton or empty bag}\\
    \geq&~ \sup \bigg\{ \inf \Big\{ \sum \llbag \ct(\cc')(\sk',\hh'_1) \cdot \ff(\sk', \hh'_2) \mid \cc, (\sk, \hh) \optrans{\text{alloc-}\ell}{1} \cc', (\sk', \hh') \rrbag \\
        &~ \qquad \qquad \Big\vert~ \text{alloc-}\ell \in \Enabled{\cc, (\sk, \hh)} \Big\}  \bigg\vert~ \hh'_1 \joinheap \hh'_2 = \hh' \bigg\} \tag{\Cref{eq:sup_inf_swap}}\\
    \geq&~ \sup \bigg\{ \inf \Big\{ \sum \llbag \ct(\cc')(\sk',\hh'_1) \cdot \ff(\sk', \hh_2) \mid \cc, (\sk, \hh_1 \joinheap \hh_2) \optrans{\text{alloc-}\ell}{1} \cc', (\sk', \hh'_1 \joinheap \hh_2) \rrbag \\
        &~ \qquad \qquad \Big\vert~ \text{alloc-}\ell \in \Enabled{\cc, (\sk, \hh)} \Big\}  \bigg\vert~ \hh_1 \joinheap \hh_2 = \hh \bigg\} \tag{Reducing bag for sum and set for supremum}\\
    \geq&~ \sup \bigg\{ \inf \Big\{ \sum \llbag \ct(\cc')(\sk',\hh'_1) \cdot \ff(\sk', \hh_2) \mid \cc, (\sk, \hh_1 \joinheap \hh_2) \optrans{\text{alloc-}\ell}{1} \cc', (\sk', \hh'_1 \joinheap \hh_2) \rrbag \\
        &~ \qquad \qquad \Big\vert~ \text{alloc-}\ell \in \Enabled{\cc, (\sk, \hh_1)} \Big\}  \bigg\vert~ \hh_1 \joinheap \hh_2 = \hh \bigg\} \tag{Since $\Enabled{\cc, (\sk, \hh_1)} \supseteq \Enabled{\cc, (\sk, \hh)$}}\\
    =   &~ \sup \bigg\{ \inf \Big\{ \sum \llbag \ct(\cc')(\sk',\hh'_1) \cdot \ff(\sk, \hh_2) \mid \cc, (\sk, \hh_1 \joinheap \hh_2) \optrans{\text{alloc-}\ell}{1} \cc', (\sk', \hh'_1 \joinheap \hh_2) \rrbag \\
        &~ \qquad \qquad \Big\vert~ \text{alloc-}\ell \in \Enabled{\cc, (\sk, \hh_1)} \Big\}  \bigg\vert~ \hh_1 \joinheap \hh_2 = \hh \bigg\} \tag{\Cref{lem:onestep_stackequal,lem:expectation_stackequal} and $\freevariable{\ff} \subseteq \Vars \setminus \written{\cc}$}\\
    =   &~ \sup \bigg\{ \inf \Big\{ \sum \llbag \ct(\cc')(\sk',\hh'_1) \cdot \ff(\sk, \hh_2) \mid \cc, (\sk, \hh_1) \optrans{\text{alloc-}\ell}{1} \cc', (\sk', \hh'_1) \rrbag \\
        &~ \qquad \qquad \Big\vert~ \text{alloc-}\ell \in \Enabled{\cc, (\sk, \hh_1)} \Big\}  \bigg\vert~ \hh_1 \joinheap \hh_2 = \hh \bigg\} \tag{Definition of operational semantics}\\
    =   &~ \sup \bigg\{ \inf \Big\{ \sum \llbag \ct(\cc')(\sk',\hh'_1) \mid \cc, (\sk, \hh_1) \optrans{\text{alloc-}\ell}{1} \cc', (\sk', \hh'_1) \rrbag \\
        &~ \qquad \qquad \Big\vert~ \text{alloc-}\ell \in \Enabled{\cc, (\sk, \hh_1)} \Big\} \cdot \ff(\sk, \hh_2) \bigg\vert~ \hh_1 \joinheap \hh_2 = \hh \bigg\} \tag{Constant factors can be shifted outside}\\
    =   &~ \sup \bigg\{ \step{\cc}{\ct}(\sk,\hh_1) \cdot \ff(\sk, \hh_2) \bigg\vert~ \hh_1 \joinheap \hh_2 = \hh \bigg\} \tag{Definition of \stepsymbol}\\
    =   &~ (\step{\cc}{\ct} \sepcon \ff)(\sk, \hh) \tag{Definition of $\sepcon$}
    \end{align*}
    \endgroup

    For $\cc=\FREE{\xx}$ we have
    \begingroup
    \allowdisplaybreaks
    \begin{align*}
        &~ \step{\cc}{\anonfunc{\cc'} \ct(\cc') \sepcon \ff}(\sk, \hh) \\
    =   &~ \inf \bigg\{ \sum \left\llbag \pp \cdot (\ct(\cc') \sepcon \ff)(\sk', \hh') \mid \cc, (\sk, \hh) \optrans{\acta}{\pp} \cc', (\sk', \hh') \right\rrbag \\
        &~ \qquad \bigg\vert~ \acta \in \Enabled{\cc, (\sk, \hh)} \bigg\} \tag{Definition of \stepsymbol}\\
    =   &~ \inf \bigg\{ \sum \left\llbag (\ct(\cc') \sepcon \ff)(\sk, \hh') \mid \cc, (\sk, \hh) \optrans{\acta}{1} \cc', (\sk, \hh') \right\rrbag \\
        &~ \qquad \bigg\vert~ \acta \in \Enabled{\cc, (\sk, \hh)} \bigg\} \tag{Operational semantics of disposal}\\
    =   &~ \inf \bigg\{ \sum \left\llbag \sup \{ \ct(\cc')(\sk,\hh'_1) \cdot \ff(\sk, \hh'_2) \mid \hh'_1 \joinheap \hh'_2 = \hh' \} \mid \cc, (\sk, \hh) \optrans{\acta}{1} \cc', (\sk, \hh') \right\rrbag \\
        &~ \qquad \bigg\vert~ \acta \in \Enabled{\cc, (\sk, \hh)} \bigg\} \tag{Definition of $\sepcon$}\\
    =   &~ \inf \bigg\{ \sup \left\{ \sum \llbag \ct(\cc')(\sk,\hh'_1) \cdot \ff(\sk, \hh'_2) \mid \cc, (\sk, \hh) \optrans{\acta}{1} \cc', (\sk, \hh') \rrbag \mid \hh'_1 \joinheap \hh'_2 = \hh' \right\} \\
        &~ \qquad \bigg\vert~ \acta \in \Enabled{\cc, (\sk, \hh)} \bigg\} \tag{Sum over singleton or empty bag}\\
    \geq&~ \sup \bigg\{ \inf \left\{ \sum \llbag \ct(\cc')(\sk,\hh'_1) \cdot \ff(\sk, \hh'_2) \mid \cc, (\sk, \hh) \optrans{\acta}{1} \cc', (\sk, \hh') \rrbag \mid \acta \in \Enabled{\cc, (\sk, \hh)} \right\}  \\
        &~ \qquad \bigg\vert~ \hh'_1 \joinheap \hh'_2 = \hh' \bigg\} \tag{\Cref{eq:sup_inf_swap}}\\
    \geq&~ \sup \bigg\{ \inf \Big\{ \sum \llbag \ct(\cc')(\sk,\hh'_1) \cdot \ff(\sk, \hh_2) \mid \cc, (\sk, \hh_1 \joinheap \hh_2) \optrans{\acta}{1} \cc', (\sk, \hh'_1 \joinheap \hh_2) \rrbag\\
        &~ \qquad \qquad \Big\vert~ \acta \in \Enabled{\cc, (\sk, \hh)} \Big\} \bigg\vert~ \hh_1 \joinheap \hh_2 = \hh \bigg\} \tag{Reducing the bag for sum}\\
    \geq&~ \sup \bigg\{ \inf \Big\{ \sum \llbag \ct(\cc')(\sk,\hh'_1) \cdot \ff(\sk, \hh_2) \mid \cc, (\sk, \hh_1 \joinheap \hh_2) \optrans{\acta}{1} \cc', (\sk, \hh'_1 \joinheap \hh_2) \rrbag\\
        &~ \qquad \qquad \Big\vert~ \acta \in \Enabled{\cc, (\sk, \hh_1)} \Big\} \bigg\vert~ \hh_1 \joinheap \hh_2 = \hh \bigg\} \tag{Since either $\{\text{free}\} = \Enabled{\cc, (\sk, \hh_1)}, \Enabled{\cc, (\sk, \hh)}$ or $\{\text{free-abt}\} = \Enabled{\cc, (\sk, \hh_1)}$}\\
    \geq&~ \sup \bigg\{ \inf \Big\{ \sum \llbag \ct(\cc')(\sk,\hh'_1) \cdot \ff(\sk, \hh_2) \mid \cc, (\sk, \hh_1) \optrans{\acta}{1} \cc', (\sk, \hh'_1) \rrbag\\
        &~ \qquad \qquad \Big\vert~ \acta \in \Enabled{\cc, (\sk, \hh_1)} \Big\} \bigg\vert~ \hh_1 \joinheap \hh_2 = \hh \bigg\} \tag{Operational semantics of disposal}\\
    =   &~ \sup \bigg\{ \inf \Big\{ \sum \llbag \ct(\cc')(\sk,\hh'_1) \mid \cc, (\sk, \hh_1) \optrans{\acta}{1} \cc', (\sk, \hh'_1) \rrbag\\
        &~ \qquad \qquad \Big\vert~ \acta \in \Enabled{\cc, (\sk, \hh_1)} \Big\} \cdot \ff(\sk, \hh_2)\; \bigg\vert~ \hh_1 \joinheap \hh_2 = \hh \bigg\} \tag{Constant factors can be shifted outside}\\
    =   &~ \sup \bigg\{ \step{\cc}{\ct}(\sk, \hh_1) \sepcon \ff(\sk, \hh_2) \bigg\vert~ \hh_1 \joinheap \hh_2 = \hh \bigg\} \tag{Definition of \stepsymbol}\\
    =   &~ (\step{\cc}{\ct} \sepcon \ff)(\sk, \hh) \tag{Definition of $\sepcon$}
    \end{align*}
    \endgroup

    For $\cc=\ASSIGNH{\xx}{\ee}$ we have
    \begingroup
    \allowdisplaybreaks
    \begin{align*}
        &~ \step{\cc}{\anonfunc{\cc'} \ct(\cc') \sepcon \ff}(\sk, \hh) \\
    =   &~ \inf \bigg\{ \sum \left\llbag \pp \cdot (\ct(\cc') \sepcon \ff)(\sk', \hh') \mid \cc, (\sk, \hh) \optrans{\acta}{\pp} \cc', (\sk', \hh') \right\rrbag \\
        &~ \qquad \bigg\vert~ \acta \in \Enabled{\cc, (\sk, \hh)} \bigg\} \tag{Definition of \stepsymbol}\\
    =   &~ \inf \bigg\{ \sum \left\llbag (\ct(\cc') \sepcon \ff)(\sk', \hh') \mid \cc, (\sk, \hh) \optrans{\acta}{1} \cc', (\sk', \hh') \right\rrbag \\
        &~ \qquad \bigg\vert~ \acta \in \Enabled{\cc, (\sk, \hh)} \bigg\} \tag{Operational semantics of lookup}\\
    =   &~ \inf \bigg\{ \sum \left\llbag \sup \{ \ct(\cc')(\sk',\hh'_1) \cdot \ff(\sk', \hh'_2) \mid \hh'_1 \joinheap \hh'_2 = \hh' \} \mid \cc, (\sk, \hh) \optrans{\acta}{1} \cc', (\sk', \hh') \right\rrbag \\
        &~ \qquad \bigg\vert~ \acta \in \Enabled{\cc, (\sk, \hh)} \bigg\} \tag{Definition of $\sepcon$}\\
    =   &~ \inf \bigg\{ \sup \left\{ \sum \llbag \ct(\cc')(\sk', \hh'_1) \cdot \ff(\sk', \hh'_2) \mid \cc, (\sk, \hh) \optrans{\acta}{1} \cc', (\sk', \hh') \rrbag \mid \hh'_1 \joinheap \hh'_2 = \hh' \right\} \\
        &~ \qquad \bigg\vert~ \acta \in \Enabled{\cc, (\sk, \hh)} \bigg\} \tag{Sum over singleton or empty bag}\\
    \geq&~ \sup \bigg\{ \inf \left\{ \sum \llbag \ct(\cc')(\sk', \hh'_1) \cdot \ff(\sk', \hh'_2) \mid \cc, (\sk, \hh) \optrans{\acta}{1} \cc', (\sk', \hh') \rrbag \mid \acta \in \Enabled{\cc, (\sk, \hh)} \right\}  \\
        &~ \qquad \bigg\vert~ \hh'_1 \joinheap \hh'_2 = \hh' \bigg\} \tag{\Cref{eq:sup_inf_swap}}\\
    \geq&~ \sup \bigg\{ \inf \Big\{ \sum \llbag \ct(\cc')(\sk',\hh'_1) \cdot \ff(\sk', \hh_2) \mid \cc, (\sk', \hh_1 \joinheap \hh_2) \optrans{\acta}{1} \cc', (\sk, \hh'_1 \joinheap \hh_2) \rrbag\\
        &~ \qquad \qquad \Big\vert~ \acta \in \Enabled{\cc, (\sk, \hh)} \Big\} \bigg\vert~ \hh_1 \joinheap \hh_2 = \hh \bigg\} \tag{Reducing the bag for sum}\\
    \geq&~ \sup \bigg\{ \inf \Big\{ \sum \llbag \ct(\cc')(\sk', \hh'_1) \cdot \ff(\sk', \hh_2) \mid \cc, (\sk, \hh_1 \joinheap \hh_2) \optrans{\acta}{1} \cc', (\sk', \hh'_1 \joinheap \hh_2) \rrbag\\
        &~ \qquad \qquad \Big\vert~ \acta \in \Enabled{\cc, (\sk, \hh_1)} \Big\} \bigg\vert~ \hh_1 \joinheap \hh_2 = \hh \bigg\} \tag{Since either $\{\text{lookup}\} = \Enabled{\cc, (\sk, \hh_1)}, \Enabled{\cc, (\sk, \hh)}$ or $\{\text{lookup-abt}\} = \Enabled{\cc, (\sk, \hh_1)}$}\\
    \geq&~ \sup \bigg\{ \inf \Big\{ \sum \llbag \ct(\cc')(\sk, \hh'_1) \cdot \ff(\sk', \hh_2) \mid \cc, (\sk, \hh_1) \optrans{\acta}{1} \cc', (\sk', \hh'_1) \rrbag\\
        &~ \qquad \qquad \Big\vert~ \acta \in \Enabled{\cc, (\sk, \hh_1)} \Big\} \bigg\vert~ \hh_1 \joinheap \hh_2 = \hh \bigg\} \tag{Operational semantics of lookup}\\
    =   &~ \sup \bigg\{ \inf \Big\{ \sum \llbag \ct(\cc')(\sk', \hh'_1) \cdot \ff(\sk, \hh_2) \mid \cc, (\sk, \hh_1) \optrans{\acta}{1} \cc', (\sk', \hh'_1) \rrbag\\
        &~ \qquad \qquad \Big\vert~ \acta \in \Enabled{\cc, (\sk, \hh_1)} \Big\} \bigg\vert~ \hh_1 \joinheap \hh_2 = \hh \bigg\} \tag{\Cref{lem:onestep_stackequal,lem:expectation_stackequal} and $\freevariable{\ff} \subseteq \Vars \setminus \written{\cc}$}\\
    =   &~ \sup \bigg\{ \inf \Big\{ \sum \llbag \ct(\cc')(\sk',\hh'_1) \mid \cc, (\sk, \hh_1) \optrans{\acta}{1} \cc', (\sk', \hh'_1) \rrbag\\
        &~ \qquad \qquad \Big\vert~ \acta \in \Enabled{\cc, (\sk, \hh_1)} \Big\} \cdot \ff(\sk, \hh_2)\; \bigg\vert~ \hh_1 \joinheap \hh_2 = \hh \bigg\} \tag{Constant factors can be shifted outside}\\
    =   &~ \sup \bigg\{ \step{\cc}{\ct}(\sk, \hh_1) \sepcon \ff(\sk, \hh_2) \bigg\vert~ \hh_1 \joinheap \hh_2 = \hh \bigg\} \tag{Definition of \stepsymbol}\\
    =   &~ (\step{\cc}{\ct} \sepcon \ff)(\sk, \hh) \tag{Definition of $\sepcon$}
    \end{align*}
    \endgroup

    For $\cc=\HASSIGN{\ee}{\ee'}$ we have
    \begingroup
    \allowdisplaybreaks
    \begin{align*}
        &~ \step{\cc}{\anonfunc{\cc'} \ct(\cc') \sepcon \ff}(\sk, \hh) \\
    =   &~ \inf \bigg\{ \sum \left\llbag \pp \cdot (\ct(\cc') \sepcon \ff)(\sk', \hh') \mid \cc, (\sk, \hh) \optrans{\acta}{\pp} \cc', (\sk', \hh') \right\rrbag \\
        &~ \qquad \bigg\vert~ \acta \in \Enabled{\cc, (\sk, \hh)} \bigg\} \tag{Definition of \stepsymbol}\\
    =   &~ \inf \bigg\{ \sum \left\llbag (\ct(\cc') \sepcon \ff)(\sk, \hh') \mid \cc, (\sk, \hh) \optrans{\acta}{1} \cc', (\sk, \hh') \right\rrbag \\
        &~ \qquad \bigg\vert~ \acta \in \Enabled{\cc, (\sk, \hh)} \bigg\} \tag{Operational semantics of mutation}\\
    =   &~ \inf \bigg\{ \sum \left\llbag \sup \{ \ct(\cc')(\sk,\hh'_1) \cdot \ff(\sk, \hh'_2) \mid \hh'_1 \joinheap \hh'_2 = \hh' \} \mid \cc, (\sk, \hh) \optrans{\acta}{1} \cc', (\sk, \hh') \right\rrbag \\
        &~ \qquad \bigg\vert~ \acta \in \Enabled{\cc, (\sk, \hh)} \bigg\} \tag{Definition of $\sepcon$}\\
    =   &~ \inf \bigg\{ \sup \left\{ \sum \llbag \ct(\cc')(\sk,\hh'_1) \cdot \ff(\sk, \hh'_2) \mid \cc, (\sk, \hh) \optrans{\acta}{1} \cc', (\sk, \hh') \rrbag \mid \hh'_1 \joinheap \hh'_2 = \hh' \right\} \\
        &~ \qquad \bigg\vert~ \acta \in \Enabled{\cc, (\sk, \hh)} \bigg\} \tag{Sum over singleton or empty bag}\\
    \geq&~ \sup \bigg\{ \inf \left\{ \sum \llbag \ct(\cc')(\sk,\hh'_1) \cdot \ff(\sk, \hh'_2) \mid \cc, (\sk, \hh) \optrans{\acta}{1} \cc', (\sk, \hh') \rrbag \mid \acta \in \Enabled{\cc, (\sk, \hh)} \right\}  \\
        &~ \qquad \bigg\vert~ \hh'_1 \joinheap \hh'_2 = \hh' \bigg\} \tag{\Cref{eq:sup_inf_swap}}\\
    \geq&~ \sup \bigg\{ \inf \Big\{ \sum \llbag \ct(\cc')(\sk,\hh'_1) \cdot \ff(\sk, \hh_2) \mid \cc, (\sk, \hh_1 \joinheap \hh_2) \optrans{\acta}{1} \cc', (\sk, \hh'_1 \joinheap \hh_2) \rrbag\\
        &~ \qquad \qquad \Big\vert~ \acta \in \Enabled{\cc, (\sk, \hh)} \Big\} \bigg\vert~ \hh_1 \joinheap \hh_2 = \hh \bigg\} \tag{Reducing the bag for sum}\\
    \geq&~ \sup \bigg\{ \inf \Big\{ \sum \llbag \ct(\cc')(\sk,\hh'_1) \cdot \ff(\sk, \hh_2) \mid \cc, (\sk, \hh_1 \joinheap \hh_2) \optrans{\acta}{1} \cc', (\sk, \hh'_1 \joinheap \hh_2) \rrbag\\
        &~ \qquad \qquad \Big\vert~ \acta \in \Enabled{\cc, (\sk, \hh_1)} \Big\} \bigg\vert~ \hh_1 \joinheap \hh_2 = \hh \bigg\} \tag{Since either $\{\text{mutation}\} = \Enabled{\cc, (\sk, \hh_1)}, \Enabled{\cc, (\sk, \hh)}$ or $\{\text{mutation-abt}\} = \Enabled{\cc, (\sk, \hh_1)}$}\\
    \geq&~ \sup \bigg\{ \inf \Big\{ \sum \llbag \ct(\cc')(\sk,\hh'_1) \cdot \ff(\sk, \hh_2) \mid \cc, (\sk, \hh_1) \optrans{\acta}{1} \cc', (\sk, \hh'_1) \rrbag\\
        &~ \qquad \qquad \Big\vert~ \acta \in \Enabled{\cc, (\sk, \hh_1)} \Big\} \bigg\vert~ \hh_1 \joinheap \hh_2 = \hh \bigg\} \tag{Operational semantics of mutation}\\
    =   &~ \sup \bigg\{ \inf \Big\{ \sum \llbag \ct(\cc')(\sk,\hh'_1) \mid \cc, (\sk, \hh_1) \optrans{\acta}{1} \cc', (\sk, \hh'_1) \rrbag\\
        &~ \qquad \qquad \Big\vert~ \acta \in \Enabled{\cc, (\sk, \hh_1)} \Big\} \cdot \ff(\sk, \hh_2)\; \bigg\vert~ \hh_1 \joinheap \hh_2 = \hh \bigg\} \tag{Constant factors can be shifted outside}\\
    =   &~ \sup \bigg\{ \step{\cc}{\ct}(\sk, \hh_1) \sepcon \ff(\sk, \hh_2) \bigg\vert~ \hh_1 \joinheap \hh_2 = \hh \bigg\} \tag{Definition of \stepsymbol}\\
    =   &~ (\step{\cc}{\ct} \sepcon \ff)(\sk, \hh) \tag{Definition of $\sepcon$}
    \end{align*}
    \endgroup

    Now we assume as induction hypothesis that for arbitrary but fixed $\cc_1$ and $\cc_2$ the assumption already holds.

    For $\cc=\COMPOSE{\cc_1}{\cc_2}$ we have two cases, either $\cc_1=\TERM$ or $\cc_1\neq\TERM$. We only consider the case that $\cc_1\neq \TERM$, since the case $\cc_1 = \TERM$ is analogous.
    \begingroup
    \allowdisplaybreaks
    \begin{align*}
        &~ \step{\cc}{\anonfunc{\cc'} \ct(\cc') \sepcon \ff}(\sk, \hh) \\
    =   &~ \inf \bigg\{ \sum \left\llbag \pp \cdot (\ct(\cc') \sepcon \ff)(\sk', \hh') \mid \cc, (\sk, \hh) \optrans{\acta}{\pp} \cc', (\sk', \hh') \right\rrbag \\
        &~ \qquad \bigg\vert~ \acta \in \Enabled{\cc, (\sk, \hh)} \bigg\} \tag{Definition of \stepsymbol}\\
    =   &~ \inf \bigg\{ \sum \Big\llbag \pp \cdot (\ct(\COMPOSE{\cc_1'}{\cc_2}) \sepcon \ff)(\sk', \hh') \mid \cc_1, (\sk, \hh) \optrans{\acta}{\pp} \cc_1', (\sk', \hh') \Big\rrbag \\
        &~ \qquad \bigg\vert~ \acta \in \Enabled{\cc, (\sk, \hh)} \bigg\} \tag{Operational semantics of sequential composition and assumption that $\cc_1\neq\TERM$}\\
    =   &~ \step{\cc_1}{\anonfunc{\cc'} \ct(\COMPOSE{\cc'}{\cc_2}) \sepcon \ff}(\sk, \hh) \tag{Definition of \stepsymbol}\\
    \geq&~ (\step{\cc_1}{\anonfunc{\cc'} \ct(\COMPOSE{\cc'}{\cc_2})} \sepcon \ff)(\sk, \hh) \tag{Induction Hypothesis}\\
    =   &~ (\step{\cc}{\ct} \sepcon \ff)(\sk, \hh) \tag{Analogous to the previous transformation}
    \end{align*}
    \endgroup

    For $\cc=\CONCURRENT{\cc_1}{\cc_2}$ we have four cases, one for every case that we have for any $i=1$ or $i=2$ either $\cc_i\neq \TERM$ or $\cc_i=\TERM$. We first prove it for the case that $\cc_1\neq \TERM$ and $\cc_2\neq \TERM$. Other cases are analogous.
    \begingroup 
    \allowdisplaybreaks
    \begin{align*}
        &~ \step{\cc}{\anonfunc{\cc'} \ct(\cc') \sepcon \ff}(\sk, \hh) \\
    =   &~ \inf \bigg\{ \sum \left\llbag \pp \cdot (\ct(\cc') \sepcon \ff)(\sk', \hh') \mid \cc, (\sk, \hh) \optrans{\acta}{\pp} \cc', (\sk', \hh') \right\rrbag \\
        &~ \qquad \bigg\vert~ \acta \in \Enabled{\cc, (\sk, \hh)} \bigg\} \tag{Definition of \stepsymbol}\\
    =   &~ \inf \Bigg\{ \inf \bigg\{ \sum \left\llbag \pp \cdot (\ct(\CONCURRENT{\cc_1'}{\cc_2}) \sepcon \ff)(\sk', \hh') \mid \cc_1, (\sk, \hh) \optrans{\acta}{\pp} \cc_1', (\sk', \hh') \right\rrbag \\
        &~ \qquad \qquad \bigg\vert~ \acta \in \Enabled{\cc_1, (\sk, \hh)} \bigg\}, \\
        &~ \qquad \inf \bigg\{ \sum \left\llbag \pp \cdot (\ct(\CONCURRENT{\cc_1}{\cc_2'}) \sepcon \ff)(\sk', \hh') \mid \cc_2, (\sk, \hh) \optrans{\acta}{\pp} \cc_2', (\sk', \hh') \right\rrbag \\
        &~ \qquad \qquad \bigg\vert~ \acta \in \Enabled{\cc_2, (\sk, \hh)} \bigg\} \Bigg\} \tag{\Cref{eq:inf_partitioning}}\\
    =   &~ \inf \Bigg\{ \step{\cc_1}{\anonfunc{\cc'} \ct(\CONCURRENT{\cc'}{\cc_1}) \sepcon \ff}(\sk, \hh), \step{\cc_2}{\anonfunc{\cc'} \ct(\CONCURRENT{\cc_1}{\cc'}) \sepcon \ff}(\sk, \hh) \Bigg\} \tag{Definition of \stepsymbol}\\
    \geq&~ \inf \Bigg\{ (\step{\cc_1}{\anonfunc{\cc'} \ct(\CONCURRENT{\cc'}{\cc_1})} \sepcon \ff)(\sk, \hh), (\step{\cc_2}{\anonfunc{\cc'} \ct(\CONCURRENT{\cc_1}{\cc'})} \sepcon \ff)(\sk, \hh) \Bigg\} \tag{Induction Hypothesis}\\
    =   &~ \inf \Bigg\{ \sup\big\{ (\step{\cc_1}{\anonfunc{\cc'} \ct(\CONCURRENT{\cc'}{\cc_1})}(\sk,\hh_1) \cdot \ff(\sk,\hh_2) \big\vert~ \hh_1 \joinheap \hh_2 = \hh\big\},\\
        &~ \qquad~      \sup\big\{ (\step{\cc_2}{\anonfunc{\cc'} \ct(\CONCURRENT{\cc_1}{\cc'})}(\sk,\hh_1) \cdot \ff(\sk,\hh_2) \big\vert~ \hh_1 \joinheap \hh_2 = \hh\big\} \Bigg\} \tag{Definition of $\sepcon$}\\
    \geq&~ \sup \Bigg\{ \inf\big\{ (\step{\cc_1}{\anonfunc{\cc'} \ct(\CONCURRENT{\cc'}{\cc_1})}(\sk,\hh_1) \cdot \ff(\sk,\hh_2),\\
        &~ \qquad \qquad \quad               \step{\cc_2}{\anonfunc{\cc'} \ct(\CONCURRENT{\cc_1}{\cc'})}(\sk,\hh_1) \cdot \ff(\sk,\hh_2) \big\} \Bigg\vert~ \hh_1 \joinheap \hh_2 = \hh\} \Bigg\} \tag{\Cref{eq:sup_inf_swap}}\\
    =   &~ \sup \left\{ \step{\cc}{\ct}(\sk,\hh_1) \cdot \ff(\sk,\hh_2) \mid \hh_1 \joinheap \hh_2 = \hh\} \right\} \tag{Similar to the previos transformation}\\
    =   &~ (\step{\cc}{\ct} \cdot \ff)(\sk,\hh) \tag{Definition of $\sepcon$}
    \end{align*}
    \endgroup
    Thus, the claim is proven.
\end{proof}

\begin{lemma}\label{lem:multiplestep_framing}
    For a $\chpgcl$ program $\cc$ and expectations $\ff$ and $\fg$ with $\freevariable{\ff} \cap \written{\cc}=\emptyset$ we have
    \begin{equation*}
        \wlp{\cc}{\fg} \sepcon \ff \quad \leq \quad \wlp{\cc}{\fg \sepcon \ff}~.
    \end{equation*}
\end{lemma}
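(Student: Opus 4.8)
The plan is to transfer the framing inequality from $\wlpsymbol$ to the finite approximants of $\wslpsymbol$ with the empty invariant, prove it there by induction, and lift it back. By \Cref{thm:wlp-wslp-equality} the goal $\wlp{\cc}{\fg}\sepcon\ff \leq \wlp{\cc}{\fg\sepcon\ff}$ is equivalent to $\wslp{\cc}{\fg}{\emp}\sepcon\ff \leq \wslp{\cc}{\fg\sepcon\ff}{\emp}$. Writing each side as an infimum of its approximants via \Cref{lem:alternate-wslp} and using monotonicity of $\sepcon$ (\Cref{eq:sepcon_monoton}) to pull the frame inside the infimum, namely $(\inf_n a_n)\sepcon\ff \leq \inf_n (a_n\sepcon\ff)$, it suffices to prove for every $n$ and every $\cc$ with $\freevariable{\ff}\cap\written{\cc}=\emptyset$ the finite-step inequality $\wslpn{n}{\cc}{\fg}{\emp}\sepcon\ff \leq \wslpn{n}{\cc}{\fg\sepcon\ff}{\emp}$.

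I would prove this finite-step inequality by induction on $n$, using the simplification $\wslpn{n}{\cc}{\fg}{\emp}=\step{\cc}{\anonfunc{\cc'}\wslpn{n-1}{\cc'}{\fg}{\emp}}$ for $\cc\neq\TERM$ that was already derived inside the proof of \Cref{thm:wlp-wslp-equality}. The base case $n=0$ is $1\sepcon\ff\leq 1$, which holds since $\ff$ is one-bounded, and the case $\cc=\TERM$ is an equality. For $\cc\neq\TERM$ the argument is a three-link chain: apply the one-step framing \Cref{lem:onestep_framing_wo_atomicregion} with the mapping $\ct=\anonfunc{\cc'}\wslpn{n-1}{\cc'}{\fg}{\emp}$ to obtain $\step{\cc}{\ct}\sepcon\ff \leq \step{\cc}{\anonfunc{\cc'}\ct(\cc')\sepcon\ff}$; apply the induction hypothesis to each successor, i.e.\ $\wslpn{n-1}{\cc'}{\fg}{\emp}\sepcon\ff \leq \wslpn{n-1}{\cc'}{\fg\sepcon\ff}{\emp}$; and close with monotonicity of $\stepsymbol$ (\Cref{lem:monotone-of-step}). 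To invoke the hypothesis on a successor $\cc'$ I need $\freevariable{\ff}\cap\written{\cc'}=\emptyset$; this reduces to the easy structural observation that a single step never enlarges the written variables, i.e.\ $\cc,(\sk,\hh)\optrans{\acta}{\pp}\cc',(\sk',\hh')$ implies $\written{\cc'}\subseteq\written{\cc}$.

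The hard part is that \Cref{lem:onestep_framing_wo_atomicregion} is only available for atomic-region-free programs, while the present statement admits atomic regions. To accommodate them I would wrap the whole argument in an outer induction on the atomic nesting depth of $\cc$, so that the inner induction on $n$ may assume the full statement of this lemma for every program of strictly smaller depth. The one-step framing step then has to be re-proved for programs containing atomic regions; its only genuinely new case is a top-level $\cc=\ATOMIC{\cc''}$, where the single atomic action collapses the entire execution of the tame, lower-depth body $\cc''$ into one transition. Consequently the framing of that step is precisely a framing statement about $\wlpsymbol$ of $\cc''$, which the outer induction hypothesis supplies. Accounting correctly for the diverging successor (which contributes $1$ by \Cref{lem:sounddiv}) and for the aborting successor of the atomic transition is where I expect the delicate bookkeeping to lie; every non-atomic construct is inherited verbatim from the structural induction already carried out for \Cref{lem:onestep_framing_wo_atomicregion}.

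Finally, with the finite-step inequality available for all $\cc$, taking the infimum over $n$ gives $(\inf_n \wslpn{n}{\cc}{\fg}{\emp})\sepcon\ff \leq \inf_n(\wslpn{n}{\cc}{\fg}{\emp}\sepcon\ff) \leq \inf_n \wslpn{n}{\cc}{\fg\sepcon\ff}{\emp}$, and re-reading the two outer terms through \Cref{lem:alternate-wslp} and \Cref{thm:wlp-wslp-equality} yields the claim.
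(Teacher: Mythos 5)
Your proposal is correct and follows essentially the same route as the paper's proof: reduction to $\wslpsymbol$ with the empty invariant via \Cref{thm:wlp-wslp-equality}, a finite-step inequality proved by an inner induction on $n$ nested inside an outer induction on atomic-region nesting depth, with \Cref{lem:onestep_framing_wo_atomicregion} handling the atomic-free one-step case and the outer hypothesis (as a $\wlpsymbol$-framing statement about the tame body, together with \Cref{lem:sounddiv} for the diverging branch) handling atomic regions. Your explicit remark that a transition never enlarges $\written{\cc}$ is a detail the paper leaves implicit, but it does not change the argument.
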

\begin{proof}
    We have $\wlp{\cc}{\fg} = \wslp{\cc}{\fg}{\emp}$ by \Cref{thm:wlp-wslp-equality}. Furthermore, if for all $n$ we have $\wslpn{n}{\cc}{\fg}{\emp}  \sepcon \ff \leq \wslpn{n}{\cc}{\ff \sepcon \fg}{\emp}$ we also have that $\wslp{\cc}{\fg}{\emp} \sepcon \ff \leq \wslp{\cc}{\fg \sepcon \ff}{\emp}$. We prove the claim now by a nested induction. The first induction is on the level of atomic regions.

    We now assume that $\cc$ does not have any atomic regions. The second induction is on $n$, For the base case $n=0$, we prove now first that $\wslpn{0}{\cc}{\fg}{\emp} \sepcon \ff = 1 \sepcon \ff \leq 1 = \wslpn{0}{\cc}{\fg \sepcon \ff}{\emp}$. Next, we assume that the claim holds for a fixed but arbitrary $n$ as our induction hypothesis $\dagger$. For our induction step, we have that either $\cc=\TERM$ in which case $\wslpn{n+1}{\TERM}{\fg \sepcon \ff}{\emp} = \fg \sepcon \ff = \wslpn{n+1}{\TERM}{\fg}{\emp} \sepcon \ff$. For the other case that $\cc\neq \TERM$ we have:
    \begingroup 
    \allowdisplaybreaks
    \begin{align*}
        &~ \wslpn{n+1}{\cc}{\fg \sepcon \ff}{\emp} \\
    =   &~ \step{\cc}{\anonfunc{\cc'} \wslpn{n}{\cc'}{\fg \sepcon \ff}{\emp}} \tag{$\emp$ is neutral element} \\
    \geq&~ \step{\cc}{\anonfunc{\cc'} \wslpn{n}{\cc'}{\fg}{\emp} \sepcon \ff} \tag{Induction Hypothesis $\dagger$ and monotonicity of \stepsymbol}\\
    \geq&~ \step{\cc}{\anonfunc{\cc'} \wslpn{n}{\cc'}{\fg}{\emp}} \sepcon \ff \tag{\Cref{lem:onestep_framing_wo_atomicregion}} \\
    =   &~ \wslpn{n+1}{\cc}{\fg}{\emp} \sepcon \ff \tag{$\emp$ is neutral element}
    \end{align*}
    \endgroup

    Next we assume that the claim holds for an arbitrary but fixed level of atomic regions in the program $\cc'$ as our induction hypothesis $\spadesuit$.

    We again prove the induction step by an induction over $n$. For the induction basis we have $\wslpn{0}{\cc}{\fg}{\emp} \sepcon \ff = 1 \sepcon \ff \leq 1 = \wslpn{0}{\cc}{\fg \sepcon \ff}{\emp}$. Thus, we assume that the claim holds for a fixed but arbitrary $n$ as our induction hypothesis $\dagger\dagger$. For the induction step we have two cases. If $\cc=\TERM$ we have $\wslpn{n+1}{\TERM}{\fg \sepcon \ff}{\emp} = \fg \sepcon \ff = \wslpn{n+1}{\TERM}{\fg}{\emp} \sepcon \ff$ and if $\cc\neq \TERM$ we have:
    \begin{align*}
        &~ \wslpn{n+1}{\cc}{\fg \sepcon \ff}{\emp}\\
    =   &~ \step{\cc}{\anonfunc{\cc'} \wslpn{n}{\cc'}{\fg \sepcon}{\emp}} \tag{$\emp$ is neutral element}\\
    \geq&~ \step{\cc}{\anonfunc{\cc'} \wslpn{n}{\cc'}{\fg}{\emp}} \sepcon \ff \tag{Left to prove} \\
    =   &~ \wslpn{n+1}{\cc}{\fg}{\emp} \sepcon \ff \tag{$\emp$ is neutral element}
    \end{align*}
    We require a similar proof as in \Cref{lem:onestep_framing_wo_atomicregion} together with our induction hypothesis $\dagger\dagger$ to prove the left to prove part. There we only miss the case for atomic regions. Thus, we assume $\cc=\ATOMIC{\cc'}$ and prove first for $n=0$ that: 
    \begingroup
    \allowdisplaybreaks
    \begin{align*}
        &~ \step{\cc}{\anonfunc{\cc''} \wslpn{0}{\cc''}{\fg\sepcon \ff}{\emp}} \\
    =   &~ \step{\ATOMIC{\cc'}}{\anonfunc{\cc''} \wslpn{0}{\cc''}{\fg\sepcon \ff}{\emp} } \tag{Case assumption} \\
    =   &~ \inf \bigg\{ \sum \Big\llbag \pp \cdot \wslpn{0}{\cc''}{\fg\sepcon \ff}{\emp}(\sk', \hh') \\
        &~ \qquad \qquad \Big\vert~ \ATOMIC{\cc'}, (\sk, \hh) \optrans{\acta}{\pp} \cc'', (\sk', \hh') \Big\rrbag \bigg\vert~ \acta \in \Enabled{\cc, (\sk, \hh)} \bigg\} \tag{Definition of \stepsymbol}\\
    =   &~ \sum \left\llbag \pp \cdot \wslpn{0}{\cc''}{\fg \sepcon \ff}{\emp}(\sk', \hh') \mid \ATOMIC{\cc'}, (\sk, \hh) \optrans{\text{atomic}}{\pp} \cc'', (\sk', \hh') \right\rrbag \tag{Operational semantics of atomic regions}\\
    =   &\quad \sum \left\llbag \pp \cdot \wslpn{0}{\TERM}{\fg\sepcon \ff}{\emp}(\sk', \hh') \mid \ATOMIC{\cc'}, (\sk, \hh) \optrans{\text{atomic}}{\pp} \TERM, (\sk', \hh') \right\rrbag \\
        &+ \sum \Big\llbag \pp \cdot \wslpn{0}{\DIVERGE}{\fg\sepcon \ff}{\emp}(\sk', \hh') \\
        &~ \qquad \quad \Big\vert~ \ATOMIC{\cc'}, (\sk, \hh) \optrans{\text{atomic}}{\pp} \DIVERGE, (\sk, \hh) \Big\rrbag \tag{Separating terminating and diverging pathes}\\
    =   &\quad \sum \left\llbag \pp \cdot \wslpn{0}{\TERM}{\fg\sepcon \ff}{\emp}(\sk', \hh') \mid \ATOMIC{\cc'}, (\sk, \hh) \optrans{\text{atomic}}{\pp} \TERM, (\sk', \hh') \right\rrbag \\
        &+ \sum \Big\llbag \pp \cdot 1 \Big\vert~ \ATOMIC{\cc'}, (\sk, \hh) \optrans{\text{atomic}}{\pp} \DIVERGE, (\sk, \hh) \Big\rrbag \tag{$\wslpsymbol_0$ is always $1$}\\
    =   &\quad \sum \left\llbag \pp \cdot 1 \mid \ATOMIC{\cc'}, (\sk, \hh) \optrans{\text{atomic}}{\pp} \TERM, (\sk', \hh') \right\rrbag \\
        &+ \sum \Big\llbag \pp \cdot 1 \Big\vert~ \ATOMIC{\cc'}, (\sk, \hh) \optrans{\text{atomic}}{\pp} \DIVERGE, (\sk, \hh) \Big\rrbag \tag{$\wslpsymbol_0$ is always $1$}\\
    =   &\quad \sum \left\llbag \pp \cdot 1 \mid \cc', (\sk, \hh) \optransStar{}{\pp} \TERM, (\sk', \hh') \right\rrbag \\
        &+ \sum \Big\llbag \pp \cdot 1 \Big\vert~ \cc', (\sk, \hh) \optransStar{}{\pp} \dots \Big\} \tag{Operational semantics of atomic regions}\\
    =   &~ \wlp{\cc'}{1} \tag{Definition of \wlpsymbol}\\
    \geq&~ \wlp{\cc'}{1 \sepcon \ff} \tag{Monotonicitiy of \wlpsymbol} \\
    \geq&~ \wlp{\cc'}{1}  \sepcon \ff  \tag{$\spadesuit$} \\
    =   &~ \step{\cc}{\anonfunc{\cc'} \wslpn{0}{\cc'}{\fg}{\emp}} \sepcon \ff \tag{Similar to above}
    \end{align*}
    \endgroup
    and next for $n>0$ that:
    \begingroup
    \allowdisplaybreaks
    \begin{align*}
        &~ \step{\cc}{\anonfunc{\cc''} \wslpn{n}{\cc''}{\fg\sepcon \ff}{\emp}} \\
    =   &~ \step{\ATOMIC{\cc'}}{\anonfunc{\cc''} \wslpn{n}{\cc''}{\fg\sepcon \ff}{\emp} } \tag{Case assumption} \\
    =   &~ \inf \bigg\{ \sum \Big\llbag \pp \cdot \wslpn{n}{\cc''}{\fg\sepcon \ff}{\emp}(\sk', \hh') \\
        &~ \qquad \qquad \Big\vert~ \ATOMIC{\cc'}, (\sk, \hh) \optrans{\acta}{\pp} \cc'', (\sk', \hh') \Big\rrbag \bigg\vert~ \acta \in \Enabled{\cc, (\sk, \hh)} \bigg\} \tag{Definition of \stepsymbol}\\
    =   &~ \sum \left\llbag \pp \cdot \wslpn{n}{\cc''}{\fg \sepcon \ff}{\emp}(\sk', \hh') \mid \ATOMIC{\cc'}, (\sk, \hh) \optrans{\text{atomic}}{\pp} \cc'', (\sk', \hh') \right\rrbag \tag{Operational semantics of atomic regions}\\
    =   &\quad \sum \left\llbag \pp \cdot \wslpn{n}{\TERM}{\fg\sepcon \ff}{\emp}(\sk', \hh') \mid \ATOMIC{\cc'}, (\sk, \hh) \optrans{\text{atomic}}{\pp} \TERM, (\sk', \hh') \right\rrbag \\
        &+ \sum \Big\llbag \pp \cdot \wslpn{n}{\DIVERGE}{\fg\sepcon \ff}{\emp}(\sk', \hh') \\
        &~ \qquad \quad \Big\vert~ \ATOMIC{\cc'}, (\sk, \hh) \optrans{\text{atomic}}{\pp} \DIVERGE, (\sk, \hh) \Big\rrbag \tag{Separating terminating and diverging pathes}\\
    =   &\quad \sum \left\llbag \pp \cdot \wslpn{n}{\TERM}{\fg\sepcon \ff}{\emp}(\sk', \hh') \mid \ATOMIC{\cc'}, (\sk, \hh) \optrans{\text{atomic}}{\pp} \TERM, (\sk', \hh') \right\rrbag \\
        &+ \sum \Big\llbag \pp \cdot 1 \Big\vert~ \ATOMIC{\cc'}, (\sk, \hh) \optrans{\text{atomic}}{\pp} \DIVERGE, (\sk, \hh) \Big\rrbag \tag{\Cref{lem:sounddiv}}\\
    =   &\quad \sum \left\llbag \pp \cdot (\fg\sepcon \ff)(\sk', \hh') \mid \ATOMIC{\cc'}, (\sk, \hh) \optrans{\text{atomic}}{\pp} \TERM, (\sk', \hh') \right\rrbag \\
        &+ \sum \Big\llbag \pp \cdot 1 \Big\vert~ \ATOMIC{\cc'}, (\sk, \hh) \optrans{\text{atomic}}{\pp} \DIVERGE, (\sk, \hh) \Big\rrbag \tag{$\wslpsymbol_n$ of $\TERM$ is always the postexpectation for $n>0$}\\
    =   &\quad \sum \left\llbag \pp \cdot (\fg\sepcon \ff)(\sk', \hh') \mid \cc', (\sk, \hh) \optransStar{}{\pp} \TERM, (\sk', \hh') \right\rrbag \\
        &+ \sum \Big\llbag \pp \cdot 1 \Big\vert~ \cc', (\sk, \hh) \optransStar{}{\pp} \dots \Big\rrbag \tag{Operational semantics of atomic regions}\\
    =   &~ \wlp{\cc'}{\fg \sepcon \ff} \tag{Definition of \wlpsymbol}\\
    \geq&~ \wlp{\cc'}{\fg}  \sepcon \ff  \tag{$\spadesuit$} \\
    =   &~ \step{\cc}{\anonfunc{\cc'} \wslpn{n}{\cc'}{\fg}{\emp}} \sepcon \ff \tag{Similar to above}
    \end{align*}
    \endgroup
    Thus, the claim is proven.
\end{proof}

\begin{theorem}\label{thm:step_framing}
    For an $\chpgcl$ program $\cc$, a mapping $\ct\colon \chpgcl \to \Eone$ and an expectation $\ff$ with $\freevariable{\ff} \cap \written{\cc}=\emptyset$ we have
    \begin{equation*}
        \step{\cc}{\ct} \sepcon \ff \quad \leq \quad \step{\cc}{\anonfunc{\cc'} \ct(\cc') \sepcon \ff}~.
    \end{equation*}
\end{theorem}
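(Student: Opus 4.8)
The plan is to prove the statement by structural induction on the program $\cc$, mirroring the argument of \Cref{lem:onestep_framing_wo_atomicregion} but now permitting atomic regions. For every syntactic form not involving an atomic region -- the terminated program, $\DIVERGE$, assignment, probabilistic choice, conditional choice, loops, allocation, disposal, lookup and mutation, together with the compound cases $\COMPOSE{\cc_1}{\cc_2}$ and $\CONCURRENT{\cc_1}{\cc_2}$ -- the calculations are verbatim those of \Cref{lem:onestep_framing_wo_atomicregion}, the only change being that the induction hypothesis invoked for the subprograms of sequential composition and concurrency is now the present theorem, so that those subprograms may themselves contain atomic regions. Since $\step{\ATOMIC{\cc'}}{\cdot}$ takes a single step that fully executes $\cc'$, an atomic region is effectively a leaf of the induction: the step transitions directly to $\TERM$, $\ABORT$ or $\DIVERGE$ and never unfolds the structure of $\cc'$ at the \chpgcl level. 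Hence the only genuinely new case is $\cc=\ATOMIC{\cc'}$.

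For the atomic case I would first use the operational semantics of atomic regions (rules \textsc{atom-end}, \textsc{atom-abt}, \textsc{atom-loop}) to expand $\step{\ATOMIC{\cc'}}{\ct}$. As \enquote{atomic} is the only enabled action, the defining infimum collapses and the bag splits into three kinds of transitions: terminating ones $\cc',(\sk,\hh)\optransStar{}{p}\TERM,(\sk',\hh')$ contributing $p\cdot\ct(\TERM)(\sk',\hh')$, aborting ones (which contribute nothing, as $\ABORT$ carries no state and thus does not match the pattern $\cc'',(\sk',\hh')$), and the single diverging transition to $\DIVERGE,(\sk,\hh)$ contributing $\pp_{div}\cdot\ct(\DIVERGE)(\sk,\hh)$. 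I would then fix an arbitrary split $\hh=\hh_1\joinheap\hh_2$ and bound $\step{\ATOMIC{\cc'}}{\ct}(\sk,\hh_1)\cdot\ff(\sk,\hh_2)$ from above by $\step{\ATOMIC{\cc'}}{\anonfunc{\cc''}\ct(\cc'')\sepcon\ff}(\sk,\hh)$, so that taking the supremum over all splits yields the claim by definition of $\sepcon$. For the terminating contribution I would use \Cref{lem:multiplestep_stackequal,lem:expectation_stackequal} (together with $\freevariable{\ff}\cap\written{\cc'}=\emptyset$) to replace $\ff(\sk,\hh_2)$ by $\ff(\sk',\hh_2)$, bound $\ct(\TERM)(\sk',\hh'_1)\cdot\ff(\sk',\hh_2)$ by $(\ct(\TERM)\sepcon\ff)(\sk',\hh'_1\joinheap\hh_2)$, and invoke framing of the inner tame program (each terminating run on $\hh_1$ lifts to a run on $\hh_1\joinheap\hh_2$ of the same probability, carrying $\hh_2$ unchanged) to recognise the result as part of the terminating contribution of $\step{\ATOMIC{\cc'}}{\anonfunc{\cc''}\ct(\cc'')\sepcon\ff}(\sk,\hh)$. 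For the diverging contribution I would bound $\ct(\DIVERGE)(\sk,\hh_1)\cdot\ff(\sk,\hh_2)$ by $(\ct(\DIVERGE)\sepcon\ff)(\sk,\hh)$ via the definition of $\sepcon$, and use that extending the heap cannot decrease the non-termination mass of $\cc'$.

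The main obstacle is precisely this divergence contribution for a \emph{general} mapping $\ct$. The related \Cref{lem:multiplestep_framing} avoids it, because there $\ct(\DIVERGE)=\wslpn{n}{\DIVERGE}{\cdot}{\emp}=1$ (\Cref{lem:sounddiv}), so the atomic step reduces exactly to $\wlp{\cc'}{\cdot}$ and framing for $\wlp$ can be quoted directly. Here $\ct(\DIVERGE)$ need not equal $1$, so the step coincides with no $\wlp$ of $\cc'$ and I cannot appeal to \Cref{lem:multiplestep_framing} as a black box for the full step; instead the terminating and diverging parts must be handled separately at the level of $\reach$. This forces me to establish the framing of the inner tame program directly: that terminating runs lift along $\hh_2$ with unchanged probability (so their framed images lie among the runs on $\hh_1\joinheap\hh_2$), and that the divergence probability is non-decreasing under heap extension (diverging runs stay diverging, while runs that would abort on $\hh_1$ may only become terminating or diverging on $\hh_1\joinheap\hh_2$). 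Both facts are multi-step consequences of framing enabledness (\Cref{def:framing-enabled}) lifted through \Cref{lem:smallstep}, and assembling them without double counting the extra terminating runs created by avoided aborts is the delicate point of the argument.
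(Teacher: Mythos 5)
Your proposal is correct and its skeleton coincides with the paper's: the paper's entire proof of this theorem reads that it ``follows from \Cref{lem:onestep_framing_wo_atomicregion,lem:multiplestep_framing} by structural induction on $\cc$'', so every non-atomic case is discharged exactly as you describe. The genuine difference, and the valuable part of your write-up, is the atomic case. The paper evidently intends to identify $\step{\ATOMIC{\cc'}}{\ct}$ with a $\wlpsymbol$ of the tame body $\cc'$ and then quote \Cref{lem:multiplestep_framing}; as you point out, that identification requires $\ct(\DIVERGE)=1$, since otherwise the diverging contribution $\pp_{div}\cdot\ct(\DIVERGE)(\sk,\hh)$ carries a weight other than $1$ and the atomic step is not a $\wlpsymbol$ of $\cc'$ at all. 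The theorem is, however, applied with $\ct(\DIVERGE)\neq 1$: in \Cref{thm:chpgcl-framing-enabled} with a constant indicator $\ct=\anonfunc{\cc'}\sla$, and even in \Cref{lem:soundconcur}, where $\ct(\DIVERGE)=1\sepcon\ri=\ri$. So your direct treatment of the atomic case --- splitting the step into terminating and diverging mass, lifting terminating runs along the frame with unchanged probability, and using that the divergence probability cannot decrease when the heap is extended --- is exactly what is needed for the stated generality, and it is more than the paper's one-line citation actually delivers. Two remarks. First, framing enabledness as defined in the paper (\Cref{thm:chpgcl-framing-enabled}) covers only non-probabilistic programs, while the body of an atomic region may contain probabilistic choice; you need its routine probabilistic analogue, namely that every transition of a tame, allocation-free program lifts to the extended heap with identical probability, so that the whole probability tree over $\hh_1$ embeds into the one over $\hh_1\joinheap\hh_2$ and the two differ only below former abort leaves. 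Second, the double-counting you flag as delicate is in fact harmless: you are proving an inequality and bound the terminating and the diverging contributions separately, so the additional terminating or diverging mass recovered from avoided aborts only enlarges the right-hand side.
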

\begin{proof}
    Follows from \Cref{lem:onestep_framing_wo_atomicregion,lem:multiplestep_framing} by structural induction on the program $\cc$.
\end{proof}

\section{Proofs for Conservativity Properties}
\begin{lemma}[Conservativeness of qualitative $\wslpsymbol$]\label{lem:01-qualitative-wslp}
    For qualitative $\ff$ and non-probabilistic program $\cc$ we have
    \begin{equation*}
        \wslpn{n}{\cc}{\ff}{\ri} \in \{0,1\}~.
    \end{equation*}
\end{lemma}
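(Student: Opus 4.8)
The plan is to prove the statement by induction on $n$, reading off the recursive definition of $\wslpsymbol_n$ and exploiting two closure properties: that the separating multiplication and the guarded magic wand both preserve qualitativeness, and that a non-probabilistic program fires only probability-one transitions. First I would dispatch the base case $n=0$, where $\wslpn{0}{\cc}{\ff}{\ri}=1\in\{0,1\}$ by definition. For the induction step I assume the claim for $n$ (for every non-probabilistic program, with the same $\ff$ and $\ri$) and split along the defining cases. If $\cc=\TERM$, then $\wslpn{n+1}{\TERM}{\ff}{\ri}=\ff$, which lies in $\{0,1\}$ because $\ff$ is qualitative. If $\cc\neq\TERM$, then
\[
    \wslpn{n+1}{\cc}{\ff}{\ri}=\ri \sepimp \step{\cc}{\anonfunc{\cc'} \wslpn{n}{\cc'}{\ff}{\ri} \sepcon \ri},
\]
and I would argue that qualitativeness propagates outward through each layer of this expression.

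I expect three auxiliary observations to carry the step. First, a routine inspection of the operational rules shows that every one-step successor $\cc'$ of a non-probabilistic program is again non-probabilistic, so the induction hypothesis applies and yields $\wslpn{n}{\cc'}{\ff}{\ri}\in\{0,1\}$. Second, the separating multiplication of two qualitative expectations is again qualitative: by its definition $(\sla\sepcon\slb)(\sk,\hh)$ is a supremum, over a nonempty set, of products of $\{0,1\}$-valued factors, hence itself lies in $\{0,1\}$; applying this with $\sla=\wslpn{n}{\cc'}{\ff}{\ri}$ and $\slb=\ri$ shows that $\anonfunc{\cc'} \wslpn{n}{\cc'}{\ff}{\ri}\sepcon \ri$ maps every program to a qualitative expectation. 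Third, since $\cc$ is non-probabilistic, each enabled action has a single successor reached with probability $1$, so the inner bag sum collapses to a single summand and
\[
    \step{\cc}{t}(\sk,\hh)=\inf\left\{ t(\cc')(\sk',\hh') \,\middle|\, \cc,(\sk,\hh)\optrans{\acta}{1}\cc',(\sk',\hh'),\ \acta\in\Enabled{\cc,(\sk,\hh)} \right\},
\]
an infimum of $\{0,1\}$-values (equal to $1$ when the set is empty), hence qualitative for the qualitative-valued $t$ just obtained.

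Finally I would close the outermost layer: the guarded magic wand $\ri\sepimp\fg$ with qualitative $\fg$ is again qualitative, because by definition it is an infimum of $\{0,1\}$-values of $\fg$, taking the value $1$ precisely when the defining set is empty. Combining the three observations gives $\wslpn{n+1}{\cc}{\ff}{\ri}\in\{0,1\}$, completing the induction. The hard part will not be the bookkeeping but pinning down exactly why $\stepsymbol$ reduces to a bare infimum: this rests entirely on the non-probabilistic hypothesis forcing probability-one transitions, and it is the one spot where the program structure must be inspected (to confirm that successors remain non-probabilistic); everything else is a direct appeal to the $\{0,1\}$-closure of $\sepcon$ and $\sepimp$ together with the induction hypothesis.
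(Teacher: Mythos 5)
Your proof is correct and takes essentially the same route as the paper's, which gives only a terse sketch of exactly this argument: for non-probabilistic programs all transition probabilities are $1$, the bag sums inside $\stepsymbol$ collapse to singletons or empty bags, and the rest is a straightforward induction on $n$ using the $\{0,1\}$-closure of $\sepcon$ and $\sepimp$. One small inaccuracy: your displayed simplification of $\step{\cc}{t}$ silently drops actions $\acta$ with $\cc,(\sk,\hh)\optrans{\acta}{1}\ABORT$, whereas such an action contributes the value $0$ to the infimum (its bag of non-final successors is empty and the empty sum is $0$), so the identity as written is not exact --- this does not affect membership in $\{0,1\}$ and the lemma still follows, but the paper's version of this simplification, reused in the conservativity proof, correctly retains those $0$ terms.
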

\begin{proof}
    The factor $\pp$ in $\stepsymbol$ is always $1$ due to the absence of probabilistic constructs in $\cc$. The countable sum in $\stepsymbol$ inflates to a sum of singleton or to the empty set. The sum of an empty set is $0$, the sum of a singleton is the value of its only element. From this point, the rest is a straight forward induction on $n$.
\end{proof}

\begin{definition}[Framing Enabled Programs]\label{def:framing-enabled}
    A non-probabilistic program $\cc$ is framing enabled if for their respective $\MDP$ semantics we have for all states $\cc, (\sk, \hh)$, all heaps $\hh_F$ with $\hh \disjoint \hh_F$ and all enabled actions $\acta \in \Enabled{\cc, (\sk, \hh \joinheap \hh_F)}$ that if there exists $\cc, (\sk, \hh) \optrans{\acta}{1} \cc', (\sk', \hh')$ then also $\cc, (\sk, \hh \joinheap \hh_F) \optrans{\acta}{1} \cc', (\sk', \hh' \joinheap \hh_F)$.
\end{definition}

\begin{theorem}\label{thm:chpgcl-framing-enabled}
    Every non-probabilistic $\chpgcl$ program $\cc$ is framing enabled.
\end{theorem}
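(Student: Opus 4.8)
The plan is to proceed by structural induction on $\cc$, reading off the operational rules of \Cref{fig:op-commands,fig:op-flow,fig:op-con}. The guiding observation is that framing enabledness is a \emph{single-step} property: for every syntactic form it suffices to inspect the rule(s) that can fire in one step, and the induction hypothesis is only needed to relate a step of a compound program to a step of its active subprogram. Throughout, the quantifier of \Cref{def:framing-enabled} ranges over actions $\acta$ enabled on the \emph{larger} heap $\hh \joinheap \hh_F$ together with a step of the small heap labelled $\acta$, so whenever the two heaps would disagree on which rule fires, the premise ``if there exists $\cc, (\sk, \hh) \optrans{\acta}{1} \cc', (\sk', \hh')$'' is simply false and the statement holds vacuously.

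For the basic commands I argue directly. For $\ASSIGN{\xx}{\ee}$, $\ITE{\guard}{\cc_1}{\cc_2}$, $\WHILEDO{\guard}{\cc_1}$ and $\DIVERGE$ the heap is untouched ($\hh' = \hh$), so the same action fires on $\hh \joinheap \hh_F$ and yields $\hh' \joinheap \hh_F = \hh \joinheap \hh_F$. For $\ASSIGNH{\xx}{\ee}$, $\HASSIGN{\ee}{\ee'}$ and $\FREE{\ee}$ the non-abort rules require $\ee(\sk) \in \dom{\hh}$; since $\dom{\hh} \subseteq \dom{\hh \joinheap \hh_F}$ and $\hh \disjoint \hh_F$, the action stays enabled on the larger heap, reads the same value, and the updated or shrunk heap equals the small-heap result joined with $\hh_F$ (disjointness guarantees $\hh_F$ is never read or overwritten). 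The only other possibility, $\ee(\sk) \in \dom{\hh_F} \setminus \dom{\hh}$, is exactly the vacuous situation above: the large heap fires the non-abort action while the small heap fires the abort action, so no common $\acta$ exists. For $\ALLOC{\xx}{\ee_0,\dots,\ee_n}$, an action $\text{alloc-}\loca$ enabled on $\hh \joinheap \hh_F$ forces $\loca+0,\dots,\loca+n \notin \dom{\hh \joinheap \hh_F}$, hence these locations are a fortiori fresh for $\hh$ and disjoint from $\hh_F$; the resulting heaps then satisfy $(\hh \joinheap \hh_F) \joinheap B = (\hh \joinheap B) \joinheap \hh_F$ for the allocated block $B = \{\loca+0 \mapsto v_0\} \joinheap \dots \joinheap \{\loca+n \mapsto v_n\}$.

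For the compound control-flow forms I reduce to the induction hypothesis. A step of $\COMPOSE{\cc_1}{\cc_2}$ is either a step of $\cc_1$ (rule SEQ) or, when $\cc_1 = \TERM$, a step of $\cc_2$ (rule SEQ-END); applying the hypothesis to the active subprogram and re-applying the same rule lifts the transition to $\hh \joinheap \hh_F$. Likewise a step of $\CONCURRENT{\cc_1}{\cc_2}$ labelled $C1,\acta$ or $C2,\acta$ comes from a step of $\cc_1$ or $\cc_2$ (rules CON-L/CON-R), and $\CONCURRENT{\TERM}{\TERM}$ leaves the heap unchanged. The abort rules, as well as every transition whose target is $\ABORT$, carry no state pair $(\sk',\hh')$ and therefore lie outside the scope of the statement.

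The main obstacle is the atomic region $\ATOMIC{\cc'}$, whose single operational step (rules ATOM-END, ATOM-LOOP) summarises the \emph{entire} execution of the tame program $\cc'$. To discharge it I first establish a \emph{multi-step} frame property for tame non-probabilistic programs: if $\cc', (\sk, \hh) \optransN{n}{\scheduler}{1} \TERM, (\sk', \hh')$ then $\cc', (\sk, \hh \joinheap \hh_F) \optransN{n}{\scheduler}{1} \TERM, (\sk', \hh' \joinheap \hh_F)$, and if $\cc'$ diverges with probability one on $\hh$ it does so on $\hh \joinheap \hh_F$. Since a tame non-probabilistic program induces a deterministic (unique-scheduler) Markov chain, this is an induction on the step count $n$: the first transition frames by the single-step framing enabledness of the current configuration, and \Cref{lem:smallstep} peels off that first step so that the induction hypothesis applies to the remainder; \Cref{lem:ntostar} then transfers the bounded-$n$ statement to the reachability relations $\optransStar{}{}$ used in ATOM-END and ATOM-LOOP. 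To avoid the circularity created by loop unrolling (which reproduces a loop as a subterm) and by nested atomic blocks, I organise the whole argument as a nested induction exactly as in \Cref{lem:multiplestep_stackequal}: an outer induction on the nesting depth of atomic regions, and inside it the structural and step-count inductions above. With the multi-step frame property in hand the two atomic cases close immediately — ATOM-END produces heap $\hh' \joinheap \hh_F$ and ATOM-LOOP leaves the heap unchanged — whereas ATOM-ABT targets $\ABORT$ and is again vacuous. This atomic step, together with the compositional reasoning that threads framing through configurations reachable inside $\cc'$, is the part I expect to require the most care.
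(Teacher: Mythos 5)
Your proof is correct, but it takes a genuinely different route from the paper's. The paper does not re-derive the single-step cases operationally: apart from allocation (done directly) and the two compound cases (done via the structural induction hypothesis, as you do), it handles every unary-enabled construct --- including the atomic region --- uniformly by instantiating the already-proven quantitative step-framing result (\Cref{thm:step_framing}, i.e.\ $\step{\cc}{\ct} \sepcon \ff \leq \step{\cc}{\anonfunc{\cc'}\ct(\cc')\sepcon\ff}$) with indicator expectations $\sla$ (pinning down the exact post-state) and $\slb$ (pinning down the frame $\hh_F$), and then reading the desired transition off the resulting $\{0,1\}$-valued inequality via \Cref{lem:01-qualitative-wslp}. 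The multi-step content of the atomic case is thereby inherited from \Cref{lem:multiplestep_framing}, which already performs the nested induction on atomic nesting depth and uses \Cref{thm:wlp-wslp-equality}, so the theorem's own proof stays short at the price of being indirect. Your argument instead stays entirely at the operational level: a direct case analysis on the inference rules for the basic commands, plus a bespoke multi-step frame lemma for tame bodies organised exactly like \Cref{lem:multiplestep_stackequal}. This is more elementary and self-contained --- no detour through expectations --- but it duplicates, in qualitative form, work the paper has already invested in \Cref{thm:step_framing}. The one step you should make explicit when writing it out is the divergence half of your multi-step lemma (needed for ATOM-LOOP): preserving an infinite run under framing requires not just the implication in the definition of framing enabledness but the converse observation that every \emph{successful} small-heap action remains the unique enabled action on the enlarged heap; your basic-command analysis does establish this, but it is not literally framing enabledness, so it deserves its own clause in the induction.
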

\begin{proof}
    We prove this by induction over the structure of $\cc$.

    For the terminated program, non-terminating program, assignment, sequential composition, atomic region, conditional choice, loop, disposal, lookup and mutation we have $|\Enabled{\cc, (\sk, \hh \joinheap \hh_F)}|\leq 1$. 
    If $\Enabled{\cc, (\sk, \hh \joinheap \hh_F)}=\emptyset$ the claim holds immediately. 
    Thus we assume that $\Enabled{\cc, (\sk, \hh \joinheap \hh_F)}=\{ \acta \}$. Now we furthermore assume that $\cc, (\sk, \hh) \optrans{\acta}{1} \cc', (\sk, \hh')$ to show that the framing enabled condition holds. Then we also have for all these programs that $\Enabled{\cc, (\sk, \hh)}=\{ \acta \}$.
    Now let $\sla(\sk'',\hh'')=1$ iff $\sk''=\sk'$, $\hh''=\hh'$ and $\slb(\sk'',\hh'')=1$ iff $\hh''=\hh_{F}$. If $\step{\cc}{\anonfunc{\cc'} \sla}(\sk,\hh) = 0$ then the premise that $\cc, (\sk, \hh) \optrans{\acta}{1} \cc', (\sk, \hh')$ is not satisfied. Thus we have that $\step{\cc}{\anonfunc{\cc'} \sla}(\sk,\hh) = 1$ due to \Cref{lem:01-qualitative-wslp}. Then we also have that $1=(\step{\cc}{\anonfunc{\cc'} \sla } \sepcon \slb)(\sk,\hh \joinheap \hh_F) \leq \step{\cc}{\anonfunc{\cc'} \sla \sepcon \slb}(\sk,\hh \joinheap \hh_F)$ due to \Cref{thm:step_framing}. Then
    \begin{align*}
                    &~ \step{\cc}{\anonfunc{\cc'} \sla \sepcon \slb}(\sk,\hh \joinheap \hh_F) = 1 \\
    \text{implies}  &~ \inf \bigg\{ \sum \left\llbag (\sla \sepcon \slb)(\sk'', \hh'') \mid \cc, (\sk, \hh \joinheap \hh_F) \optrans{\acta}{1} \cc', (\sk'', \hh'') \right\rrbag \\
                    &~ \qquad \bigg\vert~ \acta \in \Enabled{\cc, (\sk, \hh \joinheap \hh_F)} \bigg\} = 1 \tag{Definition of qualitative \stepsymbol} \\
    \text{implies}  &~ \inf \bigg\{ \sum \left\llbag \sla(\sk', \hh') \cdot \slb(\sk', \hh_F) \mid \cc, (\sk, \hh \joinheap \hh_F) \optrans{\acta}{1} \cc', (\sk', \hh' \joinheap \hh_F) \right\rrbag \\
                    &~ \qquad \bigg\vert~ \acta \in \Enabled{\cc, (\sk, \hh \joinheap \hh_F)} \bigg\} = 1 \tag{Definition of qualitative $\sepcon$, $\sla$ and $\slb$} \\
    \end{align*}
    which proves the claim, as the last equation can only be true if for (the only) action $\acta \in \Enabled{\cc, (\sk, \hh \joinheap \hh_F)}$ we have $\cc, (\sk, \hh \joinheap \hh_F) \optrans{\acta}{1} \cc', (\sk', \hh' \joinheap \hh_F)$.

    For allocation and for all $\acta \in \Enabled{\ALLOC{\xx}{\ee_1 \dots \ee_n}, (\sk, \hh \joinheap \hh_F)}$, where we have that $\ALLOC{\xx}{\ee_1 \dots \ee_n}, (\sk, \hh) \optrans{\acta}{1} \TERM, (\sk, \hh \joinheap \hh_a)$, we have since the action is enabled for $\hh \joinheap \hh_F$ also that $\ALLOC{\xx}{\ee_1 \dots \ee_n}, (\sk, \hh \joinheap \hh_F) \optrans{\acta}{1} \TERM, (\sk, \hh \joinheap \hh_a \joinheap \hh_F)$, which proves the claim.

    Now we assume that the statement holds for arbitrary but fixed programs $\cc_1$ and $\cc_2$.

    For sequential composition we have that if $\cc_1 \neq \TERM$ and $\COMPOSE{\cc_1}{\cc_2}, (\sk, \hh) \optrans{\acta}{1} \COMPOSE{\cc_1'}{\cc_2}, (\sk', \hh')$ then also $\cc_1, (\sk, \hh) \optrans{\acta}{1} \cc_1', (\sk', \hh')$. By induction hypothesis $\cc_1, (\sk, \hh \joinheap \hh_F) \optrans{\acta}{1} \cc_1', (\sk', \hh' \joinheap \hh_F)$ and thus also $\COMPOSE{\cc_1}{\cc_2}, (\sk, \hh \joinheap \hh_F) \optrans{\acta}{1} \COMPOSE{\cc_1'}{\cc_2}, (\sk', \hh' \joinheap \hh_F)$. The case that $\cc_1 = \TERM$ is analogous.

    For concurrency we have that if $\CONCURRENT{\cc_1}{\cc_2}, (\sk, \hh) \optrans{C1,\acta}{1} \CONCURRENT{\cc_1'}{\cc_2}, (\sk', \hh')$, then we also have $\cc_1, (\sk, \hh) \optrans{\acta}{1} \cc_1', (\sk', \hh')$ and by the induction hypothesis $\cc_1, (\sk, \hh \joinheap \hh_F) \optrans{\acta}{1} \cc_1', (\sk', \hh' \joinheap \hh_F)$ and thus also $\CONCURRENT{\cc_1}{\cc_2}, (\sk, \hh \joinheap \hh_F) \optrans{C1,\acta}{1} \CONCURRENT{\cc_1'}{\cc_2}, (\sk', \hh' \joinheap \hh_F)$. The case where we have $\CONCURRENT{\cc_1}{\cc_2}, (\sk, \hh) \optrans{C2,\acta}{1} \CONCURRENT{\cc_1'}{\cc_2}, (\sk', \hh')$ is analogous and the case where we have $\CONCURRENT{\TERM}{\TERM}, (\sk, \hh) \optrans{con-end}{1} \TERM, (\sk, \hh)$ holds trivially.

    Thus the claim is proven.
\end{proof}

\safejudgement*

\conservativewslp*
\begin{proof}
   Remark that $\stepsymbol$ simplifies for non-probabilistic programs to
    \begin{equation}
        \step{\cc}{t}(\sk, \hh) = \inf \left\{ 0 \,\middle|\, \cc, (\sk, \hh) \optrans{\acta}{1} \ABORT\right\}\cup \left\{ t(\cc')(\sk',\hh')\,\middle|\, \cc, (\sk, \hh) \optrans{\acta}{1} \cc', (\sk', \hh') \right\}~.\label{eq:simple-step}
    \end{equation}
    We will first show that $\wslpn{n}{\cc}{\sla}{\ri}(\sk,\hh)=1$ satisfies all of these criteria for non-probabilistic and framing enabled programs:
    \begin{enumerate}
        \item For $n=0$ we have $\wslpn{0}{\cc}{\sla}{\ri}(\sk, \hh)=1$, thus this criteria is satisfied.
        \item For $n>0$ and $\cc=\TERM$ we have $\wslpn{n}{\TERM}{\sla}{\ri}(\sk, \hh)=\sla(\sk, \hh)$, thus $\wslpsymbol_n$ is $1$ if and only if $\sla(\sk, \hh)=1$.
        \item For $n>0$ and for all $\hh_{\ri}$ and $\hh_F$ with $\ri(\sk, \hh_{\ri})=1$ and $\hh \disjoint \hh_{\ri} \disjoint \hh_F$ and that $\wslpn{n}{\cc}{\ff}{\ri}(\sk,\hh)=1$ then for all actions $\acta$ we do not have $\cc, (\sk, \hh \joinheap \hh_{\ri}) \optrans{\acta}{1} \ABORT$, since else $\step{\cc}{t}(\sk, \hh \joinheap \hh_{\ri})=0$ by \Cref{eq:simple-step} and \mbox{$\wslpn{n}{\cc}{\ff}{\ri}(\sk,\hh)=0$}, which contradicts our assumption. Finally since $\cc$ is framing enabled we also get that we do not have $\cc, (\sk, \hh \joinheap \hh_{\ri} \joinheap \hh_{F}) \optrans{\acta}{1} \ABORT$ as well.
        \item For $n>0$ we have that if $\wslpn{n}{\cc}{\sla}{\ri}(\sk, \hh)=1$, then for all $\hh_{\ri}$ with $\ri(\sk, \hh_{\ri})=1$ we have $\step{\cc}{\anonfunc{\cc'} \wslpn{n-1}{\cc'}{\sla}{\ri}\sepcon \ri}(\sk, \hh \joinheap \hh_{\ri})=1$ by definition of $\wslpsymbol$ and the magic wand. Furthermore, for all transitions $\cc, (\sk, \hh \joinheap \hh_{\ri}) \optrans{\acta}{1} \cc', (\sk', \hh')$ we have $\hh'=\hh'' \joinheap \hh_{\ri}'$ with $\ri(\sk',\hh_{\ri}')=1$ and $\wslpn{n-1}{\cc'}{\sla}{\ri}(\sk',\hh'')=1$ due to the separating multiplication with $\ri$. Finally since $\cc$ is framing enabled, we also have this argument for $\cc, (\sk, \hh \joinheap \hh_{\ri} \joinheap \hh_{F}) \optrans{\acta}{1} \cc', (\sk', \hh' \joinheap \hh_{F})$, finishing this direction.
    \end{enumerate}
    Next we show that if all four criteria hold, then $\wslpn{n}{\cc}{\sla}{\ri}(\sk, \hh)=1$:
    \begin{enumerate}
        \item if $n=0$ then $\wslpn{n}{\cc}{\sla}{\ri}(\sk, \hh)=1$ holds always.
        \item if $n>0$ and $\cc=\TERM$ then $\sla(\sk, \hh)=1=\wslpn{n}{\cc}{\sla}{\ri}(\sk, \hh)$.
        \item if $n>0$ and $\cc\neq\TERM$ then we especially have for all $\hh_{\ri}$ with $\ri(\sk, \hh_{\ri})=1$ that the transition $\cc, (\sk, \hh) \optrans{\acta}{1} \ABORT$ does not hold; but all three $\cc, (\sk, \hh \joinheap \hh_{\ri}) \optrans{\acta}{1} \cc', (\sk', \hh' \joinheap \hh_{\ri}')$; $\wslpn{n-1}{\cc'}{\sla}{\ri}(\sk', \hh')=1$; and $\ri(\sk', \hh_{\ri}')=1$ does hold. Then we have:
        \begingroup
        \allowdisplaybreaks
        \begin{align*}
             &\step{\cc}{\anonfunc{\cc'} \wslpn{n-1}{\cc'}{\sla}{\ri} \sepcon \ri}(\sk, \hh \joinheap \hh_{\ri}) \\
            =& \inf \left\{ 0 \,\middle|\, \cc, (\sk, \hh) \optrans{\acta}{1} \ABORT\right\}\\ 
             & \cup \left\{ (\wslpn{n-1}{\cc'}{\sla}{\ri}\sepcon \ri)(\sk', \hh' \joinheap \hh_{\ri}')\,\middle|\, \cc, (\sk, \hh\joinheap \hh_{\ri}) \optrans{\acta}{1} \cc', (\sk', \hh' \joinheap \hh_{\ri}') \right\} \tag{\Cref{eq:simple-step}}\\
         \geq& \inf \left\{ 0 \,\middle|\, \cc, (\sk, \hh) \optrans{\acta}{1} \ABORT\right\}\\ 
             & \cup \left\{ \wslpn{n-1}{\cc'}{\sla}{\ri}(\sk', \hh')\,\middle|\, \cc, (\sk, \hh\joinheap \hh_{\ri}) \optrans{\acta}{1} \cc', (\sk', \hh' \joinheap \hh_{\ri}') \right\}
             \tag{Simplifying the separating multiplication}\\
            =& \inf \left\{ \wslpn{n-1}{\cc'}{\sla}{\ri}(\sk', \hh')\,\middle|~\, \cc, (\sk, \hh\joinheap \hh_{\ri}) \optrans{\acta}{1} \cc', (\sk', \hh' \joinheap \hh_{\ri}') \right\}
            \tag{Emptyset eliminiation}\\
            =& \inf \left\{ 1\,\middle|~\, \cc, (\sk, \hh\joinheap \hh_{\ri}) \optrans{\acta}{1} \cc', (\sk', \hh' \joinheap \hh_{\ri}') \right\}
            \tag{Criteria assumption}\\
            =&~ 1~.
        \end{align*}
        \endgroup
        We have that for all $\hh_{\ri}$ with $\ri(\sk, \hh_{\ri})=1$ the above holds, therefore we can now combine it with the magic wand and receive 
        \begin{equation*}
            \wslpn{n}{\cc}{\sla}{\ri}(\sk,\hh)=\left(\ri \sepimp \step{\cc}{\anonfunc{\cc'} \wslpn{n-1}{\cc'}{\sla}{\ri}\sepcon \ri}\right)(\sk,\hh)=1~.
        \end{equation*}
    \end{enumerate}

    Lastly, we can put everything together:
    \begin{align*}
                &\safeTuple{\slb}{\cc}{\sla}{\ri} \\
    \text{iff}~ &\forall (\sk, \hh).~ \slb(\sk, \hh)=1 ~\Rightarrow ~ \wslp{\cc}{\sla}{\ri}(\sk,\hh)=1 \tag{Both sides are qualitative} \\
    \text{iff}~ &\forall (\sk, \hh).~ \slb(\sk, \hh)=1 ~\Rightarrow ~ \inf \{\wslpn{n}{\cc}{\sla}{\ri}(\sk,\hh) \mid n \in \Nats\} =1 \tag{\Cref{lem:alternate-wslp}} \\
    \text{iff}~ &\forall (\sk, \hh).~ \slb(\sk, \hh)=1 ~\Rightarrow ~ \forall n\in \Nats.~\wslpn{n}{\cc}{\sla}{\ri}(\sk,\hh) =1 \tag{Analysis}\\
    \text{iff}~ &\forall (\sk, \hh).~ \slb(\sk, \hh)=1 ~\Rightarrow ~ \forall n\in \Nats.~\text{safe}_n(\cc, \sk, \hh, \ri, \sla) \tag{See above}\\ 
    \text{iff}~ &\ri \models \{\slb\}\; \cc\; \{\sla\} \tag{\Cref{def:safe-judgement}}
    \end{align*}
    Thus finishing the proof.
\end{proof}

\section{Soundness Proofs for all Proof Rules}
\label{app:proofrules}
\begin{lemma}[Monotonicity of $\wslpsymbol$]\label{lem:soundmonotone}
    If $\fg \leq \fh$ then $\wslp{\cc}{\fg}{\ri} \leq \wslp{\cc}{\fh}{\ri}$.
\end{lemma}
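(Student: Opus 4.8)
The plan is to first establish monotonicity of the finite-horizon approximants $\wslpn{n}{\cc}{\cdot}{\ri}$ by induction on $n$, stated uniformly over all programs $\cc$, and then to transfer the result to the limit. Concretely, I would prove the auxiliary claim that for every $n \in \Nats$ and every program $\cc$, the hypothesis $\fg \leq \fh$ implies $\wslpn{n}{\cc}{\fg}{\ri} \leq \wslpn{n}{\cc}{\fh}{\ri}$.

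For the induction base $n = 0$, both sides equal the constant $1$ by definition, so the inequality holds trivially. For the induction step I would distinguish the two non-degenerate defining cases of $\wslpsymbol_{n+1}$. If $\cc = \TERM$, then $\wslpn{n+1}{\TERM}{\fg}{\ri} = \fg \leq \fh = \wslpn{n+1}{\TERM}{\fh}{\ri}$ directly from the assumption $\fg \leq \fh$. If $\cc \neq \TERM$, I unfold the definition
\[
  \wslpn{n+1}{\cc}{\ff}{\ri} = \ri \sepimp \step{\cc}{\anonfunc{\cc'} \wslpn{n}{\cc'}{\ff}{\ri} \sepcon \ri}
\]
and push the inequality through the three combinators in turn: the induction hypothesis yields $\wslpn{n}{\cc'}{\fg}{\ri} \leq \wslpn{n}{\cc'}{\fh}{\ri}$ for every successor program $\cc'$; monotonicity of separating multiplication (\Cref{eq:sepcon_monoton}) lifts this to $\wslpn{n}{\cc'}{\fg}{\ri} \sepcon \ri \leq \wslpn{n}{\cc'}{\fh}{\ri} \sepcon \ri$; monotonicity of $\stepsymbol$ (\Cref{lem:monotone-of-step}) preserves it under the step operator; and finally monotonicity of the guarded magic wand preserves it under $\ri \sepimp (\cdot)$. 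Chaining these gives $\wslpn{n+1}{\cc}{\fg}{\ri} \leq \wslpn{n+1}{\cc}{\fh}{\ri}$, completing the induction.

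Finally, I would pass to the limit. Since $\wslp{\cc}{\ff}{\ri} = \lim_{n \to \infty} \wslpn{n}{\cc}{\ff}{\ri}$, which by \Cref{lem:alternate-wslp} equals the pointwise infimum $\inf\{ \wslpn{n}{\cc}{\ff}{\ri} \mid n \in \Nats \}$, and a pointwise inequality between sequences is preserved under infima (equivalently, under limits), the established bound $\wslpn{n}{\cc}{\fg}{\ri} \leq \wslpn{n}{\cc}{\fh}{\ri}$ for all $n$ immediately yields $\wslp{\cc}{\fg}{\ri} \leq \wslp{\cc}{\fh}{\ri}$.

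I do not anticipate a genuine obstacle, as every required monotonicity fact is already available in the excerpt; the only point demanding care is that the induction must range over \emph{all} programs $\cc$ at once rather than a fixed one, since the $\stepsymbol$ operator quantifies over the successor programs $\cc'$, so the induction hypothesis has to be applicable to each such $\cc'$ at level $n$.
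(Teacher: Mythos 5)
Your proposal is correct and matches the paper's own proof essentially step for step: induction on $n$ over all programs simultaneously, chaining the induction hypothesis through monotonicity of $\sepcon$, of $\stepsymbol$ (\Cref{lem:monotone-of-step}), and of the guarded magic wand, then passing to the limit. Your remark that the induction hypothesis must be quantified over all successor programs $\cc'$ is exactly the right point of care.
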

\begin{proof}
    We prove by induction on $n$ that $\wslpn{n}{\cc}{\fg}{\ri} \leq \wslpn{n}{\cc}{\fh}{\ri}$. If this holds, the inequality for their limits hold as well.
    
    For the base case $n=0$ we have $\wslpn{0}{\cc}{\fg}{\ri} = 1 = \wslpn{0}{\cc}{\fh}{\ri}$.

    We now assume that the claim holds for some arbitrary but fixed $n$ as our induction hypothesis.

    For $n+1$ we have two cases. The case $\cc=\TERM$ is again trivial, since $\wslpn{n+1}{\TERM}{\fg}{\ri}=\fg\leq\fh=\wslpn{n+1}{\TERM}{\fh}{\ri}$. Finally, for $\cc\neq\TERM$ we have:
    \begingroup
    \allowdisplaybreaks
    \begin{align*}
            &~ \fg \leq \fh \\
    \text{implies} &~ \wslpn{n}{\cc'}{\fg}{\ri} \leq \wslpn{n}{\cc'}{\fh}{\ri} \tag{Induction Hypothesis} \\
    \text{implies} &~ \wslpn{n}{\cc'}{\fg}{\ri} \sepcon \ri \leq \wslpn{n}{\cc'}{\fh}{\ri} \sepcon \ri \tag{Monotonicity of $\sepcon$} \\
    \text{implies} &~ \step{\cc}{\anonfunc{\cc'} \wslpn{n}{\cc'}{\fg}{\ri} \sepcon \ri} \leq \step{\cc}{\anonfunc{\cc'} \wslpn{n}{\cc'}{\fh}{\ri} \sepcon \ri} \tag{Monotonicity of $\stepsymbol$, see \Cref{lem:monotone-of-step}} \\
    \text{implies} &~ \ri \sepimp \step{\cc}{\anonfunc{\cc'} \wslpn{n}{\cc'}{\fg}{\ri} \sepcon \ri}\\
                   &~ \leq \ri \sepimp \step{\cc}{\anonfunc{\cc'} \wslpn{n}{\cc'}{\fh}{\ri} \sepcon \ri} \tag{Monotonicity of $\sepimp$} \\
    \text{implies} &~ \wslpn{n+1}{\cc}{\fg}{\ri} \leq \wslpn{n+1}{\cc}{\fh}{\ri} \tag{Definition of \wslpsymbol}
    \end{align*}
    \endgroup
    This concludes the proof.
\end{proof}

\begin{definition}
    A program $\cc$ is a terminating atom if for all stack heap pairs $\sk, \hh$, all enabled actions $\acta \in \Enabled{\cc, (\sk, \hh)}$ and probabilities $\pp$ we have
    \begin{equation*}
        \cc, (\sk, \hh) \optrans{\acta}{\pp} \TERM, (\sk', \hh')\qquad \text{or} \qquad \cc, (\sk, \hh) \optrans{\acta}{\pp} \ABORT~.
    \end{equation*}
\end{definition}

\begin{lemma}\label{lem:soundbywlp}
    For terminating atom $\cc$, if $\ff \leq \wlp{\cc}{\fg}$ then also $\safeTuple{\ff}{\cc}{\fg}{\ri}$.
\end{lemma}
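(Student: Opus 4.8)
The plan is to reduce $\wslpsymbol$ to $\wlpsymbol$ on terminating atoms and then to transport the assumed $\wlpsymbol$-bound across the separating connectives by framing. First I would dispose of the case $\cc = \TERM$, where $\wslp{\cc}{\fg}{\ri} = \fg = \wlp{\cc}{\fg} \geq \ff$ holds immediately, and assume $\cc \neq \TERM$ henceforth. The core of the argument is the identity
\[
\wslp{\cc}{\fg}{\ri} \;=\; \ri \sepimp \wlp{\cc}{\fg \sepcon \ri}~.
\]
To establish it I would show that $\wslpn{n}{\cc}{\fg}{\ri}$ is already independent of $n$ for $n \geq 2$, so that the defining limit (which exists and equals the infimum by \Cref{lem:antitone-of-wslp} and \Cref{lem:alternate-wslp}) coincides with this common value. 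Unfolding the definition yields $\wslpn{n}{\cc}{\fg}{\ri} = \ri \sepimp \step{\cc}{\anonfunc{\cc'} \wslpn{n-1}{\cc'}{\fg}{\ri} \sepcon \ri}$, and I would evaluate the inner $\stepsymbol$ at an augmented state $(\sk, \hh \joinheap \hh_\ri)$ with $\ri(\sk, \hh_\ri)=1$. Since $\cc$ is a terminating atom, every enabled action leads in a single step to a $\TERM$-state or to $\ABORT$; on a $\TERM$-successor the mapping $\anonfunc{\cc'} \wslpn{n-1}{\cc'}{\fg}{\ri} \sepcon \ri$ evaluates to $\fg \sepcon \ri$ (because $\wslpn{n-1}{\TERM}{\fg}{\ri} = \fg$ for $n-1 \geq 1$), while an $\ABORT$-branch contributes $0$, exactly as in the simplified shape of $\stepsymbol$ recorded in \Cref{eq:simple-step}.

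The resulting quantity, namely
\[
\inf_{\acta}\; \sum \llbag\, \pp \cdot (\fg \sepcon \ri)(\sk',\hh') \mid \cc, (\sk, \hh \joinheap \hh_\ri) \optrans{\acta}{\pp} \TERM, (\sk',\hh') \,\rrbag~,
\]
is precisely $\wlp{\cc}{\fg \sepcon \ri}(\sk, \hh \joinheap \hh_\ri)$: for a terminating atom all probability mass reaches a final state after one step, so the non-termination probability $\pp_{div}$ vanishes, the multi-step reachability of each $\TERM$-state collapses to its one-step transition probability, and the infimum over schedulers degenerates to the infimum over the first enabled action. Since this pointwise equality holds for every admissible $\hh_\ri$, applying $\ri \sepimp (\cdot)$ yields the displayed identity.

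With the identity in place it remains to prove $\wlp{\cc}{\fg} \leq \ri \sepimp \wlp{\cc}{\fg \sepcon \ri}$. By adjointness of the separating connectives (\Cref{eq:adjointness}) this is equivalent to $\wlp{\cc}{\fg} \sepcon \ri \leq \wlp{\cc}{\fg \sepcon \ri}$, which is exactly the framing property of \Cref{lem:multiplestep_framing} instantiated with the frame $\ri$; this is the single place where the side condition $\written{\cc} \cap \freevariable{\ri} = \emptyset$ is required, and it is precisely the variable discipline already imposed by the concurrency rule, under which a thread never writes variables occurring in its resource invariant. Chaining everything together gives $\ff \leq \wlp{\cc}{\fg} \leq \ri \sepimp \wlp{\cc}{\fg \sepcon \ri} = \wslp{\cc}{\fg}{\ri}$, i.e.\ $\safeTuple{\ff}{\cc}{\fg}{\ri}$, as required.

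The main obstacle is the first identity, specifically the faithful identification of the one-step $\stepsymbol$ with the full $\wlpsymbol$ of a terminating atom. The delicate points are that a terminating atom may still be genuinely probabilistic (it is tame), so the one-step sum is a real convex combination rather than a singleton; that the abort branches must be seen to contribute $0$ uniformly to both $\stepsymbol$ and the $\TERM$-only sum defining $\wlpsymbol$; and that the scheduler infimum really does reduce to an infimum over enabled first actions, because no scheduler choice after the first step can influence the outcome. Once this reduction is justified, the adjointness and framing steps are immediate appeals to earlier results.
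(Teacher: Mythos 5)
Your proof is correct and follows essentially the same route as the paper's: both collapse $\step{\cc}{\hole}$ on a terminating atom to $\wlp{\cc}{\hole}$ so that $\wslpn{n}{\cc}{\fg}{\ri}$ becomes $\ri \sepimp \wlp{\cc}{\fg \sepcon \ri}$ for large $n$, and both then conclude via the framing inequality $\wlp{\cc}{\fg}\sepcon\ri \leq \wlp{\cc}{\fg\sepcon\ri}$ of \Cref{lem:multiplestep_framing} together with adjointness (equivalently, \Cref{eq:reverse_modus_ponens}). The only divergence is that you make explicit the side condition $\written{\cc}\cap\freevariable{\ri}=\emptyset$ needed to invoke that framing lemma -- a hypothesis absent from the lemma statement on which the paper's own proof also tacitly relies.
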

\begin{proof}
    We show that the statement $\ff \leq \wslpn{n}{\cc}{\fg}{\ri}$ holds for all $n$. For $n=0$ we have $\wslpn{0}{\cc}{\fg}{\ri}=1$, thus the claim holds immediately. 
    Since $\cc$ is a terminating atom, we have $\step{\cc}{\anonfunc{\cc'} \fg} = \wlp{\cc}{\fg}$.
    With this, we can prove the case $n=1$:
    \begingroup 
    \allowdisplaybreaks
    \begin{align*}
        &~ \wslpn{1}{\cc}{\fg}{\ri}\\
    =   &~ \ri \sepimp \step{\cc}{\anonfunc{\cc'} \wslpn{0}{\cc'}{\fg}{\ri} \sepcon \ri} \tag{Definition of \wslpsymbol}\\
    =   &~ \ri \sepimp \step{\cc}{1 \sepcon \ri} \tag{Definition of \wslpsymbol} \\
    =   &~ \ri \sepimp \wlp{\cc}{1 \sepcon \ri} \tag{$\cc$ is a terminating atom}\\
    =   &~ \ri \sepimp (\wlp{\cc}{1} \sepcon \ri) \tag{\Cref{lem:multiplestep_framing}} \\
    \geq&~ \ri \sepimp (\wlp{\cc}{\fg} \sepcon \ri) \tag{Monotonicity}\\
    \geq&~ \wlp{\cc}{\fg} \tag{\Cref{eq:reverse_modus_ponens}} \\
    \geq&~ \ff \tag{Assumption}
    \end{align*}
    \endgroup
    and $n>1$:
    \begingroup 
    \allowdisplaybreaks
    \begin{align*}
        &~ \wslpn{n}{\cc}{\fg}{\ri}\\
    =   &~ \ri \sepimp \step{\cc}{\anonfunc{\cc'} \wslpn{n-1}{\cc'}{\fg}{\ri} \sepcon \ri} \tag{Definition of \wslpsymbol}\\
    =   &~ \ri \sepimp \step{\cc}{\fg \sepcon \ri} \tag{Definition of $\wslpsymbol$, $\cc$ is a terminating atom and $n-1>0$} \\
    =   &~ \ri \sepimp \wlp{\cc}{\fg \sepcon \ri} \tag{$\cc$ is a terminating atom}\\
    =   &~ \ri \sepimp (\wlp{\cc}{\fg} \sepcon \ri) \tag{\Cref{lem:multiplestep_framing}} \\
    \geq&~ \wlp{\cc}{\fg} \tag{\Cref{eq:reverse_modus_ponens}} \\
    \geq&~ \ff \tag{Assumption}
    \end{align*}
    \endgroup
    Thus, the proof is finished.
\end{proof}

\begin{lemma}\label{lem:soundseq}
    If $\safeTuple{\ff}{\cc_1}{\fg}{\ri}$ and $\safeTuple{\fg}{\cc_2}{\fh}{\ri}$ then also for the sequential composition $\safeTuple{\ff}{\COMPOSE{\cc_1}{\cc_2}}{\fh}{\ri}$.
\end{lemma}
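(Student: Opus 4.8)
The plan is to unfold the judgement notation and reduce the goal to a compositionality statement about $\wslpsymbol$. By definition $\safeTuple{\ff}{\cc_1}{\fg}{\ri}$ means $\ff \leq \wslp{\cc_1}{\fg}{\ri}$ and $\safeTuple{\fg}{\cc_2}{\fh}{\ri}$ means $\fg \leq \wslp{\cc_2}{\fh}{\ri}$, while the goal is $\ff \leq \wslp{\COMPOSE{\cc_1}{\cc_2}}{\fh}{\ri}$. First I would apply monotonicity of $\wslpsymbol$ in its postexpectation (\Cref{lem:soundmonotone}) to $\fg \leq \wslp{\cc_2}{\fh}{\ri}$, yielding $\ff \leq \wslp{\cc_1}{\fg}{\ri} \leq \wslp{\cc_1}{\wslp{\cc_2}{\fh}{\ri}}{\ri}$. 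It then suffices to prove the sequentialisation inequality $\wslp{\cc_1}{\wslp{\cc_2}{\fh}{\ri}}{\ri} \leq \wslp{\COMPOSE{\cc_1}{\cc_2}}{\fh}{\ri}$, which carries the real content of the lemma.

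The second ingredient is a purely operational identity for $\stepsymbol$, read off from the rules SEQ, SEQ-ABT, SEQ-END, SEQ-END-ABT in \Cref{fig:op-flow}. For $\cc_1 \neq \TERM$ the successors of $\COMPOSE{\cc_1}{\cc_2}$ are exactly the wrapped successors $\COMPOSE{\cc_1'}{\cc_2}$ of $\cc_1$, so for every $\ct$ one has $\step{\COMPOSE{\cc_1}{\cc_2}}{\ct} = \step{\cc_1}{\anonfunc{\cc_1'} \ct(\COMPOSE{\cc_1'}{\cc_2})}$; by SEQ-ABT the $\ABORT$-successors of $\COMPOSE{\cc_1}{\cc_2}$ coincide with those of $\cc_1$ and are scored identically on both sides. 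By SEQ-END the configuration $\COMPOSE{\TERM}{\cc_2}$ has precisely the transitions of $\cc_2$, so $\step{\COMPOSE{\TERM}{\cc_2}}{\ct} = \step{\cc_2}{\ct}$.

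Writing $G := \wslp{\cc_2}{\fh}{\ri}$, I would then prove by induction on $n$, simultaneously for all programs $\cc_1$, the indexed inequality $\wslpn{n}{\cc_1}{G}{\ri} \leq \wslpn{n}{\COMPOSE{\cc_1}{\cc_2}}{\fh}{\ri}$. The base case $n=0$ is $1 \leq 1$. In the step for $\cc_1 \neq \TERM$ I expand both sides via the recursive clause of $\wslpsymbol_{n+1}$, invoke the induction hypothesis on every successor $\cc_1'$ (including $\cc_1'=\TERM$), and propagate it using monotonicity of $\sepcon$ (\Cref{eq:sepcon_monoton}), of $\stepsymbol$ (\Cref{lem:monotone-of-step}) and of $\sepimp$ in its second argument; the first operational identity rewrites the resulting $\step{\cc_1}{\anonfunc{\cc_1'}\wslpn{n}{\COMPOSE{\cc_1'}{\cc_2}}{\fh}{\ri}\sepcon\ri}$ as $\step{\COMPOSE{\cc_1}{\cc_2}}{\anonfunc{\cc'}\wslpn{n}{\cc'}{\fh}{\ri}\sepcon\ri}$, i.e.\ exactly $\wslpn{n+1}{\COMPOSE{\cc_1}{\cc_2}}{\fh}{\ri}$. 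For $\cc_1 = \TERM$ the left side equals $G$, while (for $\cc_2 \neq \TERM$) the second operational identity gives $\wslpn{n+1}{\COMPOSE{\TERM}{\cc_2}}{\fh}{\ri} = \wslpn{n+1}{\cc_2}{\fh}{\ri}$; since $G = \inf_k \wslpn{k}{\cc_2}{\fh}{\ri}$ by antitonicity (\Cref{lem:antitone-of-wslp,lem:alternate-wslp}), we have $G \leq \wslpn{n+1}{\cc_2}{\fh}{\ri}$, closing the case. Passing to the limit in $n$ and using $\wslp{\cdot}{}{} = \inf_n \wslpn{n}{\cdot}{}{}$ gives $\wslp{\cc_1}{G}{\ri} \leq \wslp{\COMPOSE{\cc_1}{\cc_2}}{\fh}{\ri}$, which chained with the monotonicity step above yields the claim.

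The main obstacle I anticipate is the bookkeeping around a terminated left operand. Taking the inner postexpectation to be the full limit $G = \wslp{\cc_2}{\fh}{\ri}$ rather than $\fg$ is exactly what lets the induction run with matching indices on both sides: it absorbs the one-step hand-over from $\COMPOSE{\TERM}{\cc_2}$ to $\cc_2$ without any index shift, because $G \leq \wslpn{n}{\cc_2}{\fh}{\ri}$ for every $n$. A minor degenerate case to dispatch separately is the stuck configuration $\COMPOSE{\TERM}{\TERM}$ (arising when $\cc_2 = \TERM$): it has no enabled action, so $\step{\COMPOSE{\TERM}{\TERM}}{\cdot} = \inf \emptyset = 1$ and hence its $\wslpsymbol$ is $1$, which dominates $G = \fh$ trivially and therefore does not endanger the lower-bound direction we are establishing.
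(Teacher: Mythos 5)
Your proof is correct and follows essentially the same route as the paper's: an induction on $n$ over the step-indexed $\wslpsymbol_n$, using the operational decomposition of $\stepsymbol$ for sequential composition together with monotonicity of $\sepcon$, $\stepsymbol$ and $\sepimp$, and finishing with monotonicity in the postexpectation to insert $\fg \leq \wslp{\cc_2}{\fh}{\ri}$. The only difference is that you carry the limit $\wslp{\cc_2}{\fh}{\ri}$ as the inner postexpectation where the paper carries $\wslpn{n}{\cc_2}{\fh}{\ri}$ and must realign indices via antitonicity (\Cref{lem:antitone-of-wslp}) at the end of the step case; your choice makes the hand-over at $\COMPOSE{\TERM}{\cc_2}$ slightly cleaner but the argument is otherwise the same.
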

\begin{proof}
    We prove by induction on $n$ that 
    \begin{equation*}
        \wslpn{n}{\cc_1}{\wslpn{n}{\cc_2}{\fg}{\ri}}{\ri} \leq \wslpn{n}{\COMPOSE{\cc_1}{\cc_2}}{\fh}{\ri}~.
    \end{equation*}
    If this holds, the inequality also holds for their limits. Finally, we can use monotonicity of $\wslpsymbol$ to prove that the claim holds.

    For the induction base with $n=0$, we have $\wslpn{0}{\cc_1}{\wslpn{0}{\cc_2}{\fh}{\ri}}{\ri} = 1 = \wslpn{0}{\COMPOSE{\cc_1}{\cc_2}}{\fh}{\ri}$. 

    We assume that the claim holds for some arbitrary but fixed $n$ as the induction hypothesis.

    For the induction step, we have two cases, for $\cc_1, (\sk, \hh) \optrans{\acta}{\pp} \cc_1', (\sk', \hh')$ either $\cc_1'=\TERM$ or $\cc_1'\neq\TERM$. We will only consider the case $\cc_1'\neq\TERM$, since the other is analogous. Then we have:
    \begingroup 
    \allowdisplaybreaks
    \begin{align*}
        &~ \wslpn{n+1}{\COMPOSE{\cc_1}{\cc_2}}{\fh}{\ri} \\
    =   &~ \ri \sepimp \step{\COMPOSE{\cc_1}{\cc_2}}{\anonfunc{\cc'} \wslpn{n}{\cc'}{\fh}{\ri} \sepcon \ri} \tag{Definition of \wslpsymbol} \\
    \geq&~ \ri \sepimp \step{\cc_1}{\anonfunc{\cc'} \wslpn{n}{\cc'}{\wslpn{n}{\cc_2}{\fh}{\ri}}{\ri} \sepcon \ri} \tag{Monotonicity of $\sepimp$ and $\dagger$} \\
    =   &~ \wslpn{n+1}{\cc_1}{\wslpn{n}{\cc_2}{\fh}{\ri}}{\ri} \tag{Definition of \wslpsymbol} \\
    \geq&~ \wslpn{n+1}{\cc_1}{\wslpn{n+1}{\cc_2}{\fh}{\ri}}{\ri} \tag{Antitonicity of $\wslpsymbol$ w.r.t. $n$ and monotonicity w.r.t. postexpectation}
    \end{align*}
    \endgroup
    For $\dagger$, we have:
    \begingroup
    \allowdisplaybreaks
    \begin{align*}
        &~ \step{\COMPOSE{\cc_1}{\cc_2}}{\anonfunc{\cc'} \wslpn{n}{\cc'}{\fh}{\ri} \sepcon \ri}(\sk, \hh) \\
    =   &~ \inf \bigg\{ \sum \Big\llbag \pp \cdot (\wslpn{n}{\COMPOSE{\cc_1'}{\cc_2}}{\fh}{\ri} \sepcon \ri)(\sk', \hh')  \\
        &~ \qquad \Big\vert~ \COMPOSE{\cc_1}{\cc_2}, (\sk, \hh) \optrans{\acta}{\pp} \COMPOSE{\cc_1'}{\cc_2}, (\sk', \hh') \Big\rrbag \bigg\vert~ \acta \in \Enabled{\COMPOSE{\cc_1}{\cc_2}, (\sk, \hh)} \bigg\} \tag{Definition of \stepsymbol} \\
    \geq&~ \inf \bigg\{ \sum \Big\llbag \pp \cdot (\wslpn{n}{\cc_1'}{\wslpn{n}{\cc_2}{\fh}{\ri}}{\ri} \sepcon \ri)(\sk', \hh')  \\
        &~ \qquad \Big\vert~ \COMPOSE{\cc_1}{\cc_2}, (\sk, \hh) \optrans{\acta}{\pp} \COMPOSE{\cc_1'}{\cc_2}, (\sk', \hh') \Big\rrbag \bigg\vert~ \acta \in \Enabled{\COMPOSE{\cc_1}{\cc_2}, (\sk, \hh)} \bigg\} \tag{Induction hypothesis and monotonicity} \\
    =   &~ \inf \bigg\{ \sum \Big\llbag \pp \cdot (\wslpn{n}{\cc_1'}{\wslpn{n}{\cc_2}{\fh}{\ri}}{\ri} \sepcon \ri)(\sk', \hh')  \\
        &~ \qquad \Big\vert~ \cc_1, (\sk, \hh) \optrans{\acta}{\pp} \cc_1', (\sk', \hh') \Big\rrbag \bigg\vert~ \acta \in \Enabled{\cc_1, (\sk, \hh)} \bigg\} \tag{Operational semantics of sequential composition} \\
    =   &~ \step{\cc_1}{\anonfunc{\cc'} \wslpn{n}{\cc'}{\wslpn{n}{\cc_2}{\fh}{\ri}}{\ri} \sepcon \ri}(\sk, \hh) \tag{Definition of \stepsymbol}
    \end{align*}
    \endgroup
    Thus, the claim is proven.
\end{proof}

\begin{lemma}\label{lem:soundite}
    If $\safeTuple{\ff_1}{\cc_1}{\fg}{\ri}$, $\safeTuple{\ff_2}{\cc_2}{\fg}{\ri}$ and $\ff \leq \iverson{\guard} \cdot \ff_1 + \iverson{\neg\guard} \cdot \ff_2$ then also $\safeTuple{\ff}{\ITE{\guard}{\cc_1}{\cc_2}}{\fg}{\ri}$.
\end{lemma}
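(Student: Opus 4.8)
The plan is to unfold the definition of $\wslpsymbol$ for the conditional, exploit that a conditional performs a single \emph{deterministic} step, and then pass to the limit. Since $\safeTuple{\ff}{\ITE{\guard}{\cc_1}{\cc_2}}{\fg}{\ri}$ abbreviates $\ff \leq \wslp{\ITE{\guard}{\cc_1}{\cc_2}}{\fg}{\ri}$, and the hypotheses give $\ff \leq \iverson{\guard}\cdot\ff_1 + \iverson{\neg\guard}\cdot\ff_2$ together with $\ff_1 \leq \wslp{\cc_1}{\fg}{\ri}$ and $\ff_2 \leq \wslp{\cc_2}{\fg}{\ri}$, it suffices to establish the single inequality
\[ \iverson{\guard}\cdot\wslp{\cc_1}{\fg}{\ri} + \iverson{\neg\guard}\cdot\wslp{\cc_2}{\fg}{\ri} \;\leq\; \wslp{\ITE{\guard}{\cc_1}{\cc_2}}{\fg}{\ri}, \]
after which the claim follows by monotonicity of multiplication by the non-negative coefficients $\iverson{\guard},\iverson{\neg\guard}$ and transitivity.

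First I would observe that by rules IF-T and IF-F the state $\ITE{\guard}{\cc_1}{\cc_2}, (\sk,\hh)$ has exactly one enabled action, moving with probability $1$ to $\cc_1,(\sk,\hh)$ when $\sk\in\guard$ and to $\cc_2,(\sk,\hh)$ otherwise, leaving the state unchanged and never reaching $\ABORT$. Hence the infimum in $\stepsymbol$ ranges over a singleton and $\step{\ITE{\guard}{\cc_1}{\cc_2}}{t} = \iverson{\guard}\cdot t(\cc_1) + \iverson{\neg\guard}\cdot t(\cc_2)$ for every $t$. Instantiating $t$ with $\anonfunc{\cc'}\wslpn{n-1}{\cc'}{\fg}{\ri}\sepcon\ri$ and applying the definition of $\wslpsymbol$ for $n\geq 1$ yields
\[ \wslpn{n}{\ITE{\guard}{\cc_1}{\cc_2}}{\fg}{\ri} = \ri \sepimp \left(\iverson{\guard}\cdot(\wslpn{n-1}{\cc_1}{\fg}{\ri}\sepcon\ri) + \iverson{\neg\guard}\cdot(\wslpn{n-1}{\cc_2}{\fg}{\ri}\sepcon\ri)\right). \]

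The key step is to move the guard out of the magic wand. Because $\iverson{\guard}$ depends only on the stack (as $\guard\subseteq\Stacks$), while $\ri\sepimp(\cdot)$ extends the heap but never alters the stack, one has for any expectations $A,B$ the pointwise identity $\ri\sepimp(\iverson{\guard}\cdot A + \iverson{\neg\guard}\cdot B) = \iverson{\guard}\cdot(\ri\sepimp A) + \iverson{\neg\guard}\cdot(\ri\sepimp B)$: fixing a stack, exactly one Iverson bracket equals $1$ and stays $1$ under every heap extension, so the minimised argument reduces to $A$ (resp.\ $B$) throughout, including in the degenerate case where the minimising set is empty and both sides equal $1$. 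Applying this with $A = \wslpn{n-1}{\cc_1}{\fg}{\ri}\sepcon\ri$ and $B = \wslpn{n-1}{\cc_2}{\fg}{\ri}\sepcon\ri$, and then discarding each round-trip $\ri\sepimp(\ri\sepcon\cdot)$ via reverse modus ponens (\Cref{eq:reverse_modus_ponens}, using commutativity of $\sepcon$ and that $\ri$ is qualitative), gives for all $n\geq 1$
\[ \wslpn{n}{\ITE{\guard}{\cc_1}{\cc_2}}{\fg}{\ri} \geq \iverson{\guard}\cdot\wslpn{n-1}{\cc_1}{\fg}{\ri} + \iverson{\neg\guard}\cdot\wslpn{n-1}{\cc_2}{\fg}{\ri}. \]

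Finally I would let $n\to\infty$. By antitonicity of $\wslpsymbol_n$ in $n$ (\Cref{lem:antitone-of-wslp}) all three sequences converge to their infima, and since the right-hand side is a finite linear combination with the stack-only $0/1$ coefficients $\iverson{\guard},\iverson{\neg\guard}$, the limit passes through it pointwise, yielding the displayed key inequality and hence $\ff \leq \wslp{\ITE{\guard}{\cc_1}{\cc_2}}{\fg}{\ri}$. I expect the guard/magic-wand commutation to be the only genuinely delicate point; the remaining work — noting that the $\stepsymbol$-infimum is over a singleton for a conditional and justifying the limit interchange — is routine.
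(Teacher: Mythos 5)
Your proposal is correct and follows essentially the same route as the paper's proof: both reduce the claim to the observation that the $\stepsymbol$-infimum for a conditional is over a singleton, apply reverse modus ponens (\Cref{eq:reverse_modus_ponens}) to discard $\ri \sepimp (\,\cdot \sepcon \ri)$, and pass to the limit using antitonicity in $n$. The only cosmetic differences are that the paper phrases the argument as a (vacuous) induction with an explicit case split on $\sk \in \guard$, whereas you fold the two cases into a single distributivity identity for the magic wand over guard-weighted sums and shift the index by one before taking limits.
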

\begin{proof}
    We prove that for all $n$ we have $\iverson{\guard} \cdot \wslpn{n}{\cc_1}{\fg}{\ri} + \iverson{\neg\guard} \cdot \wslpn{n}{\cc_2}{\fg}{\ri} \leq \wslpn{n}{\ITE{\guard}{\cc_1}{\cc_2}}{\fg}{\ri}$, where $\iverson{\guard}(\sk, \hh) = 1$ if $\sk \in \guard$ and $\iverson{\neg\guard}(\sk, \hh) = 0$ if $\sk \not\in \guard$. Then the inequality holds also for their limits and due to monotonicity of all operations, we also get the claim.

    For the case $n=0$, we have that $\wslpn{0}{\ITE{\guard}{\cc_1}{\cc_2}}{\fg}{\ri} = 1 \geq \iverson{\guard} \cdot \wslpn{0}{\cc_1}{\fg}{\ri} + \iverson{\neg\guard} \cdot \wslpn{0}{\cc_2}{\fg}{\ri}$, thus the claim holds.

    For the case $n+1$, we have two sub-cases. Either $\sk \in \guard$ or $\sk \not\in \guard$. We will only prove the first, since the latter is analogous. There we have:
    \begingroup
    \allowdisplaybreaks
    \begin{align*}
        &~ \wslpn{n+1}{\ITE{\guard}{\cc_1}{\cc_2}}{\fg}{\ri}(\sk, \hh) \\
    =   &~ \ri \sepimp \step{\ITE{\guard}{\cc_1}{\cc_2}}{\anonfunc{\cc'} \wslpn{n}{\cc'}{\fg}{\ri} \sepcon \ri} \tag{Definition of \wslpsymbol} \\
    =   &~ \ri \sepimp (\wslpn{n}{\cc_1}{\fg}{\ri} \sepcon \ri) \tag{$\dagger$, see below} \\
    \geq&~ \wslpn{n}{\cc_1}{\fg}{\ri} \tag{\Cref{eq:reverse_modus_ponens}} \\
    \geq&~ \wslpn{n+1}{\cc_1}{\fg}{\ri} \tag{Monotonicity w.r.t $n$}
    \end{align*}
    \endgroup
    For $\dagger$, we have:
    \begingroup
    \allowdisplaybreaks
    \begin{align*}
        &~ \step{\ITE{\guard}{\cc_1}{\cc_2}}{\anonfunc{\cc'} \wslpn{n}{\cc'}{\fg}{\ri} \sepcon \ri}(\sk, \hh) \\
    =   &~ \inf \bigg\{ \sum \Big\llbag \pp \cdot (\wslpn{n}{\cc'}{\fg}{\ri} \sepcon \ri)(\sk', \hh') \\
        &~ \qquad \qquad \Big\vert~ \ITE{\guard}{\cc_1}{\cc_2}, (\sk, \hh) \optrans{\acta}{\pp} \cc', (\sk', \hh') \Big\rrbag \\
        &~ \qquad \bigg\vert~ \acta \in \Enabled{\ITE{\guard}{\cc_1}{\cc_2}, (\sk, \hh)} \bigg\} \tag{Definition of \stepsymbol} \\
    =   &~ \sum \Big\llbag 1 \cdot (\wslpn{n}{\cc_1}{\fg}{\ri} \sepcon \ri)(\sk, \hh) \\
        &~ \qquad \Big\vert~ \ITE{\guard}{\cc_1}{\cc_2}, (\sk, \hh) \optrans{\text{if-t}}{1} \cc_1, (\sk, \hh) \Big\rrbag \tag{$\sk \in \guard$} \\
    =   &~ (\wslpn{n}{\cc_1}{\fg}{\ri} \sepcon \ri)(\sk, \hh) \tag{Singleton set}
    \end{align*}
    \endgroup
    For $\sk \not\in \guard$ we similarly have 
    \begin{equation*}
        \wslpn{n+1}{\ITE{\guard}{\cc_1}{\cc_2}}{\fg}{\ri} \geq \wslpn{n+1}{\cc_2}{\fg}{\ri}~.
    \end{equation*} 
    Combining them, we have the claim from above.
\end{proof}

\begin{lemma}\label{lem:soundwhile}
    If $\inv \leq \iverson{\guard} \cdot \ff + \iverson{\neg \guard} \cdot \fg$ and $\safeTuple{\ff}{\cc}{\inv}{\ri}$ then for the loop we have $\safeTuple{\inv}{\WHILEDO{\guard}{\cc}}{\fg}{\ri}$.
\end{lemma}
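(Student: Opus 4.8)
The plan is to unfold the definition of $\wslpsymbol$ as a limit of its finite approximations and prove a uniform lower bound by induction on the step index. Concretely, since $\safeTuple{\cdot}{\cdot}{\cdot}{\cdot}$ abbreviates $\cdot \leq \wslp{\cdot}{\cdot}{\cdot}$ and $\wslp{\WHILEDO{\guard}{\cc}}{\fg}{\ri} = \inf_n \wslpn{n}{\WHILEDO{\guard}{\cc}}{\fg}{\ri}$ by \Cref{lem:alternate-wslp}, it suffices to show that $\inv \leq \wslpn{n}{\WHILEDO{\guard}{\cc}}{\fg}{\ri}$ holds for every $n$; the conclusion then follows by taking the infimum over $n$. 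The base case $n=0$ is immediate, since $\wslpn{0}{\cdot}{\cdot}{\cdot} = 1$ and $\inv$ is one-bounded. So the whole argument reduces to the inductive step.

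For the inductive step I would assume $\inv \leq \wslpn{n}{\WHILEDO{\guard}{\cc}}{\fg}{\ri}$ and argue pointwise, distinguishing whether the stack satisfies the guard (the guard depends only on the stack, so extending the heap by an $\ri$-heap via the magic wand cannot change which case applies). On $\neg\guard$ the loop makes a single \textsf{loop-f} step to $\TERM$, so $\wslpn{n+1}{\WHILEDO{\guard}{\cc}}{\fg}{\ri} = \ri \sepimp (\wslpn{n}{\TERM}{\fg}{\ri} \sepcon \ri) \geq \fg \geq \inv$ using \Cref{eq:reverse_modus_ponens} and the premise $\inv \leq \iverson{\guard}\cdot\ff + \iverson{\neg\guard}\cdot\fg$. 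On $\guard$ the loop makes a single \textsf{loop-t} step to $\COMPOSE{\cc}{\WHILEDO{\guard}{\cc}}$, hence on this region
\begin{equation*}
\wslpn{n+1}{\WHILEDO{\guard}{\cc}}{\fg}{\ri} \;=\; \ri \sepimp \bigl(\wslpn{n}{\COMPOSE{\cc}{\WHILEDO{\guard}{\cc}}}{\fg}{\ri} \sepcon \ri\bigr) \;\geq\; \wslpn{n}{\COMPOSE{\cc}{\WHILEDO{\guard}{\cc}}}{\fg}{\ri}
\end{equation*}
again by \Cref{eq:reverse_modus_ponens}. I would then chain three inequalities: first the (step-indexed form of the) sequential rule \Cref{lem:soundseq}, giving $\wslpn{n}{\COMPOSE{\cc}{\WHILEDO{\guard}{\cc}}}{\fg}{\ri} \geq \wslpn{n}{\cc}{\wslpn{n}{\WHILEDO{\guard}{\cc}}{\fg}{\ri}}{\ri}$; then the induction hypothesis together with monotonicity of $\wslpn{n}$ in its postexpectation (the inductive claim inside the proof of \Cref{lem:soundmonotone}), giving $\geq \wslpn{n}{\cc}{\inv}{\ri}$; and finally the premise $\safeTuple{\ff}{\cc}{\inv}{\ri}$ read as $\ff \leq \wslp{\cc}{\inv}{\ri} = \inf_m \wslpn{m}{\cc}{\inv}{\ri} \leq \wslpn{n}{\cc}{\inv}{\ri}$ (via \Cref{lem:antitone-of-wslp}), which yields $\geq \ff \geq \inv$ on $\guard$. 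Combining both cases closes the induction.

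The step I expect to be the main obstacle — and the point where the construction is genuinely clever — is reconciling the fixed step index $n$ with the fact that the loop body $\cc$ may itself take arbitrarily many steps, so that a naive induction seems not to line up. The resolution is that unfolding the loop costs exactly one index (from $n+1$ to $n$), and that the sequential lemma \Cref{lem:soundseq} holds at each fixed index rather than only in the limit; this lets me re-apply the induction hypothesis for $\WHILEDO{\guard}{\cc}$ at index $n$ while simultaneously feeding the body-premise $\ff \leq \wslp{\cc}{\inv}{\ri}$ in its limit (= infimum) form, which is $\leq \wslpn{n}{\cc}{\inv}{\ri}$ for every $n$. The only facts I must take care to have available at the level of the finite approximants $\wslpn{n}$ (rather than only for the limit) are the sequential inequality and postexpectation-monotonicity, both of which are established index-wise in the respective proofs.
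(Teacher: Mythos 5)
Your proof is correct and follows essentially the same route as the paper's: an induction on the step index $n$ for the claim $\inv \leq \wslpn{n}{\WHILEDO{\guard}{\cc}}{\fg}{\ri}$, a case split on the guard, one-step unfolding discharged via \Cref{eq:reverse_modus_ponens}, the step-indexed form of the sequential inequality from \Cref{lem:soundseq}, postexpectation-monotonicity combined with the induction hypothesis, and feeding the body premise in through $\ff \leq \wslp{\cc}{\inv}{\ri} \leq \wslpn{n}{\cc}{\inv}{\ri}$. The subtlety you single out -- that the sequential inequality and monotonicity must be available at each finite index rather than only in the limit -- is precisely how the paper resolves the apparent mismatch between the single unfolding step and the arbitrarily long body execution.
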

\begin{proof}
    We will now prove by induction on $n$ that if $\inv \leq \iverson{\guard} \cdot \wslp{\cc}{\inv}{\ri} + \iverson{\neg \guard} \cdot \fg$ then $\inv \leq \wslpn{n}{\WHILEDO{\guard}{\cc}}{\fg}{\ri}$. If this holds for all $n$, then also for the limit.

    For the base case $n=0$, we have $\wslpn{0}{\cc}{\fg}{\ri}=1\geq \inv$.

    Now we assume that for some fixed but arbitrary $n$ the claim holds as our induction hypothesis.

    For the induction step, we have two cases. If $\sk \not\in \guard$ then:
    \begin{align*}
        &~ \wslpn{n+1}{\WHILEDO{\guard}{\cc}}{\fg}{\ri}(\sk, \hh)\\
    =   &~ (\ri \sepimp \step{\WHILEDO{\guard}{\cc}}{\anonfunc{\cc'} \wslpn{n}{\cc'}{\fg}{\ri} \sepcon \ri})(\sk, \hh) \tag{Definition of \wslpsymbol} \\
    =   &~ (\ri \sepimp (\wslpn{n}{\TERM}{\fg}{\ri} \sepcon \ri))(\sk, \hh) \tag{$\dagger$, see below} \\
    \geq&~ \wslpn{n}{\TERM}{\fg}{\ri}(\sk, \hh) \tag{\Cref{eq:reverse_modus_ponens}} \\
    \geq&~ \fg(\sk, \hh) \tag{Definition of \wslpsymbol}\\
    \geq&~ (\iverson{\guard} \cdot \wslp{\cc}{\inv}{\ri} + \iverson{\neg \guard} \cdot \fg)(\sk, \hh)\tag{Since $\sk \not\in \guard$} \\
    \geq&~ \inv(\sk, \hh) \tag{Assumption}
    \end{align*}
    For $\dagger$, we have:
    \begin{align*}
        &~ \step{\WHILEDO{\guard}{\cc}}{\anonfunc{\cc'} \wslpn{n}{\cc'}{\fg}{\ri} \sepcon \ri}(\sk \hh) \\
    =   &~ \inf \bigg\{ \sum \left\llbag \pp \cdot (\wslpn{n}{\cc'}{\fg}{\ri} \sepcon \ri)(\sk', \hh') \mid \WHILEDO{\guard}{\cc}, (\sk, \hh) \optrans{\acta}{\pp} \cc', (\sk', \hh') \right\rrbag \\
        &~ \qquad \bigg\vert~ \acta \in \Enabled{\WHILEDO{\guard}{\cc}, (\sk, \hh)} \bigg\} \tag{Definition of \stepsymbol}\\
    =   &~ \sum \left\llbag 1 \cdot (\wslpn{n}{\TERM}{\fg}{\ri} \sepcon \ri)(\sk, \hh) \mid \WHILEDO{\guard}{\cc}, (\sk, \hh) \optrans{\text{loop-f}}{1} \TERM, (\sk, \hh) \right\rrbag  \tag{Operational semantics of loop and $\sk \not\in \guard$}\\
    =   &~ \wslpn{n}{\TERM}{\fg}{\ri} \sepcon \ri)(\sk, \hh) \tag{Sum over singleton}
    \end{align*}
    Then for $\sk \in \guard$ we have:
    \begin{align*}
        &~ \wslpn{n+1}{\WHILEDO{\guard}{\cc}}{\fg}{\ri}(\sk, \hh)\\
    =   &~ (\ri \sepimp \step{\WHILEDO{\guard}{\cc}}{\anonfunc{\cc'} \wslpn{n}{\cc'}{\fg}{\ri} \sepcon \ri})(\sk, \hh) \tag{Definition of \wslpsymbol} \\
    =   &~ (\ri \sepimp (\wslpn{n}{\COMPOSE{\cc}{\WHILEDO{\guard}{\cc}}}{\fg}{\ri} \sepcon \ri))(\sk, \hh) \tag{$\dagger\dagger$} \\
    \geq&~ \wslpn{n}{\COMPOSE{\cc}{\WHILEDO{\guard}{\cc}}}{\fg}{\ri}(\sk, \hh) \tag{\Cref{eq:reverse_modus_ponens}} \\
    \geq&~ \wslpn{n}{\cc}{\wslpn{n}{\WHILEDO{\guard}{\cc}}{\fg}{\ri}}{\ri}(\sk, \hh) \tag{\Cref{lem:soundseq}} \\
    \geq&~ \wslpn{n}{\cc}{\inv}{\ri}(\sk, \hh) \tag{Induction hypothesis and Monotonicity of $\wslpsymbol_n$ w.r.t to postexpectation} \\
    \geq&~ \wslp{\cc}{\inv}{\ri}(\sk, \hh) \tag{\Cref{lem:alternate-wslp}} \\
    \geq&~ (\iverson{\guard} \cdot \wslp{\cc}{\inv}{\ri} + \iverson{\neg \guard} \cdot \fg)(\sk, \hh) \tag{Since $\sk \in \guard$} \\
    \geq&~ \inv \tag{Assumption}
    \end{align*}
    For $\dagger\dagger$ we have:
    \begin{align*}
        &~ \step{\WHILEDO{\guard}{\cc}}{\anonfunc{\cc'} \wslpn{n}{\cc'}{\fg}{\ri} \sepcon \ri}(\sk \hh) \\
    =   &~ \inf \bigg\{ \sum \left\llbag \pp \cdot (\wslpn{n}{\cc'}{\fg}{\ri} \sepcon \ri)(\sk', \hh') \mid \WHILEDO{\guard}{\cc}, (\sk, \hh) \optrans{\acta}{\pp} \cc', (\sk', \hh') \right\rrbag \\
        &~ \qquad \bigg\vert~ \acta \in \Enabled{\WHILEDO{\guard}{\cc}, (\sk, \hh)} \bigg\} \tag{Definition of \stepsymbol}\\
    =   &~ \sum \Big\llbag 1 \cdot (\wslpn{n}{\COMPOSE{\cc}{\WHILEDO{\guard}{\cc}}}{\fg}{\ri} \sepcon \ri)(\sk, \hh)\\
        &~ \qquad \Big\vert~ \WHILEDO{\guard}{\cc}, (\sk, \hh) \optrans{\text{loop-t}}{1} \COMPOSE{\cc}{\WHILEDO{\guard}{\cc}}, (\sk, \hh) \Big\rrbag  \tag{Operational semantics of loop and $\sk \in \guard$}\\
    =   &~ \wslpn{n}{\COMPOSE{\cc}{\WHILEDO{\guard}{\cc}}}{\fg}{\ri} \sepcon \ri)(\sk, \hh) \tag{Sum over singleton}
    \end{align*}
    Thus, the claim is proven.
\end{proof}

\begin{lemma}\label{lem:soundpchoice}
    If $\safeTuple{\ff_1}{\cc_1}{\fg}{\ri}$, $\safeTuple{\ff_2}{\cc_2}{\fg}{\ri}$ and $\ff \leq \ee_{\pp} \cdot \ff_1 + (1-\ee_{\pp}) \cdot \ff_2$ then $\safeTuple{\ff}{\PCHOICE{\cc_1}{\ee_{\pp}}{\cc_2}}{\fg}{\ri}$.
\end{lemma}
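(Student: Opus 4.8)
The judgement $\safeTuple{\ff}{\PCHOICE{\cc_1}{\ee_{\pp}}{\cc_2}}{\fg}{\ri}$ unfolds by definition to $\ff \leq \wslp{\PCHOICE{\cc_1}{\ee_{\pp}}{\cc_2}}{\fg}{\ri}$, and the two premises are exactly $\ff_1 \leq \wslp{\cc_1}{\fg}{\ri}$ and $\ff_2 \leq \wslp{\cc_2}{\fg}{\ri}$. The plan is therefore to reduce the whole statement to the single inequality
\[
    \ee_{\pp} \cdot \wslp{\cc_1}{\fg}{\ri} + (1-\ee_{\pp}) \cdot \wslp{\cc_2}{\fg}{\ri} \quad \leq \quad \wslp{\PCHOICE{\cc_1}{\ee_{\pp}}{\cc_2}}{\fg}{\ri}~.
\]
Granting this, I chain it with the two premises, scaled by the non-negative factors $\ee_{\pp}$ and $1-\ee_{\pp}$ and added (which preserves $\leq$ since scalar multiplication by a non-negative quantity and addition are monotone), and then with the hypothesis $\ff \leq \ee_{\pp} \cdot \ff_1 + (1-\ee_{\pp}) \cdot \ff_2$, obtaining $\ff \leq \wslp{\PCHOICE{\cc_1}{\ee_{\pp}}{\cc_2}}{\fg}{\ri}$ as required. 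Following the style of \Cref{lem:soundite,lem:soundwhile}, I would establish the approximant version
\[
    \ee_{\pp} \cdot \wslpn{n}{\cc_1}{\fg}{\ri} + (1-\ee_{\pp}) \cdot \wslpn{n}{\cc_2}{\fg}{\ri} \quad \leq \quad \wslpn{n+1}{\PCHOICE{\cc_1}{\ee_{\pp}}{\cc_2}}{\fg}{\ri}
\]
for every $n$ and then pass to the limit.

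The central computation is to unfold the right-hand side. As $\PCHOICE{\cc_1}{\ee_{\pp}}{\cc_2} \neq \TERM$, it equals $\ri \sepimp \step{\PCHOICE{\cc_1}{\ee_{\pp}}{\cc_2}}{\anonfunc{\cc'} \wslpn{n}{\cc'}{\fg}{\ri} \sepcon \ri}$. A probabilistic choice enables exactly one action, $\text{prob}$, whose distribution is the biased coin flip of rules PROB-L and PROB-R; hence the infimum over enabled actions is over a singleton and
\[
    \step{\PCHOICE{\cc_1}{\ee_{\pp}}{\cc_2}}{\anonfunc{\cc'} \wslpn{n}{\cc'}{\fg}{\ri} \sepcon \ri} = \ee_{\pp} \cdot (\wslpn{n}{\cc_1}{\fg}{\ri} \sepcon \ri) + (1-\ee_{\pp}) \cdot (\wslpn{n}{\cc_2}{\fg}{\ri} \sepcon \ri)~,
\]
where the degenerate values $\ee_{\pp}(\sk) \in \{0,1\}$ are harmless since the missing branch is weighted by $0$.

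It then remains to move the guard $\ri \sepimp (\cdot)$ across this convex combination, which I would argue pointwise at a fixed state $(\sk,\hh)$ by a case split. If no heap disjoint from $\hh$ satisfies $\ri$, the defining set of the magic wand is empty, so the right-hand approximant evaluates to $1$, which bounds the convex combination of the one-bounded quantities $\wslpn{n}{\cc_i}{\fg}{\ri}$ from above by \Cref{lem:zero-one-bounded}. Otherwise $\ee_{\pp}(\sk)$ is a genuine constant across the non-empty infimum, so superadditivity of the infimum under non-negative scalars lets me split the wand termwise, and reverse modus ponens (\Cref{eq:reverse_modus_ponens}) removes the inner $\sepcon\, \ri$ from each term, yielding the approximant inequality. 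I deliberately avoid invoking superdistributivity of the wand over addition in one shot, because the empty-set convention makes that step delicate once the summands are scaled; the explicit case split is the cleanest route and is, I expect, the main obstacle of the proof.

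Finally, I would take $n \to \infty$. By \Cref{lem:antitone-of-wslp} the approximants $\wslpn{n}{\cc_i}{\fg}{\ri}$ form decreasing sequences bounded in $[0,1]$, so their pointwise limits exist, and by continuity of the finite convex combination the left-hand side converges to $\ee_{\pp} \cdot \wslp{\cc_1}{\fg}{\ri} + (1-\ee_{\pp}) \cdot \wslp{\cc_2}{\fg}{\ri}$, while the right-hand side converges to $\wslp{\PCHOICE{\cc_1}{\ee_{\pp}}{\cc_2}}{\fg}{\ri}$ by \Cref{lem:alternate-wslp} (the index shift being irrelevant in the limit). Since the inequality holds for every $n$, it is preserved under the limit, which establishes the reduced inequality and hence the lemma.
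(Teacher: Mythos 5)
Your proof is correct and follows essentially the same route as the paper's: induction on the approximants, the identical singleton-action computation of $\stepsymbol$ for probabilistic choice, and a limit argument at the end. The only local difference is how the magic-wand step is discharged: the paper first pulls the convex combination inside the $\sepcon\, \ri$ using subdistributivity of $\sepcon$ with $+$ and monotonicity of $\sepimp$, and then applies reverse modus ponens (\Cref{eq:reverse_modus_ponens}) once, which sidesteps the empty-set case split that your termwise distribution of the wand over the scaled sum requires --- both variants are sound.
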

\begin{proof}
    We prove for all $n$ that 
    \begin{equation*}
        \ee_{\pp} \cdot \wslpn{n}{\cc_1}{\fg}{\ri} + (1-\ee_{\pp}) \cdot \wslpn{n}{\cc_2}{\fg}{\ri} \leq \wslpn{n}{\PCHOICE{\cc_1}{\ee_{\pp}}{\cc_2}}{\fg}{\ri}~.
    \end{equation*} 
    If this holds, then the same also holds for their limits and thus by monotonicity of all operations the claim is proven.

    For the case $n=0$ we have $\wslpn{0}{\PCHOICE{\cc_1}{\ee_{\pp}}{\cc_2}}{\fg}{\ri} = 1 \geq \ee_{\pp} \cdot \wslpn{n}{\cc_1}{\fg}{\ri} + (1-\ee_{\pp}) \cdot \wslpn{n}{\cc_2}{\fg}{\ri}$.

    For the case $n+1$ we have 
    \begin{align*}
        &~ \wslpn{n+1}{\PCHOICE{\cc_1}{\ee_{\pp}}{\cc_2}}{\fg}{\ri} \\
    =   &~ \ri \sepimp \step{\PCHOICE{\cc_1}{\ee_{\pp}}{\cc_2}}{\anonfunc{\cc'} \wslpn{n}{\cc'}{\fg}{\ri} \sepcon \ri} \tag{Definition of \stepsymbol} \\
    =   &~ \ri \sepimp (\ee_{\pp} \cdot \wslpn{n}{\cc_1}{\fg}{\ri} \sepcon \ri + (1-\ee_{\pp}) \cdot \wslpn{n}{\cc_2}{\fg}{\ri} \sepcon \ri) \tag{$\dagger$, see below} \\
    \geq&~ \ri \sepimp ((\ee_{\pp} \cdot \wslpn{n}{\cc_1}{\fg}{\ri} + (1-\ee_{\pp}) \cdot \wslpn{n}{\cc_2}{\fg}{\ri}) \sepcon \ri) \tag{Subdist. of $\sepcon$ with plus and monotonicity of $\sepimp$} \\
    \geq&~ (\ee_{\pp} \cdot \wslpn{n}{\cc_1}{\fg}{\ri} + (1-\ee_{\pp}) \cdot \wslpn{n}{\cc_2}{\fg}{\ri}) \tag{\Cref{eq:reverse_modus_ponens}} \\
    \geq&~ (\ee_{\pp} \cdot \wslpn{n+1}{\cc_1}{\fg}{\ri} + (1-\ee_{\pp}) \cdot \wslpn{n+1}{\cc_2}{\fg}{\ri}) \tag{Antitonicity of $\wslpsymbol_n$ w.r.t $n$}
    \end{align*}
    Now for $\dagger$:
    \begingroup
    \allowdisplaybreaks
    \begin{align*}
        &~ \step{\PCHOICE{\cc_1}{\ee_{\pp}}{\cc_2}}{\anonfunc{\cc'} \wslpn{n}{\cc'}{\fg}{\ri} \sepcon \ri}(\sk, \hh) \\
    =   &~ \inf \bigg\{ \sum \left\llbag \pp \cdot (\wslpn{n}{\cc'}{\fg}{\ri} \sepcon \ri)(\sk', \hh') \mid \PCHOICE{\cc_1}{\ee_{\pp}}{\cc_2}, (\sk, \hh) \optrans{\acta}{\pp} \cc', (\sk', \hh') \right\rrbag \\
        &~ \qquad \bigg\vert~ \acta \in \Enabled{\PCHOICE{\cc_1}{\ee_{\pp}}{\cc_2}, (\sk, \hh)} \bigg\} \tag{Definition of \stepsymbol} \\
    =   &~ \sum \left\llbag \pp \cdot (\wslpn{n}{\cc'}{\fg}{\ri} \sepcon \ri)(\sk', \hh') \mid \PCHOICE{\cc_1}{\ee_{\pp}}{\cc_2}, (\sk, \hh) \optrans{prob}{\pp} \cc', (\sk', \hh') \right\rrbag \tag{Operational semantics of prob. choice} \\
    =   &~ \ee_{\pp}(\sk) \cdot (\wslpn{n}{\cc_1}{\fg}{\ri} \sepcon \ri)(\sk, \hh) + (1-\ee_{\pp})(\sk) \cdot (\wslpn{n}{\cc_2}{\fg}{\ri} \sepcon \ri)(\sk, \hh) \tag{Operational semantics of prob. choice}\\
    =   &~ (\ee_{\pp} \cdot \wslpn{n}{\cc_1}{\fg}{\ri} \sepcon \ri + (1-\ee_{\pp}) \cdot \wslpn{n}{\cc_2}{\fg}{\ri} \sepcon \ri)(\sk, \hh) \tag{Pointwise operations} 
    \end{align*}
    \endgroup
    Thus, the claim is proven.
\end{proof}

\begin{lemma}\label{lem:soundatomicregion}
    If $\safeTuple{\ff \sepcon \ri}{\cc}{\fg \sepcon \ri}{\emp}$ then $\safeTuple{\ff}{\ATOMIC{\cc}}{\fg}{\ri}$.
\end{lemma}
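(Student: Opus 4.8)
The plan is to prove, for every $n \in \Nats$, the pointwise inequality $\ff \leq \wslpn{n}{\ATOMIC{\cc}}{\fg}{\ri}$; by \Cref{lem:alternate-wslp} this yields $\ff \leq \wslp{\ATOMIC{\cc}}{\fg}{\ri}$, which is exactly $\safeTuple{\ff}{\ATOMIC{\cc}}{\fg}{\ri}$. Throughout I will use \Cref{thm:wlp-wslp-equality} to rewrite the hypothesis $\safeTuple{\ff \sepcon \ri}{\cc}{\fg \sepcon \ri}{\emp}$ as $\ff \sepcon \ri \leq \wlp{\cc}{\fg \sepcon \ri}$, and the fact that $\cc$, residing inside an atomic region, is tame and hence induces a fully probabilistic (unique-scheduler) Markov chain. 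The case $n = 0$ is immediate, since $\ff \leq 1 = \wslpn{0}{\ATOMIC{\cc}}{\fg}{\ri}$ because expectations are one-bounded.

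For $n \geq 1$, abbreviate $t_n = \anonfunc{\cc'} \wslpn{n-1}{\cc'}{\fg}{\ri} \sepcon \ri$, so that $\wslpn{n}{\ATOMIC{\cc}}{\fg}{\ri} = \ri \sepimp \step{\ATOMIC{\cc}}{t_n}$. Unfolding the magic wand, it suffices to show, for every heap $\hh_{\ri}$ with $\ri(\sk, \hh_{\ri}) = 1$ and $\hh \disjoint \hh_{\ri}$, that $\step{\ATOMIC{\cc}}{t_n}(\sk, \hh \joinheap \hh_{\ri}) \geq \ff(\sk, \hh)$; taking the infimum over such $\hh_{\ri}$ then gives $\ff(\sk, \hh) \leq (\ri \sepimp \step{\ATOMIC{\cc}}{t_n})(\sk, \hh)$. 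To evaluate the step I use that the only action enabled for $\ATOMIC{\cc}$ is $\text{atomic}$, so the outer infimum is trivial, and that its transitions (rules ATOM-END, ATOM-LOOP, ATOM-ABT in \Cref{fig:op-con}) are governed by the reachability of the tame body $\cc$. I record three value bounds for $t_n$: first, $t_n(\TERM) = \wslpn{n-1}{\TERM}{\fg}{\ri} \sepcon \ri \geq \fg \sepcon \ri$, using that $\wslpn{n-1}{\TERM}{\fg}{\ri}$ equals $\fg$ (for $n \geq 2$) or $1 \geq \fg$ (for $n = 1$) together with monotonicity of $\sepcon$ (\Cref{eq:sepcon_monoton}); second, $t_n(\DIVERGE) = 1 \sepcon \ri$ by \Cref{lem:sounddiv}; and third, aborting transitions contribute $0$, exactly as aborted runs contribute $0$ to $\wlpsymbol$.

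The crucial observation is that the diverging transition of ATOM-LOOP returns to the \emph{unchanged} state $(\sk, \hh \joinheap \hh_{\ri})$, whose heap contains the $\ri$-subheap $\hh_{\ri}$; hence $t_n(\DIVERGE)(\sk, \hh \joinheap \hh_{\ri}) = (1 \sepcon \ri)(\sk, \hh \joinheap \hh_{\ri}) = 1$, which coincides with the weight $\wlpsymbol$ assigns to non-termination. Consequently the step sum is bounded below, term by term, by $\sum \llbag \pp \cdot (\fg \sepcon \ri)(\sk', \hh') \mid \cc, (\sk, \hh \joinheap \hh_{\ri}) \optransStar{}{\pp} \TERM, (\sk', \hh') \rrbag + \pp_{div}$, which is precisely $\wlp{\cc}{\fg \sepcon \ri}(\sk, \hh \joinheap \hh_{\ri})$ since $\cc$ is tame. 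Chaining with the hypothesis yields $\step{\ATOMIC{\cc}}{t_n}(\sk, \hh \joinheap \hh_{\ri}) \geq \wlp{\cc}{\fg \sepcon \ri}(\sk, \hh \joinheap \hh_{\ri}) \geq (\ff \sepcon \ri)(\sk, \hh \joinheap \hh_{\ri}) \geq \ff(\sk, \hh) \cdot \ri(\sk, \hh_{\ri}) = \ff(\sk, \hh)$, where the last inequality instantiates the supremum defining $\sepcon$ with the split $\hh \joinheap \hh_{\ri}$.

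I expect the main obstacle to be the step/$\wlpsymbol$ correspondence for atomic regions: matching the finite, single-action unfolding of $\stepsymbol$ against the limit-defined reachability semantics of the tame body $\cc$, and in particular justifying that the resource invariant supplied by the magic wand makes the divergence weight $(1 \sepcon \ri)$ collapse to $1$ so that it agrees with the non-termination contribution in $\wlpsymbol$. This is the point where the resource invariant, dropped while stepping inside the atom (recall the $\emp$ in the hypothesis), is re-introduced; handling it carefully — rather than the surrounding monotonicity bookkeeping — is the substantive part of the argument.
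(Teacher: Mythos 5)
Your proposal is correct and follows essentially the same route as the paper's proof of \Cref{lem:soundatomicregion}: induction on $n$, reduction of $\step{\ATOMIC{\cc}}{\cdot}$ to $\wlp{\cc}{\fg \sepcon \ri}$ via the ATOM rules with the divergence weight $(1 \sepcon \ri)$ collapsing to $1$ under the magic wand's guarantee of an $\ri$-subheap, then \Cref{thm:wlp-wslp-equality} and the hypothesis. The only differences are presentational — you unfold $\sepimp$ and $\sepcon$ pointwise where the paper argues algebraically via \Cref{eq:reverse_modus_ponens}, and you assume tameness of $\cc$ up front where the paper also dispatches the degenerate non-tame case (no enabled action, so $\stepsymbol$ is the empty infimum $1$) in a final remark.
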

\begin{proof}
    We prove for all $n$ that $\ff \leq \wslpn{n}{\ATOMIC{\cc}}{\fg}{\ri}$.

    For $n=0$ we have $\wslpn{0}{\ATOMIC{\cc}}{\fg}{\ri} = 1 \geq \ff$.

    For $n+1>0$ and where $\cc$ is tame we have 
    \begin{align*}
        &~ \wslpn{n+1}{\ATOMIC{\cc}}{\fg}{\ri} \\
    =   &~ \ri \sepimp \step{\ATOMIC{\cc}}{\anonfunc{\cc'} \wslpn{n}{\cc'}{\fg}{\ri} \sepcon \ri} \tag{Definition of \wslpsymbol} \\
    \geq&~ \ri \sepimp \wlp{\cc}{\fg \sepcon \ri }\tag{$\dagger$, see below} \\
    =   &~ \ri \sepimp \wslp{\cc}{\fg \sepcon \ri}{\emp} \tag{\Cref{thm:wlp-wslp-equality}} \\
    \geq&~ \ri \sepimp (\ff \sepcon \ri)  \tag{Assumption} \\
    \geq&~ \ff \tag{\Cref{eq:reverse_modus_ponens}}
    \end{align*}
    For $\dagger$, we have:
    \begingroup 
    \allowdisplaybreaks
    \begin{align*}
        &~ \ri \sepimp \step{\ATOMIC{\cc}}{\anonfunc{\cc'} \wslpn{n}{\cc'}{\fg}{\ri} \sepcon \ri} \\
    =   &~ \ri \sepimp \Bigg( \anonfunc{(\sk, \hh)} \inf \bigg\{ \sum \big\llbag \pp \cdot (\wslpn{n}{\cc'}{\fg}{\ri} \sepcon \ri)(\sk', \hh') \\
        &~ \qquad \qquad \qquad \qquad \qquad \big\vert~ \ATOMIC{\cc}, (\sk, \hh) \optrans{\acta}{\pp} \cc', (\sk', \hh') \big\rrbag \\
        &~ \qquad \bigg\vert~ \acta \in \Enabled{\ATOMIC{\cc}, (\sk, \hh)} \bigg\} \Bigg) \tag{Definition of \stepsymbol} \\
    =   &~ \ri \sepimp \bigg( \anonfunc{(\sk, \hh)} \sum \Big\llbag \pp \cdot (\wslpn{n}{\cc'}{\fg}{\ri} \sepcon \ri)(\sk', \hh') \\
        &~ \qquad \qquad \Big\vert~ \ATOMIC{\cc}, (\sk, \hh) \optrans{atomic}{\pp} \cc', (\sk', \hh') \Big\rrbag \bigg)\tag{$\Enabled{\ATOMIC{\cc}, (\sk, \hh)}$ is singleton} \\
    =   &~ \ri \sepimp \bigg( \anonfunc{(\sk, \hh)} \sum \Big\llbag \pp \cdot (\wslpn{n}{\TERM}{\fg}{\ri} \sepcon \ri)(\sk', \hh')\\
        &~ \qquad \qquad \Big\vert~ \ATOMIC{\cc}, (\sk, \hh) \optrans{atomic}{\pp} \TERM, (\sk', \hh') \Big\rrbag \\
        &+ \sum \Big\llbag \pp \cdot (\wslpn{n}{\DIVERGE}{\fg}{\ri} \sepcon \ri)(\sk, \hh) \bigg) \\
        &\quad \qquad \Big\vert~ \ATOMIC{\cc}, (\sk, \hh) \optrans{atomic}{\pp} \DIVERGE, (\sk, \hh) \Big\rrbag\Bigg) \tag{Splitting reachable programs} \\
    =   &~ \ri \sepimp \Bigg( \anonfunc{(\sk, \hh)} \sum \Big\llbag \pp \cdot (\wslpn{n}{\TERM}{\fg}{\ri} \sepcon \ri)(\sk', \hh') \\
        &\qquad \qquad \Big\vert~ \ATOMIC{\cc}, (\sk, \hh) \optrans{atomic}{\pp} \TERM, (\sk', \hh') \Big\rrbag \\
        &+ \sum \Big\llbag \pp \cdot (1 \sepcon \ri)(\sk, \hh)  \Big\vert~ \ATOMIC{\cc}, (\sk, \hh) \optrans{atomic}{\pp} \DIVERGE, (\sk, \hh) \Big\rrbag \Bigg) \tag{\Cref{lem:sounddiv}} \\
    \geq&~ \ri \sepimp \Bigg( \anonfunc{(\sk, \hh)} \sum \left\llbag \pp \cdot (\fg \sepcon \ri)(\sk', \hh') \mid \ATOMIC{\cc}, (\sk, \hh) \optrans{atomic}{\pp} \TERM, (\sk', \hh') \right\rrbag \\
        &+ \sum \Big\llbag \pp \cdot (1 \sepcon \ri)(\sk, \hh)  \Big\vert~ \ATOMIC{\cc}, (\sk, \hh) \optrans{atomic}{\pp} \DIVERGE, (\sk, \hh) \Big\rrbag \Bigg) \tag{Definition of \wslpsymbol} \\
    \geq&~ \ri \sepimp \Bigg( \anonfunc{(\sk, \hh)} \sum \left\llbag \pp \cdot (\fg \sepcon \ri)(\sk', \hh') \mid \cc, (\sk, \hh) \optransStar{}{\pp} \TERM, (\sk', \hh') \right\rrbag \\
        &+ \sum \Big\llbag \pp \cdot (1 \sepcon \ri)(\sk, \hh)  \Big\vert~ \cc, (\sk, \hh) \optransStar{}{\pp} \dots \Big\rrbag \Bigg) \tag{Operational semantics of atomic regions} \\
    =   &~ \ri \sepimp \Bigg( \anonfunc{(\sk, \hh)} \sum \left\llbag \pp \cdot (\fg \sepcon \ri)(\sk', \hh') \mid \cc, (\sk, \hh) \optransStar{}{\pp} \TERM, (\sk', \hh') \right\rrbag \\
        &+ \sum \Big\llbag \pp \cdot (1)(\sk, \hh)  \Big\vert~ \cc, (\sk, \hh) \optransStar{}{\pp} \dots \Big\rrbag \Bigg) \tag{Since there exists $\hh_{\ri}$ with $\ri(\sk, \hh_{\ri})=1$ and $\hh = \hh_{\ri} \joinheap \hh'$ for some $\hh'$ due to $\sepimp$} \\
    =   &~ \ri \sepimp \wlp{\cc}{\fg \sepcon \ri} \tag{Definition of \wlpsymbol} \\
    \end{align*}
    \endgroup

    The difficulty of this proof would have been far greater, if we had allowed non-tame programs in atomic regions. Since we block when encountering non-tame programs, we have
    $\wslpn{n+1}{\ATOMIC{\cc}}{\fg}{\ri} = \ri \sepimp \step{\cc}{\anonfunc{\cc'} \wslpn{n}{\cc'}{\fg}{\ri} \sepcon \ri} = \ri \sepimp 1 = 1 \geq \ff$.

    Thus, the claim is proven.
\end{proof}

\begin{lemma}\label{lem:soundshare}
    If $\safeTuple{\ff}{\cc}{\fg}{\ri \sepcon \rj}$ then $\safeTuple{\ff \sepcon \ri}{\cc}{\fg \sepcon \ri}{\rj}$.
\end{lemma}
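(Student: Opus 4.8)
The plan is to reduce the rule to one schematic inequality between weakest resource-safe liberal preexpectations and then establish that inequality by a step-indexed induction, exactly in the style of the other soundness lemmas in this appendix. Unfolding the notation, the premise reads $\ff \leq \wslp{\cc}{\fg}{\ri \sepcon \rj}$ and the goal reads $\ff \sepcon \ri \leq \wslp{\cc}{\fg \sepcon \ri}{\rj}$. Since $\sepcon$ is monotone (\Cref{eq:sepcon_monoton}), it suffices to prove the program-schematic bound
\[
    \wslp{\cc}{\fg}{\ri \sepcon \rj} \sepcon \ri \quad \leq \quad \wslp{\cc}{\fg \sepcon \ri}{\rj}~,
\]
because then $\ff \sepcon \ri \leq \wslp{\cc}{\fg}{\ri \sepcon \rj} \sepcon \ri \leq \wslp{\cc}{\fg \sepcon \ri}{\rj}$ follows immediately.

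First I would prove the finite-horizon version $\wslpn{n}{\cc}{\fg}{\ri \sepcon \rj} \sepcon \ri \leq \wslpn{n}{\cc}{\fg \sepcon \ri}{\rj}$, uniformly in $\cc$, by induction on $n$. The base case $n=0$ is immediate, the left-hand side being $1 \sepcon \ri \leq 1$ (as $\ri$ is qualitative) and the right-hand side being $1$; the case $\cc=\TERM$ with $n>0$ is an equality, both sides reducing to $\fg \sepcon \ri$. To lift the finite-horizon bound to the limit I would invoke antitonicity of $\wslpsymbol_n$ in $n$ (\Cref{lem:antitone-of-wslp}) together with \Cref{lem:alternate-wslp}, so that $\wslp{\cc}{\fg}{\ri\sepcon\rj}$ is the pointwise infimum of the $\wslpn{n}{\cc}{\fg}{\ri\sepcon\rj}$. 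Then for each $n$, monotonicity of $\sepcon$ gives $\wslp{\cc}{\fg}{\ri\sepcon\rj}\sepcon \ri \leq \wslpn{n}{\cc}{\fg}{\ri\sepcon\rj}\sepcon \ri \leq \wslpn{n}{\cc}{\fg\sepcon\ri}{\rj}$, and taking the infimum over $n$ on the right yields the schematic bound.

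The heart of the argument is the inductive step for $\cc \neq \TERM$. Here I would unfold both sides through the definition of $\wslpsymbol_{n+1}$, so that the goal becomes
\[
    \bigl((\ri \sepcon \rj) \sepimp \step{\cc}{\anonfunc{\cc'} \wslpn{n}{\cc'}{\fg}{\ri \sepcon \rj} \sepcon (\ri \sepcon \rj)}\bigr) \sepcon \ri \;\leq\; \rj \sepimp \step{\cc}{\anonfunc{\cc'} \wslpn{n}{\cc'}{\fg \sepcon \ri}{\rj} \sepcon \rj}~.
\]
By adjointness of $\sepcon$ and $\sepimp$ (\Cref{eq:adjointness}) this is equivalent to the same left-hand side, further multiplied by $\rj$, lying below $\step{\cc}{\anonfunc{\cc'} \wslpn{n}{\cc'}{\fg \sepcon \ri}{\rj} \sepcon \rj}$. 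Using associativity and commutativity of $\sepcon$, the freed factors $\ri$ and $\rj$ recombine into $\ri \sepcon \rj$, which is qualitative because $\ri$ and $\rj$ are; hence modus ponens (\Cref{eq:modus_ponens}) cancels it against the leading wand, leaving $\step{\cc}{\anonfunc{\cc'} \wslpn{n}{\cc'}{\fg}{\ri \sepcon \rj} \sepcon (\ri \sepcon \rj)}$. It then remains to show this is dominated by $\step{\cc}{\anonfunc{\cc'} \wslpn{n}{\cc'}{\fg \sepcon \ri}{\rj} \sepcon \rj}$, which by monotonicity of $\stepsymbol$ (\Cref{lem:monotone-of-step}) reduces to the pointwise-in-$\cc'$ bound $\wslpn{n}{\cc'}{\fg}{\ri\sepcon\rj}\sepcon \ri \sepcon \rj \leq \wslpn{n}{\cc'}{\fg\sepcon\ri}{\rj}\sepcon \rj$; re-associating the left side as $(\wslpn{n}{\cc'}{\fg}{\ri\sepcon\rj}\sepcon \ri)\sepcon \rj$ and applying the induction hypothesis together with monotonicity of $\sepcon$ closes the case.

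I expect the main obstacle to be the bookkeeping in this algebraic chain: keeping track of which separating conjunction the qualitative invariant $\ri \sepcon \rj$ must be cancelled against, and re-associating carefully so that adjointness pushes exactly $\rj$ across the outer wand while the residual $\ri \sepcon \rj$ meets the inner wand for modus ponens. A secondary point worth checking is that qualitativeness of $\ri$ and $\rj$ (hence of $\ri \sepcon \rj$) is genuinely used, since the wand laws \Cref{eq:adjointness} and \Cref{eq:modus_ponens} are stated only for qualitative first arguments.
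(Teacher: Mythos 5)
Your proposal is correct and follows essentially the same route as the paper: both reduce the rule to the finite-horizon inequality $\wslpn{n}{\cc}{\fg}{\ri \sepcon \rj} \sepcon \ri \leq \wslpn{n}{\cc}{\fg \sepcon \ri}{\rj}$, prove it by induction on $n$ with the same base cases, and discharge the inductive step via monotonicity of $\stepsymbol$ and the wand/separating-multiplication adjunction. The only cosmetic difference is that the paper rewrites $\ri \sepimp (\rj \sepimp \hole)$ into $(\ri \sepcon \rj) \sepimp \hole$ using \Cref{eq:combine_magic_wand} and applies adjointness once at the end, whereas you apply adjointness first and cancel $\ri \sepcon \rj$ by modus ponens (\Cref{eq:modus_ponens}) — two interchangeable presentations of the same adjunction argument.
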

\begin{proof}
    To prove this, we first prove by induction on $n$ that for all programs $\cc$ we have the inequality $\wslpn{n}{\cc}{\fg}{\rj \sepcon \ri} \sepcon \ri \leq \wslpn{n}{\cc}{\fg \sepcon \ri}{\rj}$.

    For the induction base $n=0$, the claim holds trivially as $\wslpn{0}{\cc}{\fg \sepcon \ri}{\rj} =1$.

    We now assume that the claim holds for some arbitrary but fixed $n$ as our induction hypothesis.

    For the induction step, we have two cases. If $\cc=\TERM$ then we have that the equality $\wslpn{n+1}{\TERM}{\fg \sepcon \ri}{\rj}= \fg \sepcon \ri = \wslpn{n+1}{\TERM}{\fg}{\rj \sepcon \ri} \sepcon \ri$ holds. Thus, we assume that $\cc\neq\TERM$. There we prove now
    \begin{align*}
        &~\ri \sepimp \wslpn{n+1}{\cc}{\fg \sepcon \ri}{\rj} \\
    =   &~\ri \sepimp \left( \rj \sepimp \step{\cc}{\anonfunc{\cc'} \wslpn{n}{\cc'}{\fg \sepcon \ri}{\rj} \sepcon \rj}\right) \tag{Definition of \wslpsymbol}\\
    =   &~\left(\ri \sepcon \rj\right) \sepimp \step{\cc}{\anonfunc{\cc'} \wslpn{n}{\cc'}{\fg \sepcon \ri}{\rj} \sepcon \rj} \tag{\Cref{eq:combine_magic_wand}}\\
    \geq&~\left(\ri \sepcon \rj\right) \sepimp \step{\cc}{\anonfunc{\cc'} \wslpn{n}{\cc'}{\fg}{\ri \sepcon \rj} \sepcon \ri \sepcon \rj} \tag{Induction Hypothesis} \\
    =   &~\wslpn{n+1}{\cc}{\fg}{\ri \sepcon \rj} \tag{Definition of \wslpsymbol}
    \end{align*}
    and we have
    \begin{align*}
        &~ \wslpn{n+1}{\cc}{\fg}{\rj \sepcon \ri} \leq \ri \sepimp \wslpn{n+1}{\cc}{\fg \sepcon \ri}{\rj}\\
    \text{iff} ~&~ \wslpn{n+1}{\cc}{\fg}{\rj \sepcon \ri} \sepcon \ri \leq \wslpn{n+1}{\cc}{\fg \sepcon \ri}{\rj}
    \end{align*}
    by adjointness (cf. \Cref{eq:adjointness}).

    Lastly we now have:
    \begin{align*}
        &~ \wslp{\cc}{\fg \sepcon \ri}{\rj}\\
    =   &~ \lim_{n \rightarrow \infty} \wslpn{n}{\cc}{\fg \sepcon \ri}{\rj} \tag{Definition of \wslpsymbol}\\
    \geq&~ \lim_{n \rightarrow \infty} \wslpn{n}{\cc}{\fg}{\rj \sepcon \ri} \sepcon \ri \tag{See above} \\
    =   &~ \wslp{\cc}{\fg}{\rj \sepcon \ri} \sepcon \ri \tag{Definition of \wslpsymbol} \\
    \geq&~ \ff \sepcon \ri \tag{Assumption}
    \end{align*}
    Thus, the claim is proven.
\end{proof}

\begin{lemma}\label{lem:soundconcur}
    If $\safeTuple{\ff_1}{\cc_1}{\fg_1}{\ri}$, $\safeTuple{\ff_2}{\cc_2}{\fg_2}{\ri}$ and $\written{\cc_i} \cap \freevariable{\cc_{3-i}, \fg_{3-i}, \ri} = \emptyset$ then $\safeTuple{\ff_1 \sepcon \ff_2}{(\CONCURRENT{\cc_1}{\cc_2})}{\fg_1 \sepcon \fg_2}{\ri}$.
\end{lemma}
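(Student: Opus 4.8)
The plan is to prove, by induction on $n$, the pointwise inequality
\[
\wslpn{n}{\cc_1}{\fg_1}{\ri} \sepcon \wslpn{n}{\cc_2}{\fg_2}{\ri} \;\leq\; \wslpn{n}{\CONCURRENT{\cc_1}{\cc_2}}{\fg_1 \sepcon \fg_2}{\ri}
\]
quantified over all pairs $\cc_1,\cc_2$ (and all $\fg_1,\fg_2$) satisfying the side condition. Once this holds for every $n$, passing to the limit and invoking the hypotheses $\ff_i \leq \wslp{\cc_i}{\fg_i}{\ri}$ together with monotonicity of $\sepcon$ (\Cref{eq:sepcon_monoton}) yields $\ff_1 \sepcon \ff_2 \leq \wslp{\CONCURRENT{\cc_1}{\cc_2}}{\fg_1\sepcon\fg_2}{\ri}$, which is the claim. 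The base case $n=0$ is immediate, as both sides reduce to values bounded by $1$.

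For the induction step I would first dispatch the cases where $\cc_1$ or $\cc_2$ equals $\TERM$. If both equal $\TERM$, only the CON-END transition is enabled, the step reduces to $(\fg_1\sepcon\fg_2)\sepcon\ri$, and the target follows from reverse modus ponens (\Cref{eq:reverse_modus_ponens}); the single-$\TERM$ cases are degenerate forms of the main case. In the principal case $\cc_1\neq\TERM\neq\cc_2$, only the CON-L and CON-R transitions fire, so by \Cref{eq:inf_partitioning} the step over $\CONCURRENT{\cc_1}{\cc_2}$ is the infimum of a step over $\cc_1$ (tracking successors $\CONCURRENT{\cc'}{\cc_2}$) and a step over $\cc_2$ (tracking $\CONCURRENT{\cc_1}{\cc'}$). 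By adjointness (\Cref{eq:adjointness}) it then suffices to show that $\bigl(\wslpn{n+1}{\cc_1}{\fg_1}{\ri} \sepcon \wslpn{n+1}{\cc_2}{\fg_2}{\ri}\bigr) \sepcon \ri$ lies below each of these two step terms; by symmetry I treat the $\cc_1$ term.

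The core chain for the $\cc_1$ term runs as follows. Using commutativity and associativity of $\sepcon$ and modus ponens (\Cref{eq:modus_ponens}) to consume one copy of $\ri$ against the outer magic wand of $\wslpn{n+1}{\cc_1}{\fg_1}{\ri}$, I bound the left-hand side by $\step{\cc_1}{\anonfunc{\cc'} \wslpn{n}{\cc'}{\fg_1}{\ri}\sepcon\ri} \sepcon \wslpn{n+1}{\cc_2}{\fg_2}{\ri}$. The framing theorem (\Cref{thm:step_framing}) then pushes the factor $\wslpn{n+1}{\cc_2}{\fg_2}{\ri}$ inside the step over $\cc_1$; its side condition $\freevariable{\wslpn{n+1}{\cc_2}{\fg_2}{\ri}} \cap \written{\cc_1} = \emptyset$ is discharged by \Cref{lem:overapproximate_wslp} (giving $\freevariable{\wslpn{n+1}{\cc_2}{\fg_2}{\ri}} \subseteq \freevariable{\cc_2,\fg_2,\ri}$) combined with the hypothesis $\written{\cc_1}\cap\freevariable{\cc_2,\fg_2,\ri}=\emptyset$. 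Inside the step the integrand is now $(\wslpn{n}{\cc'}{\fg_1}{\ri}\sepcon\wslpn{n+1}{\cc_2}{\fg_2}{\ri})\sepcon\ri$; antitonicity in $n$ (\Cref{lem:antitone-of-wslp}) replaces $\wslpn{n+1}{\cc_2}{\fg_2}{\ri}$ by $\wslpn{n}{\cc_2}{\fg_2}{\ri}$, whereupon the induction hypothesis for $\CONCURRENT{\cc'}{\cc_2}$ (whose side condition survives because stepping can only shrink $\written{\cdot}$ and $\freevariable{\cdot}$) gives $\wslpn{n}{\cc'}{\fg_1}{\ri}\sepcon\wslpn{n}{\cc_2}{\fg_2}{\ri} \leq \wslpn{n}{\CONCURRENT{\cc'}{\cc_2}}{\fg_1\sepcon\fg_2}{\ri}$; monotonicity of $\step$ (\Cref{lem:monotone-of-step}) and of $\sepcon$ then close the chain against exactly the $\cc_1$ step term of the target.

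I expect the main obstacle to be the bookkeeping around the resource invariant: threading $\ri$ correctly through the outer magic wand (via \Cref{eq:adjointness} and \Cref{eq:modus_ponens}) while simultaneously applying the framing theorem, and verifying that all free-variable side conditions — both for the framing step and for the recursive use of the induction hypothesis on the residual program $\CONCURRENT{\cc'}{\cc_2}$ — are genuinely preserved. The subtle point is that the framing hypothesis is stated against the written variables of the \emph{whole} thread $\cc_1$ rather than its residual $\cc'$, which is precisely what keeps the disjointness conditions stable along the induction.
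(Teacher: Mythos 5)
Your proposal is correct and matches the paper's proof in all essentials: the same induction on $n$ over the strengthened pointwise inequality, the same case split on $\TERM$, the same reduction of the concurrent step to a minimum of two thread-local steps, and the same reliance on \Cref{thm:step_framing}, \Cref{lem:overapproximate_wslp} and antitonicity of $\wslpsymbol_n$, with the induction hypothesis applied to the residual programs $\CONCURRENT{\cc_1'}{\cc_2}$. The only difference is presentational: you discharge the outer resource-invariant wand via adjointness (\Cref{eq:adjointness}) and modus ponens, whereas the paper keeps $\ri \sepimp$ outermost and instead uses distributivity of $\sepimp$ over $\min$ together with \Cref{eq:sepimp_frame} --- two equivalent routes through the same algebra.
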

\begin{proof}
    We instead prove that we have 
    \begin{equation*}
        \wslpn{n}{\cc_1}{\fg_1}{\ri} \sepcon \wslpn{n}{\cc_2}{\fg_2}{\ri} \leq \wslpn{n}{(\CONCURRENT{\cc_1}{\cc_2})}{\fg_1 \sepcon \fg_2}{\ri}
    \end{equation*} 
    by induction on $n$. If this holds for all $n$, the same holds for the limit, which yields the claim.

    For the induction base $n=0$, we have 
    \begin{equation*}
        \wslpn{0}{(\CONCURRENT{\cc_1}{\cc_2})}{\fg_1 \sepcon \fg_2}{\ri}=1 = \wslpn{0}{\cc_1}{\fg_1}{\ri} \sepcon \wslpn{0}{\cc_2}{\fg_2}{\ri}~.
    \end{equation*}

    We assume that the claim holds for some arbitrary but fixed $n$ as our induction hypothesis.

    For the induction step, we have four cases. If $\cc_1 = \cc_2 = \TERM$ we have:
    \begingroup
    \allowdisplaybreaks
    \begin{align*}
        &~ \wslpn{n+1}{(\CONCURRENT{\TERM}{\TERM})}{\fg_1 \sepcon \fg_2}{\ri} \\
    =   &~ \ri \sepimp \step{(\CONCURRENT{\TERM}{\TERM})}{\anonfunc{\cc'} \wslpn{n}{\cc'}{\fg_1 \sepcon \fg_2}{\ri} \sepcon \ri} \tag{Definition of \wslpsymbol} \\
    \geq&~ \ri \sepimp \left( \fg_1 \sepcon \fg_2 \sepcon \ri \right) \tag{Monotonicity of $\sepimp$ and $\dagger$, see below} \\
    \geq&~ \fg_1 \sepcon \fg_2 \tag{\Cref{eq:reverse_modus_ponens}} \\
    =   &~ \wslpn{n+1}{\TERM}{\fg_1}{\ri} \sepcon \wslpn{n+1}{\TERM}{\fg_2}{\ri} \tag{Definition of \wslpsymbol}
    \end{align*}
    \endgroup
    For $\dagger$, we have:
    \begingroup
    \allowdisplaybreaks
    \begin{align*}
        &~ \step{(\CONCURRENT{\TERM}{\TERM})}{\anonfunc{\cc'} \wslpn{n}{\cc'}{\fg_1 \sepcon \fg_2}{\ri} \sepcon \ri}(\sk, \hh)\\
    =   &~ \inf \bigg\{ \sum \left\llbag \pp \cdot (\wslpn{n}{\cc'}{\fg_1 \sepcon \fg_2}{\ri} \sepcon \ri)(\sk', \hh') \mid (\CONCURRENT{\TERM}{\TERM}), (\sk, \hh) \optrans{\acta}{\pp} \cc', (\sk', \hh') \right\rrbag \\
        &~ \qquad \bigg\vert~ \acta \in \Enabled{(\CONCURRENT{\TERM}{\TERM}), (\sk, \hh)} \bigg\} \tag{Definition of \stepsymbol} \\
    =   &~ \sum \left\llbag \wslpn{n}{\TERM}{\fg_1 \sepcon \fg_2}{\ri} \sepcon \ri)(\sk, \hh) \mid (\CONCURRENT{\TERM}{\TERM}), (\sk, \hh) \optrans{\text{conc-end}}{1} \TERM, (\sk, \hh) \right\rrbag \tag{Operational semantics of concurrency} \\
    =   &~ \wslpn{n}{\TERM}{\fg_1 \sepcon \fg_2}{\ri} \sepcon \ri)(\sk, \hh) \tag{Sum of singleton} \\
    \geq&~ (\fg_1 \sepcon \fg_2 \sepcon \ri)(\sk, \hh) \tag{Definition of \wslpsymbol}
    \end{align*}
    \endgroup

    For the case $\cc_1\neq\TERM$ and $\cc_2\neq\TERM$ we have:
    \begingroup
    \allowdisplaybreaks
    \begin{align*}
        &~ \wslpn{n+1}{(\CONCURRENT{\cc_1}{\cc_2})}{\fg_1 \sepcon \fg_2}{\ri} \\
    =   &~ \ri \sepimp \step{(\CONCURRENT{\cc_1}{\cc_2})}{\anonfunc{\cc'} \wslpn{n}{\cc'}{\fg_1 \sepcon \fg_2}{\ri} \sepcon \ri} \tag{Definition of \wslpsymbol}\\
    =   &~ \ri \sepimp \min \bigg\{ \step{\cc_1}{\anonfunc{\cc_1'} \wslpn{n}{\CONCURRENT{\cc_1'}{\cc_2}}{\fg_1 \sepcon \fg_2}{\ri} \sepcon \ri}, \\
        &~ \qquad \qquad \step{\cc_2}{\anonfunc{\cc_2'} \wslpn{n}{\CONCURRENT{\cc_1}{\cc_2'}}{\fg_1 \sepcon \fg_2}{\ri} \sepcon \ri} \bigg\} \tag{$\dagger\dagger$, see below}\\
    \geq&~ \ri \sepimp \min \bigg\{ \step{\cc_1}{\anonfunc{\cc_1'} \wslpn{n}{\cc_1'}{\fg_1}{\ri} \sepcon \wslpn{n}{\cc_2}{\fg_2}{\ri} \sepcon \ri}, \\
    &~ \qquad \qquad \step{\cc_2}{\anonfunc{\cc_2'} \wslpn{n}{\cc_1}{\fg_1}{\ri} \sepcon \wslpn{n}{\cc_2'}{\fg_2}{\ri} \sepcon \ri} \bigg\} \tag{Induction Hypothesis and $\freevariable{\cc_i'} \subseteq \freevariable{\cc_i}$} \\
    \geq&~ \ri \sepimp \min \bigg\{ \step{\cc_1}{\anonfunc{\cc_1'} \wslpn{n}{\cc_1'}{\fg_1}{\ri} \sepcon \ri} \sepcon \wslpn{n}{\cc_2}{\fg_2}{\ri}, \\
    &~ \qquad \qquad \step{\cc_2}{\anonfunc{\cc_2'} \wslpn{n}{\cc_2'}{\fg_2}{\ri} \sepcon \ri} \sepcon \wslpn{n}{\cc_1}{\fg_1}{\ri} \bigg\} \tag{\Cref{thm:step_framing} and $\freevariable{\wslpn{n}{\cc_i}{\fg_i}{\ri}} \subseteq \freevariable{\cc_i, \fg_i, \ri}$ by \Cref{lem:overapproximate_wslp}}\\
    =   &~ \min \bigg\{ \ri \sepimp \step{\cc_1}{\anonfunc{\cc_1'} \wslpn{n}{\cc_1'}{\fg_1}{\ri} \sepcon \ri} \sepcon \wslpn{n}{\cc_2}{\fg_2}{\ri},\\
        &~ \quad \qquad \ri \sepimp \step{\cc_2}{\anonfunc{\cc_2'} \wslpn{n}{\cc_2'}{\fg_2}{\ri} \sepcon \ri} \sepcon \wslpn{n}{\cc_1}{\fg_1}{\ri} \bigg\} \tag{$\sepimp$ is distributiv with min}\\
    \geq&~ \min \bigg\{ \left( \ri \sepimp \step{\cc_1}{\anonfunc{\cc_1'} \wslpn{n}{\cc_1'}{\fg_1}{\ri} \sepcon \ri} \right) \sepcon \wslpn{n}{\cc_2}{\fg_2}{\ri},\\
        &~ \quad \qquad \left( \ri \sepimp \step{\cc_2}{\anonfunc{\cc_2'} \wslpn{n}{\cc_2'}{\fg_2}{\ri} \sepcon \ri} \right) \sepcon \wslpn{n}{\cc_1}{\fg_1}{\ri} \bigg\} \tag{\Cref{eq:sepimp_frame}}\\
    =   &~\min \bigg\{ \wslpn{n+1}{\cc_1}{\fg_1}{\ri} \sepcon \wslpn{n}{\cc_2}{\fg_2}{\ri},\\
    &~ \quad \qquad \wslpn{n+1}{\cc_2}{\fg_2}{\ri} \sepcon \wslpn{n}{\cc_1}{\fg_1}{\ri} \bigg\} \tag{Definition of $\wslpsymbol$}\\
    \geq&~\min \bigg\{ \wslpn{n+1}{\cc_1}{\fg_1}{\ri} \sepcon \wslpn{n+1}{\cc_2}{\fg_2}{\ri},\\
        &~ \quad \qquad \wslpn{n+1}{\cc_2}{\fg_2}{\ri} \sepcon \wslpn{n+1}{\cc_1}{\fg_1}{\ri} \bigg\} \tag{$\wslpsymbol_n$ is antitone w.r.t. $n$}\\
    =   &~\wslpn{n+1}{\cc_1}{\fg_1}{\ri} \sepcon \wslpn{n+1}{\cc_2}{\fg_2}{\ri} \tag{Set is a singleton}
    \end{align*}
    \endgroup
    For $\dagger\dagger$ we prove:
    \begingroup
    \allowdisplaybreaks
    \begin{align*}
        &~ \step{(\CONCURRENT{\cc_1}{\cc_2})}{\anonfunc{\cc'} \wslpn{n}{\cc'}{\fg_1 \sepcon \fg_2}{\ri} \sepcon \ri}(\sk, \hh \joinheap \hh_{\ri}) \\
    =   &~ \inf \bigg\{ \sum \big\llbag \pp \cdot (\wslpn{n}{\cc'}{\fg_1 \sepcon \fg_2}{\ri} \sepcon \ri)(\sk', \hh') \\
        &~ \qquad \big\vert~ (\CONCURRENT{\cc_1}{\cc_2}), (\sk, \hh \joinheap \hh_{\ri}) \optrans{\acta}{\pp} \cc', (\sk', \hh') \big\rrbag  \bigg\vert~ \acta \in \Enabled{(\CONCURRENT{\cc_1}{\cc_2}), (\sk, \hh \joinheap \hh_{\ri})} \bigg\} \tag{Definition of \stepsymbol} \\
    =   &~ \min \bigg\{ \inf \Big\{ \sum \big\llbag \pp \cdot (\wslpn{n}{\CONCURRENT{\cc_1'}{\cc_2}}{\fg_1 \sepcon \fg_2}{\ri} \sepcon \ri)(\sk', \hh') \\
        &~ \quad \qquad \qquad \big\vert~ \cc_1, (\sk, \hh \joinheap \hh_{\ri}) \optrans{\acta}{\pp} \cc_1', (\sk', \hh') \big\rrbag  \Big\vert~ \acta \in \Enabled{\cc_1, (\sk, \hh \joinheap \hh_{\ri})} \Big\}, \\
        &~ \quad \qquad \inf \Big\{ \sum \big\llbag \pp \cdot (\wslpn{n}{\CONCURRENT{\cc_1}{\cc_2'}}{\fg_1 \sepcon \fg_2}{\ri} \sepcon \ri)(\sk', \hh') \\
        &~ \quad \qquad \qquad \big\vert~ \cc_2, (\sk, \hh \joinheap \hh_{\ri}) \optrans{\acta}{\pp} \cc_2', (\sk', \hh') \big\rrbag  \Big\vert~ \acta \in \Enabled{\cc_2, (\sk, \hh \joinheap \hh_{\ri})} \Big\} \bigg\} \tag{Operational Semantics of concurrency} \\
    =   &~ \min \bigg\{ \step{\cc_1}{\anonfunc{\cc_1'} \wslpn{n}{\CONCURRENT{\cc_1'}{\cc_2}}{\fg_1 \sepcon \fg_2}{\ri} \sepcon \ri}, \\
        &~ \quad \qquad \step{\cc_2}{\anonfunc{\cc_2'} \wslpn{n}{\CONCURRENT{\cc_1}{\cc_2'}}{\fg_1 \sepcon \fg_2}{\ri} \sepcon \ri} \bigg\} \tag{Definition of \stepsymbol}
    \end{align*}
    \endgroup
    The cases $\cc_1\neq\TERM$, $\cc_2=\TERM$ and $\cc_1=\TERM$, $\cc_2\neq\TERM$ are analogous to the previous case. We just drop the minimum, since only one thread can be executed.

    Thus, the claim is proven.
\end{proof}

\begin{definition}
    A program $\cc$ is almost surely terminating w.r.t. $\SchedulerSet$ iff for all schedulers $\scheduler \in \SchedulerSet$ and all program states $(\sk, \hh)$ we have $\cc, (\sk, \hh) \optransStar{\scheduler}{0} \dots~.$
\end{definition}

\begin{lemma}\label{lem:soundsuperlin}
    If $\ff_1 \leq \wlps{\SchedulerSet}{\cc}{\fg_1}$, $\ff_2 \leq \wlps{\SchedulerSet}{\cc}{\fg_2}$ and $\cc$ ist almost surely terminating w.r.t. $\SchedulerSet$ then $a \cdot \ff_1 + \ff_2 \leq \wlps{\SchedulerSet}{\cc}{a \cdot \fg_1 + \fg_2}$.
\end{lemma}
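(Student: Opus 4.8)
The plan is to exploit almost-sure termination in order to reduce $\wlpsymbol$ to a genuine expectation, for which superlinearity collapses to ordinary linearity of the terminal-state sum together with superadditivity of the infimum. First I would observe that since $\cc$ is almost surely terminating with respect to $\SchedulerSet$, we have $\cc, (\sk, \hh) \optransStar{\scheduler}{0} \dots$ for every $\scheduler \in \SchedulerSet$ and every state $(\sk, \hh)$; that is, the non-termination summand $\pp_{div}$ in \Cref{def:wlp} vanishes uniformly. Hence for any expectation $\ff$ the weakest liberal preexpectation simplifies to
\[
\wlps{\SchedulerSet}{\cc}{\ff}(\sk,\hh) = \inf \left\{ \sum \left\llbag \pp \cdot \ff(\sk', \hh') \mid \cc, (\sk, \hh) \optransStar{\scheduler}{\pp} \TERM, (\sk', \hh') \right\rrbag \;\middle|\; \scheduler \in \SchedulerSet \right\}.
\]
For a fixed scheduler $\scheduler$ the reachability probabilities $\pp$ of the terminal states satisfy $\sum \pp \leq 1$, and each summand is bounded by $\pp$ since $\ff \in \Probs$, so the series converges absolutely.

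Next, for each fixed $\scheduler$ I would use linearity of this convergent sum: since $(a \cdot \fg_1 + \fg_2)(\sk',\hh') = a \cdot \fg_1(\sk',\hh') + \fg_2(\sk',\hh')$ pointwise, writing $R$ for the relation $\cc, (\sk, \hh) \optransStar{\scheduler}{\pp} \TERM, (\sk', \hh')$,
\[
\sum \left\llbag \pp \cdot (a\cdot\fg_1 + \fg_2)(\sk',\hh') \mid R \right\rrbag = a \cdot \sum \left\llbag \pp \cdot \fg_1(\sk',\hh') \mid R \right\rrbag + \sum \left\llbag \pp \cdot \fg_2(\sk',\hh') \mid R \right\rrbag.
\]
Taking the infimum over $\scheduler \in \SchedulerSet$ and applying superadditivity of the infimum (the two-element instance of \Cref{eq:inf_superlin}) together with homogeneity for the nonnegative constant $a$ yields
\[
\wlps{\SchedulerSet}{\cc}{a\cdot\fg_1 + \fg_2} \geq a \cdot \wlps{\SchedulerSet}{\cc}{\fg_1} + \wlps{\SchedulerSet}{\cc}{\fg_2}.
\]

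Finally, combining this with the hypotheses $\ff_1 \leq \wlps{\SchedulerSet}{\cc}{\fg_1}$ and $\ff_2 \leq \wlps{\SchedulerSet}{\cc}{\fg_2}$ and monotonicity of $a\cdot(\,\cdot\,)+(\,\cdot\,)$ gives $a\cdot\ff_1 + \ff_2 \leq \wlps{\SchedulerSet}{\cc}{a\cdot\fg_1 + \fg_2}$, as claimed. I expect the main obstacle to be the first step rather than the subsequent algebra: one must argue carefully that almost-sure termination eliminates the $\pp_{div}$ contribution for \emph{all} schedulers in $\SchedulerSet$ simultaneously, and that the terminal-state series converges so that term-by-term linearity and its interchange with the infimum are legitimate. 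This is also where the AST hypothesis is indispensable: without it the additive term $\pp_{div}$, which is \emph{not} rescaled by $a$, destroys the inequality, because each separate preexpectation carries its own non-termination mass whereas the combined preexpectation $\wlps{\SchedulerSet}{\cc}{a\cdot\fg_1 + \fg_2}$ carries it only once.
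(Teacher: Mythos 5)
Your proposal is correct and takes essentially the same route as the paper's own proof: use almost-sure termination to eliminate the $\pp_{div}$ summand, split the terminal-state sum by pointwise linearity, and then push the infimum over schedulers inside via the superadditivity statement~(\ref{eq:inf_superlin}) together with factoring out the nonnegative constant $a$. The paper's proof is precisely this computation written out as a chain of (in)equalities, and your closing remark about why the un-rescaled $\pp_{div}$ term would break the inequality without AST matches the intended role of that hypothesis.
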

\begin{proof}
    We have:
    \begingroup
    \allowdisplaybreaks
    \begin{align*}
        &~ \wlps{\SchedulerSet}{\cc}{a \cdot \fg_1 + \fg_2}(\sk,\hh) \\
    =   &~ \inf \bigg\{ \sum \left\llbag \pp \cdot (a \cdot \fg_1 + \fg_2)(\sk', \hh') \mid \cc, (\sk, \hh) \optransStar{\scheduler}{\pp} \TERM, (\sk', \hh') \right\rrbag + \pp_{div} \\
        &~ \qquad \bigg\vert ~ \scheduler \in \SchedulerSet ~\text{and}~ \cc, (\sk, \hh) \optransStar{\scheduler}{\pp_{div}} \dots \bigg\} \tag{Definition of \wlpsymbol}\\
    =   &~ \inf \bigg\{ \sum \left\llbag \pp \cdot (a \cdot \fg_1 + \fg_2)(\sk', \hh') \mid \cc, (\sk, \hh) \optransStar{\scheduler}{\pp} \TERM, (\sk', \hh') \right\rrbag  \bigg\vert ~ \scheduler \in \SchedulerSet \bigg\} \tag{$\cc$ is almost surely terminating w.r.t. $\SchedulerSet$}\\
    =   &~ \inf \bigg\{ \sum \left\llbag \pp \cdot (a \cdot \fg_1(\sk', \hh')) + \fg_2(\sk', \hh') \mid \cc, (\sk, \hh) \optransStar{\scheduler}{\pp} \TERM, (\sk', \hh') \right\rrbag  \bigg\vert ~ \scheduler \in \SchedulerSet \bigg\} \tag{Pointwise application}\\
    =   &~ \inf \bigg\{ \sum \left\llbag \pp \cdot a \cdot \fg_1(\sk', \hh') \mid \cc, (\sk, \hh) \optransStar{\scheduler}{\pp} \TERM, (\sk', \hh') \right\rrbag\\
        &~ \quad ~ + \sum \left\llbag \pp \cdot \fg_2(\sk', \hh') \mid \cc, (\sk, \hh) \optransStar{\scheduler}{\pp} \TERM, (\sk', \hh') \right\rrbag  \bigg\vert ~ \scheduler \in \SchedulerSet \bigg\} \tag{Linearity of sum}\\
    \geq&~ \quad \inf \bigg\{ \sum \left\llbag \pp \cdot a \cdot \fg_1(\sk', \hh') \mid \cc, (\sk, \hh) \optransStar{\scheduler}{\pp} \TERM, (\sk', \hh') \right\rrbag  \bigg\vert ~ \scheduler \in \SchedulerSet \bigg\} \\
        &+ \inf \bigg\{ \sum \left\llbag \pp \cdot \fg_2(\sk', \hh') \mid \cc, (\sk, \hh) \optransStar{\scheduler}{\pp} \TERM, (\sk', \hh') \right\rrbag  \bigg\vert ~ \scheduler \in \SchedulerSet \bigg\} \tag{\Cref{eq:inf_superlin}} \\
    \geq&~ a \cdot \inf \bigg\{ \sum \left\llbag \pp \cdot \fg_1(\sk', \hh') \mid \cc, (\sk, \hh) \optransStar{\scheduler}{\pp} \TERM, (\sk', \hh') \right\rrbag  \bigg\vert ~ \scheduler \in \SchedulerSet \bigg\} \\
        &~+ \inf \bigg\{ \sum \left\llbag \pp \cdot \fg_2(\sk', \hh') \mid \cc, (\sk, \hh) \optransStar{\scheduler}{\pp} \TERM, (\sk', \hh') \right\rrbag  \bigg\vert ~ \scheduler \in \SchedulerSet \bigg\} \tag{Factorisation with constant} \\
    =   &~ a \cdot \wlps{\SchedulerSet}{\cc}{\fg_1} + \wlps{\SchedulerSet}{\cc}{\fg_2}
    \end{align*}
    \endgroup
\end{proof}

\begin{lemma}\label{lem:soundwlpwslp}
    If $\safeTuple{\ff}{\cc}{\fg}{\emp}$ then $\ff \leq \wlps{\SchedulerSet}{\cc}{\fg}$.
\end{lemma}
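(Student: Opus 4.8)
The plan is to reduce the statement to the already-established equality between $\wslpsymbol$ (with the trivial resource invariant $\emp$) and $\wlpsymbol$, and then to exploit that the weakest liberal preexpectation is defined as an infimum over the chosen set of schedulers.

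First I would unfold the premise $\safeTuple{\ff}{\cc}{\fg}{\emp}$, which by definition is exactly $\ff \leq \wslp{\cc}{\fg}{\emp}$. By \Cref{thm:wlp-wslp-equality} the right-hand side coincides with $\wlp{\cc}{\fg}$, and since we omit the superscript precisely for the complete scheduler set, this is $\wlps{\Schedulers}{\cc}{\fg}$. Hence the premise already yields $\ff \leq \wlps{\Schedulers}{\cc}{\fg}$ pointwise.

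It then remains to pass from the complete scheduler set $\Schedulers$ to the arbitrary nonempty subset $\SchedulerSet \subseteq \Schedulers$. Here I would invoke that $\wlpsymbol^{\SchedulerSet}$ is defined (see \Cref{def:wlp}) as an infimum of the per-scheduler terminal-expectation-plus-divergence quantity ranging over $\scheduler \in \SchedulerSet$. Restricting the index set of an infimum can only increase its value, so $\wlps{\Schedulers}{\cc}{\fg} \leq \wlps{\SchedulerSet}{\cc}{\fg}$ pointwise. Chaining this with the inequality from the previous step gives $\ff \leq \wlps{\SchedulerSet}{\cc}{\fg}$, as desired.

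There is no genuine obstacle: the only substantive ingredient is \Cref{thm:wlp-wslp-equality}, which may be assumed, together with the anti-monotonicity of the infimum under set inclusion. The single point requiring a moment of care is ensuring that $\wlp{\cc}{\fg}$ without superscript is read as $\wlps{\Schedulers}{\cc}{\fg}$, so that the scheduler-set comparison is against the full set $\Schedulers$ rather than against $\SchedulerSet$ itself; this is exactly what makes the restriction step go in the correct direction.
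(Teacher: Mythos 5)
Your proof is correct and follows exactly the paper's argument: both unfold the judgement to $\ff \leq \wslp{\cc}{\fg}{\emp}$, apply \Cref{thm:wlp-wslp-equality} to identify this with $\wlp{\cc}{\fg}$, and then use that shrinking the scheduler set from $\Schedulers$ to $\SchedulerSet$ can only increase the infimum, yielding $\wlp{\cc}{\fg} \leq \wlps{\SchedulerSet}{\cc}{\fg}$. Your remark about reading the superscript-free $\wlpsymbol$ as ranging over the full set $\Schedulers$ is exactly the point that makes the inequality go the right way, and the paper relies on it in the same manner.
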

\begin{proof}
    Since $\SchedulerSet\subseteq \Schedulers$ we have $\wlp{\cc}{\fg} \leq \wlps{\SchedulerSet}{\cc}{\fg}$. Furthermore, by \Cref{thm:wlp-wslp-equality} we have $\wslp{\cc}{\fg}{\emp}=\wlp{\cc}{\fg}$. Thus, we also have $\ff \leq \wlps{\SchedulerSet}{\cc}{\fg}$.
\end{proof}

\begin{lemma}\label{lem:soundframe}
    If $\safeTuple{\ff}{\cc}{\fg}{\ri}$ and $\written{\cc} \cap \freevariable{\fh}=\emptyset$ then we also have $\safeTuple{\ff \sepcon \fh}{\cc}{\fg \sepcon \fh}{\ri}$.
\end{lemma}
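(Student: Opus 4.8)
The plan is to follow the same template used for the other soundness lemmas in this appendix: unfold $\wslpsymbol$ as the limit of its finite approximants $\wslpn{n}{\cc}{\cdot}{\ri}$, establish the frame inequality at each finite level $n$ by induction, and then pass to the limit. Since $\safeTuple{\ff}{\cc}{\fg}{\ri}$ means $\ff \leq \wslp{\cc}{\fg}{\ri}$ and $\sepcon$ is monotone (\Cref{eq:sepcon_monoton}), it suffices to prove the frame inequality for $\wslpsymbol$ itself, namely $\wslp{\cc}{\fg}{\ri} \sepcon \fh \leq \wslp{\cc}{\fg \sepcon \fh}{\ri}$; chaining $\ff \sepcon \fh \leq \wslp{\cc}{\fg}{\ri} \sepcon \fh \leq \wslp{\cc}{\fg \sepcon \fh}{\ri}$ then yields the claim.

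First I would reduce this to the level-wise statement
\[
\wslpn{n}{\cc}{\fg}{\ri} \sepcon \fh \;\leq\; \wslpn{n}{\cc}{\fg \sepcon \fh}{\ri} \qquad \text{for all } n,
\]
which, after taking limits on both sides, gives the $\wslpsymbol$-inequality above. I would prove it by induction on $n$. The base case $n=0$ is immediate from one-boundedness: $\wslpn{0}{\cc}{\fg}{\ri} \sepcon \fh = 1 \sepcon \fh \leq 1 = \wslpn{0}{\cc}{\fg \sepcon \fh}{\ri}$ (using \Cref{lem:zero-one-bounded}). For $n+1$ with $\cc = \TERM$ the two sides coincide, since $\wslpn{n+1}{\TERM}{\fg \sepcon \fh}{\ri} = \fg \sepcon \fh = \wslpn{n+1}{\TERM}{\fg}{\ri} \sepcon \fh$.

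The only substantive case is $\cc \neq \TERM$. Here I would start from $\wslpn{n+1}{\cc}{\fg}{\ri} \sepcon \fh = (\ri \sepimp \step{\cc}{\anonfunc{\cc'} \wslpn{n}{\cc'}{\fg}{\ri} \sepcon \ri}) \sepcon \fh$ and push $\fh$ inward in three moves: (i) the magic-wand frame inequality \Cref{eq:sepimp_frame}, giving $\ri \sepimp (\step{\cc}{\anonfunc{\cc'} \wslpn{n}{\cc'}{\fg}{\ri} \sepcon \ri} \sepcon \fh)$; (ii) the step-framing \Cref{thm:step_framing}, whose side condition $\freevariable{\fh} \cap \written{\cc} = \emptyset$ is exactly our hypothesis, to move $\fh$ under $\stepsymbol$ into the continuation, yielding $\ri \sepimp \step{\cc}{\anonfunc{\cc'} (\wslpn{n}{\cc'}{\fg}{\ri} \sepcon \ri) \sepcon \fh}$; and (iii) associativity and commutativity of $\sepcon$ to regroup the continuation as $(\wslpn{n}{\cc'}{\fg}{\ri} \sepcon \fh) \sepcon \ri$. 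Applying the induction hypothesis to the inner factor, $\wslpn{n}{\cc'}{\fg}{\ri} \sepcon \fh \leq \wslpn{n}{\cc'}{\fg \sepcon \fh}{\ri}$, and then monotonicity of $\sepcon$, of $\stepsymbol$ (\Cref{lem:monotone-of-step}) and of $\sepimp$, rebuilds $\wslpn{n+1}{\cc}{\fg \sepcon \fh}{\ri}$ and closes the step.

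The hard part lives not in this lemma but in its prerequisite \Cref{thm:step_framing}: once framing at the level of a single $\stepsymbol$ is available, the present argument is essentially bookkeeping — reassociating separating multiplications and threading the frame $\fh$ through the $\anonfunc{\cc'}$ continuation. The one point demanding care is to confirm that the written-variable side condition is preserved and is precisely what \Cref{thm:step_framing} requires, and that it is the frame $\fh$ (not the resource invariant) that is carried through the continuation, so that associativity keeps $\ri$ adjacent to the recursive $\wslpsymbol_n$ term as the definition of $\wslpsymbol$ demands.
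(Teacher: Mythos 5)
Your proof is correct, but it takes a genuinely different route from the paper's. The paper proves this lemma in one line: it observes that $\wslp{\cc}{\fg \sepcon \fh}{\ri} = \wslp{\CONCURRENT{\cc}{\TERM}}{\fg \sepcon \fh}{\ri}$ and then instantiates the concurrency lemma (\Cref{lem:soundconcur}) with $\cc_2 = \TERM$ and $\ff_2 = \fg_2 = \fh$, using the term rule for the idle thread. You instead inline the induction on $n$ that underlies \Cref{lem:soundconcur}, proving $\wslpn{n}{\cc}{\fg}{\ri} \sepcon \fh \leq \wslpn{n}{\cc}{\fg \sepcon \fh}{\ri}$ directly via \Cref{eq:sepimp_frame}, \Cref{thm:step_framing}, associativity of $\sepcon$ and monotonicity; these are exactly the ingredients that drive the concurrency proof, so the mathematical content coincides. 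What your version buys is a slightly sharper result: the side condition of the concur rule instantiated this way is $\written{\cc} \cap \freevariable{\TERM, \fh, \ri} = \emptyset$, which also demands $\written{\cc} \cap \freevariable{\ri} = \emptyset$ --- a condition the frame rule does not assume --- whereas your argument only ever frames $\fh$ across $\stepsymbol$ and therefore needs precisely the stated hypothesis $\written{\cc} \cap \freevariable{\fh} = \emptyset$ and nothing about $\ri$. What the paper's version buys is brevity and reuse. One small point to make explicit when writing yours up: the passage from the level-$n$ inequality to the limit should go via $\wslp{\cc}{\fg}{\ri} \leq \wslpn{n}{\cc}{\fg}{\ri}$ together with monotonicity of $\sepcon$ and then taking the infimum over $n$ on the right, rather than via a continuity claim for $\sepcon$, since $\sepcon$ need not commute with decreasing limits; the inequality you need holds for this elementary reason.
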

\begin{proof}
    First, it is easy to prove that $\wslp{\cc}{\fg \sepcon \fh}{\ri}=\wslp{(\CONCURRENT{\cc}{\TERM})}{\fg \sepcon \fh}{\ri}$. From this, we directly have the conclusion by \Cref{lem:soundconcur}.
\end{proof}

\begin{lemma}\label{lem:soundatomcommand}
    If $\cc$ is a terminating atom and $\safeTuple{\ff \sepcon \ri}{\cc}{\fg \sepcon \ri}{\emp}$ then $\safeTuple{\ff}{\cc}{\fg}{\ri}$.
\end{lemma}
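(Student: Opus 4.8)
The plan is to unfold $\safeTuple{\ff}{\cc}{\fg}{\ri}$ to the inequality $\ff \leq \wslp{\cc}{\fg}{\ri}$ and, since $\wslp{\cc}{\fg}{\ri}$ is the decreasing limit of the $\wslpn{n}{\cc}{\fg}{\ri}$ (cf.\ \Cref{lem:alternate-wslp}), to establish $\ff \leq \wslpn{n}{\cc}{\fg}{\ri}$ for every $n$. First I would rewrite the hypothesis: $\safeTuple{\ff \sepcon \ri}{\cc}{\fg \sepcon \ri}{\emp}$ unfolds to $\ff \sepcon \ri \leq \wslp{\cc}{\fg \sepcon \ri}{\emp}$, which by \Cref{thm:wlp-wslp-equality} equals $\wlp{\cc}{\fg \sepcon \ri}$. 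Hence the usable form of the hypothesis is $\ff \sepcon \ri \leq \wlp{\cc}{\fg \sepcon \ri}$.

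The key step exploits that $\cc$ is a terminating atom, so after a single step every reachable program is either $\TERM$ or $\ABORT$. As already observed in the proof of \Cref{lem:soundbywlp}, this collapses $\stepsymbol$ to $\wlpsymbol$: for any postexpectation $\slb$ one has $\step{\cc}{\anonfunc{\cc'} \slb} = \wlp{\cc}{\slb}$, because the $\ABORT$-branch contributes $0$, the $\TERM$-branch contributes $\slb$, and the infimum over enabled actions coincides with the infimum over schedulers in the one-step $\wlpsymbol$. For $n > 1$, since $\wslpn{n-1}{\TERM}{\fg}{\ri} = \fg$, the inner mapping is constantly $\fg \sepcon \ri$ on the reachable terminating programs, so $\wslpn{n}{\cc}{\fg}{\ri} = \ri \sepimp \step{\cc}{\anonfunc{\cc'} \fg \sepcon \ri} = \ri \sepimp \wlp{\cc}{\fg \sepcon \ri}$. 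For $n = 1$ the same computation, but with $\wslpn{0}{\TERM}{\fg}{\ri} = 1$, yields instead $\wslpn{1}{\cc}{\fg}{\ri} = \ri \sepimp \wlp{\cc}{1 \sepcon \ri}$.

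With these identities the conclusion follows from adjointness of the magic wand (\Cref{eq:adjointness}). For $n > 1$, applying \Cref{eq:adjointness} to the hypothesis $\ff \sepcon \ri \leq \wlp{\cc}{\fg \sepcon \ri}$ gives $\ff \leq \ri \sepimp \wlp{\cc}{\fg \sepcon \ri} = \wslpn{n}{\cc}{\fg}{\ri}$. For $n = 1$, monotonicity of $\wlpsymbol$ (using $\fg \sepcon \ri \leq 1 \sepcon \ri$) together with the hypothesis yields $\ff \sepcon \ri \leq \wlp{\cc}{1 \sepcon \ri}$, and \Cref{eq:adjointness} again gives $\ff \leq \ri \sepimp \wlp{\cc}{1 \sepcon \ri} = \wslpn{1}{\cc}{\fg}{\ri}$. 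The case $n = 0$ is trivial, as $\wslpn{0}{\cc}{\fg}{\ri} = 1 \geq \ff$. Taking the infimum over $n$ then delivers $\ff \leq \wslp{\cc}{\fg}{\ri}$, that is $\safeTuple{\ff}{\cc}{\fg}{\ri}$.

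The main obstacle is the bookkeeping behind the $\stepsymbol$-to-$\wlpsymbol$ collapse for terminating atoms: justifying that the $\ABORT$ transitions contribute exactly $0$, that the outer infimum over enabled actions matches the one-step weakest liberal preexpectation, and isolating the $n = 1$ corner case where $\wslpn{0}{\TERM}{\fg}{\ri} = 1$ rather than $\fg$. Once this identity is in place, the argument is essentially a single application of adjointness, and no structural induction on $\cc$ is needed precisely because $\cc$ is atomic and terminates in one step.
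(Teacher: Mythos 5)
Your proof is correct, but it takes a genuinely different route from the paper's. The paper dispatches this lemma in two lines: since $\cc$ is a terminating atom, $\wslp{\cc}{\fg}{\ri} = \wslp{\ATOMIC{\cc}}{\fg}{\ri}$, so the conclusion follows at once by applying the already-proven soundness of the atomic-region rule (\Cref{lem:soundatomicregion}) to the hypothesis $\safeTuple{\ff \sepcon \ri}{\cc}{\fg \sepcon \ri}{\emp}$. You instead compute $\wslpn{n}{\cc}{\fg}{\ri}$ directly, using the collapse $\step{\cc}{\anonfunc{\cc'} \slb} = \wlp{\cc}{\slb}$ for terminating atoms --- the same identity the paper asserts (and likewise only sketches) in the proof of \Cref{lem:soundbywlp} --- together with \Cref{thm:wlp-wslp-equality} to rewrite the hypothesis as $\ff \sepcon \ri \leq \wlp{\cc}{\fg \sepcon \ri}$, and then conclude by adjointness (\Cref{eq:adjointness}) rather than by the monotonicity-plus-reverse-modus-ponens chain the paper uses in the analogous spots. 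What your route buys is self-containedness: you avoid depending on \Cref{lem:soundatomicregion}, whose proof must handle the full limit semantics of atomic blocks, and the single adjointness step is cleaner than bounding $\ri \sepimp (\ff \sepcon \ri) \geq \ff$; your explicit treatment of the $n=1$ corner (where $\wslpn{0}{\TERM}{\fg}{\ri}=1$ rather than $\fg$) is also more careful than the paper's two-liner. What the paper's route buys is reuse and brevity. One shared caveat: both arguments implicitly assume $\cc \neq \TERM$ so that the third branch of the definition of $\wslpsymbol_n$ applies; for $\cc = \TERM$ (which vacuously satisfies the definition of a terminating atom) the hypothesis degenerates to $\ff \sepcon \ri \leq \fg \sepcon \ri$, which does not entail $\ff \leq \fg$ when $\ri$ is unsatisfiable --- a degenerate case never exercised by the atom rule in practice, and one the paper's own proof does not address either.
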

\begin{proof}
    Since $\cc$ is a terminating atom, we have $\wslp{\cc}{\fg}{\ri} = \wslp{\ATOMIC{\cc}}{\fg}{\ri}$ and by \Cref{lem:soundatomicregion} together with the assumption we have $\wslp{\ATOMIC{\cc}}{\fg}{\ri} \geq \ff$, which proves the claim.
\end{proof}

\begin{lemma}\label{lem:soundmax}
    If $\safeTuple{\ff_1}{\cc}{\fg_1}{\ri}$ and $\safeTuple{\ff_2}{\cc}{\fg_2}{\ri}$ then we have for the maximum $\safeTuple{\Max{\ff_1}{\ff_2}}{\cc}{\Max{\fg_1}{\fg_2}}{\ri}$.
\end{lemma}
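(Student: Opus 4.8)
The plan is to unfold the judgement notation and reduce the goal to a direct application of monotonicity. Recall that $\safeTuple{\ff}{\cc}{\fg}{\ri}$ abbreviates $\ff \leq \wslp{\cc}{\fg}{\ri}$, so the two premises read $\ff_1 \leq \wslp{\cc}{\fg_1}{\ri}$ and $\ff_2 \leq \wslp{\cc}{\fg_2}{\ri}$, and the conclusion to be established is $\Max{\ff_1}{\ff_2} \leq \wslp{\cc}{\Max{\fg_1}{\fg_2}}{\ri}$. Crucially, and in contrast to the other soundness lemmas in this section, I do not expect to need an induction on the step index $n$ here: the statement concerns only the relationship between postexpectations, so no analysis of the operational behaviour of $\cc$ via $\stepsymbol$ is required.

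First I would note the pointwise inequalities $\fg_1 \leq \Max{\fg_1}{\fg_2}$ and $\fg_2 \leq \Max{\fg_1}{\fg_2}$, which hold by the very definition of the pointwise maximum. Applying monotonicity of $\wslpsymbol$ (\Cref{lem:soundmonotone}) to each of these yields $\wslp{\cc}{\fg_1}{\ri} \leq \wslp{\cc}{\Max{\fg_1}{\fg_2}}{\ri}$ and, symmetrically, $\wslp{\cc}{\fg_2}{\ri} \leq \wslp{\cc}{\Max{\fg_1}{\fg_2}}{\ri}$. Chaining each of these with the corresponding premise then gives $\ff_i \leq \wslp{\cc}{\Max{\fg_1}{\fg_2}}{\ri}$ for both $i \in \{1,2\}$.

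Finally I would invoke the universal property of the maximum as the pointwise least upper bound: since $\wslp{\cc}{\Max{\fg_1}{\fg_2}}{\ri}$ is a common upper bound of $\ff_1$ and $\ff_2$ at every state, it also bounds their pointwise maximum, so $\Max{\ff_1}{\ff_2} \leq \wslp{\cc}{\Max{\fg_1}{\fg_2}}{\ri}$, which is precisely the claimed judgement $\safeTuple{\Max{\ff_1}{\ff_2}}{\cc}{\Max{\fg_1}{\fg_2}}{\ri}$. There is no genuine obstacle in this argument; its entire content is absorbed into the already-established monotonicity of $\wslpsymbol$ together with the elementary order-theoretic fact that $\Max{\cdot}{\cdot}$ computes the join. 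The only subtlety worth stating explicitly is that all of these inequalities are pointwise, so passing to the maximum on the left-hand side is immediate and requires no further argument about $\cc$, $\ri$, or the limit in the definition of $\wslpsymbol$.
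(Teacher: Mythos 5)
Your proposal is correct and follows essentially the same route as the paper: both apply the monotonicity of $\wslpsymbol$ (\Cref{lem:soundmonotone}) to obtain $\ff_i \leq \wslp{\cc}{\Max{\fg_1}{\fg_2}}{\ri}$ for $i\in\{1,2\}$ and then conclude via the least-upper-bound property of the pointwise maximum. No induction on the step index is needed, exactly as you anticipated.
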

\begin{proof}
    By monotonicity (\Cref{lem:soundmonotone}) we have for both $i \in \{1,2\}$ that the inequality $\wslp{\cc}{\fg_i}{\ri} \leq \wslp{\cc}{\Max{\fg_1}{\fg_2}}{\ri}$ holds, thus for both $i \in \{1,2\}$ the inequality $\ff_i \leq \wslp{\cc}{\Max{\fg_1}{\fg_2}}{\ri}$ also holds. Then we also have that the inequality $\safeTuple{\Max{\ff_1}{\ff_2}}{\cc}{\Max{\fg_1}{\fg_2}}{\ri}$ holds.
\end{proof}

\begin{lemma}\label{lem:soundmin}
    If $\safeTuple{\ff_1}{\cc}{\fg_1}{\ri}$, $\safeTuple{\ff_2}{\cc}{\fg_2}{\ri}$, $\ri$ is precise and $\cc$ is not probabilistic then $\safeTuple{\Min{\ff_1}{\ff_2}}{\cc}{\Min{\ff_2}{\fg_2}}{\ri}$.
\end{lemma}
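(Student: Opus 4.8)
The plan is to proceed exactly as in the proof of the max rule (\Cref{lem:soundmax}), reducing the statement to a distributivity property of $\wslpsymbol$ over the minimum, but with an important twist. Unlike the max case, monotonicity alone is insufficient: since $\Min{\fg_1}{\fg_2} \leq \fg_i$, \Cref{lem:soundmonotone} only yields $\wslp{\cc}{\Min{\fg_1}{\fg_2}}{\ri} \leq \wslp{\cc}{\fg_i}{\ri}$, which is the wrong direction. Instead I would establish that $\wslpsymbol$ \emph{super}-distributes over the minimum, i.e.
\[ \Min{\wslp{\cc}{\fg_1}{\ri}}{\wslp{\cc}{\fg_2}{\ri}} \;\leq\; \wslp{\cc}{\Min{\fg_1}{\fg_2}}{\ri}, \]
and then conclude via $\Min{\ff_1}{\ff_2} \leq \Min{\wslp{\cc}{\fg_1}{\ri}}{\wslp{\cc}{\fg_2}{\ri}}$, which follows from the two premises by monotonicity of the minimum.

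As in the other soundness lemmas, I would first prove the finitary version
\[ \Min{\wslpn{n}{\cc}{\fg_1}{\ri}}{\wslpn{n}{\cc}{\fg_2}{\ri}} \;\leq\; \wslpn{n}{\cc}{\Min{\fg_1}{\fg_2}}{\ri} \]
by induction on $n$, and then pass to the limit, using \Cref{lem:antitone-of-wslp} and \Cref{lem:alternate-wslp} together with the continuity of the minimum (the minimum of two convergent antitone sequences is the limit of the pointwise minima). The base case $n=0$ is immediate since both sides equal $1$, and the case $\cc=\TERM$ is immediate as well because $\wslpn{n}{\TERM}{\fg}{\ri}=\fg$ and the minimum is computed pointwise.

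For $\cc\neq\TERM$ the inductive step unfolds $\wslpn{n+1}{\cc}{\fg}{\ri}=\ri\sepimp\step{\cc}{\anonfunc{\cc'}\wslpn{n}{\cc'}{\fg}{\ri}\sepcon\ri}$. I would first pull the minimum through the magic wand using its \emph{exact} distributivity with min (\Cref{eq:sepimp_dist_min}), reducing the goal to the step-level inequality
\[ \Min{\step{\cc}{t_1}}{\step{\cc}{t_2}} \;\leq\; \step{\cc}{\anonfunc{\cc'}\Min{t_1(\cc')}{t_2(\cc')}}, \]
where $t_i(\cc')=\wslpn{n}{\cc'}{\fg_i}{\ri}\sepcon\ri$. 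On the right-hand side one then invokes preciseness of $\ri$ to distribute $\sepcon\,\ri$ over the minimum (\Cref{eq:precise_dist_min}), applies the induction hypothesis to replace $\Min{\wslpn{n}{\cc'}{\fg_1}{\ri}}{\wslpn{n}{\cc'}{\fg_2}{\ri}}$ by $\wslpn{n}{\cc'}{\Min{\fg_1}{\fg_2}}{\ri}$, and closes the chain with monotonicity of $\sepcon$, $\stepsymbol$ (\Cref{lem:monotone-of-step}) and $\sepimp$.

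The main obstacle is precisely the displayed step-level inequality, i.e.\ that $\stepsymbol$ super-distributes over the minimum. This is delicate because $\step{\cc}{t}$ is an infimum, over the enabled actions, of \emph{expected values} $\sum \llbag \pp \cdot t(\cc') \mid \cc,(\sk,\hh)\optrans{\acta}{\pp}\cc',(\sk',\hh') \rrbag$, and in general the expectation of a pointwise minimum lies strictly below the minimum of the expectations. This is exactly the point at which preciseness of $\ri$ must do the work: it forces the relevant heap split induced by $\sepcon\,\ri$ to be unique, collapsing the problematic summations to single summands so that the expectation and the minimum commute, mirroring the role preciseness plays for the conjunction rule in \cite{Vafeiadis11CLS}. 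In the write-up I would dispatch the deterministic single-successor commands first (where the inequality is in fact an equality) and then treat the probabilistic-choice and concurrency cases separately, as these are where the interaction between the expectation/infimum and the minimum is genuinely subtle and where the preciseness hypothesis is indispensable.
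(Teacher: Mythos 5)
Your overall route --- induction on $n$, pulling the minimum through the magic wand via \Cref{eq:sepimp_dist_min}, through $\sepcon\,\ri$ via preciseness (\Cref{eq:precise_dist_min}), and finally through $\stepsymbol$ --- is exactly the decomposition the paper itself uses, and you have correctly isolated the crux: the step-level claim $\Min{\step{\cc}{t_1}}{\step{\cc}{t_2}} \leq \step{\cc}{\anonfunc{\cc'}\Min{t_1(\cc')}{t_2(\cc')}}$. (The infimum over actions is harmless, since $\inf$ distributes exactly over a binary $\min$, and the deterministic commands are harmless because their successor bags are singletons or empty.) The gap is in your proposed repair of that claim. Preciseness of $\ri$ makes the heap split in $(\cdot)\sepcon\ri$ unique, i.e.\ it collapses the \emph{supremum} inside the separating multiplication to a single term --- that is precisely what \Cref{eq:precise_dist_min} buys, and you have already spent it one line earlier. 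It has no bearing on the \emph{probabilistic sum over successor states} inside $\stepsymbol$: for $\PCHOICE{\cc_1}{0.5}{\cc_2}$ that bag has two elements no matter what $\ri$ is, and there one only gets $\sum_i p_i\Min{a_i}{b_i} \leq \Min{\sum_i p_i a_i}{\sum_i p_i b_i}$, which points the wrong way (take $a=(1,0)$, $b=(0,1)$, $p=(0.5,0.5)$: the left side is $0$, the right side is $0.5$). So the step-level super-distributivity you need is simply false for probabilistic choice, and preciseness cannot rescue it.

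For comparison, the paper's own proof has the same soft spot: at the step labelled $\dagger$ it asserts ``distributivity with min of addition and positive multiplication'' as an \emph{equality}, which again holds only when the successor bag is a singleton or empty. Indeed, instantiating the rule with Jones's example from Section~\ref{sec:safe} --- $\cc = \PCHOICE{\ASSIGN{\xx}{0}}{0.5}{\ASSIGN{\xx}{1}}$, $\fg_1=\iverson{\xx=0}$, $\fg_2=\iverson{\xx=1}$, $\ri=\emp$ (which is precise) --- the premises hold with preexpectation $0.5$ each, while $\Min{\fg_1}{\fg_2}=0$ has weakest liberal preexpectation $0$; so no proof of the rule as stated can go through, and yours would fail exactly at the probabilistic-choice case of the step lemma, the case you yourself flagged as ``genuinely subtle''. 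To obtain something provable you would have to restrict to non-probabilistic programs (where the successor bags are singletons and your argument, like the paper's, closes) or replace the minimum by an operation that commutes with convex combination.
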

\begin{proof}
    We prove instead by induction on $n$ that the equality 
    \begin{equation*}
        \Min{\wslpn{n}{\cc}{\fg_1}{\ri}}{\wslpn{n}{\cc}{\fg_2}{\ri}} = \wslpn{n}{\cc}{\Min{\fg_1}{\fg_2}}{\ri}
    \end{equation*}
    holds. If this holds, the claim also holds.

    For the induction base $n=0$, we have that $\wslpn{0}{\cc}{\Min{\fg_1}{\fg_2}}{\ri} = 1$, which is always equal to $\Min{\wslpn{0}{\cc}{\fg_1}{\ri}}{\wslpn{0}{\cc}{\fg_2}{\ri}}$.

    Thus, we assume for some arbitrary but fixed $n$ the claim holds as our induction hypothesis.

    For the induction step, we have two cases. For $\cc=\TERM$ we have 
    \begin{align*}
        &~ \wslpn{n+1}{\TERM}{\Min{\fg_1}{\fg_2}}{\ri}\\
    =   &~ \Min{\fg_1}{\fg_2}\\
    =   &~ \Min{\wslpn{n+1}{\TERM}{\fg_1}{\ri}}{\wslpn{n+1}{\TERM}{\fg_2}{\ri}}~.
    \end{align*}
    For $\cc\neq\TERM$ we have
    \begingroup
    \allowdisplaybreaks
    \begin{align*}
        &~ \wslpn{n+1}{\cc}{\Min{\fg_1}{\fg_2}}{\ri} \\
    =   &~ \ri \sepimp \step{\cc}{\anonfunc{\cc'} \wslpn{n}{\cc'}{\Min{\fg_1}{\fg_2}}{\ri} \sepcon \ri} \tag{Definition of \wslpsymbol} \\
    =   &~ \ri \sepimp \step{\cc}{\anonfunc{\cc'} \Min{\wslpn{n}{\cc'}{\fg_1}{\ri}}{\wslpn{n}{\cc'}{\fg_2}{\ri}} \sepcon \ri} \tag{Induction Hypothesis} \\
    =   &~ \ri \sepimp \step{\cc}{\anonfunc{\cc'} \Min{\wslpn{n}{\cc'}{\fg_1}{\ri} \sepcon \ri}{\wslpn{n}{\cc'}{\fg_2}{\ri} \sepcon \ri}} \tag{$\ri$ precise and \Cref{eq:precise_dist_min}}\\
    =   &~  \ri \sepimp \min \{ \step{\cc}{\anonfunc{\cc'} \wslpn{n}{\cc'}{\fg_1}{\ri} \sepcon \ri}, \\
        &~ \qquad \qquad \quad \step{\cc}{\anonfunc{\cc'} \wslpn{n}{\cc'}{\fg_2}{\ri} \sepcon \ri}\} \tag{$\dagger$, see below}\\
    =   &~ \min \Big\{\ri \sepimp \step{\cc}{\anonfunc{\cc'} \wslpn{n}{\cc'}{\fg_1}{\ri} \sepcon \ri},\\
        &~ \qquad ~\; \ri \sepimp \step{\cc}{\anonfunc{\cc'} \wslpn{n}{\cc'}{\fg_1}{\ri} \sepcon \ri}\Big\} \tag{\Cref{eq:sepimp_dist_min}} \\
    =   &~ \Min{\wslpn{n+1}{\cc}{\fg_1}{\ri}}{\wslpn{n+1}{\cc}{\fg_2}{\ri}} \tag{Definition of \wslpsymbol}
    \end{align*}
    \endgroup
    For $\dagger$ we have:
    \begingroup
    \allowdisplaybreaks
    \begin{align*}
        &~ \step{\cc}{\anonfunc{\cc'} \Min{\ct(\cc')}{\ct'(\cc')}} \\
    =   &~ \inf \bigg\{ \sum \left\llbag \pp \cdot \Min{\ct(\cc')}{\ct'(\cc')}(\sk', \hh') \mid \cc, (\sk, \hh) \optrans{\acta}{\pp} \cc', (\sk', \hh') \right\rrbag \\
        &~ \qquad \bigg\vert~ \acta \in \Enabled{\cc, (\sk, \hh)} \bigg\} \tag{Definition of \stepsymbol} \\
    =   &~ \inf \bigg\{ \Min{\ct(\cc')}{\ct'(\cc')}(\sk', \hh') \mid \acta \in \Enabled{\cc, (\sk, \hh)} \land \cc, (\sk, \hh) \optrans{\acta}{1} \cc', (\sk', \hh') \bigg\} \tag{$\cc$ is not probabilistic} \\
    =   &~ \inf \bigg\{ \min \Big\{\sum \left\llbag \pp \cdot \ct(\cc')(\sk', \hh') \mid \cc, (\sk, \hh) \optrans{\acta}{\pp} \cc', (\sk', \hh') \right\rrbag,\\
        &~ \qquad \qquad \quad \sum \left\llbag \pp \cdot \ct'(\cc')(\sk', \hh') \mid \cc, (\sk, \hh) \optrans{\acta}{\pp} \cc', (\sk', \hh') \right\rrbag\Big\} \\
        &~ \qquad \bigg\vert~ \acta \in \Enabled{\cc, (\sk, \hh)} \bigg\} \tag{$\cc$ is not probabilistic} \\
    =   &~ \min \bigg\{ \inf \Big\{ \sum \left\llbag \pp \cdot \ct(\cc')(\sk', \hh') \mid \cc, (\sk, \hh) \optrans{\acta}{\pp} \cc', (\sk', \hh') \right\rrbag \bigg\vert~ \acta \in \Enabled{\cc, (\sk, \hh)} \Big\}, \\
        &~ \qquad \quad \inf \Big\{ \sum \left\llbag \pp \cdot \ct(\cc')(\sk', \hh') \mid \cc, (\sk, \hh) \optrans{\acta}{\pp} \cc', (\sk', \hh') \right\rrbag \bigg\vert~ \acta \in \Enabled{\cc, (\sk, \hh)} \Big\} \bigg\}\\  \tag{\Cref{eq:inf_commutatitivity}}\\
    =   &~ \Min{\step{\cc}{\anonfunc{\cc'} \wslpn{n}{\cc'}{\fg_1}{\ri} \sepcon \ri}}{\step{\cc}{\anonfunc{\cc'} \wslpn{n}{\cc'}{\fg_2}{\ri} \sepcon \ri}} \tag{Definition of \stepsymbol}
    \end{align*}
    \endgroup
\end{proof}

\begin{lemma}\label{lem:soundconvex}
    If $\safeTuple{\ff_1}{\cc}{\fg_1}{\ri}$, $\safeTuple{\ff_2}{\cc}{\fg_2}{\ri}$, $\ri$ is precise and $\written{\cc}\cap \freevariable{\ee}=\emptyset$ then $\safeTuple{\ee_{\pp} \cdot \ff_1 + (1-\ee_{\pp}) \cdot \ff_2}{\cc}{\ee_{\pp} \cdot \fg_1 + (1-\ee_{\pp}) \cdot \fg_2}{\ri}$.    
\end{lemma}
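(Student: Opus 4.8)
The plan is to mirror the proof of the min rule (\Cref{lem:soundmin}): I would fix the two postexpectations and the coefficient and prove, by induction on $n$, the pointwise inequality
\[
  \ee_{\pp} \cdot \wslpn{n}{\cc}{\fg_1}{\ri} + (1-\ee_{\pp}) \cdot \wslpn{n}{\cc}{\fg_2}{\ri} \quad\leq\quad \wslpn{n}{\cc}{\ee_{\pp} \cdot \fg_1 + (1-\ee_{\pp}) \cdot \fg_2}{\ri}
\]
for every program $\cc$ satisfying $\written{\cc} \cap \freevariable{\ee_{\pp}} = \emptyset$. Passing to the limit via \Cref{lem:alternate-wslp} and then using the premises $\ff_i \leq \wslp{\cc}{\fg_i}{\ri}$ together with $0 \leq \ee_{\pp} \leq 1$ yields the claim. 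The base case $n=0$ holds with equality, since all three terms collapse to $1$, and the case $\cc = \TERM$ is immediate because $\wslpn{n+1}{\TERM}{\,\cdot\,}{\ri}$ just returns its postexpectation.

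For the inductive step with $\cc \neq \TERM$, I would unfold the definition to $\ri \sepimp \step{\cc}{\anonfunc{\cc'} \wslpn{n}{\cc'}{\ee_{\pp}\fg_1 + (1-\ee_{\pp})\fg_2}{\ri} \sepcon \ri}$, apply the induction hypothesis inside the step (legal by monotonicity of $\sepcon$, $\stepsymbol$ and $\sepimp$, cf.\ \Cref{eq:sepcon_monoton} and \Cref{lem:monotone-of-step}), and then rewrite the argument of $\step$. Here preciseness of $\ri$ is essential: by \Cref{eq:precise_dist_plus} the separating multiplication distributes \emph{exactly} over the sum, so with $W_i = \wslpn{n}{\cc'}{\fg_i}{\ri}$ we get $(\ee_{\pp} W_1 + (1-\ee_{\pp}) W_2) \sepcon \ri = (\ee_{\pp} W_1 \sepcon \ri) + ((1-\ee_{\pp}) W_2 \sepcon \ri)$, an equality that degrades to $\leq$ for non-precise invariants. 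Since $\ee_{\pp}$ is a non-negative stack-only scalar, it factors out of the supremum defining $\sepcon$, so the argument of $\step$ becomes the convex combination $\ee_{\pp} \cdot (W_1 \sepcon \ri) + (1-\ee_{\pp}) \cdot (W_2 \sepcon \ri)$.

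It then remains to push the convex combination out of $\step$ and out of $\ri \sepimp (\cdot)$. For $\step$, the side condition $\written{\cc} \cap \freevariable{\ee_{\pp}} = \emptyset$ is exactly what is needed: by \Cref{lem:onestep_stackequal} and \Cref{lem:expectation_stackequal} every successor stack agrees with $\sk$ on $\freevariable{\ee_{\pp}}$, so the coefficient is constant along each transition and can be pulled outside the inner sum, after which superadditivity of the infimum over actions (the inf of a sum dominates the sum of infs, as in \Cref{eq:inf_superlin}) gives $\step{\cc}{\anonfunc{\cc'} \ee_{\pp}\cdot V_1 + (1-\ee_{\pp})\cdot V_2} \geq \ee_{\pp} \cdot \step{\cc}{V_1} + (1-\ee_{\pp}) \cdot \step{\cc}{V_2}$, writing $V_i = \anonfunc{\cc'} \wslpn{n}{\cc'}{\fg_i}{\ri} \sepcon \ri$. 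Finally, superdistributivity of the magic wand with plus (\Cref{eq:sepimp_superdist_plus}), combined with the same scalar factoring through $\sepimp$, yields $\ri \sepimp (\ee_{\pp} A + (1-\ee_{\pp}) B) \geq \ee_{\pp} \cdot (\ri \sepimp A) + (1-\ee_{\pp}) \cdot (\ri \sepimp B)$, which with $A = \step{\cc}{V_1}$ and $B = \step{\cc}{V_2}$ is precisely $\ee_{\pp} \cdot \wslpn{n+1}{\cc}{\fg_1}{\ri} + (1-\ee_{\pp}) \cdot \wslpn{n+1}{\cc}{\fg_2}{\ri}$, closing the induction.

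The main obstacle is keeping the side condition alive across the induction and keeping every inequality oriented the same way. Applying the induction hypothesis to a one-step successor $\cc'$ requires $\written{\cc'} \cap \freevariable{\ee_{\pp}} = \emptyset$, which I would obtain from the (easy but necessary) fact that $\written{\cc'} \subseteq \written{\cc}$ for every one-step successor; this is verifiable by inspecting the operational rules, e.g.\ $\written{\COMPOSE{\cc_1}{\WHILEDO{\guard}{\cc_1}}} = \written{\WHILEDO{\guard}{\cc_1}}$ for the loop unrolling. The second delicate point is that the chain uses exact distributivity for $\sepcon$ (hence preciseness of $\ri$) but only superadditive/superdistributive bounds for $\step$ and $\sepimp$; as we only seek a lower bound this is harmless, yet it explains why preciseness cannot be dropped here, in contrast to the max rule.
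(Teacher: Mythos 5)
Your proof follows essentially the same route as the paper's: the same induction on $n$ with the same strengthened pointwise inequality, exact distribution of $\sepcon\,\ri$ via preciseness of $\ri$, pushing the convex combination through $\stepsymbol$ using the write-set side condition (via \Cref{lem:onestep_stackequal,lem:expectation_stackequal}) and superadditivity of the infimum, and finally through the magic wand via its superdistributivity with plus and multiplication; your remark that the side condition must be propagated to one-step successors via $\written{\cc'}\subseteq\written{\cc}$ addresses a point the paper glosses over. The only detail you omit is the boundary case $\Enabled{\cc, (\sk, \hh)}=\emptyset$, which the paper treats as a separate case because \Cref{eq:inf_superlin} is only stated for non-empty index sets; there both sides collapse to $1$, so the conclusion is unaffected.
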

\begin{proof}
    We prove by induction on $n$ that $\ee_{\pp} \cdot \wslpn{n}{\cc}{\fg_1}{\ri} + (1-\ee_{\pp}) \cdot \wslpn{n}{\cc}{\fg_2}{\ri} \leq \wslpn{n}{\cc}{\ee_{\pp} \cdot \fg_1 + (1-\ee_{\pp}) \cdot \fg_2}{\ri}$. If this holds, the claim also holds.

    For the induction base $n=0$, we have $\wslpn{0}{\cc}{\ee_{\pp} \cdot \fg_1 + (1-\ee_{\pp}) \cdot \fg_2}{\ri} = 1 = \ee_{\pp} + (1-\ee_{\pp}) = \ee_{\pp} \cdot \wslpn{0}{\cc}{\fg_1}{\ri} + (1-\ee_{\pp}) \cdot \wslpn{0}{\cc}{\fg_2}{\ri}$.

    We now assume that the claim holds for some arbitrary but fixed $n$ as our induction hypothesis.

    For the induction step, we have three cases. For the case $\cc=\TERM$ we have the equality $\wslpn{n+1}{\TERM}{\ee_{\pp} \cdot \fg_1 + (1-\ee_{\pp}) \cdot \fg_2}{\ri} = \ee_{\pp} \cdot \fg_1 + (1-\ee_{\pp}) \cdot \fg_2 = \ee_{\pp} \cdot \wslpn{n+1}{\TERM}{\fg_1}{\ri} + (1-\ee_{\pp}) \cdot \wslpn{n+1}{\TERM}{\fg_2}{\ri}$. For the case $\cc\neq\TERM$, $\Enabled{\cc, (\sk, \hh)}=\emptyset$ we have:
    \begingroup
    \allowdisplaybreaks
    \begin{align*}
        &~ \wslpn{n+1}{\cc}{\ee_{\pp} \cdot \fg_1 + (1-\ee_{\pp}) \cdot \fg_2 }{\ri}(\sk, \hh) \\
    =   &~ \left(\ri \sepimp \step{\cc}{\anonfunc{\cc'} \wslpn{n}{\cc'}{\ee_{\pp} \cdot \fg_1 + (1-\ee_{\pp}) \cdot \fg_2}{\ri} \sepcon \ri}\right)(\sk, \hh) \tag{Definition of \wslpsymbol} \\
    =   &~ (\ri \sepimp 1)(\sk,\hh) \tag{$\Enabled{\cc, (\sk, \hh)}=\emptyset$} \\
    =   &~ 1(\sk, \hh) \tag{Definition of $\sepimp$} \\
    =   &~ \left(\ee_{\pp} + (1-\ee_{\pp})\right)(\sk, \hh) \tag{Algebra} \\
    =   &~ \ee_{\pp} \cdot \wslpn{n+1}{\TERM}{\fg_1}{\ri} + (1-\ee_{\pp}) \cdot \wslpn{n+1}{\TERM}{\fg_2}{\ri} \tag{Similar to above}
    \end{align*}
    \endgroup
    And lastly we have the case that $\cc\neq\TERM$, $\Enabled{\cc, (\sk, \hh)}\neq\emptyset$:
    \begin{align*}
        &~ \wslpn{n+1}{\cc}{\ee_{\pp} \cdot \fg_1 + (1-\ee_{\pp}) \cdot \fg_2 }{\ri} \\
    =   &~ \ri \sepimp \step{\cc}{\anonfunc{\cc'} \wslpn{n}{\cc'}{\ee_{\pp} \cdot \fg_1 + (1-\ee_{\pp}) \cdot \fg_2}{\ri} \sepcon \ri} \tag{Definition of \wslpsymbol} \\
    \geq&~ \ri \sepimp \step{\cc}{\anonfunc{\cc'} \left(\ee_{\pp} \cdot \wslpn{n}{\cc}{\fg_1}{\ri} + (1-\ee_{\pp}) \cdot \wslpn{n}{\cc}{\fg_2}{\ri}\right) \sepcon \ri} \tag{Induction Hypothesis} \\
    =   &~ \ri \sepimp \step{\cc}{\anonfunc{\cc'} \ee_{\pp} \cdot (\wslpn{n}{\cc}{\fg_1}{\ri} \sepcon \ri) + (1-\ee_{\pp}) \cdot (\wslpn{n}{\cc}{\fg_2}{\ri} \sepcon \ri)} \tag{$\ri$ is precise and \Cref{eq:precise_dist_plus,eq:precise_dist_mult}} \\
    \geq&~ \ri \sepimp \big( \ee_{\pp} \cdot \step{\cc}{\anonfunc{\cc'} \wslpn{n}{\cc}{\fg_1}{\ri} \sepcon \ri} \\
        &~ \quad ~ + (1-\ee_{\pp}) \cdot \step{\cc}{\anonfunc{\cc'} \wslpn{n}{\cc}{\fg_2}{\ri} \sepcon \ri} \big) \tag{Monotoncitiy of $\sepimp$ and $\dagger$, see below} \\
    \geq&~ \ee_{\pp} \cdot \left(\ri \sepimp \step{\cc}{\anonfunc{\cc'} \wslpn{n}{\cc}{\fg_1}{\ri} \sepcon \ri}\right) \\
        &+ (1-\ee_{\pp}) \cdot \left( \ri \sepimp \step{\cc}{\anonfunc{\cc'} \wslpn{n}{\cc}{\fg_2}{\ri} \sepcon \ri} \right) \tag{\Cref{eq:sepimp_superdist_plus,eq:magicwand_superdist_mult}}\\
    =   &~ \ee_{\pp} \cdot \wslpn{n+1}{\cc}{\fg_1}{\ri} + (1-\ee_{\pp}) \cdot \wslpn{n+1}{\cc}{\fg_2}{\ri} \tag{Definition of \wslpsymbol} 
    \end{align*}
    Lastly for $\dagger$ we have:
    \begingroup 
    \allowdisplaybreaks
    \begin{align*}
        &~ \step{\cc}{\anonfunc{\cc'} \ee_{\pp} \cdot \ct(\cc') + (1-\ee_{\pp}) \cdot \ct'(\cc')} \\
    =   &~ \inf \bigg\{ \sum \left\llbag \pp \cdot (\ee_{\pp} \cdot \ct(\cc') + (1-\ee_{\pp}) \cdot \ct'(\cc'))(\sk', \hh') \mid \cc, (\sk, \hh) \optrans{\acta}{\pp} \cc', (\sk', \hh') \right\} \\
        &~ \qquad \bigg\vert~ \acta \in \Enabled{\cc, (\sk, \hh)} \bigg\} \tag{Definition of \stepsymbol} \\
    =   &~ \inf \bigg\{ \sum \big\llbag \pp \cdot (\ee_{\pp}(\sk') \cdot \ct(\cc')(\sk',\hh') + (1-\ee_{\pp}(\sk')) \cdot \ct'(\cc')(\sk', \hh'))  \\
        &~ \qquad \big\vert~ \cc, (\sk, \hh) \optrans{\acta}{\pp} \cc', (\sk', \hh') \big\rrbag \bigg\vert~ \acta \in \Enabled{\cc, (\sk, \hh)} \bigg\} \tag{Pointwise application} \\
    =   &~ \inf \bigg\{ \sum \big\llbag \pp \cdot (\ee_{\pp}(\sk) \cdot \ct(\cc')(\sk',\hh') + (1-\ee_{\pp}(\sk)) \cdot \ct'(\cc')(\sk', \hh'))  \\
        &~ \qquad \big\vert~ \cc, (\sk, \hh) \optrans{\acta}{\pp} \cc', (\sk', \hh') \big\rrbag \bigg\vert~ \acta \in \Enabled{\cc, (\sk, \hh)} \bigg\} \tag{\Cref{lem:expectation_stackequal,lem:onestep_stackequal} and $\written{\cc}\cap \freevariable{\ee_{\pp}}=\emptyset$} \\
    =   &~ \inf \bigg\{ \sum \big\llbag \pp \cdot \ee_{\pp}(\sk) \cdot \ct(\cc')(\sk',\hh') + \pp \cdot (1-\ee_{\pp}(\sk)) \cdot \ct'(\cc')(\sk', \hh')  \\
        &~ \qquad \big\vert~ \cc, (\sk, \hh) \optrans{\acta}{\pp} \cc', (\sk', \hh') \big\rrbag \bigg\vert~ \acta \in \Enabled{\cc, (\sk, \hh)} \bigg\} \tag{Distributivity} \\
    =   &~ \inf \bigg\{ \sum \big\llbag \pp \cdot \ee_{\pp}(\sk) \cdot \ct(\cc')(\sk',\hh') \big\vert~ \cc, (\sk, \hh) \optrans{\acta}{\pp} \cc', (\sk', \hh') \big\rrbag \\
        &~ \qquad + \sum \big\llbag \pp \cdot (1-\ee_{\pp}(\sk)) \cdot \ct'(\cc')(\sk', \hh') \big\vert~ \cc, (\sk, \hh) \optrans{\acta}{\pp} \cc', (\sk', \hh') \big\rrbag \\
        &~ \qquad \bigg\vert~ \acta \in \Enabled{\cc, (\sk, \hh)} \bigg\} \tag{Commutativitiy} \\
    \geq&~ \inf \bigg\{ \sum \big\llbag \pp \cdot \ee_{\pp}(\sk) \cdot \ct(\cc')(\sk',\hh') \big\vert~ \cc, (\sk, \hh) \optrans{\acta}{\pp} \cc', (\sk', \hh') \big\rrbag \\
        &~ \qquad \bigg\vert~ \acta \in \Enabled{\cc, (\sk, \hh)} \bigg\} \\
        &~ + \inf \bigg\{ \sum \big\llbag \pp \cdot (1-\ee_{\pp}(\sk)) \cdot \ct'(\cc')(\sk',\hh') \big\vert~ \cc, (\sk, \hh) \optrans{\acta}{\pp} \cc', (\sk', \hh') \big\rrbag \\
        &~ \qquad \bigg\vert~ \acta \in \Enabled{\cc, (\sk, \hh)} \bigg\} \tag{\Cref{eq:inf_superlin} and $\Enabled{\cc, (\sk, \hh)}\neq\emptyset$}\\
    =   &~ \ee_{\pp}(\sk) \cdot \inf \bigg\{ \sum \big\llbag \pp \cdot \ct(\cc')(\sk',\hh') \big\vert~ \cc, (\sk, \hh) \optrans{\acta}{\pp} \cc', (\sk', \hh') \big\rrbag \\
        &~ \qquad \bigg\vert~ \acta \in \Enabled{\cc, (\sk, \hh)} \bigg\} \\
        &~ + (1-\ee_{\pp}(\sk)) \cdot \inf \bigg\{ \sum \big\llbag \pp \cdot \ct'(\cc')(\sk',\hh') \big\vert~ \cc, (\sk, \hh) \optrans{\acta}{\pp} \cc', (\sk', \hh') \big\rrbag \\
        &~ \qquad \bigg\vert~ \acta \in \Enabled{\cc, (\sk, \hh)} \bigg\} \tag{Constant factors can be shifted outside}\\
    =   &~ \ee_{\pp}(\sk) \cdot \step{\cc}{\anonfunc{\cc'} \ct(\cc')}(\sk, \hh) + (1-\ee_{\pp}(\sk)) \cdot \step{\cc}{\anonfunc{\cc'} \ct'(\cc')}(\sk, \hh) \tag{Definition of \stepsymbol} \\
    =   &~ \left( \ee_{\pp} \cdot \step{\cc}{\anonfunc{\cc'} \ct(\cc')} + (1-\ee_{\pp} \cdot \step{\cc}{\anonfunc{\cc'} \ct'(\cc')}\right)(\sk, \hh) \tag{Pointwise application}
    \end{align*}
    \endgroup
    Thus, the claim is proven.
\end{proof}

\soundnessproofrules*
\begin{proof}
    We prove the soundness of all inference rules.
\begin{center}
\begin{minipage}{0.47\textwidth}
\begin{center}
    \begin{scprooftree}{\prscale}{\prvspace}
        \AxiomC{}
        \RightLabel{term}
        \UnaryInfC{$\safeTuple{\ff}{\TERM}{\ff}{\ri}$}
    \end{scprooftree}
    \begin{scprooftree}{\prscale}{\prvspace}
        \AxiomC{$\fg \leq \sup_{\xv \in \Vals} \singleton{\ee}{\xv} \sepcon (\singleton{\ee}{\xv} \sepimp \ff\subst{\xx}{\xv})$}
        \RightLabel{look}
        \UnaryInfC{$\safeTuple{\fg}{\ASSIGNH{\xx}{\ee}}{\ff}{\ri}$}
    \end{scprooftree}
    \begin{scprooftree}{\prscale}{\prvspace}
        \AxiomC{$\fg \leq \inf_{\xv\in \Vals} \singleton{\xv}{\ee_1, \dots, \ee_n} \sepimp \ff\subst{\xx}{\xv}$}
        \RightLabel{alloc}
        \UnaryInfC{$\safeTuple{\fg}{\ALLOC{\xx}{\ee_1, \dots, \ee_n}}{\ff}{\ri}$}
    \end{scprooftree}
\end{center}   
\end{minipage}
\begin{minipage}{0.47\textwidth}
\begin{center}
    \begin{scprooftree}{\prscale}{\prvspace}
        \AxiomC{$\fg \leq \ff\subst{\xx}{\ee}$}
        \RightLabel{assign}
        \UnaryInfC{$\safeTuple{\fg}{\ASSIGN{\xx}{\ee}}{\ff}{\ri}$}
    \end{scprooftree}
    \begin{scprooftree}{\prscale}{\prvspace}
        \AxiomC{$\fg \leq \validpointer{\ee} \sepcon (\singleton{\ee}{\ee'} \sepimp \ff)$}
        \RightLabel{mut}
        \UnaryInfC{$\safeTuple{\fg}{\HASSIGN{\ee}{\ee'}}{\ff}{\ri}$}
    \end{scprooftree}
    \begin{scprooftree}{\prscale}{\prvspace}
        \AxiomC{$\fg \leq \ff \sepcon \validpointer{\xx}$}
        \RightLabel{disp}
        \UnaryInfC{$\safeTuple{\fg}{\FREE{\xx}}{\ff}{\ri}$}
    \end{scprooftree}
\end{center}
\end{minipage}
\end{center}
    The proof rules term, look, alloc, assign, mut and disp are sound by \Cref{lem:soundbywlp} and the proof rules for $\wlpsymbol$ from \cite{Batz2019Quantitative}.

    \begin{prooftree}
        \AxiomC{$\safeTuple{\ff}{\cc_1}{\fg}{\ri}$}
        \AxiomC{$\safeTuple{\fg}{\cc_2}{\fh}{\ri}$}
        \RightLabel{seq}
        \BinaryInfC{$\safeTuple{\ff}{\COMPOSE{\cc_1}{\cc_2}}{\fh}{\ri}$}
    \end{prooftree}
    The proof rule seq is sound by \Cref{lem:soundseq}.

    \begin{prooftree}
        \AxiomC{$\safeTuple{\ff_1}{\cc_1}{\fg}{\ri}$}
        \AxiomC{$\safeTuple{\ff_2}{\cc_2}{\fg}{\ri}$}
        \RightLabel{if}
        \BinaryInfC{$\safeTuple{\iverson{\guard} \cdot \ff_1 + \iverson{\neg \guard} \cdot \ff_2}{\ITE{\guard}{\cc_1}{\cc_2}}{\fg}{\ri}$}
    \end{prooftree}
    The proof rule if is sound by \Cref{lem:soundite}.

    \begin{prooftree}
        \AxiomC{$\inv \leq \iverson{\guard} \cdot \ff + \iverson{\neg \guard} \cdot \fg$}
        \AxiomC{$\safeTuple{\ff}{\cc}{\inv}{\ri}$}
        \RightLabel{while}
        \BinaryInfC{$\safeTuple{\inv}{\WHILEDO{\guard}{\cc}}{\fg}{\ri}$}
    \end{prooftree}
    The proof rule while is sound by \Cref{lem:soundwhile}.

    \begin{prooftree}
        \AxiomC{}
        \RightLabel{div}
        \UnaryInfC{$\safeTuple{\ff}{\DIVERGE}{\fg}{\ri}$}
    \end{prooftree}
    The proof rule div is sound by \Cref{lem:sounddiv}.

    \begin{prooftree}
        \AxiomC{$\safeTuple{\ff_1}{\cc_1}{\fg}{\ri}$}
        \AxiomC{$\safeTuple{\ff_2}{\cc_2}{\fg}{\ri}$}
        \RightLabel{p-choice}
        \BinaryInfC{$\safeTuple{\ee_{\pp} \cdot \ff_1 + (1-\ee_{\pp}) \cdot \ff_2}{\PCHOICE{\cc_1}{\ee_{\pp}}{\cc_2}}{\fg}{\ri}$}
    \end{prooftree}
    The proof rule p-choice is sound by \Cref{lem:soundpchoice}.

    \begin{prooftree}
        \AxiomC{$\safeTuple{\ff}{\cc}{\fg \sepcon \ri}{\emp}$}
        \RightLabel{atomic}
        \UnaryInfC{$\safeTuple{\ff}{\ATOMIC{\cc}}{\fg}{\ri}$}
    \end{prooftree}
    The proof rule atomic is sound by \Cref{lem:soundatomicregion}.

    \begin{prooftree}
        \AxiomC{$\safeTuple{\ff}{\cc}{\fg}{\ri \sepcon \rj}$}
        \RightLabel{share}
        \UnaryInfC{$\safeTuple{\ff \sepcon \rj}{\cc}{\fg \sepcon \rj}{\ri}$}
    \end{prooftree}
    The proof rule share is sound by \Cref{lem:soundshare}.

    \begin{center}
    \begin{scprooftree}{0.8}{1px}
        \AxiomC{$\safeTuple{\ff_1}{\cc_1}{\fg_1}{\ri}$}
        \AxiomC{$\safeTuple{\ff_2}{\cc_2}{\fg_2}{\ri}$}
        \AxiomC{$\written{\cc_i} \cap \freevariable{\cc_{3-i}, \fg_{3-i}, \ri}=\emptyset$}
        \RightLabel{concur}
        \TrinaryInfC{$\safeTuple{\ff_1 \sepcon \ff_2}{\CONCURRENT{\cc_1}{\cc_2}}{\fg_1 \sepcon \fg_2}{\ri}$}
    \end{scprooftree}
    \end{center}
    The proof rule concur is sound by \Cref{lem:soundconcur}.

    \begin{center}
    \begin{scprooftree}{0.8}{1px}
        \AxiomC{$\ff' \leq \wlps{\SchedulerSet}{\cc}{\ff}$}
        \AxiomC{$\fg' \leq \wlps{\SchedulerSet}{\cc}{\fg}$}
        \AxiomC{$\cc$ is AST wrt $\SchedulerSet$}
        \AxiomC{$a \in \PosReals$}
        \RightLabel{superlin}
        \QuaternaryInfC{$a \cdot \ff' + \fg' \leq \wlps{\SchedulerSet}{\cc}{a \cdot \ff + \fg}$}
    \end{scprooftree}
    \end{center}
    The proof rule superlin is sound by \Cref{lem:soundsuperlin}.

    \begin{prooftree}
        \AxiomC{$\safeTuple{\ff}{\cc}{\fg}{\emp}$}
        \RightLabel{$\wlpsymbol$-$\wslpsymbol$}
        \UnaryInfC{$\ff \leq \wlps{\SchedulerSet}{\cc}{\fg}$}
    \end{prooftree}
    The proof rule $\wlpsymbol$-$\wslpsymbol$ is sound by \Cref{lem:soundwlpwslp}.

    \begin{prooftree}
        \AxiomC{$\safeTuple{\ff}{\cc}{\fg}{\ri}$}
        \AxiomC{$\written{\cc} \cap \freevariable{\fh} = \emptyset$}
        \RightLabel{frame}
        \BinaryInfC{$\safeTuple{\ff \sepcon \fh}{\cc}{\fg \sepcon \fh}{\ri}$}
    \end{prooftree}
    The proof rule frame is sound by \Cref{lem:soundframe}.

    \begin{prooftree}
        \AxiomC{$\safeTuple{\ff \sepcon \ri}{\cc}{\fg \sepcon \ri}{\emp}$}
        \AxiomC{$\cc$ is a terminating atom}
        \RightLabel{atom}
        \BinaryInfC{$\safeTuple{\ff}{\cc}{\fg}{\ri}$}
    \end{prooftree}
    The proof rule atom is sound by \Cref{lem:soundatomcommand}.

    \begin{prooftree}
        \AxiomC{$\ff \leq \ff'$}
        \AxiomC{$\safeTuple{\ff'}{\cc}{\fg'}{\ri}$}
        \AxiomC{$\fg'\leq \fg$}
        \RightLabel{monotonic}
        \TrinaryInfC{$\safeTuple{\ff}{\cc}{\fg}{\ri}$}
    \end{prooftree}
    The proof rule monotonic is sound by \Cref{lem:soundmonotone}.

    \begin{prooftree}
        \AxiomC{$\safeTuple{\ff}{\cc}{\fg}{\ri}$}
        \AxiomC{$\safeTuple{\ff'}{\cc}{\fg'}{\ri}$}
        \RightLabel{max}
        \BinaryInfC{$\safeTuple{\Max{\ff}{\ff'}}{\cc}{\Max{Q}{Q'}}{\ri}$}
    \end{prooftree}
    The proof rule max is sound by \Cref{lem:soundmax}.

    \begin{prooftree}
        \AxiomC{$\safeTuple{\ff}{\cc}{\fg}{\ri}$}
        \AxiomC{$\safeTuple{\ff'}{\cc}{\fg'}{\ri}$}
        \AxiomC{$\ri$ precise}
        \RightLabel{min}
        \TrinaryInfC{$\safeTuple{\Min{\ff}{\ff'}}{\cc}{\Min{\fg}{\fg'}}{\ri}$}
    \end{prooftree}
    The proof rule min is sound by \Cref{lem:soundmin}.

    \begin{center}
    \begin{scprooftree}{0.8}{1px}
        \AxiomC{$\safeTuple{\ff}{\cc}{\fg}{\ri}$}
        \AxiomC{$\safeTuple{\ff'}{\cc}{\fg'}{\ri}$}
        \AxiomC{$\ri$ precise}
        \AxiomC{$\written{C}\cap \freevariable{\ee}=\emptyset$}
        \RightLabel{convex}
        \QuaternaryInfC{$\safeTuple{\ee \cdot \ff + (1-\ee) \cdot \ff'}{\cc}{\ee \cdot \fg + (1-E) \cdot \fg'}{\ri}$}
    \end{scprooftree}
    \end{center}
    The proof rule convex is sound by \Cref{lem:soundconvex}.
\end{proof}

\section{Details on Examples}
\subsection{Additional Details on the Running Example}\label{app:running_example}

    To recap, we are given the resource invariant $\ri = \Max{\singleton{\rr}{0}}{\singleton{\rr}{-1}}$ and have already proven:
    \begin{align*}
        &\annotateRI{0.5 \sepcon 1}{\ri}\\
        &\CONCURRENT[\quad]{
            \begin{aligned}
                &\annotateRI{0.5}{\ri}\\
                &\PCHOICE{\HASSIGN{\rr}{0}}{0.5}{\HASSIGN{\rr}{1}}\\
                &\annotateRI{1}{\ri}
            \end{aligned}
        }{
            \begin{aligned}
                &\annotateRI{1}{\ri}\\
                &\annotate{1 \sepcon \ri}\\
                &\ASSIGNH{\xy}{\rr}\SEMI\\
                &\annotate{\Max{\iverson{\xy=0}}{\iverson{y=-1}} \sepcon \ri}\\
                &\annotateRI{\Max{\iverson{\xy=0}}{\iverson{y=-1}}}{\ri}\\
                &\WHILEDO{\xy = -1}{\ASSIGNH{\xy}{\rr}}\SEMI\\
                &\annotateRI{\iverson{\xy=0}}{\ri}
            \end{aligned}
        }\\
        &\annotateRI{1 \sepcon \iverson{\xy=0}}{\ri}\\
    \end{align*}
    The mutation $\HASSIGN{\rr}{-1}$ together with the atom rule gives us the inequality
    \begin{equation*}
        0.5 \sepcon \validpointer{\rr} \leq \validpointer{\rr} \sepcon (\singleton{\rr}{-1} \sepimp 0.5 \sepcon \Max{\singleton{\rr}{0}}{\singleton{\rr}{-1}})~.
    \end{equation*}
    However, it is easy to verify that $(\singleton{\rr}{-1} \sepimp 0.5 \sepcon \Max{\singleton{\rr}{0}}{\singleton{\rr}{-1}})$ simplifies to $0.5$, which results in the given lower bound.

    For the probabilistic choice we have:
    \begin{align*}
        &\annotateRI{0.5}{\ri}\\
        &\annotateRI{0.5 \cdot 1 + 0.5 \cdot 0}{\ri}\\
        &\PCHOICE{
            \begin{aligned}
                &\annotateRI{1}{\ri}\\
                &\HASSIGN{\rr}{0}\\
                &\annotateRI{1}{\ri}
            \end{aligned}
        }{0.5}{
            \begin{aligned}
                &\annotateRI{0}{\ri}\\
                &\HASSIGN{\rr}{1}\\
                &\annotateRI{1}{\ri}
            \end{aligned}
        }\\
        &\annotateRI{1}{\ri}
    \end{align*}
    The right part is rather simple, as we can always lower bound anything by zero. The left part holds since with $\singleton{\rr}{0}$ the mutation is satisfied, however the value before mutating $\rr$ is unknown. We can lower bound the resulting preexpectation $1 \sepcon \validpointer{\rr}$ by $1 \sepcon \Max{\singleton{\rr}{-1}}{\singleton{\rr}{0}}$ and thus realise the resource invariant again.
    For the loop invariant $\Max{\iverson{\xy=0}}{\iverson{\xy=-1}}$ we have:
    \begin{align*}
        &\annotateRI{\Max{\iverson{\xy=0}}{\iverson{\xy=-1}}}{\ri} \\
        &\ASSIGNH{\xy}{\rr}\\
        &\annotateRI{\Max{\iverson{\xy=0}}{\iverson{y=-1}}}{\ri}
    \end{align*}
    The lookup operation here results in both $\iverson{\xy=0}$ and $\iverson{\xy=1}$ to be evaluated to $1$ if $\singleton{\rr}{0}$ and $\singleton{\rr}{-1}$, respectively. Our resource invariant guarantees this, thus we obtain the expectation $1$ and lower bound it by $\Max{\iverson{\xy=0}}{\iverson{\xy=-1}}$.
    Lastly we check that it is indeed a loop invariant:
    \begin{align*}
        &\iverson{\xy=-1} \cdot \Max{\iverson{\xy=0}}{\iverson{\xy=-1}} + \iverson{\xy \neq -1} \cdot \iverson{\xy=0}\\
    =   &\iverson{\xy=-1} + \iverson{\xy=0}\\
    =   &\Max{\iverson{\xy=0}}{\iverson{\xy=-1}}
    \end{align*}

\subsection{Example: A Producer, a Consumer and a lossy Channel}\label{app:ex_lossy_channel}

Here we have the following program $\cc$:\\
\scalebox{0.8}{\parbox{\linewidth}{%
\begin{align*}
    &\ASSIGN{\xl}{0}\SEMI \\
    &\ASSIGN{\xy_1, \xy_2, \xy_3}{\xk}\SEMI \\
    &\CONCURRENT[\quad]{
    \begin{aligned}
        &\WHILE{\xy_1 \geq 0}\\
        &\quad \PCHOICE{\ASSIGN{\xx_1}{1}}{0.5}{\ASSIGN{\xx_1}{2}}\SEMI\\
        &\quad \HASSIGN{\xz_1+\xy_1}{\xx_1}\SEMI\\
        &\quad \ASSIGN{\xy_1}{\xy_1-1}\\
        &\}   
    \end{aligned}
    }{\CONCURRENT[\quad]{
    \begin{aligned}
        &\\[-1em]
        &\WHILE{\xy_2 \geq 0}\\ 
        &\quad \ASSIGNH{\xx_2}{\xz_1+\xy_2}\SEMI\\ 
        &\quad \IF{\xx_2 \neq 0} \\
        &\quad \quad \quad \{\HASSIGN{\xz_2+\xy_2}{\xx_2}\}\\ 
        &\quad \quad \PCHOICESYMBOL{p}\\
        &\quad \quad \quad \{\HASSIGN{\xz_2+\xy_2}{-1}\}\SEMI\\
        &\quad \quad \ASSIGN{\xy_2}{\xy_2-1}\\
        &\quad \} \\
        &\} \\
        &\\[-1em]
    \end{aligned}
    }{
    \begin{aligned}
        &\WHILE{\xy_3 \geq 0}\\
        &\quad \ASSIGNH{\xx_3}{\xz_2+\xy_2}\SEMI \\ 
        &\quad \IF{\xx_3 \neq 0}\\
        &\quad \quad \IF{\xx_3 \neq -1}\ASSIGN{\xl}{\xl+1}\}\SEMI\\
        &\quad \quad \ASSIGN{\xy_3}{\xy_3-1} \\
        &\quad \}\\
        &\}
    \end{aligned}
    }}
\end{align*}
}}\\
We use the resource invariant $\ri_{\setI}$ for a set $\setI$. The set $\setI$ encodes which locations in the array starting from $\xz_2$ will have an error value of $-1$ or a valid value of $1$ or $2$ after the channel inserts data into it. We use a big separating multiplication to connect all the possible instantiations using separating multiplication. That is, $\bigsepcon{} \{ \ff \} = \ff$ and $\bigsepcon{} (\{ \ff \} \cup A) = \ff \sepcon \bigsepcon{} A$ for a non-empty and countable set $A$. $\ri_{\setI}$ declares that all locations between $\xz_1$ and $\xz_1+k$ have either value $0$, $1$ or $2$ and all locations between $\xz_2$ and $\xz_2+k$ have values $0$, $1$ or $2$ if the offset is in $\setI$ and $0$ or $-1$ if the offset is not in $\setI$. The value $0$ is always possible for all locations between $\xz_i$ and $\xz_i+k$ since we assume $0$ to be the initial value. We connect the predicates declaring possible values for the location $\xz_j+i$ using a maximum, which acts as a qualitative disjunction here.
\begin{align*}
    \ri_{\setI} \quad =& \quad \left(\bigsepcon{i \in \integerrange{0}{\xk}} ~~ \Maxs{\singleton{\xz_1+i}{0},\: \singleton{\xz_1+i}{1},\: \singleton{\xz_1+i}{2}}\right) \\
                      &~ \sepcon \left(\bigsepcon{i \in \integerrange{0}{\xk}\cap\setI} \Maxs{\singleton{\xz_2+i}{0},\: \singleton{\xz_2+i}{1},\: \singleton{\xz_2+i}{2}}\right) \\
                      &~ \sepcon \left(\bigsepcon{i \in \integerrange{0}{\xk}\setminus\setI} \Maxs{\singleton{\xz_2+i}{0},\: \singleton{\xz_2+i}{-1}}\right)~.
\end{align*}
We will leave out computations of inequalities $\ff \leq \fg$ for the sake of brevity and give an explanation instead. Since our representation of expectations may grow in size, we use a curly bracket after the $\fatslash~$ symbol to denote expectations which are too long for one line. Our goal is to prove a lower bound on the probability that $\xl = |\setI|$ is realised after termination. If $\setI$ contains numbers outside the range between $0$ and $k$, we may as well replace $|\setI|$ with $|\setI \cap \integerrange{0}{\xk}|$. We now prove an invariant for each of the three subprograms.

For the producer $\cc_1$, we prove the invariant $\inv_1 = 1$ with respect to the postexpectation $1$ and resource invariant $\ri_{\setI}$. This shows indeed that the probability of safe execution of the loop is one and that our resource invariant $\ri_{\setI}$ almost always holds. 
\begin{align*}
    & \annotateRI{1}{\ri_{\setI}}\\
    & \PCHOICE{
        \begin{aligned}
            & \annotateRI{1}{\ri_{\setI}}\\
            & \annotateRI{ \iverson{1 \in \integerrange{0}{2}}}{\ri_{\setI}}\\
            & \ASSIGN{\xx_1}{1} \\
            & \annotateRI{\iverson{\xx_1 \in \integerrange{0}{2}}}{\ri_{\setI}}
        \end{aligned}    
    }{0.5}{
        \begin{aligned}
            & \annotateRI{1}{\ri_{\setI}}\\
            & \annotateRI{ \iverson{2 \in \integerrange{0}{2}}}{\ri_{\setI}}\\
            & \ASSIGN{\xx_1}{2} \\
            & \annotateRI{\iverson{\xx_1 \in \integerrange{0}{2}}}{\ri_{\setI}}
        \end{aligned}
    }\SEMI\\
    & \annotateRI{\iverson{\xx_1 \in \integerrange{0}{2}}}{\ri_{\setI}}\\
    & \HASSIGN{\xz_1+\xy_1}{\xx_1}\SEMI\\
    & \annotateRI{1}{\ri_{\setI}}\\
    & \ASSIGN{\xy_1}{\xy_1-1}\\
    & \annotateRI{1}{\ri_{\setI}}
\end{align*}
The inequality
\begin{equation*}
    \iverson{\xx_1 \in \integerrange{0}{2}} \sepcon \ri_{\setI} \leq \validpointer{\xz_1+\xy_1} \sepcon (\singleton{\xz_1+\xy_1}{\xx_1} \sepimp (1 \sepcon \ri_{\setI}))
\end{equation*}
resulting from the mutation $\HASSIGN{\xz_1+\xy_2}{\xx_1}$ together with the atom rule holds because for $\singleton{\xz_1+\xy_1}{\xx_1} \sepimp (1 \sepcon \ri_{\setI})$ to be non-zero, $\xx_1$ must coincide with $\ri_{\setI}$ -- thus $\xx_1$ has to be either $0$, $1$ or $2$. Furthermore, we have that $\singleton{\xz_1+\xy_1}{i} \leq \validpointer{\xz_1+\xy_1}$ holds for every $i$, and obtain by this that $i \in \integerrange{0}{2}$, with which we re-establish $\ri_{\setI}$.
We have $\iverson{\xy_1 \geq 0} \cdot 1 + \iverson{\xy_1 < 0} \cdot 1 = 1$ and therefore $1$ is a loop invariant.

For the channel $\cc_2$, we use the shorthand notation $P(\xy)$ to denote cumulated probability mass and define it as
\begin{gather*}
    P(\xy)\quad = \quad p^{|\integerrange{0}{\xy} \cap \setI|} \cdot (1-p)^{|\integerrange{0}{\xy} \setminus \setI|}~.
\end{gather*}
This shorthand notation gives us the probability that all data with offset $0$ up to $\xy$ are transferred according to $\setI$. That is, if an element should have been transferred successfully, we multiply with $p$ and if not with $1-p$ for every location up to $\xy_2$.
We prove for the invariant $\inv_2 = \iverson{0\leq \xy_2 \leq \xk} \cdot P(\xy_2) + \iverson{\xy_2 < 0}$ with respect to the postexpectation $1$ and resource invariant $\ri_{\setI}$:
\begingroup
\allowdisplaybreaks
\begin{align*}
    &\annotateRI{\iverson{0\leq \xy_2 \leq \xk} \cdot P(\xy_2) + \iverson{\xy_2 < 0}}{\ri_{\setI}}\\
    &\annotateRILarge{\begin{aligned}
        &\iverson{\xx_2 \neq 0} \cdot \iverson{0\leq \xy_2 \leq \xk} \cdot P(\xy_2)\\
        &+ \iverson{\xx_2=0} \cdot \left(\iverson{0\leq \xy_2 \leq \xk} \cdot P(\xy_2) + \iverson{\xy_2 < 0}\right)\end{aligned}}{\ri_{\setI}}\\
    &\ASSIGNH{\xx_2}{\xz_1+\xy_2}\SEMI\\ 
    &\annotateRILarge{\begin{aligned}
        &\iverson{\xx_2 \neq 0} \cdot \iverson{0\leq \xy_2 \leq \xk} \cdot P(\xy_2-1) \cdot (p \cdot \iverson{\xy_2 \in \setI} \cdot \iverson{\xx_2 \in \integerrange{1}{2}}\\
        & \quad \quad + (1-p) \cdot \iverson{\xy_2 \not \in \setI})\\
        &+ \iverson{\xx_2=0} \cdot \left(\iverson{0\leq \xy_2 \leq \xk} \cdot P(\xy_2) + \iverson{\xy_2 < 0}\right)\end{aligned}}{\ri_{\setI}}\\
    &\IF{\xx_2 \neq 0} \\
    &\quad \annotateRI{\iverson{0\leq \xy_2 \leq \xk} \cdot P(\xy_2-1) \cdot \left(p \cdot \iverson{\xy_2 \in \setI} \cdot \iverson{\xx_2 \in \integerrange{0}{2}} + (1-p) \cdot \iverson{\xy_2 \not \in \setI}\right)}{\ri_{\setI}}\\
    &\quad \quad \{\\
    &\quad \quad \annotateRI{\left(\iverson{0\leq \xy_2 \leq \xk} \cdot P(\xy_2-1)\right)\cdot\iverson{\xy_2\in \setI}\cdot\iverson{\xx_2 \in \integerrange{0}{2}}}{\ri_{\setI}}\\
    &\quad \quad \annotateRI{\left(\iverson{1\leq \xy_2 \leq \xk} \cdot P(\xy_2-1) + \iverson{\xy_2 = 0}\right)\cdot\iverson{\xy_2\in \setI}\cdot\iverson{\xx_2 \in \integerrange{0}{2}}}{\ri_{\setI}}\\
    &\quad \quad \HASSIGN{\xz_2+\xy_2}{\xx_2}\\
    &\quad \quad \annotateRI{\iverson{1\leq \xy_2 \leq \xk+1} \cdot P(\xy_2-1) + \iverson{\xy_2 < 1}}{\ri_{\setI}}\\
    &\quad \quad \}\\ 
    &\quad \PCHOICESYMBOL{p}\\
    &\quad \quad \{\\
    &\quad \quad \annotateRI{\left(\iverson{0\leq \xy_2 \leq \xk} \cdot P(\xy_2-1)\right)\cdot \iverson{\xy_2\not\in \setI}}{\ri_{\setI}}\\
    &\quad \quad \annotateRI{\left(\iverson{1\leq \xy_2 \leq \xk} \cdot P(\xy_2-1) + \iverson{\xy_2 = 0}\right)\cdot \iverson{\xy_2\not\in \setI}}{\ri_{\setI}}\\
    &\quad \quad \HASSIGN{\xz_2+\xy_2}{-1}\SEMI\\
    &\quad \quad \annotateRI{\iverson{1\leq \xy_2 \leq \xk+1} \cdot P(\xy_2-1) + \iverson{\xy_2 < 1}}{\ri_{\setI}}\\
    &\quad \quad \}\\
    &\quad \annotateRI{\iverson{1\leq \xy_2 \leq \xk+1} \cdot P(\xy_2-1) + \iverson{\xy_2 < 1}}{\ri_{\setI}}\\
    &\quad \annotateRI{\iverson{0\leq \xy_2-1 \leq \xk} \cdot P(\xy_2-1) + \iverson{\xy_2-1 < 0}}{\ri_{\setI}}\\
    &\quad \ASSIGN{\xy_2}{\xy_2-1}\\
    &\quad \annotateRI{\iverson{0\leq \xy_2 \leq \xk} \cdot P(\xy_2) + \iverson{\xy_2 < 0}}{\ri_{\setI}}\\
    &\} \\
    &\annotateRI{\iverson{0\leq \xy_2 \leq \xk} \cdot P(\xy_2) + \iverson{\xy_2 < 0}}{\ri_{\setI}}
\end{align*}
\endgroup
We explain some of the difficult inequalities in the previous proof. We start with the inequality
\begin{align*}
         &~ \left(\left(\iverson{1\leq \xy_2 \leq \xk} \cdot P(\xy_2-1) + \iverson{\xy_2 = 0}\right)\cdot \iverson{\xy_2\not\in \setI}\right) \sepcon \ri_{\setI} \\
    \leq &~ \validpointer{\xz_2+\xy_2} \sepcon ( \singleton{\xz_2+\xy_2}{-1} \sepimp \left(\iverson{1\leq \xy_2 \leq \xk+1} \cdot P(\xy_2-1) + \iverson{\xy_2 < 1}\right) \sepcon \ri_{\setI}
\end{align*}
resulting from the mutation $\HASSIGN{\xz_2+\xy_2}{-1}$ together with the atom rule. This inequality holds since for the part $\singleton{\xz_2+\xy_2}{-1} \sepimp \dots$ to be non-zero, we require $\xy_2 \not\in \setI$ due to $\ri_{\setI}$. We lower bound all evaluations where $\xy_2<0$ by $0$ as we can not infer any information about these locations from $\ri_{\setI}$. Afterwards, we can lower bound $\validpointer{\xz_2+\xy_2}$ by $\singleton{\xz_2+\xy_2}{i}$ for every $i$ and thus re-establish $\ri_{\setI}$.

Next we have the inequality
\begin{align*}
    &~ \left(\left(\iverson{1\leq \xy_2 \leq \xk} \cdot P(\xy_2-1) + \iverson{\xy_2 = 0}\right)\cdot\iverson{\xy_2\in \setI}\cdot\iverson{\xx_2 \in \integerrange{0}{2}}\right) \sepcon \ri_{\setI} \\
\leq &~ \validpointer{\xz_2+\xy_2} \sepcon ( \singleton{\xz_2+\xy_2}{\xx_2} \sepimp \left(\iverson{1\leq \xy_2 \leq \xk+1} \cdot P(\xy_2-1) + \iverson{\xy_2 < 1}\right) \sepcon \ri_{\setI}
\end{align*}
resulting from the mutation $\HASSIGN{\xz_2+\xy_2}{\xx_2}$ together with the atom rule. Here we assume that the location $\xy_2$ is in $\setI$ and obtain that $\xx_2 \in \integerrange{0}{2}$. We lower bound any outcome of $\xy_2$ not in $\setI$ by $0$ because we already know that we will eventually set the term to $0$ due to the previous lookup. Next we establish $\ri_{\setI}$ back from lower bounding $\validpointer{\xz_2+\xy_2}$.

We have the inequality
\begin{align*}
    &~\iverson{\xx_2 \neq 0} \cdot \iverson{0\leq \xy_2 \leq \xk} \cdot P(\xy_2) + \iverson{\xx_2=0} \cdot \left(\iverson{0\leq \xy_2 \leq \xk} \cdot P(\xy_2) + \iverson{\xy_2 < 0}\right) \sepcon \ri\\
\leq& \sup_{\xv \in \Vals} \singleton{\xz_1+\xy_2}{\xv} \sepcon (\singleton{\xz_1+\xy_2}{\xv} \sepimp\\
    &~ (\iverson{\xv \neq 0} \cdot \iverson{0\leq \xy_2 \leq \xk} \cdot P(\xy_2-1) \cdot (p \cdot \iverson{\xy_2 \in \setI} \cdot \iverson{\xv \in \integerrange{1}{2}} + (1-p) \cdot \iverson{\xy_2 \not \in \setI})\\
    &~+ \iverson{\xv=0} \cdot \left(\iverson{0\leq \xy_2 \leq \xk} \cdot P(\xy_2) + \iverson{\xy_2 < 0}\right)) \sepcon \ri)
\end{align*}
resulting from the lookup $\ASSIGNH{\xx_2}{\xz_1+\xy_2}$ together with the atom rule. We will consider both cases separately. Let us assume that $\xv$ is not $0$. Then either $\xy_2$ is in $\setI$ and $\xv$ is either $1$ or $2$ to make  $p \cdot \iverson{\xy_2 \in \setI} \cdot \iverson{\xv \in \integerrange{1}{2}}$ not zero, or $\xy_2$ is not in $\setI$. Then, however, $\xv$ needs to be $-1$, because else $\ri_{\setI}$ will evaluate to zero. Both cases can then be used to turn $P(\xy_2-1)$ into $P(\xy_2)$. In both cases, we can also use $\singleton{\xz_1+\xy_2}{\xv}$ to re-establish the resource invariant $\ri_{\setI}$. If, on the other side, $\xv$ is $0$, we do not get any new information, but also do not need to update $P(\xy_2)$, and directly re-establish the resource invariant $\ri_{\setI}$.

Due to
\begin{align*}
    &~ \iverson{\xy_2\geq 0} \cdot (\iverson{0\leq \xy_2 \leq \xk} \cdot P(\xy_2) + \iverson{\xy_2 < 0}) + \iverson{\xy_2<0} \cdot 1 \\
=   &~ \iverson{0\leq \xy_2 \leq \xk} \cdot P(\xy_2) + \iverson{\xy_2 < 0}
\end{align*}
we establish the loop invariant with respect to postexpectation $1$.

For the consumer $\cc_3$ we require a loop invariant that checks if $\xl$ indeed matches the size of the set $\setI$. We prove the loop invariant $\inv_3 = \iverson{0 \leq \xy_3 \leq \xk}\cdot\iverson{\xy_3+\xl=|\setI \cap \integerrange{0}{\xy_3}|}$ with respect to the postexpectation $\iverson{\xl = |\setI|}$ and the resource invariant $\ri_{\setI}$:
\begingroup
\allowdisplaybreaks
\begin{align*}
    & \annotateRI{ \iverson{0 \leq \xy_3 \leq \xk}\cdot\iverson{\xl=|\setI \cap \integerrange{\xy_3+1}{\xk}|}}{\ri_{\setI}}\\
    & \ASSIGNH{\xx_3}{\xz_2+\xy_3}\SEMI \\ 
    & \annotateRILarge{\begin{aligned}
        &\quad \iverson{\xx_3 = -1} \cdot \iverson{1 \leq \xy_3 \leq \xk+1}\cdot\iverson{\xl=|\setI \cap \integerrange{\xy_3}{\xk}|} \\
        & + \iverson{\xx_3 = -1} \cdot \iverson{\xy_3 < 1} \cdot \iverson{\xl = |\setI|}\\
        & + \iverson{\xx_3 \neq 0} \cdot \iverson{\xx_3 \neq -1} \cdot \iverson{1 \leq \xy_3 \leq \xk+1}\cdot\iverson{\xl+1=|\setI \cap \integerrange{\xy_3}{\xk}|}) \\
        & + \iverson{\xx_3 \neq 0} \cdot \iverson{\xx_3 \neq -1} \cdot \iverson{\xy_3 < 1} \cdot \iverson{\xl+1 = |\setI|}\\
        & + \iverson{\xx_3 = 0} \cdot \iverson{0 \leq \xy_3 \leq \xk}\cdot\iverson{\xl=|\setI \cap \integerrange{\xy_3+1}{\xk}|}\\
        & + \iverson{\xx_3 = 0} \cdot \iverson{\xy_3 < 0} \cdot \iverson{\xl = |\setI|}
        \end{aligned}}{\ri_{\setI}}\\
    & \IF{\xx_3 \neq 0}\\
    & \quad \annotateRILarge{\begin{aligned}
        &\quad \iverson{\xx_3 = -1} \cdot \iverson{1 \leq \xy_3 \leq \xk+1}\cdot\iverson{\xl=|\setI \cap \integerrange{\xy_3}{\xk}|}\\
        & + \iverson{\xx_3 = -1} \cdot \iverson{\xy_3 < 1} \cdot \iverson{\xl = |\setI|} \\
        & + \iverson{\xx_3 \neq -1} \cdot \iverson{1 \leq \xy_3 \leq \xk+1}\cdot\iverson{\xl+1=|\setI \cap \integerrange{\xy_3}{\xk}|}\\
        & + \iverson{\xx_3 \neq -1} \cdot \iverson{\xy_3 < 1} \cdot \iverson{\xl+1 = |\setI|}
        \end{aligned}}{\ri_{\setI}}\\
    & \quad \IF{\xx_3 \neq -1}\\
    & \quad \quad \annotateRI{\iverson{1 \leq \xy_3 \leq \xk+1}\cdot\iverson{\xl+1=|\setI \cap \integerrange{\xy_3}{\xk}|} + \iverson{\xy_3 < 1} \cdot \iverson{\xl+1 = |\setI|} }{\ri_{\setI}}\\
    & \quad \quad \ASSIGN{\xl}{\xl+1}\\
    & \quad \quad \annotateRI{\iverson{1 \leq \xy_3 \leq \xk+1}\cdot\iverson{\xl=|\setI \cap \integerrange{\xy_3}{\xk}|} + \iverson{\xy_3 < 1} \cdot \iverson{\xl = |\setI|} }{\ri_{\setI}}\\
    & \quad \}\SEMI\\
    & \quad \annotateRI{\iverson{1 \leq \xy_3 \leq \xk+1}\cdot\iverson{\xl=|\setI \cap \integerrange{\xy_3}{\xk}|} + \iverson{\xy_3 < 1} \cdot \iverson{\xl = |\setI|}}{\ri_{\setI}}\\
    & \quad \annotateRI{\iverson{0 \leq \xy_3-1 \leq \xk}\cdot\iverson{\xl=|\setI \cap \integerrange{\xy_3}{\xk}|} + \iverson{\xy_3-1 < 0} \cdot \iverson{\xl = |\setI|} }{\ri_{\setI}}\\
    & \quad \ASSIGN{\xy_3}{\xy_3-1} \\
    & \quad \annotateRI{\iverson{0 \leq \xy_3 \leq \xk}\cdot\iverson{\xl=|\setI \cap \integerrange{\xy_3+1}{\xk}|} + \iverson{\xy_3 < 0} \cdot \iverson{\xl = |\setI|} }{\ri_{\setI}}\\
    & \}\\
    & \annotateRI{\iverson{0 \leq \xy_3 \leq \xk}\cdot\iverson{\xl=|\setI \cap \integerrange{\xy_3+1}{\xk}|} + \iverson{\xy_3 < 0} \cdot \iverson{\xl = |\setI|}}{\ri_{\setI}}
\end{align*}
\endgroup
Here we will take a closer look at the inequality
\begin{align*}
    &~ (\iverson{1 \leq \xy_3 \leq \xk}\cdot\iverson{\xl=|\setI \cap \integerrange{\xy_3+1}{\xk}|} + \iverson{\xy_3 = 0} \cdot \iverson{\xl=|\setI \cap \integerrange{1}{\xk}|}) \sepcon \ri_{\setI} \\
\leq&~ \sup_{\xv \in \Vals} \singleton{\xz_2+\xy_2}{\xv} \sepcon ( \singleton{\xz_2+\xy_2}{\xv} \sepimp \\
    &\quad (\iverson{\xv = -1} \cdot \iverson{1 \leq \xy_3 \leq \xk+1}\cdot\iverson{\xl=|\setI \cap \integerrange{\xy_3}{\xk}|} \\
    & + \iverson{\xv = -1} \cdot \iverson{\xy_3 < 1} \cdot \iverson{\xl = |\setI|}\\
    & + \iverson{\xv \neq 0} \cdot \iverson{\xv \neq -1} \cdot \iverson{1 \leq \xy_3 \leq \xk+1}\cdot\iverson{\xl+1=|\setI \cap \integerrange{\xy_3}{\xk}|}) \\
    & + \iverson{\xv \neq 0} \cdot \iverson{\xv \neq -1} \cdot \iverson{\xy_3 < 1} \cdot \iverson{\xl+1 = |\setI|}\\
    & + \iverson{\xv = 0} \cdot \iverson{0 \leq \xy_3 \leq \xk}\cdot\iverson{\xl=|\setI \cap \integerrange{\xy_3+1}{\xk}|}\\
    & + \iverson{\xv = 0} \cdot \iverson{\xy_3 < 0} \cdot \iverson{\xl = |\setI|}) \sepcon \ri_{\setI})
\end{align*}
due to the lookup $\ASSIGNH{\xx_3}{\xz_2+\xy_2}$ together with the atom rule. We consider all cases separately. 
\begin{itemize}
    \item First, let $\xv$ be $-1$ 
    \begin{itemize}
        \item If moreover $\xy_3$ is between $1$ and $\xk+1$, then we can directly lower bound the case that $\xy_3$ is $\xk+1$ by zero as $\ri_{\setI}$ does not have carry information for this location. Because $\xv$ is $-1$, we know that $\xy_3$ is not in $\setI$ due to $\ri_{\setI}$. Thus, we also have that $|\setI \cap \integerrange{\xy_3+1}{\xk}|=|\setI \cap \integerrange{\xy_3}{\xk}|$. 
        \item If $\xy_3$ is below $1$, the same reasoning holds, with the difference that we lower bound the expectation for every value of $\xy_3$ below $0$ as zero and consider only the case where $\xy_3$ is $0$. 
    \end{itemize}
    \item In the case that $\xv$ is neither $0$ nor $-1$, we first observe that only $1$ and $2$ are valid values, because $\ri_{\setI}$ does not allow any other value for $\xy_3$ between $0$ and $\xk$. 
    \item In the cases where $\xv$ is either $\xk+1$ or below $0$, we just lower bound the formula by zero. However, for the latter cases we have $|\setI \cap \integerrange{\xy_3+1}{\xk}|+1=|\setI \cap \integerrange{\xy_3}{\xk}|$. 
    \item Lastly, in the case that $\xv$ is $0$, the expression already matches the target lower bound, but again, we lower bound the formula by zero if $\xy_3$ has a value below $0$.
\end{itemize}

Moreover, we have
\begin{align*}
    &~ \iverson{\xy_3 \geq 0} \cdot (\iverson{0 \leq \xy_3 \leq \xk}\cdot\iverson{\xl=|\setI \cap \integerrange{\xy_3+1}{\xk}|} + \iverson{\xy_3 < 0} \cdot \iverson{\xl = |\setI|}) \\
=   &~ \iverson{0 \leq \xy_3 \leq \xk}\cdot\iverson{\xl=|\setI \cap \integerrange{\xy_3+1}{\xk|}} + \iverson{\xy_3 < 0} \cdot \iverson{\xl = |\setI|}
\end{align*}
and thus established a loop invariant with respect to postexpectation $\iverson{\xl = |\setI|}$.

Now we can combine all three results 
\begingroup
\allowdisplaybreaks
\begin{align*}
    &\annotateRI{P(\xk) \cdot \iverson{0 \leq \xk}}{\ri_{\setI}}\\
    &\annotateRI{\iverson{0 \leq \xk} \cdot P(\xk) + \iverson{\xk < 0} \cdot \iverson{0 = |\setI|} }{\ri_{\setI}}\\
    &\ASSIGN{\xl}{0}\SEMI \\
    &\annotateRI{\iverson{0 \leq \xk} \cdot \iverson{\xl=0} \cdot P(\xk) + \iverson{\xk < 0} \cdot \iverson{\xl = |\setI|}}{\ri_{\setI}}\\
    &\annotateRILarge{\begin{aligned}
        &~1\\
    \sepcon &~(\iverson{0 \leq \xk} \cdot P(\xk) + \iverson{\xk < 0})\\
    \sepcon &~(\iverson{0 \leq \xk} \cdot \iverson{\xl=|\setI \cap \integerrange{\xk+1}{\xk}|} + \iverson{\xk < 0} \cdot \iverson{\xl = |\setI|})\end{aligned}}{\ri_{\setI}}\\
    &\ASSIGN{\xy_1, \xy_2, \xy_3}{\xk}\SEMI \\
    &\annotateRILarge{\begin{aligned}
        &~ 1 \\
    \sepcon &~ (\iverson{0\leq \xy_2 \leq \xk} \cdot P(\xy_2) + \iverson{\xy_2 < 0})\\
    \sepcon &~ (\iverson{0 \leq \xy_3 \leq \xk}\cdot\iverson{\xl=|\setI \cap \integerrange{\xy_3+1}{\xk}|} + \iverson{\xy_3 < 0} \cdot \iverson{\xl = |\setI|})
    \end{aligned}}{\ri_{\setI}}\\
    &\CONCURRENT{\cc_1}{\CONCURRENT{\cc_2}{\cc_3}}\\
    &\annotateRI{1 \sepcon 1 \sepcon \iverson{\xl = |\setI|}}{\ri_{\setI}}\\
\end{align*}
\endgroup
and we have for the whole program $\cc$ and a set of schedulers $\SchedulerSet\subseteq \Schedulers$:
\begin{align*}
                &~ (P(\xk) \cdot \iverson{0 \leq \xk}) \leq \wslp{\cc}{\iverson{\xl = |\setI|}}{\ri_{\setI}} \\
\text{implies}  &~ (P(\xk) \cdot \iverson{0 \leq \xk}) \sepcon \ri_{\setI} \leq \wslp{\cc}{\iverson{\xl = |\setI|} \sepcon \ri_{\setI}}{\emp} \tag{share}\\
\text{implies}  &~ (P(\xk) \cdot \iverson{0 \leq \xk}) \sepcon \ri_{\setI} \leq \wlps{\SchedulerSet}{\cc}{\iverson{\xl = |\setI|} \sepcon \ri_{\setI}}  \tag{\wlpsymbol-\wslpsymbol}
\end{align*}
We can use this to prove the lower bound of probabilities for even more elaborated postconditions if we have a set of schedulers $\SchedulerSet \subseteq \Schedulers$ such that $\cc$ is almost surely terminating with respect to $\SchedulerSet$. One of these is the probability that at least half of the messages are sent successfully, i.e., the probability of the postexpectation $\iverson{\xk+1 \geq \xl \geq \frac{\xk+1}{2}}$ -- or equivalently $\sum_{\frac{\xk+1}{2}\leq j \leq \xk+1} \iverson{\xl = j}$. For this, we use the resource invariant $\ri_j = \max_{\setI \subseteq \integerrange{0}{\xk}, |\setI|=j} \ri_{\setI}$ where $\ri_{\setI}$ is defined as previous. Although we call $\ri_j$ a resource invariant, we never prove that it is a resource invariant. We only prove that $\ri_{\setI}$ is a resource invariant. We can now compute:
\begin{align*}
                &~  \wlps{\SchedulerSet}{\cc}{\iverson{\xl = |\setI|} \sepcon \ri_{\setI} } \geq (P(\xk) \cdot \iverson{0 \leq \xk}) \sepcon \ri_{\setI}\\
\text{implies}  &~  \wlps{\SchedulerSet}{\cc}{\max_{\setI \subseteq \integerrange{0}{\xk}, |\setI|=j} \iverson{\xl = |\setI|} \sepcon \ri_{\setI} } \geq \max_{\setI \subseteq \integerrange{0}{\xk}, |\setI|=j} (P(\xk) \cdot \iverson{0 \leq \xk}) \sepcon \ri_{\setI} \tag{max} \\
\text{implies}  &~  \wlps{\SchedulerSet}{\cc}{ \iverson{\xl = j} \sepcon \ri_{j} } \geq (p^{j} \cdot (1-p)^{\xk-j+1} \cdot \iverson{0 \leq \xk}) \sepcon \ri_j \tag{Definition of $\ri_j$}\\
\text{implies}  &~  \wlps{\SchedulerSet}{\cc}{ \sum_{\frac{\xk+1}{2}\leq j \leq \xk+1} \iverson{\xl = j} \sepcon \ri_{j} }
                 \geq \sum_{\frac{\xk+1}{2}\leq j \leq \xk+1} (p^{j} \cdot (1-p)^{\xk-j+1} \cdot \iverson{0 \leq \xk}) \sepcon \ri_j \tag{Superlinearity}\\
\text{implies}  &~  \wlps{\SchedulerSet}{\cc}{ \iverson{\xk+1 \geq \xl \geq \frac{\xk+1}{2}} \sepcon \ri_{j} }\\
                &\qquad  \geq \left(\sum_{\frac{\xk+1}{2}\leq j \leq \xk+1} p^{j} \cdot (1-p)^{\xk-j+1} \cdot \iverson{0 \leq \xk}\right) \sepcon \ri_j \tag{$\ri_j$ is precise and $\iverson{\xk+1 \geq \xl \geq \frac{\xk+1}{2}}$ as above}
\end{align*} 
We could drop the resource invariant $\ri_j$ inside $\wlpsymbol^\SchedulerSet$ due to monotonicity of $\wlpsymbol^\SchedulerSet$, for which we do not provide a proof. However, this shows that we can use superlinearity to partition a big problem in smaller problems and afterwards reason about these smaller problems with the help of easier resource invariants, as it is standard in probability theory.

\end{document}